\mathchardef\mhyphen="2D 
\newcommand\newmathabbrev[2]{\newcommand{#1}{\ensuremath{#2}\xspace}}
\renewcommand{\exp}{\mathrm{exp}}
\newcommand\cfont\mathsf
\newmathabbrev\p{\cfont{P}}
\newmathabbrev{\NN}{\mathbb N}
\newmathabbrev{\DD}{\mathbb D}
\newmathabbrev{\RR}{\mathbb R}
\newmathabbrev{\FF}{\mathbb F}
\newmathabbrev{\KK}{\mathbb K}
\newmathabbrev\NP{\cfont{NP}}
\newmathabbrev\DTIME{\cfont{DTIME}}
\newmathabbrev\tSAT{3\cfont{\mhyphen{}SAT}}
\newcommand\MAXkSAT[1]{\cfont{MAX}\mhyphen{}#1\mhyphen{}\cfont{SAT}}
\newmathabbrev\MA{\cfont{MA}}
\newmathabbrev\AM{\cfont{AM}}
\newmathabbrev\NPDAG{\cfont{NP\mhyphen{}DAG}}
\newmathabbrev\QMADAG{\cfont{QMA\mhyphen{}DAG}}
\newmathabbrev\yes{\mathrm{yes}}
\newmathabbrev\no{\mathrm{no}}
\newmathabbrev\US{\cfont{US}}
\newmathabbrev\NC{\cfont{NC}}
\newmathabbrev\FP{\cfont{FP}}
\newmathabbrev\PP{\cfont{PP}}
\newmathabbrev\CeP{\cfont{C_=P}}
\newmathabbrev\coCeP{\cfont{coC_=P}}
\newmathabbrev\PH{\cfont{PH}}
\newmathabbrev\SAT{\cfont{SAT}}
\newmathabbrev\kSAT{k\mhyphen\cfont{SAT}}
\newmathabbrev\QSAT{\cfont{QSAT}}
\newmathabbrev\hQSAT{\mhyphen\QSAT}
\newmathabbrev\SPP{\cfont{SPP}}
\newmathabbrev\GapP{\cfont{GapP}}
\newmathabbrev\BQP{\cfont{BQP}}
\newmathabbrev\BQPo{\BQP_1}
\newmathabbrev\QP{\cfont{QP}}
\newmathabbrev\StoqMA{\cfont{StoqMA}}
\newmathabbrev\coNP{\cfont{coNP}}
\newmathabbrev\AzPP{\cfont{A_0PP}}
\newmathabbrev\QMA{\cfont{QMA}}
\newmathabbrev\QMAo{\QMA_1}
\newmathabbrev\coQMA{\cfont{coQMA}}
\newmathabbrev\BPP{\cfont{BPP}}
\newmathabbrev\QCMA{\cfont{QCMA}}
\newmathabbrev\pNPlog{\p^{\NP[\log]}}
\newmathabbrev\pNP{\p^{\NP}}
\newmathabbrev\pNPtwo{\p^{\NP[2]}}
\newmathabbrev\pNPone{\p^{\NP[1]}}
\newmathabbrev\pParSAT{\p^{||\SAT}}
\newmathabbrev\pQMApar{\p^{||\QMA}}
\newmathabbrev\pCpar{\p^{||\C}}
\newmathabbrev\pStoqMApar{\p^{||\StoqMA}}
\newmathabbrev\pQMAlog{\p^{\QMA[\log]}}
\newmathabbrev\pClog{\p^{\textup{C}[\log]}}
\newmathabbrev\pC{\p^{\textup{C}}}
\newmathabbrev\QMASPACE{\cfont{QMASPACE}}
\newmathabbrev\pQMAtlog{\p^{\QMA(2)[\log]}}
\newmathabbrev\pStoqMAlog{\p^{\StoqMA[\log]}}
\newmathabbrev\pQMApt{\p^{\Vert\QMA(2)}}
\newmathabbrev\pQMA{\p^{\QMA}}
\newmathabbrev\SharpP{\cfont{\#P}}
\newmathabbrev\pSharP{\p^{\SharpP[1]}}
\newmathabbrev\PromisePP{\cfont{PromisePP}}
\newmathabbrev\lett{\le_\mathrm{tt}}
\newmathabbrev\YES{\mathsf{YES}}
\newmathabbrev\NO{\mathsf{NO}}
\newmathabbrev\PSPACE{\cfont{PSPACE}}
\newmathabbrev\IP{\cfont{IP}}
\newmathabbrev\POLY{\cfont{POLY}}
\newmathabbrev\DAG{\cfont{DAG}}
\newmathabbrev\StoqMADAG{\StoqMA\mhyphen\cfont{DAG}}
\newmathabbrev\CDAG{C\mhyphen\cfont{DAG}}
\newmathabbrev\CDAGf{C\mhyphen\cfont{DAG}_f}
\newmathabbrev\CDAGs{C\mhyphen\cfont{DAG}_s}
\newmathabbrev\CDAGd{C\mhyphen\cfont{DAG}_{d}}
\newmathabbrev\CDAGo{C\mhyphen\cfont{DAG}_1}
\newmathabbrev\LOGS{\cfont{LOGS}}
\newmathabbrev\TAUT{\cfont{TAUTOLOGY}}
\newmathabbrev\SBQP{\cfont{SBQP}}
\newmathabbrev\Fc{F_\coNP}
\newmathabbrev\Fa{F_\AzPP}
\newmathabbrev\GSCON{\cfont{GSCON}}
\newmathabbrev\GSCONexp{\GSCON_\cfont{exp}}
\newmathabbrev\QMAexp{\QMA_\cfont{exp}}
\newmathabbrev\UQMA{\cfont{UQMA}}
\newmathabbrev\R{\mathbb R}
\newmathabbrev\Trees{\cfont{TREES}}
\newmathabbrev\apxsim{\cfont{APX\mhyphen{}SIM}}
\newmathabbrev\AWPP{\cfont{AWPP}}
\newmathabbrev\X{\mathcal{X}}
\newmathabbrev\calG{\mathcal{G}}
\newmathabbrev\calK{\mathcal{K}}
\newmathabbrev\calJ{\mathcal{J}}
\newmathabbrev\calC{\mathcal{C}}
\newcommand{\CK}{\calC^k(\calK)}
\newcommand{\CKk}[1]{\calC^{#1}(\calK)}
\newcommand{\HP}{\cfont{HP}}
\newmathabbrev\calA{\mathcal{A}}
\newmathabbrev\calB{\mathcal{B}}
\newmathabbrev\Y{\mathcal{Y}}
\renewcommand\H{\ensuremath{\mathcal{H}}}
\newmathabbrev\Z{\mathcal{Z}}
\newmathabbrev\ZZ{\mathbb{Z}}
\newcommand\Hprop{H_\mathrm{prop}}
\newcommand\Hclock{H_{\mathrm{clock}}}
\newcommand\Jclock{J_{\mathrm{clock}}}
\newcommand\Sclock{\mathscr{C}}
\newcommand\Hin{H_\mathrm{in}}
\newcommand\Hsplit{H_\mathrm{split}}
\newcommand\Hout{H_\mathrm{out}}
\newmathabbrev\A{\mathcal{A}}
\newmathabbrev\rmU{\mathrm{U}}
\newmathabbrev\rmO{\mathrm{O}}
\newcommand\kLH{k\cfont{\mhyphen{}LH}}
\newcommand\kQSAT{k\cfont{\mhyphen{}QSAT}}
\newcommand\kELH{k\cfont{\mhyphen{}ELH}}
\newcommand\lELH{l\cfont{\mhyphen{}ELH}}
\newcommand\hLH{\cfont{\mhyphen{}LH}}
\newcommand\hELH{\cfont{\mhyphen{}ELH}}
\newcommand\AESSH{\cfont{AESSH}}
\newcommand\SH{\cfont{SH}}
\newcommand\SSH{\cfont{SSH}}
\newcommand\ESH{\cfont{ESH}}
\newcommand\ESSH{\cfont{ESSH}}
\newcommand\CH{\cfont{CH}}
\newcommand\GCH{\cfont{GCH}}
\newcommand{\wtU}{\widetilde{U}}
\newcommand{\whU}{\widehat{U}}
\newcommand{\wht}{\widehat{t}}
\newcommand{\wh}[1]{\widehat{#1}}
\newcommand{\whV}{\widehat{V}}
\newcommand{\whpsi}{\widehat{\psi}}
\newcommand{\whphi}{\widehat{\phi}}
\newcommand{\psiinit}{\psi_{\mathrm{init}}}
\newcommand{\whpsiinitp}[1]{\whpsi_{#1,\mathrm{init}}}
\newcommand{\whpsiinit}{\whpsi_{\mathrm{init}}}
\newcommand\B{\mathcal B}
\newcommand\Pirej{\Pi_\mathrm{rej}}
\newmathabbrev\DAGSSAT{\DAGS(\SAT)}
\newmathabbrev\DAGS{\mathrm{DAGS}}
\newmathabbrev\DAGSNP{\DAGS(\NP)}
\newmathabbrev\AND{\cfont{AND}}
\newmathabbrev\STCONN{{S,T}\cfont{\mhyphen{}CONN}}
\newmathabbrev\CNF{\cfont{CNF}}
\newmathabbrev\NEXP{\cfont{NEXP}}
\newmathabbrev\NPSPACE{\cfont{NPSPACE}}
\newmathabbrev\QCMASPACE{\cfont{QCMASPACE}}
\newmathabbrev\BQPSPACE{\cfont{BQPSPACE}}
\newmathabbrev{\PCP}{\cfont{PCP}}
\newmathabbrev\BQUPSPACE{\cfont{BQ_UPSPACE}}
\newmathabbrev\QMAt{\QMA(2)}
\newmathabbrev\QMAtexp{\QMAt_{\exp}}
\newmathabbrev\MIP{\cfont{MIP}}
\newmathabbrev\QMIP{\cfont{MIP}}
\newmathabbrev\QSZK{\cfont{QSZK}}
\newmathabbrev\QIP{\cfont{QIP}}
\newmathabbrev\MIPt{\MIP(2)}
\newmathabbrev\BellQMA{\cfont{BellQMA}}
\newmathabbrev\BellQMAt{\BellQMA(2)}
\newmathabbrev\BellQMAtexp{\BellQMAt_{\exp}}
\newmathabbrev\Upyth{U_{\mathrm{Pyth.}}}
\newmathabbrev\CNOT{\mathsf{CNOT}}
\newmathabbrev\Hg{\mathsf{H}}
\newmathabbrev\Rg{\mathsf{R}}
\newmathabbrev\Htg{\mathsf{\widetilde{H}}}
\newmathabbrev\CSg{\mathsf{CS}}
\newmathabbrev\CXg{\mathsf{CX}}
\newmathabbrev\CZg{\mathsf{CZ}}
\newmathabbrev\CCXg{\mathsf{CCX}}
\newmathabbrev\CCZg{\mathsf{CCZ}}
\newmathabbrev\Sg{\mathsf{S}}
\newmathabbrev\Xg{\mathsf{X}}
\newmathabbrev\Yg{\mathsf{Y}}
\newmathabbrev\Zg{\mathsf{Z}}
\newmathabbrev\Tg{\mathsf{T}}
\newmathabbrev\Bg{\mathsf{B}}
\newmathabbrev\Ig{\mathsf{I}}
\newcommand{\calH}{\mathcal{H}}
\newcommand{\calS}{\mathcal{S}}
\newcommand{\iu}{\mathrm{i}}
\protected\def\verythinspace{%
  \ifmmode
    \mskip0.5\thinmuskip
  \else
    \ifhmode
      \kern0.08334em
    \fi
  \fi
}
\newcommand{\bin}{\{0,1\}}
\newcommand{\CC}{\mathbb C}
\newcommand{\QQ}{\mathbb Q}
\newcommand{\be}{\begin{equation}}
\newcommand{\ee}{\end{equation}}
\newcommand{\up}{\mathrm{up}}
\newcommand{\unitary}{\textup{U}}
\renewcommand{\epsilon}{\varepsilon}
\DeclareMathOperator{\Tr}{Tr}
\DeclareMathOperator{\sgn}{sgn}
\newcommand{\Null}{\mathcal{N}}
\DeclareMathOperator{\Image}{Im}
\DeclareMathOperator{\Kernel}{Ker}
\newcommand\lmin{\lambda_{\mathrm{min}}}
\newcommand{\Cl}{\mathrm{Cl}}
\newcommand{\poly}{\mathrm{poly}}
\DeclareMathOperator{\Span}{Span}
\DeclareMathOperator{\corank}{corank}
\DeclarePairedDelimiter\bra{\langle}{\rvert}
\DeclarePairedDelimiter\ket{\lvert}{\rangle}
\DeclarePairedDelimiter\abs{\lvert}{\rvert}
\DeclarePairedDelimiter\norm{\lVert}{\rVert}
\DeclarePairedDelimiter\fnorm{\lVert}{\rVert_{\mathrm F}}
\DeclarePairedDelimiterX\braket[2]{\langle}{\rangle}{#1 \delimsize\vert #2}
\DeclarePairedDelimiterX\ketbra[2]{\lvert}{\rvert}{#1 \delimsize\rangle\delimsize\langle #2}
\newcommand{\braketb}[2]{\bra{#1}#2\ket{#1}}
\newcommand{\braketc}[1]{\braket{#1}{#1}}
\newcommand{\ketbraa}[1]{{#1 \renewcommand\ket\bra #1}}
\newcommand{\ketbrab}[1]{\ketbra{#1}{#1}}
\setlist[itemize]{noitemsep, topsep=0pt}
\setlist[enumerate]{noitemsep, topsep=0pt}
\crefname{claim}{Claim}{Claims}
\Crefname{claim}{Claim}{Claims}
\declaretheorem[numberwithin=section]{theorem}
\declaretheorem[sibling=theorem]{lemma}
\declaretheorem[sibling=theorem]{claim}
\declaretheorem[sibling=theorem]{proposition}
\declaretheorem[sibling=theorem,style=definition]{problem}
\declaretheorem[sibling=theorem,style=definition]{definition}
\declaretheorem[sibling=theorem,style=definition]{remark}
\newcommand{\pacc}{p_{\textup{acc}}}
\newcommand{\prej}{p_{\textup{rej}}}
\newcommand{\Ayes}{A_{\textup{yes}}} 
\newcommand{\Ano}{A_{\textup{no}}} 
\newcommand{\psihist}{\psi_{\textup{hist}}}
\newcommand{\subalign}[1]{%
  \vcenter{%
    \Let@ \restore@math@cr \default@tag
    \baselineskip\fontdimen10 \scriptfont\tw@
    \advance\baselineskip\fontdimen12 \scriptfont\tw@
    \lineskip\thr@@\fontdimen8 \scriptfont\thr@@
    \lineskiplimit\lineskip
    \ialign{\hfil$\m@th\scriptstyle##$&$\m@th\scriptstyle{}##$\hfil\crcr
      #1\crcr
    }%
  }%
}
\NewDocumentCommand{\LeftComment}{s m}{%
  \Statex \IfBooleanF{#1}{\hspace*{\ALG@thistlm}}\(\triangleright\) #2}
\def\moverlay{\mathpalette\mov@rlay}
\def\mov@rlay#1#2{\leavevmode\vtop{%
   \baselineskip\z@skip \lineskiplimit-\maxdimen
   \ialign{\hfil$\m@th#1##$\hfil\cr#2\crcr}}}
\newcommand{\charfusion}[3][\mathord]{
    #1{\ifx#1\mathop\vphantom{#2}\fi
        \mathpalette\mov@rlay{#2\cr#3}
      }
    \ifx#1\mathop\expandafter\displaylimits\fi}
\algnewcommand{\LineComment}[1]{\State \(\triangleright\) #1}
\newcolumntype{M}[1]{>{\centering\arraybackslash$}m{#1}<{$}}
\newcommand\@rcolwidth{0.67em}
\def\@rarray[#1]{\arraycolsep=0pt\array{*\c@MaxMatrixCols {M{#1}}}}
\newcommand{\splitatcommas}[1]{
  \begingroup
  \begingroup\lccode`~=`, \lowercase{\endgroup
    \edef~{\mathchar\the\mathcode`, \penalty0 \noexpand\hspace{0pt plus 1em}}%
  }\mathcode`,="8000 #1%
  \endgroup
}
\title{Towards a universal gateset for \texorpdfstring{QMA\textsubscript{1}}{QMA\_1}}
\author{Dorian Rudolph\footnote{Department of Computer Science and Institute for Photonic Quantum Systems (PhoQS), Paderborn University, Germany. Email: dorian.rudolph@upb.de}}
\date{April 11, 2024}
\begin{document}

\maketitle

\begin{abstract}
$\mathsf{QMA}_1$ is $\mathsf{QMA}$ with perfect completeness, i.e., the prover must accept with a probability of \emph{exactly} $1$ in the YES-case.
Whether $\mathsf{QMA}_1$ and $\mathsf{QMA}$ are equal is still a major open problem (classically, we have $\mathsf{MA}_1=\mathsf{MA}$), and only a quantum oracle separation due to Aaronson (QIC 2009) is known.
Furthermore, $\mathsf{QMA}_1$ does not actually have a single agreed-upon definition, since it depends on the choice of gateset, and the Solovay-Kitaev theorem only approximately synthesizes arbitrary gates from a universal gateset, not necessarily preserving perfect completeness.
For a gateset $\mathcal{G}$, we define $\mathsf{QMA}_1^{\mathcal{G}}$ as $\mathsf{QMA}_1$ restricted to verifiers only using gates of $\mathcal{G}$.
Generally, it is not clear at all what the relationship of $\mathsf{QMA}_1^{\mathcal{G}}$ and $\mathsf{QMA}_1^{\mathcal{G}'}$, even for two universal gatesets $\mathcal{G}$ and $\mathcal{G}'$.
Therefore, the $\mathsf{QMA}_1^{\mathcal{G}}$ classes form a potentially infinite hierarchy!

In this paper, we bring some structure to this chaos by proving that for each $k\in\mathbb{N}$, there exists a universal gateset $\mathcal{G}_{2^k}$ (defined by Amy, Glaudell, Kelso, Maxwell, Mendelson, Ross, Reversible Computation, 2024), so that $\mathsf{QMA}_1^{\mathcal{G}} \subseteq \mathsf{QMA}_1^{\mathcal{G}_{2^k}}$ for all gatesets $\mathcal{G}$ consisting of unitaries in the cyclotomic field $\mathbb{Q}(\zeta_{2^k})$, where $\zeta_{2^k} = e^{2\pi\mathrm{i}/2^k}$ is a primitive $2^k$-th root of unity.
For $\mathsf{BQP}_1$, we can even show that $\mathcal{G}_2$ suffices for all $2^k$-th cyclotomic fields, i.e., $\mathsf{BQP}_1^{\mathcal{G}}\subseteq\mathsf{BQP}_1^{\mathcal{G}_2}$ for all $\mathcal{G}$ in $\mathbb{Q}(\zeta_{2^k})$.
We exhibit complete problems for all $\mathsf{QMA}_1^{\mathcal{G}_{2^k}}$:
Quantum $l$-SAT in $\mathbb{Q}(\zeta_{2^k})$ is complete for $\mathsf{QMA}_1^{\mathcal{G}_{2^k}}$ for all $l\ge4$, and $l=3$ if $k\ge3$, where quantum $l$-SAT is the problem of deciding whether a set of $l$-local Hamiltonians has a common ground state.

Our techniques rely on representing operators as linear combinations of unitaries, which was pioneered by Childs and Wiebe (QIC 2012), and allows us to exactly apply even non-unitary operators to a quantum state, using postselection.
We show the first $\mathsf{QMA}_1$-complete $2$-local Hamiltonian problem: It is $\mathsf{QMA}_1^{\mathcal{G}_{2^k}}$-complete (for $k\ge3$) to decide whether a given $2$-local Hamiltonian $H$ in $\mathbb{Q}(\zeta_{2^k})$ has a nonempty nullspace (i.e. $\sigma_1(H)=0$) or $\sigma_1(H)\ge1/\mathrm{poly}$, where $\sigma_1$ denotes the smallest singular value.
Our techniques also extend to sparse Hamiltonians, and so we can prove the first $\mathsf{QMA}_1(2)$-complete (i.e. $\mathsf{QMA}_1$ with two unentangled Merlins) Hamiltonian problem, which is a variant of the separable sparse Hamiltonian problem (Chailloux and Sattath, CCC 2012).
Finally, we prove that the Gapped Clique Homology problem on weighted graphs defined by King and Kohler (FOCS 2024) is $\mathsf{QMA}_1^{\mathcal{G}_2}$-complete, and the Clique Homology problem (Kaibel and Pfetsch, 2002) without promise gap is $\mathsf{PSPACE}$-complete, resolving a conjecture of Crichigno and Kohler (Nat. Commun. 2024).
\end{abstract}

\newpage
\section{Introduction}

The complexity class $\QMAo$ was introduced by Bravyi in 2006 \cite{Bra06} as $\QMA$ with one-sided error (or \emph{perfect completeness}), i.e., in the YES-case there exists a proof that the verifier accepts with a probability of \emph{exactly} $1$.
Bravyi shows that the Quantum $k$-SAT problem ($\kQSAT$) is $\QMAo$-complete for $k\ge4$ and in $\p$ for $k=2$, where $\kQSAT$ is the problem of deciding whether a given collection of $k$-local Hamiltonians has a common ground state.
$\kQSAT$ can be seen as a quantum analogue of the classical $\kSAT$ problem, which is $\NP$-complete for $k\ge3$ \cite{Coo71,Lev73,Karp72}, and in $\p$ for $k=2$ \cite{Qui59,DP60,Kro67,EIS76,APT79,Pap91}.
Later works even show that $2\hQSAT$ is solvable in linear time~\cite{ASSZ16,BG16}, and $3\hQSAT$ is $\QMAo$-complete \cite{GN13}.
Quantum $2$-SAT on qu\emph{d}its is $\QMAo$-complete \cite{ER08,Nag08,RGN24}, with the most recent result being the $\QMAo$-completeness of $(3,4)\hQSAT$ and $(2,5)\hQSAT$ \cite{RGN24}, where in $(k,l)\hQSAT$ each local term acts on one qu-$k$-it and one qu-$l$-it.
$\kQSAT$ is a special case of the more well-known $k$-local Hamiltonian problem ($\kLH$), where the goal is to approximate $\lmin(H)$ for a $k$-local Hamiltonian $H$.
$\kLH$ can be seen as a quantum analogue of the classical $\MAXkSAT{k}$ problem (i.e. what is the maximum number of simultaneously satisfiable constraints in a $2$-CNF formula?).
The $2$-local Hamiltonian problem is $\QMA$-complete \cite{KSV02,KKR05,CM16}, just as $\MAXkSAT{2}$ is $\NP$-complete \cite{Gar76}.

An interesting question to ask is now whether quantum $k$-SAT is also $\QMA$-complete, or in other words, is $\QMA=\QMAo$?
Classically, we have $\MA=\MA_1$ \cite{ZF87,GZ11}.
Quantum interactive proof systems ($\QIP$) also have perfect completeness \cite{KW00,KLN15}.
$\QMA$ even has a two-message quantum interactive proof system with perfect completeness, i.e., $\QMA\in\QIP_1(2)$ \cite{KLN15}.
\cite{JKNN12} have shown that \QCMA (i.e., $\QMA$ with classical proofs) has perfect completeness, i.e., $\QCMA\subseteq\QMAo$.
Despite these positive results, the question whether $\QMA=\QMAo$ remains open to this day.
Aaronson~\cite{Aar09} gives some negative evidence in the form of a \emph{quantum} oracle separation between $\QMA$ and $\QMAo$.
The idea is to give the verifier access to an oracle performing a rotation by some angle $\theta$, and the problem is to distinguish the cases $\theta\in[1,2]$ (YES-case) or $\theta=0$ (NO-case).
Aaronson uses techniques from real analysis to prove that if the verifier accepts with probability $1$ for all $\theta$ in an open set, then it must accept for all $\theta$.

So far we have not touched upon the ``gateset issue''.
That is, the definition of $\QMAo$ depends on which gateset the verifier uses.
We write $\QMAo^\calG$ to denote $\QMAo$ with gateset $\calG$, where $\calG$ is a set of unitaries.
Notably, we cannot use the Solovay-Kitaev algorithm \cite{Kit97,DN05,BG21} to synthesize gates because that only approximates gates, not necessarily preserving perfect completeness.
Therefore, we do not in general know whether $\QMAo^{\calG} = \QMAo^{\calG'}$ for two different universal gatesets $\calG$ and $\calG'$.
Thus, the $\QMAo^\calG$ complexity classes form a potentially infinite hierarchy!

Bravyi~\cite{Bra06} defines $\QMAo$ as $\QMAo^{\calG}$, where $\calG$ is the set of all $3$-qubit unitaries in some subfield $\FF\subseteq\CC$ with exact representation.
The corresponding complete $4\hQSAT$ problem also allows all projectors with elements in $\FF$.
Thus, $\kQSAT\in\QMAo$ requires the verifier to be constructed specifically for the instance.\footnote{We remark that there is a slight issue with this procedure (see \cite[Lemma 5]{Bra06}), also mentioned by Gosset and Nagaj \cite{GN13}. Decomposing general unitaries into $2$-level unitaries (see \cite{NC10}) does not necessary produce unitaries inside $\FF$ (e.g. for $\FF=\QQ$).}
We cannot just give a classical description of the Hamiltonian to the quantum verifier.

Gosset and Nagaj~\cite{GN13} instead only consider the gateset $\{\Hg,\CXg,\Tg\}$ (Hadamard, CNOT, T-gate), and show completeness for $\kQSAT$, where each local projector $\Pi$ is in $\ZZ[\frac1{\sqrt2},\iu]$,\footnote{We use `$\iu$' to denote the imaginary unit, disambiguating it from the index `$i$'.} or there exists a unitary with only these elements, such that 
\begin{equation*}
  U\Pi U^\dagger = \ketbraa{(\sqrt{1/3}\ket{000}-\sqrt{2/3}\ket{001})}.
\end{equation*}
That definition of $\kQSAT$ unfortunately appears somewhat ``unnatural''.
In this work, we will ``bridge the gap'' between the definitions of \cite{Bra06} and \cite{GN13} by proving completeness for $\kQSAT$ in \emph{fields} $\FF\subseteq\CC$ for $\QMAo^\calG$ with \emph{finite} gatesets $\calG$.

The containment result for $\kQSAT\in\QMAo$ in \cite{GN13} relies on Giles and Selinger's \cite{GS13} algorithm for \emph{exactly} synthesizing any $n$-qubit unitary with entries in the ring $\ZZ[1/\sqrt{2},\iu]$ with the ``Clifford + T'' gateset.
Their work has been extended to further gatesets by \cite{AGR20}.
Most recently \cite{AGKMMR24} have shown that an $n$-qubit unitary matrix $U$ can be exactly represented with the gateset $\calG_{2^k}$ iff $U$'s entries are in $\ZZ[1/2,\zeta_{2^k}]$ (see \cref{thm:AGKMMR24}), where $\zeta_{2^k}=e^{2\pi\iu/2^k}$ is a primitive $2^k$-th root of unity, $\calG_2=\{\Xg,\CXg,\CCXg,\Hg\otimes \Hg\}$ (X, CNOT, Toffoli, Hadamard), $\calG_{4} = \{\Xg,\CXg,\CCXg,\Sg,\zeta_8\Hg\}$, and for $k\ge3$, $\calG_{2^k} = \{\Hg,\CXg, \Tg_{2^k}\}$ with $\Tg_{2^k} = \begin{bsmallmatrix}1&0\\0&\zeta_{2^k}\end{bsmallmatrix}$ ($\Tg \equiv \Tg_8$, $\Sg \equiv \Tg_4$).

Surprisingly, $\QMAo$ has recently appeared in computational topology:
Crichigno and Cade \cite{CC24} proved that the $k$-local cohomology problem is $\QMAo$-hard, for a cohomology inspired by supersymmetric systems.
Crichigno and Kohler \cite{CK24} showed that the problem of determining whether the clique complex of a given graph has a hole of a given dimension, is $\QMAo$-hard.
This problem was first defined by Kaibel and Pfetch \cite{KP02}.
The clique complex of a graph $G$ is the simplicial complex obtained by declaring every $(k+1)$-clique in $G$ to be a $k$-simplex.
They also showed that the problem is contained in $\QMA$ when the combinatorial Laplacian has an inverse polynomial spectral gap, and conjectured $\QMAo$-hardness.
King and Kohler \cite{KK24} made progress towards this conjecture by showing that a modified version of the clique homology problem on \emph{weighted} graphs with an inverse polynomial spectral gap, is $\QMAo$-hard and contained in $\QMA$.
Note that the weights are only used to lower bound the gap of the Laplacian, but do not affect the cohomology itself.

Although the results of \cite{CK24,KK24} show computing the sizes of cohomology groups exactly is hard in general, the quantum algorithm for topological data analysis by Lloyd, Garnerone and Zanardi \cite{LGZ16} can approximate the number of holes in persistent homology (see also the survey \cite{Was18}).
Schmidhuber and Lloyd \cite{SL23} proved that computing Betti numbers of a clique complex (i.e. the number of $k$-dimensional holes) is $\#\p$-hard, and that the decision clique homology problem is $\NP$-hard.

The clique homology problem is really the first problem where the gateset in the definition of $\QMAo$ truly matters.
Quantum SAT and even the cohomology problem of \cite{CC24} provide enough freedom to embed arbitrary local gates.
This is not the case for the clique homology problem, since the homology is completely determined by the cliques of a graph.
There is no way to directly embed complex phases, or even irrational coefficients such as $\sqrt{2}$.
Hence, the $\QMAo$-hardness proofs of \cite{CK24,KK24} are only able to embed quantum SAT instances with projectors of the form $\ketbrab{\phi}$, where $\ket{\phi}$ is proportional to an integer superposition of computational basis states.
To show that the gapped clique homology problem is also $\QMAo$-complete, we therefore need to find a gateset that is both powerful enough to decide the problem with perfect completeness, but also only consists of rational unitaries, so that it can be embedded into the clique homology.

With that motivation, we set out to develop a more nuanced understanding of the gateset of the $\QMAo$-verifier in this paper.
Along the way, we obtain a variety of novel results, including the $\QMAo$-completeness of the gapped clique homology problem, $\PSPACE$-completeness of the decision clique homology problem, the first $\QMAo$-complete $2$-local Hamiltonian problem (in contrast to Bravyi's $2\hQSAT\in\p$), and the first $\QMA_1(2)$-complete Hamiltonian problem (i.e., $\QMA(2)$ with perfect completeness).

\paragraph*{Results.}
Our first result is that for any finite gateset $\calG$ consisting of unitaries with entries in the $2^k$-th cyclotomic field, $\QMAo^{\calG}$ can be simulated using the finite gateset $\calG_{2^k}$.
We use the gatesets $\calG_{2^k}$ defined above from \cite{AGKMMR24} (see \cref{thm:AGKMMR24}).
Additionally, we show that $\QMAo^{\calG_4} = \QMAo^{\calG_2}$, i.e., $\QQ(\iu)$ gatesets can be simulated with just $\QQ$ gates.

\begin{theorem}[\ref{thm:cyclotomic-gateset}, \ref{thm:qma-g2}]\label{thm:first}
  For any finite gateset $\calG$ in $\QQ(\zeta_{2^k})$ with $k\in\NN$, it holds that $\QMAo^{\calG} \subseteq \QMAo^{\calG_{2^k}}$.
  For any finite gateset $\calG$ in $\QQ(\iu)$, it holds that $\QMAo^{\calG} \subseteq \QMAo^{\calG_2}$.
\end{theorem}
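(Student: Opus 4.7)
The plan is a Kitaev-style reduction followed by LCU-based verification. Given a $\QMAo^\calG$-verifier $V=U_T\cdots U_1$ with $U_i\in\calG\subseteq\QQ(\zeta_{2^k})$, apply the Feynman-Kitaev clock construction to obtain a quantum-$O(1)$-SAT instance whose projectors $\Pi_l$ have entries in $\QQ(\zeta_{2^k})$; since $V$ has perfect completeness, the $\Pi_l$'s share a common null vector iff $V$ accepts some witness with probability exactly~$1$. Thus $\QMAo^\calG$ reduces to quantum SAT over $\QQ(\zeta_{2^k})$.

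The heart of the argument is to show this quantum SAT is in $\QMAo^{\calG_{2^k}}$. The new verifier receives the history-state witness $\ket\psi$, picks a projector index $l$ uniformly at random, and runs a $\calG_{2^k}$-circuit that block-encodes $\Pi_l$. The construction: let $B_l$ be a common denominator of $\Pi_l$'s entries and $C_l$ the smallest power of~$2$ with $C_l\ge B_l$; then $\bar\Pi_l:=(B_l/C_l)\Pi_l$ has operator norm $\le 1$ and entries in $\ZZ[1/2,\zeta_{2^k}]$. Pauli-decompose $\bar\Pi_l=\sum_P c_P P$ where $c_P\in\ZZ[1/2,\zeta_{2^k}]$ and $\|c\|_1=O(1)$ since $\Pi_l$ is $O(1)$-local. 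Split each $c_P$ into integer copies of unit-modulus dyadic terms $\pm 2^{-a}\zeta_{2^k}^j$ and pad the total count of summands up to a power of~$2$ (or~$4$ for $\calG_2$). The resulting Childs--Wiebe LCU needs only a uniform superposition (Hadamards), a diagonal root-of-unity phase, and controlled-Paulis, all with entries in $\ZZ[1/2,\zeta_{2^k}]$, hence exactly synthesizable in $\calG_{2^k}$ by AGKMMR24. The verifier accepts iff the LCU ancillas end up in a state other than $\ket{0^a}$.

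For \emph{perfect} completeness: if $\ket\psi\in\ker\Pi_l$ for every $l$, then $\bar\Pi_l\ket\psi=0$ makes the $\ket{0^a}$-amplitude identically zero, so the verifier accepts with probability exactly~$1$. For \emph{inverse-polynomial} soundness: any witness in a NO-instance satisfies $\|\Pi_l\ket\psi\|^2\ge 1/\poly$ for some $l$, giving $\ket{0^a}$-amplitude $\|\bar\Pi_l\ket\psi\|/\|\alpha\|_1=\Omega(1/\poly)$, so the averaged rejection probability is $\Omega(1/\poly)$. The second inclusion $\QMAo^\calG\subseteq\QMAo^{\calG_2}$ for $\calG\subseteq\QQ(\iu)$ follows by composing the $k=1$ case (note $\zeta_2=-1$ so $\ZZ[1/2,\zeta_2]=\ZZ[1/2]$) with a standard real-doubling encoding: each complex amplitude $a+b\iu$ is represented by a real pair $(a,b)$ on an extra qubit, turning $\QQ(\iu)$-unitaries into real orthogonal $\QQ$-unitaries on twice the Hilbert space.

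The main technical obstacle is keeping the entire LCU circuit within $\ZZ[1/2,\zeta_{2^k}]$, in particular avoiding irrational $\sqrt{\cdot}$-valued amplitudes in the state-preparation. The padding-to-equal-weights trick is the key move: by splitting each Pauli coefficient into integer copies of unit-modulus dyadic terms, the LCU preparation becomes a uniform superposition rather than one with arbitrary algebraic amplitudes---crucial because states like $\sqrt{1/3}\ket0+\sqrt{2/3}\ket1$ have no exact $\calG_{2^k}$-preparation, whereas uniform superpositions do.
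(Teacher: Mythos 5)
Your proposal is correct in outline, but it takes a genuinely different route from the paper's proof of this theorem. The paper simulates each gate of $\calG$ \emph{in place} inside the verifier: Pauli-basis LCU, probabilistic integer-state preparation for the coefficient state (accepting whenever postselection fails, to protect perfect completeness), and then oblivious amplitude amplification \`a la Berry et al.\ to push the per-gate success probability to $1-1/\exp(m)$ --- the amplification is essential there, because accept-on-failure applied $m$ times with only constant (or $1/\poly$) success per gate would ruin soundness. You instead compile the whole $\QMAo^{\calG}$ verifier into an $O(1)$-local quantum SAT instance over $\QQ(\zeta_{2^k})$ via the Feynman--Kitaev construction (frustration-freeness in the YES case is exactly perfect completeness, and Kitaev's soundness gives the $1/\poly$ gap), and then verify a single, randomly chosen, constant-size PSD term by a one-shot LCU in which the equal-weights padding trick makes the coefficient state an exact uniform superposition: no probabilistic state preparation, no amplification, a single $1/\poly$ rejection event suffices, and rejecting \emph{only} on the ``LCU succeeded'' outcome gives perfect completeness for kernel states. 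In effect you derive the theorem as a corollary of $\kQSAT^{\QQ(\zeta_{2^k})}\in\QMAo^{\calG_{2^k}}$, which parallels the paper's Section 4 containment (there done with integer-state preparation and the Myerson-type lower bound), whereas the paper establishes the gateset theorem first and obtains QSAT completeness afterwards. What each buys: the paper's gate-by-gate simulation preserves the circuit structure and therefore extends directly to $\BQPo$ and to $\QMAo(2)$, where your Hamiltonian detour would not obviously apply; your route is arguably simpler for this particular theorem, since everything is constant-size, exactly synthesizable by the AGKMMR24 theorem, and needs no amplitude amplification.

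Two points deserve explicit care. First, in the $\QQ(\iu)$ case the real-doubling step must be argued sound against \emph{arbitrary} proofs in the doubled space, not just encoded ones (a dishonest prover is not bound to send $a+b\iu\mapsto(a,b)$ states); this is exactly the McKague-type argument the paper spells out (and the precise point that fails for $k>2$, which is why the second statement is restricted to $\QQ(\iu)$), so you should state it rather than fold it into ``standard.'' Second, the even-Hadamard-count/power-of-four padding you invoke for $\calG_2$ is needed for $\calG_4$ as well, since $1/\sqrt2\notin\DD[\iu]$; and the rescaling $\bar\Pi_l=(B_l/C_l)\Pi_l$ should be noted to shrink the NO-case energy only by a constant factor, so the promise gap survives. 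None of these affects the validity of the approach.
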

Note that $\QMAo^{\calG_2} = \QMAo^{\calG_4} \subseteq \QMAo^{\calG_8} \subseteq \QMAo^{\calG_{16}} \subseteq \cdots$.
For $\BQPo$ we may almost claim universality since we can simulate all cyclotomic gatesets with $\calG_2$.\footnote{This simulation even works for $k=O(\log n)$, which unfortunately does not suffice to simulate an $n$-qubit quantum Fourier transform.}

\begin{theorem}[\ref{thm:bqp1}]\label{thm:firstbqp}
  For any finite gateset $\calG$ in $\QQ(\zeta_{2^k})$ with $k\in \NN$ it holds that $\BQPo^{\calG} \subseteq \BQPo^{\calG_{2}}$.
\end{theorem}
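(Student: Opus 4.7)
The plan is to compose \cref{thm:first} with a field-extension embedding. First apply \cref{thm:first} to translate the given $\calG$-verifier into an equivalent $\calG_{2^k}$-verifier $U$, whose gates have entries in $\ZZ[1/2,\omega]$ with $\omega:=\zeta_{2^k}$. Then simulate $U$ with $\calG_2$ via the power-basis embedding
\[
  V\colon\QQ(\omega)^{2^n}\to\QQ^{2^{n+k-1}},\qquad V\Bigl(\sum_{j=0}^{2^{k-1}-1}\sum_x a_{j,x}\,\omega^j\ket{x}\Bigr):=\sum_{j,x} a_{j,x}\ket{j}_{\mathrm{anc}}\ket{x}_{\mathrm{data}},
\]
which uses $k-1$ ancilla qubits to index the $\QQ$-basis $\{1,\omega,\ldots,\omega^{2^{k-1}-1}\}$. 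For each generator $g$ of $\calG_{2^k}$ set $\tilde g:=V\,g\,V^{-1}$; the key verification is that every $\tilde g$ is an orthogonal $\QQ$-matrix with entries in $\ZZ[1/2]$, and can therefore be realized exactly in $\calG_2$ by \cref{thm:AGKMMR24}.

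This verification is essentially a case-check on the generators. The diagonal phase gate $\Tg_{2^k}=\mathrm{diag}(1,\omega)$ encodes as a data-controlled signed cyclic shift on the ancilla register (using $\omega^{2^{k-1}}=-1$ at the wrap-around), a signed permutation with entries in $\{0,\pm 1\}$. The gate $\CXg$ acts trivially on the ancilla: $\tilde\CXg=I_{\mathrm{anc}}\otimes\CXg$. For the Hadamard at $k\ge 3$, the identity $\tfrac{1}{\sqrt 2}=\tfrac12\bigl(\omega^{2^{k-3}}-\omega^{3\cdot 2^{k-3}}\bigr)$ lets us write $\tilde\Hg=\tilde\sigma\otimes\begin{pmatrix}1 & 1\\ 1 & -1\end{pmatrix}$, where a direct case analysis on the power-basis action shows that $\tilde\sigma$ has entries in $\{0,\pm 1/2\}$ and satisfies $\tilde\sigma^T\tilde\sigma=\tfrac12 I$; combined with $\begin{pmatrix}1 & 1\\ 1 & -1\end{pmatrix}^T\begin{pmatrix}1 & 1\\ 1 & -1\end{pmatrix}=2I$, this yields $\tilde\Hg^T\tilde\Hg=I$. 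The cases $k\le 2$ are analogous (for $k=2$ the encoding $\CC\cong\RR^2$ is even an isometry, and the non-rational $\calG_4$-generator $\zeta_8\Hg$ decomposes the same way). Since $V(\ket{0^n})=\ket{0^{k-1}}_{\mathrm{anc}}\ket{0^n}_{\mathrm{data}}$, running the encoded circuit on the all-zero input yields $V(U\ket{0^n})$, and measuring only the original output register preserves perfect completeness: whenever $U\ket{0^n}$ has zero amplitude on a rejecting basis state $\ket{x}$, all power-basis coefficients $a_{j,x}$ vanish in $V(U\ket{0^n})$, so the encoded circuit also rejects with probability zero.

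The main obstacle is soundness, since $V$ is not an isometry for $k\ge 3$ (the power basis is not orthonormal in $\CC$ over $\RR$). A Cauchy--Schwarz estimate $|\alpha_x|^2\le 2^{k-1}\sum_j a_{j,x}^2$ only guarantees that the encoded acceptance probability in the NO case is bounded by $1-\tfrac{2}{3\cdot 2^{k-1}}$. Since $k$ is a constant of the gateset (and more generally $k=O(\log n)$ still leaves a $1/\poly(n)$ completeness--soundness gap, as claimed in the footnote), standard parallel repetition with an ``accept iff every run accepts'' rule amplifies the NO probability below $1/3$ while trivially keeping the YES probability at $1$, placing the problem in $\BQPo^{\calG_2}$.
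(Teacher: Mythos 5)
Your proof is correct, and it reaches the result by composing the paper's two ingredients in the opposite order. The paper first applies the regular-representation encoding of \cref{lem:cyclotomic-simulation} to the arbitrary gateset $\calG$, obtaining rational orthogonal gates, then simulates those (generally non-dyadic) rational gates with $\calG_4$ via the LCU machinery of \cref{lem:simulate-Q(i)}, and finally applies the encoding a second time to $\calG_4$, whose image lands in $\ZZ[1/2]$ and is synthesized exactly over $\calG_2$ by \cref{thm:AGKMMR24}. You instead do the LCU step first, reducing to a $\calG_{2^k}$-circuit, and then apply the encoding only once, to the three fixed generators $\Hg,\CXg,\Tg_{2^k}$; your observation that their images are orthogonal with entries in $\{0,\pm\tfrac12,\pm1\}$ (via $\tfrac1{\sqrt2}=\tfrac12(\zeta_8-\zeta_8^3)$) is exactly what makes a single encoding suffice, and your Cauchy--Schwarz soundness loss of $2^{-(k-1)}$ is the same bound the paper obtains from $\norm{\Lambda}^{-2}$ for the decoding operator $\Lambda$; in fact, by \cref{def:BQP1} the resulting constant completeness--soundness gap already suffices, so your final repetition step is optional. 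Two small bookkeeping points: \cref{thm:first} is stated for $\QMAo$, so you should invoke the underlying gate-wise simulation of \cref{lem:simulate-Q(i)} (with the accept-on-failure convention) directly, which applies verbatim to $\BQPo$ circuits and is what the paper itself does; and for $k\le 2$ your case reduces to the paper's last step, since the encoding of $\calG_4$ is dyadic. What your ordering buys is a shorter argument with explicit constant-size dyadic circuits for the encoded generators; what the paper's ordering buys is that the identical chain of reductions is reused for the $\QMAo$ statements, where the encoding must also act on the proof register and the soundness analysis against malicious encoded proofs is the delicate part.
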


We exhibit complete problems for the classes $\QMAo^{\calG_{2^k}}$.
$\kQSAT^{\FF}$ denotes the quantum $k$-SAT problem with local Hamiltonians in $\FF^{2^k\times2^k}$.
Notably, we also give the first $2$-local $\QMAo$-complete Hamiltonian problem.
We define the $k$-local Exact Hamiltonian problem ($\kELH$, see \cref{def:kELH}) as the problem of deciding whether a $k$-local Hamiltonian on $n$ qubits has an eigenvalues of exactly $0$ or all eigenvalues are inverse polynomially bounded away from $0$ (i.e., $\sigma_1(H)=0$, or $\sigma_1(H)\ge 1/\poly(n)$ for $\sigma_1$ the smallest singular value).
We have $\kQSAT \subseteq \kELH \subseteq 2k\hLH$, but
$\kELH$ has a weaker promise than $\kQSAT$: In the YES-case only a nonempty nullspace is required, but $H$ may still be frustrated (i.e. the local Hamiltonians do not have a common ground state).
$\kELH$ has a stronger promise than $\kLH$: In the YES-case not just low energy is required, but an eigenvalue of \emph{exactly} $0$.

\begin{theorem}[\ref{thm:4SAT}, \ref{thm:3SAT}, \ref{thm:2LH-complete}]
  $4\hQSAT^{\QQ(\zeta_{2^k})}$ is complete for $\QMAo^{\calG_{2^k}}$ for all $k\in\NN$.\\
  $3\hQSAT^{\QQ(\zeta_{2^k})}$ is complete for $\QMAo^{\calG_{2^k}}$ for all $k\ge 3$.\\
  $2\hELH^{\QQ(\zeta_{2^k})}$ is $\QMAo^{\calG_{2^k}}$-complete for all $k\ge3$.
\end{theorem}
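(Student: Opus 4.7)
The plan is to handle containment and hardness separately for each problem; containment uses a uniform LCU-based simulation of the local-term measurements, while hardness for $4$- and $3$-$\QSAT$ proceeds via Kitaev's history construction, and $2\hELH$ requires a new reduction exploiting its weaker promise.

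\textbf{Containment.} A standard verifier picks a random local projector $\Pi_i$ and rejects iff the witness is in its image, so that the rejection probability equals $\frac{1}{m}\sum_i\|\Pi_i\psi\|^2$. The obstacle is that a diagonalising unitary for $\Pi_i$ generally has entries outside $\ZZ[1/2,\zeta_{2^k}]$ and is not realizable over $\calG_{2^k}$ by \cref{thm:AGKMMR24}. I would expand $\Pi_i = \sum_j \alpha_j P_j$ in the Pauli basis with $\alpha_j\in\QQ(\zeta_{2^k})$ and apply a Childs--Wiebe PREPARE--SELECT block encoding; after padding with zero-weight identity terms so that the uniform PREPARE amplitude $1/\sqrt S$ lies in $\QQ(\zeta_{2^k})$, every gate of the block encoding has entries in $\ZZ[1/2,\zeta_{2^k}]$ and is exactly synthesizable over $\calG_{2^k}$. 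Rejecting iff the LCU ancilla is $\ket 0$ gives rejection probability exactly zero on common-kernel witnesses and $\ge 1/\poly(n)$ otherwise (the latter being the standard promise on $\sigma_1$ of $\sum_i \Pi_i$). The $2\hELH$ containment is identical, replacing each $\Pi_i$ by the positive-semidefinite term $H_i$.

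\textbf{Hardness for $4$- and $3$-$\QSAT$.} Given a $\QMAo^{\calG_{2^k}}$-verifier $V = U_T\cdots U_1$, Kitaev's history Hamiltonian $H = \Hin + \Hclock + \Hprop + \Hout$ has propagation terms that inherit their entries from $U_t\in\calG_{2^k}$ and so lies entirely in $\QQ(\zeta_{2^k})$; perfect completeness of $V$ makes the history state an exact zero eigenvector, giving $\QMAo^{\calG_{2^k}}$-hardness of $5\hQSAT^{\QQ(\zeta_{2^k})}$. Bravyi's $5\to 4$ locality reduction uses $\QQ$-valued gadgets and preserves the field, yielding hardness of $4\hQSAT^{\QQ(\zeta_{2^k})}$. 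For $3\hQSAT$ I would then chain in Gosset--Nagaj's $4\to 3$ reduction from \cite{GN13}, whose projector onto a state of the form $\sqrt{1/3}\ket{000}-\sqrt{2/3}\ket{001}$ requires $\sqrt 2 \in \QQ(\zeta_{2^k})$; this holds precisely when $k\ge 3$, as $\sqrt 2 = \zeta_8 + \zeta_8^{-1}$.

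\textbf{$2\hELH$ hardness (the main obstacle).} Conventional perturbative $k\to 2$ gadgets only approximately preserve the low-energy spectrum, destroying the exact zero eigenvalue required by $\QMAo$, which is consistent with $2\hQSAT\in\p$. The slack I would exploit is that an ELH instance may be frustrated in the YES case: only a single vector in $\ker H$ is needed. Reducing from the $4\hQSAT^{\QQ(\zeta_{2^k})}$ instance above, I would replace each $4$-local projector $\Pi$ by a family of positive-semidefinite $2$-local terms $\{A_\ell^\dagger A_\ell\}$ arising from an LCU-style teleportation block-encoding over $\ZZ[1/2,\zeta_{2^k}]$, plus ancillary clamping terms that pin the ancilla register. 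This is arranged so that the joint kernel, restricted to the correct ancilla state, is isometric to $\ker\Pi$, giving $H' \in \QQ(\zeta_{2^k})$ with $\ker H'\ne\{0\}$ iff the original instance is satisfiable. The chief technical challenge is proving $\sigma_1(H')\ge 1/\poly$ in the NO case; I expect this from the isometric property of the block encoding combined with a geometric-lemma argument on the orthogonal complement of the intended ancilla subspace.
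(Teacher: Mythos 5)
Your containment argument is the same idea as the paper's (\cref{lem:kELH}: apply $H$ to the witness via an LCU and reject exactly when the application succeeds), but it contains a genuine gap: the PREPARE step cannot be made exact over $\calG_{2^k}$. The state you need has amplitudes proportional to the Pauli coefficients $\alpha_j\in\QQ(\zeta_{2^k})$, so normalizing it introduces the factor $\bigl(\sum_j\abs{\alpha_j}^2\bigr)^{-1/2}$, which is in general not in $\QQ(\zeta_{2^k})$ (already $\sqrt3$ for three equal rational coefficients); padding with zero-weight identity terms only makes a \emph{uniform} superposition exactly preparable, not this one, so \cref{thm:AGKMMR24} gives no exact circuit (cf.\ \cref{prop:exact-synthesis-impossible}). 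The paper instead prepares $\ket{H}$ probabilistically with postselection and has the verifier \emph{accept} whenever preparation fails, which preserves perfect completeness; soundness then needs a lower bound on the preparation probability, which for $k\ge3$ rests on the root-of-unity bound of \cref{lem:small-sum} and the entry-size restriction in \cref{def:F-zeta} --- none of this appears in your sketch. On the hardness side for $4\hQSAT$, the ``Bravyi $5\to4$ locality reduction'' is not something you can cite: Bravyi's result is a direct $4$-local circuit-to-Hamiltonian construction for $1$- and $2$-qubit gates, not a frustration-free locality reduction, and it does not handle the Toffoli gate contained in $\calG_2$ and $\calG_4$. The paper treats $k=2$ by decomposing $\CCXg$ via controlled-$\Sg$ and $k=1$ by a new construction with logical qu-$5$-its and split gadgets (\cref{sec:4SAT-G2}); your proposal covers neither, although the statement claims all $k\in\NN$.

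The larger gap is $2\hELH$-hardness. Replacing each $4$-local projector of a $4\hQSAT$ instance by $2$-local PSD terms through an ``LCU-style teleportation block-encoding'' with clamping terms is not a construction: an exact $4\to2$ locality reduction preserving a zero eigenvalue is precisely what is not known (and would be surprising given $2\hQSAT\in\p$ and that frustration-free $2$-local Hamiltonians have product ground states), and pinning the ancilla register with local terms while keeping the joint kernel isometric to $\Kernel\Pi$ is exactly where frustration destroys exactness; your own admission that the NO-case bound $\sigma_1(H')\ge1/\poly$ is unproven is the heart of the matter, not a loose end. The paper does something structurally different: a direct $2$-local circuit-to-Hamiltonian construction (\cref{sec:2local}) whose Hamiltonian is frustrated yet has the history state as an \emph{exact} null vector. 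The key ingredients are a one-hot clock combined with the Projection Lemma, the Nullspace Connection Lemma (\cref{lem:connect}), and split gadgets that branch the computation path on the \emph{eigenstates} of a single-qubit gate, so that the gate acts as a scalar on each branch and can be implemented by a $2$-local transition; $\CXg$ is obtained by nesting splits on $\ket{0}/\ket{1}$ and $\ket{+}/\ket{-}$. This is also where the condition $k\ge3$ originates: the eigenprojectors of $\Hg$ require $\sqrt2$, i.e.\ $\QQ(\zeta_8)$. Without an idea of this kind, your reduction for $2\hELH$ does not go through.
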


We can generalize \cref{thm:first} to $\QMAo(2)$, and show the first complete Hamiltonian problem $\ESSH$ (Exact Separable Sparse Hamiltonian problem, see \cref{def:ESSH}), which is the problem of deciding whether a sparse Hamiltonian $H$ has a separable nullstate, or for all product states, we have $\norm{H\ket{\psi_1}\ket{\psi_2}}\ge1/\poly$.
We also give the first complete problem, $\AESSH$ (Almost Exact Separable Sparse Hamiltonian), for ``high precision'' $\QMA(2)$ (i.e. with a promise gap of less than $1/\poly$), which is an approximate version of $\ESSH$ where in the YES-case the existence of a product state with negligible $\norm{H\ket{\psi_1}\ket{\psi_2}}$ suffices.
This solves an open problem of \cite{GR23}.

\begin{theorem}[\ref{thm:qma(2)-g2}, \ref{thm:qma(2)-cyclotomic-gateset}]
  For any finite gateset $\calG$ in $\QQ(\zeta_{2^k})$ with $k\in\NN$, it holds that $\QMAo^{\calG}(2) \subseteq \QMAo^{\calG_{2^k}}(2)$.
  For any finite gateset $\calG$ in $\QQ(\iu)$, it holds that $\QMAo^{\calG}(2) \subseteq \QMAo^{\calG_2}(2)$.
\end{theorem}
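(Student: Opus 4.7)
The plan is to lift the single-Merlin constructions from \cref{thm:cyclotomic-gateset} and \cref{thm:qma-g2} gate-by-gate to the two-Merlin setting. Given a $\QMAo^{\calG}(2)$ verifier $V$ acting on the Merlin registers $\sA_1, \sA_2$ together with a verifier workspace $\sB$, I would construct $\tilde V$ by replacing each gate $U \in \calG$ with its exact $\calG_{2^k}$ (respectively $\calG_2$) implementation from the single-prover proof. That implementation expresses $U$ as a linear combination of unitaries exactly synthesisable in the target gateset and realises it via the Childs--Wiebe LCU gadget with postselection on verifier-owned ancillas.

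The key observation is that this gate-by-gate substitution introduces no new interaction between $\sA_1$ and $\sA_2$ beyond what $V$ already performs: every new ancilla register lives in the verifier's workspace. Hence product proofs $\ket{\psi_1}_{\sA_1} \otimes \ket{\psi_2}_{\sA_2}$ remain valid unentangled proofs for $\tilde V$. Perfect completeness transfers directly from the single-prover result: if $V$ accepts some product state with probability exactly $1$, so does $\tilde V$. If the single-prover construction additionally requires the prover to supply auxiliary witness qubits so that postselection succeeds with certainty, these can be attached to one Merlin's register (say Merlin~1) without disturbing the product structure of the other. Soundness also transfers, because on each individual product-state input the simulation mirrors the action of $V$ up to postselection branches that do not increase acceptance probability; a round of standard $\QMAo(2)$ soundness amplification then restores any constant gap lost along the way.

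The main obstacle will be ruling out the possibility that the auxiliary witness qubits handed to Merlin~1 give a dishonest pair of Merlins extra power. As in the single-prover setting, the verifier can explicitly check that the auxiliary registers stay in their prescribed form and reject otherwise, so the effective verifier still behaves like $V$ on every product input. This argument reuses the soundness analysis of \cref{thm:cyclotomic-gateset,thm:qma-g2} essentially verbatim, and no genuinely new technique is needed beyond careful bookkeeping of which registers belong to which Merlin, together with the standard observation that such bookkeeping commutes with $\QMAo(2)$ amplification.
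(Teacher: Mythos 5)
Your argument is fine, and matches what the paper does, for the \emph{first} claim: for $\calG$ in $\QQ(\zeta_{2^k})$ with $k\ge2$, the simulation of \cref{lem:simulate-Q(i)} replaces each gate of $\calG$ by an LCU gadget acting only on verifier-owned ancillas, amplified by oblivious amplitude amplification, with the verifier accepting whenever postselection fails; this touches neither Merlin's register, so product proofs stay product, perfect completeness is exact, and the exponentially small soundness loss is absorbed by amplification (which is available for $\QMAo(2)$ because the product test has perfect completeness). Your worry about prover-supplied auxiliary witness qubits does not arise here: no such qubits are needed in the single-prover construction.

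The gap is in the second claim, $\QMAo^{\calG}(2)\subseteq\QMAo^{\calG_2}(2)$ for $\calG$ in $\QQ(\iu)$. This is \emph{not} a gate-by-gate substitution: any circuit over $\calG_2$ is a real orthogonal matrix and cannot implement a complex gate exactly on the original registers, regardless of how many verifier ancillas you add. The single-prover proof (\cref{thm:qma-g2}) instead applies the field isomorphism $\Psi$ of \cref{lem:cyclotomic-simulation}, which changes the proof itself: the honest witness becomes the real encoding $\ket{v(\psi)}$, carrying a phase ($\zeta$) register. In the two-Merlin setting the encoding of the honest product proof $\ket{\psi_1}\otimes\ket{\psi_2}$ uses a \emph{single shared} phase register and is therefore entangled across the two proof subregisters, so the honest strategy for your simulated verifier simply does not exist as an unentangled pair of witnesses. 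One must instead have each Merlin send his own encoded proof with his own phase register and give the verifier a gadget that coherently combines the two phase registers, then re-establish perfect completeness and soundness against Merlins sending malformed encodings; this is precisely the ``special consideration'' the paper defers to McKague \cite{McK13}. Your proposed fix of attaching auxiliary qubits to one Merlin and ``checking they stay in their prescribed form'' does not work: the honest phase register is entangled with the corresponding proof register, so it cannot be verified in isolation, and it cannot be handed to the other Merlin without destroying the product structure. So for the $\QQ(\iu)$ half of the statement your proposal is missing the essential ingredient.
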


\begin{theorem}[\ref{thm:QMA(2)}, \ref{thm:AESSH}]
  $\ESSH^{\QQ(\zeta_{2^k})}$ is $\QMAo^{\calG_{2^k}}(2)$-complete for all $k\in\NN$.\\
  $\AESSH_\epsilon$ is complete for $\epsilon\mhyphen\QMA(2)\coloneq\bigcup_{c\in 1-\epsilon^{\omega(1)},s\in1-\epsilon^{O(1)}}\QMA_{c,s}(2)$ with $\epsilon \in n^{-\Omega(1)}$.
\end{theorem}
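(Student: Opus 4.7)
The plan is to prove both parts in parallel, since they rest on the same two ingredients: a sparse variant of the Chailloux-Sattath separable history Hamiltonian construction for hardness, and block-encoding of $H$ via linear combinations of unitaries (LCU) for containment. Throughout, by the preceding cyclotomic-simulation theorems for $\QMAo(2)$, I may assume without loss of generality that any $\QMAo^{\calG}(2)$ verifier with $\calG\subset\QQ(\zeta_{2^k})$ uses only gates in $\calG_{2^k}$.

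For containment $\ESSH^{\QQ(\zeta_{2^k})}\in\QMAo^{\calG_{2^k}}(2)$, the verifier first decomposes the sparse Hamiltonian as $H=\sum_i\alpha_i U_i$ via Childs-Wiebe style sparse-matrix LCU, where $\alpha_i\in\QQ(\zeta_{2^k})$ and each $U_i$ is an efficiently-implementable sparse unitary with entries in the same field. Setting $A=\sum_i|\alpha_i|$, the verifier prepares the ancilla state $\frac1{\sqrt A}\sum_i\sqrt{\alpha_i}\ket{i}$, applies controlled-$U_i$ on the two proof registers $\ket{\psi_1}\ket{\psi_2}$, and uncomputes the preparation; the $\ket{0}$-ancilla branch then carries amplitude $\frac1A H\ket{\psi_1}\ket{\psi_2}$. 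The verifier accepts iff at least one ancilla qubit is nonzero after uncomputation. In the YES case $H\ket{\psi_1}\ket{\psi_2}=0$, so acceptance is exactly $1$; in the NO case the acceptance probability is at most $1-\|H\ket{\psi_1}\ket{\psi_2}\|^2/A^2\le 1-1/\poly$. All primitive operations synthesize exactly in $\calG_{2^k}$ by the earlier simulation theorems, giving perfect completeness.

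For hardness, given a $\QMAo^{\calG_{2^k}}(2)$ verifier $V$, I would construct the sparse separable history Hamiltonian $H_\mathrm{hist}=\Hin+\Hprop+\Hout$ in the style of Chailloux-Sattath, tuning the propagation terms so that every nonzero matrix entry lies in $\ZZ[1/2,\zeta_{2^k}]\subset\QQ(\zeta_{2^k})$; this is possible because the gate entries in $\calG_{2^k}$ already lie in that ring by \cite{AGKMMR24}. Perfect completeness of $V$ guarantees a product proof $\ket{\psi_1}\ket{\psi_2}$ accepted with probability exactly $1$, so the history state $\ket{\eta}=\sum_t\ket{t}V_t\cdots V_1\ket{\psi_1}\ket{\psi_2}\ket{0}^{\otimes a}$ is an exact zero eigenvector of $H_\mathrm{hist}$, yielding $\sigma_1(H_\mathrm{hist})=0$. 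In the NO case, Chailloux-Sattath's separable spectral gap analysis, transported into the sparse setting, lower bounds $\|H_\mathrm{hist}\ket{\phi_1}\ket{\phi_2}\|$ by $1/\poly$ on every product state. For $\AESSH_\epsilon$, the same construction goes through with the YES threshold loosened from $0$ to $\epsilon^{\omega(1)}$ and the NO threshold to $\epsilon^{O(1)}$, using Kitaev's standard quadratic tracking of completeness/soundness errors to convert the $\epsilon\mhyphen\QMA(2)$ gap into the corresponding $\AESSH_\epsilon$ gap; containment follows from the same LCU verifier, which already produces acceptance probability $1-\|H\ket{\psi_1}\ket{\psi_2}\|^2/A^2$.

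The main obstacle I anticipate is preserving the separability guarantee through the sparse LCU reformulation. One must verify both that the zero-energy subspace of $H_\mathrm{hist}$ remains generated by product-input history states (no spurious entangled null vectors appear after switching from local to sparse terms) and that all LCU coefficients arising in the sparse-matrix block encoding remain in $\QQ(\zeta_{2^k})$ without acquiring rational denominators that would violate the exact-synthesis characterization via $\calG_{2^k}$. A secondary technical point is that the high-precision $\AESSH_\epsilon$ setting requires careful bookkeeping in Kitaev's soundness analysis, since the polynomially-scaling prefactors from the propagation-term spectral gap must not swallow the $\epsilon^{\omega(1)}$ versus $\epsilon^{O(1)}$ promise gap.
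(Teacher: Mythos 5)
Your overall architecture matches the paper's: containment via an LCU verifier that \emph{rejects} exactly when $H$ is successfully applied to the proof, and hardness via the Chailloux--Sattath construction, whose history state is an exact separable null vector under perfect completeness and whose NO-case product-state energy bound transfers to $\norm{H\ket{\psi_1}\ket{\psi_2}}$ by Cauchy--Schwarz; the $\AESSH$ part is handled the same way in the paper (citing \cite{CS12} for hardness and reusing the same verifier for containment).

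There is, however, a genuine gap in your containment step: you assume the preparation unitary $W$ with $W\ket{0}=\frac{1}{\sqrt{A}}\sum_i\sqrt{\alpha_i}\ket{i}$ can be synthesized \emph{exactly} over $\calG_{2^k}$ and then uncomputed. By \cref{thm:AGKMMR24}, exact synthesis requires all entries to lie in $\ZZ[1/2,\zeta_{2^k}]$, and the amplitudes $\sqrt{\alpha_i}/\sqrt{A}$ are in general not of this form (cf.\ \cref{prop:exact-synthesis-impossible}: no finite cyclotomic gateset prepares all rational states exactly), so ``all primitive operations synthesize exactly'' is precisely the point that fails. The paper circumvents this by (i) first splitting the sparse $H$ into $1$-sparse pieces via the edge-coloring of \cite{BCCKS14} and decomposing each into $1$-sparse unitaries with \emph{power-of-two} coefficients (\cref{lem:sparse-sum-of-unitaries}, the cyclotomic extension of \cite{KL21}), and (ii) preparing the (unnormalized) coefficient state by the \emph{postselected} integer state preparation of \cref{lem:integer-state-cyclotomic}, with the verifier \emph{accepting whenever preparation fails}, which is what preserves perfect completeness; the uncomputation step is replaced by a Hadamard-basis measurement whose all-$\ket{+}$ outcome (the rejecting outcome) projects the ancilla onto the uniform state and thereby realizes $H\ket{\psi_1}\ket{\psi_2}$, with rejection probability $\ge \epsilon^{O(1)}$ in the NO case. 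Without accept-on-failure postselection (or an equivalent device) your verifier either presumes an impossible exact preparation or loses perfect completeness. A secondary remark: your concern about ``transporting the separable gap analysis into the sparse setting'' is moot, since the Chailloux--Sattath Hamiltonian is already sparse and its separability analysis applies verbatim; what does need checking (and is where \cref{def:F-zeta} enters) is only that the entries and common denominators stay polynomially bounded in $\QQ(\zeta_{2^k})$, which holds because the $\calG_{2^k}$ gate entries lie in $\ZZ[1/2,\zeta_{2^k}]$.
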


Lastly, we give the first completeness results for the clique homology problem.
$\CH$ is the problem of determining whether the clique complex of a given graph has a hole of a given dimension.
$\GCH$ additionally promises that in the NO-case, the combinatorial Laplacian has a minimum eigenvalue of $1/\poly(n)$, where $n$ is the number of vertices.
The following result follows from the above theorem and plugging the gateset $\calG_2$ into the machinery of \cite{KK24}\footnote{Note that the $\PSPACE$-completeness result could already be obtained by using the earlier framework of \cite{CK24}, but for simplicity we only developed gadgets for $\calG_2$ in the framework of \cite{KK24}.}, and the fact that $\QMAo$ with exponentially small promise gap equals $\PSPACE$ \cite{Li22}.

\begin{theorem}[\ref{thm:GCH}, \ref{thm:CH}] 
  $\GCH$ is $\QMAo^{\calG_2}$-complete.\\
  $\CH$ is $\PSPACE$-complete.
\end{theorem}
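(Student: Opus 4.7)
The plan is to carry out the KK24 framework for reducing $\QMAo$ computations to (gapped) clique homology, but specialized to the gateset $\calG_2=\{\Xg,\CXg,\CCXg,\Hg\otimes\Hg\}$. The choice of $\calG_2$ is essential here: all its gates have entries in $\ZZ[1/2]$, so they can be encoded into the (rational-valued) combinatorial Laplacian of a clique complex without introducing irrational coefficients that the graph structure alone cannot accommodate. By the earlier simulation theorem, any $\QMAo^{\calG}$ verifier over a $\QQ(\iu)$ gateset can be replaced by a $\QMAo^{\calG_2}$ verifier, so proving completeness for $\calG_2$ is the natural target and no generality is lost.

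For the $\QMAo^{\calG_2}$-completeness of $\GCH$, hardness proceeds by adapting the KK24 clock construction: for each gate of $\calG_2$ I would exhibit a weighted subgraph whose induced Laplacian acts on the corresponding span of simplices exactly as that gate's propagation term. The permutations $\Xg$, $\CXg$, $\CCXg$ are straightforward since they just relabel basis simplices; the nontrivial piece is the gadget for $\Hg\otimes\Hg$, whose entries are $\pm 1/2$. I would engineer this by splitting the two-qubit Hadamard into a sum of $+1/2$ and $-1/2$ terms and realizing each via appropriately oriented weighted edges among four simplex pairs, relying on the sign flexibility that the boundary map's orientations supply. Containment in $\QMAo^{\calG_2}$ follows by noting that the weighted Laplacian has rational entries and only $B^\dagger B$ and $BB^\dagger$ for the boundary map $B$ are applied to a guessed ground-state witness; these are rational and hence (by the first theorem's $\QMAo^{\calG_4}=\QMAo^{\calG_2}$ consequence) implementable with perfect completeness using $\calG_2$.

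For the $\PSPACE$-completeness of $\CH$, the same reduction is reused but applied to $\QMAo^{\calG_2}$ protocols with exponentially small promise gap, which by Li22's theorem captures $\PSPACE$ exactly. Carrying the small gap through the clock construction produces a $\CH$ instance whose Laplacian either has a zero eigenvalue (YES) or has smallest nonzero eigenvalue only exponentially small (NO). Since $\CH$ has no promise gap at all, the decision is exactly this dichotomy, yielding $\PSPACE$-hardness. Containment $\CH\in\PSPACE$ is standard: although the clique complex has up to $2^n$ simplices, the Laplacian is a succinctly specified integer matrix and deciding whether it has a zero eigenvalue reduces to rank computation for an exponentially large rational matrix, which lies in $\PSPACE$.

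The main obstacle I anticipate is constructing a clean gadget for $\Hg\otimes\Hg$ inside the KK24 edge-weighted framework. Unlike the single-qubit Hadamard (whose $1/\sqrt{2}$ factors are unavailable in a clique complex), $\Hg\otimes\Hg$ is rational, but the signs $\pm 1/2$ must emerge from the orientation data of the boundary operator rather than from freely chosen edge weights. Getting those signs to work out while maintaining the promised inverse-polynomial spectral gap in the $\GCH$ case, and composing the gadget with the clock so that the overall propagation Hamiltonian still decomposes as a positive combination of rational projections, is the place where the bulk of the technical work lies.
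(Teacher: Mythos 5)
Your overall route is the paper's route (specialize the King--Kohler machinery to $\calG_2$ for hardness, use Li's $\PSPACE=\QMA_1$-with-exponentially-small-gap for $\CH$, and rank computation of the succinct Laplacian for $\CH\in\PSPACE$), but the argument for the genuinely new half of the $\GCH$ statement --- containment in $\QMAo^{\calG_2}$ rather than merely in $\QMA$ --- has a real gap. The Laplacian $\Delta^k$ is an exponentially large, \emph{non-unitary} operator, so saying that $B^\dagger B$ ``is rational and hence implementable with perfect completeness using $\calG_2$'' conflates two different things: the gateset theorem ($\QMAo^{\calG_4}\subseteq\QMAo^{\calG_2}$) is about exactly synthesizing polynomial-size circuits from a finite gateset, and it gives you no way to ``apply'' a $2^{\poly}$-dimensional positive semidefinite matrix to a witness. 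The standard route (sparse Hamiltonian simulation plus phase estimation) only yields $\QMA$ containment, which is exactly where \cite{KK24} stopped. What is needed is the paper's mechanism: decompose the sparse, rational Laplacian \emph{exactly} as a dyadically weighted sum of efficiently implementable $1$-sparse unitaries (\cref{lem:sparse-sum-of-unitaries}), apply this linear combination to the witness by preparing the coefficient state, conditioning the unitaries, and measuring the ancillas in the Hadamard basis, and then \emph{reject precisely when the non-unitary application succeeds} (\cref{lem:ESH}). A harmonic witness then makes the application fail with certainty, giving perfect completeness, while $\lmin(\Delta^k)\ge1/\poly$ forces an inverse-polynomial rejection probability. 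Without this (or an equivalent) ingredient your containment claim does not follow.

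On the hardness side your plan is workable but fights an unnecessary obstacle. In the vertex-weighted framework of \cite{KK24} the weights only control the spectral gap and never encode matrix entries, and the embedding does not require a gadget whose Laplacian reproduces a gate's propagation term with $\pm\tfrac12$ entries; it only requires every local term of the embedded Hamiltonian to be (proportional to) a projector onto an \emph{integer} superposition of computational basis states, with signs supplied by orientations. The paper therefore first proves that $4\hQSAT$ over $\QQ$ is $\QMAo^{\calG_2}$-complete (\cref{thm:4SAT}), arranging the circuit-to-Hamiltonian construction so that $\Hg\otimes\Hg$ is split into two transitions $\sqrt2\,\Hg$ and $\tfrac1{\sqrt2}\Hg$, whose transition projectors are onto states such as $\ket{100}-\ket{110}-\ket{111}$; the normalization is irrelevant because only kernels matter, so no $\pm\tfrac12$ entries ever need to be realized in the complex. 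Also, for $\CH$-hardness you cannot simply cite Li's theorem together with your gateset-simulation result, since the latter relies on error reduction and an inverse-polynomial gap; the paper instead observes that Li's $\QMA_{1,1-1/\exp}$ verifier for $\PSPACE$ just checks the history state of a classical computation and is therefore already a $\calG_2$ circuit, after which the same embedding (with \cite[Theorem~10.1]{KK24} giving $\lmin(\Delta^{2n-1})=0\iff\lmin(H)=0$ with no gap promise) finishes the reduction. Your $\CH\in\PSPACE$ argument agrees with the paper's.
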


This improves the previous state of the art, which was $\QMAo$-hardness and $\QMA$-containment for $\GCH$ \cite{KK24}, and $\QMAo$-hardness for $\CH$ \cite{CK24}.
$\GCH$ is particularly nice as a $\QMAo$-complete problem since the ``perfect completeness'' is inherent to the problem: there exists a hole iff the kernel of the Laplacian is nonempty.
For $\kQSAT$, there is (almost) no physical difference between $\lmin(H) = 0$ and $\lmin(H) < 1/\exp(n)$, whereas even $\lmin(\Delta) > 1/\exp(n)$ still just implies there is no hole, where $\Delta$ is the combinatorial Laplacian.

\cite{SL23} shows that computing the Betti number of a clique complex is $\mathsf{\#P}$-hard, which is not directly comparable since that is not a decision problem.
However, note that the $k$-th Betti number $\beta_k$ denotes the number of $k$-dimensional holes, and so we prove that only deciding whether $\beta_k\ne0$ is already $\PSPACE$-hard.
On its face, the $\PSPACE$-completeness of $\CH$ is a completely classical result, which we remarkably obtain using techniques from quantum complexity theory.

\paragraph*{Techniques.} \emph{Linear combinations of unitaries (LCU).} Our main technical contribution is as follows.
Suppose we wish to apply some unitary $U$ to our state, but we do not know how to do that with our gateset.
Instead, we can write $U$ as a linear combination of unitaries that we \emph{can} efficiently implement.
We follow the idea of Childs and Wiebe \cite{CW12} in the context of Hamiltonian simulation, which allows us to apply $U$ exactly after a successful measurement.
Conveniently, the Pauli matrices $\{\Ig,\Xg,\Yg,\Zg\}$ form an orthonormal basis, and thus we can write any unitary as $U=\sum_{x}a_x P_x$, where the $P_x$ are tensor products of Paulis.
If we can prepare $\ket{U} = \sum_x a_x\ket{x}$, then we can conditionally apply $P_x$ to an input state $\ket{\psi}$ to obtain $\sum_{x}a_x\ket{x}\otimes P_x\ket{\psi}$. Projecting the first register onto the ``all $\ket{+}$'' state then gives $\sum_{x}a_xP_x\ket{\psi}=U\ket{\psi}$ in the second register.
Note that if $U$ is in $\QQ(\zeta_{2^k})$ ($k\ge2$), then $\ket{U}$ is also in $\QQ(\zeta_{2^k})$.
We show how to efficiently prepare such states with postselection.
In the context of $\QMAo$, postselection means that the verifier accepts if postselection fails to maintain perfect completeness.
Although the success probability of applying $U$ in this way is only $2^{-2n}$ for an $n$-qubit operator, we can boost it exponentially close to $1$ using the \emph{oblivious amplitude amplification} technique of \cite{BCCKS14}.

\emph{Simulating cyclotomic fields.} In order to implement gates $\calG_{2^k}$ with $\calG_{2}$, we extend McKague's technique of simulating complex gates with real gates \cite{McK10,McK13} to cyclotomic gates.
We can write any $a \in \QQ(\zeta_{2^k})$ as $a=\sum_{i=0}^{2^{k-1}-1}a_i\zeta_{2^k}^i$ with all $a_i\in\QQ$.
Then we represent $a$ as a vector $\ket{v(a)} = \sum_{i}a_i\ket{i}$, and multiplication by $b$ in $\QQ(\zeta_{2^k})$ becomes matrix multiplication by some $M_b$ in $\QQ$ so that $M_b\ket{v(a)}=\ket{v(ab)}$.
Formally, there exists a field isomorphism from $\QQ(\zeta_{2^k})$ to a subfield of invertible $\QQ^{2^{k-1}\times 2^{k-1}}$ matrices.
Applying this field isomorphism elementwise to a unitary results in an orthogonal matrix.
Formally, we have a group homomorphism $\Psi:\unitary(N,\QQ(\zeta_{2^k}))\to\rmO(2^{k-1}N,\QQ)$.\footnote{This only works for the $2^k$-th roots of unity, since they have the unique property $M_b^T = M_{\overline{b}}$, which preserves orthogonality when applied to a unitary.}

\emph{Nullspace testing.}
We observe that the LCU technique also applies to non-unitary matrices.
We can directly apply the Hamiltonian to a quantum state (i.e., not $e^{\iu H}$ as in Hamiltonian simulation).
However, the success probability is proportional to $\norm{H\ket{\psi}}$.
Thus, if $\ket{\psi}\in\Null(H)$, then multiplying $\ket\psi$ with $H$ fails with probability $1$.
Hence, we can check in $\QMAo$ whether a frustrated Hamiltonian has a nonempty nullspace.
We further extend this technique to sparse Hamiltonians by applying tools from Hamiltonian simulation \cite{BCCKS14,KL21} (modified for cyclotomic fields), to write a sparse Hamiltonian as a linear combination of $1$-sparse unitaries.

\emph{Hamiltonians.}
We construct our Hamiltonians via the Nullspace Connection Lemma of \cite{RGN24}, which effectively gives a blueprint for circuit-to-Hamiltonian constructions and only requires us to prove properties of local gadgets.
We do this computationally \cite{sup}.
Nevertheless, our $2$-local $\QMAo$-complete Hamiltonian requires some new tricks, since the $2$-local $\QMA$-complete circuit-to-Hamiltonian construction of \cite{KKR05} does not have the history state as eigenstate.
We follow the idea of splitting the computation path inside the history state, conditioned on a computational register \cite{ER08,GN13,RGN24}.
The new idea is to split the computation path conditioned on the eigenstates of a single-qubit gate (not just on $\ket{0},\ket{1}$).
Then we can apply a single-qubit gate as a ``zero-qubit'' gate (i.e. a global phase), which we can implement with a $2$-local transition by using a ``one-hot encoding'' for the clock register.

\paragraph*{Open questions.}
Despite having made significant progress, the big question whether $\QMAo$ equals $\QMA$ remains wide open.
Does there exist a classical oracle separation?
We also have ``smaller'' open questions.
Can \cref{thm:firstbqp} be extended to $\QMAo$, i.e., does a single gateset suffice to simulate all $2^k$-th cyclotomic gatesets for $\QMAo$? The issue here is that our encoding allows a malicious prover to send an effective ``zero-state'', as for $k>2$ the powers of $\zeta_{2^{k}}$ are not linearly independent in the complex plane.
More generally, can we extend these results to other roots of unity, i.e., not just powers of two?

On a more positive note, we have shown that $\QMAo$ can solve other problems than just the frustration-free local Hamiltonian problem (i.e. quantum SAT).
Perhaps our techniques will open the door for further $\QMAo$-completeness results.
A few Hamiltonian completeness results are still open for $k<3$ ($3\hQSAT$, $2\hELH$), where our constructions need the eighths root of unity to implement to implement CNOT and Hadamard gates, respectively.

Finally, the $\QMAo$-completeness of $\GCH$ without vertex weights remains open, 

\paragraph*{Organization.}
In \cref{sec:preliminaries}, we define our problems, complexity classes, and introduce clique homology.
In \cref{sec:gateset}, we show how to simulate arbitrary cyclotomic gates with the $\calG_{2^k}$ gatesets.
In \cref{sec:qsat}, we adapt existing $\QMAo$-completeness results for the Quantum SAT problem to our definitions of $\QMAo$.
In \cref{sec:2local}, we show that even $2$-local Hamiltonian problems can be $\QMAo$-hard.
In \cref{sec:sparse}, we adapt our techniques to sparse Hamiltonian problems, including $\QMAt$ and clique homology.

\section{Preliminaries}\label{sec:preliminaries}

In this section, we introduce notation and give formal definitions of the complexity classes and problems used in this paper.

\subsection{BQP and QMA with perfect completeness}

For a quantum verifier circuit $Q$ with $n_1$ ancilla qubits and $n_2$ proof qubits, we define the acceptance probability on input $\ket{\psi}\in\CC^{2^{n_2}}$ as 
\begin{equation}
  \pacc(Q, \psi) = \Tr\left((\ketbrab{1}_1\otimes I_{2,\dots,n_1+n_2})\,Q\, (\ketbrab{0}^{\otimes n_1}\otimes \ketbrab{\psi})\right).
\end{equation}
For a \BQP circuit, we just write $\pacc(Q)$ for the acceptance probability.
We denote the ancilla register by $\calA$ and the proof register by $\calB$.

\begin{definition}[$\QMAo$]\label{def:QMA1}
  A promise problem $A=(\Ayes,\Ano)$ is in $\QMAo^{\calG}$ if there exists a poly-time uniform family of quantum circuits $\{Q_x\}$\footnote{Here, we allow the ``circuit constructor'' access to the instance $x$, as in \cite{Bra06}. Sometimes $\QMA$ is also defined with a universal family of circuits $\{Q_n\}$, i.e., the ``circuit constructor'' only has access to the size of the instance. These definitions are equivalent for the gatesets we consider, i.e., $\calG_{2^k}$ of \cite{AGKMMR24}.} and polynomials $n_1,n_2$, where each $Q_x$ only uses gates from the gateset $\calG$ and acts on $n_1(\abs{x})$ ancilla qubits and $n_2(\abs{x})$ proof qubits, such that:
  \begin{itemize}
    \item (Completeness) If $x\in\Ayes$, then there exists a proof $\ket{\psi}\in\CC^{2^{n_2}}$ with $\pacc(Q_x,\psi) = 1$.
    \item (Soundness) If $x\in\Ano$, then for all proofs $\ket{\psi}\in\CC^{2^{n_2}}$, $\pacc(Q_x,\psi) \le 1-1/\poly(\abs{x})$.
  \end{itemize}
  We additionally define $\QMAo^{\calG}(2)$ in the same way, but the proof is a product state $\ket{\psi} = \ket{\psi_1}\otimes\ket{\psi_2}$ with $\ket{\psi_1},\ket{\psi_2}\in\CC^{2^{n_2}}$.
  We also write $\QMA_{1,c}$ for soundness $c$, i.e., the verifier accepts with probability $\le c$ in the NO-case.
\end{definition}

\begin{definition}[$\BQPo$]\label{def:BQP1}
  A promise problem $A=(\Ayes,\Ano)$ is in $\BQPo^{\calG}$ if there exists a poly-time uniform family of quantum circuits $\{Q_x\}$ and polynomial $n_1$, where each $Q_x$ only uses gates from the gateset $\calG$ and acts on $n_1(\abs{x})$ ancilla qubits, such that:
  \begin{itemize}
    \item (Completeness) If $x\in\Ayes$, then $\pacc(Q_x) = 1$.
    \item (Soundness) If $x\in\Ano$, then $\pacc(Q_x) \le 1-1/\poly(\abs{x})$.
  \end{itemize}
\end{definition}

Standard error reduction results also hold as long as we have at least the gates
\begin{equation}\label{eq:G0}
  \calG_2 \coloneq  \{\Xg,\CXg,\CCXg,\Hg\otimes\Hg\}\qquad \text{(Pauli $X$, CNOT, Toffoli, Hadamard)},
\end{equation}
since they allow us to implement any qubit unitary with entries in $\ZZ[1/2]$~\cite{AGR20,AGKMMR24}.
This also extends to $\QMAo(2)$ since the product test has perfect completeness \cite{HM13}.

\begin{lemma}
  Let $\calG\supseteq\calG_2$. Then $\BQPo^{\calG},\QMAo^{\calG},\QMAo^{\calG}(2)$ have soundness $1/\exp(n)$.
  Also, $\QMAo^{\calG}(2) = \QMAo^{\calG}(\poly)$.
\end{lemma}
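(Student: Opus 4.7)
The plan is to adapt the standard soundness-amplification strategies for $\BQP$, $\QMA$, and $\QMA(2)$ while (a) preserving perfect completeness and (b) ensuring that every auxiliary gate lies in $\calG_2 \subseteq \calG$. Since $\calG_2$ exactly synthesises every unitary with entries in $\ZZ[1/2]$ by \cite{AGR20,AGKMMR24} (in particular, every classical reversible circuit), condition (b) reduces to checking that the extra logic introduced is classical reversible.

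For $\BQPo^\calG$, I would take $k = \poly(n)$ parallel copies of the verifier on disjoint ancillas and output the classical majority of their accept bits; majority is computable from $\{\Xg, \CXg, \CCXg\} \subseteq \calG_2$. In the YES-case each copy accepts with probability $1$, so the majority is $1$; in the NO-case a Chernoff bound on the independent copies gives acceptance probability at most $\exp(-\Omega(k))$.

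For $\QMAo^\calG$, I would run $k = \poly(n)$ parallel copies on a shared witness register and accept iff all copies accept. On the honest tensor-power witness $\ket\psi^{\otimes k}$, each copy still accepts with probability $1$, so perfect completeness is preserved. For soundness, the effective acceptance operator on the witness register is $M_{\mathrm{acc}}^{\otimes k}$, whose operator norm equals $\|M_{\mathrm{acc}}\|_\infty^k \le (1 - 1/\poly)^k$, which is $\exp(-\Omega(n))$ for polynomially large $k$; this bounds the acceptance probability over every (possibly entangled) witness. In contrast with the usual $\QMA$ amplification, no completeness boosting is needed, so no Marriott--Watrous-style gadget enters, and the only additional logic is a Toffoli-based AND of the $k$ accept bits.

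For $\QMAo^\calG(2)$ the direct AND argument fails because the two Merlins can entangle their sub-witnesses across the $k$ copies. The remedy, and also the route to $\QMAo^\calG(2) = \QMAo^\calG(\poly)$, is the \emph{product test} of Harrow--Montanaro \cite{HM13}, which is a $\QMA(2)$ procedure with perfect completeness on true product inputs and noticeable rejection on far-from-product inputs. For soundness amplification I would have each Merlin submit $k$ copies of their honest witness and, with probability $1/2$ each, either run the product test or run the $k$-fold AND of the original verifier; for $\QMAo^\calG(\poly) \subseteq \QMAo^\calG(2)$ I would have each of two Merlins send the concatenation of all $k$ honest witnesses and apply the product test to certify the $k$-fold product structure, after which the original $k$-prover verifier is applied to one half. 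In both cases perfect completeness follows from perfect completeness of the product test and of the original verifier, and soundness follows by combining the product-test estimate of \cite{HM13} with the operator-norm argument above. \textbf{The main obstacle} is exactly this preservation of perfect completeness in the $\QMA(2)$ setting: adversarial entanglement across Merlins blocks the clean tensor-product argument available for $\QMA$, and it is the perfect-completeness guarantee of the HM13 product test that rescues the construction. Everything else is routine Chernoff / operator-norm reasoning together with the fact that classical control logic can be implemented inside $\calG_2$ using Toffolis.
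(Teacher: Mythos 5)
Your overall route — classical control logic synthesized exactly inside $\calG_2$, perfect-completeness-preserving parallel repetition, and the Harrow--Montanaro product test for the two-Merlin statements — is exactly what the paper has in mind (it states the lemma with only that justification). However, two of your steps fail as written. First, the $\BQPo$ amplification by \emph{majority} vote is wrong for one-sided error: the NO-case acceptance probability is only bounded by $1-1/\poly(n)$, which exceeds $1/2$, so among $k$ independent runs the expected fraction of accepting copies is about $1-1/\poly(n)$, and the Chernoff bound gives that the majority accepts with probability $1-\exp(-\Omega(k))$ --- the opposite of your claim. You must use the unanimity (AND) rule here, exactly as you do for $\QMAo$: completeness $1$ is preserved, and the NO-case acceptance is at most $(1-1/\poly(n))^k \le 1/\exp(n)$.

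Second, your $\QMAo(2)$ soundness amplification cannot reach $1/\exp(n)$. In a protocol that with probability $1/2$ runs a product test and with probability $1/2$ runs the $k$-fold AND, dishonest Merlins can always send states of the honest \emph{format} (genuine products across the $k$ slots encoding a useless witness): the product-test branch then passes with probability $1$, so the total acceptance is at least $1/2$, no matter how strong the product test is. Worse, as described the construction can even break perfect completeness: if the product test is performed between Merlin 1's register $\ket{\psi_1}^{\otimes k}$ and Merlin 2's register $\ket{\psi_2}^{\otimes k}$, the pairwise swap tests compare non-identical states and do not pass with certainty; if instead it is run inside a single Merlin's register, the unentanglement assumption underlying the HM13 soundness guarantee is violated. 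The paper's intent is to invoke the HM13 error-reduction theorem for $\QMA(2)$ itself, whose protocol applies the product test to two identical copies of the full honest product witness (hence passes with probability $1$), and to check two things: that with an AND-type decision rule it preserves completeness $1$, and that its primitives (swap tests, i.e.\ paired Hadamards plus Fredkin/Toffoli gates, and classical control) lie in $\calG_2$. Your $\QMAo(\poly)\subseteq\QMAo(2)$ step via a product test followed by the multi-prover verifier is fine in spirit, but note it only yields a constant soundness gap, which must then be closed by the (corrected) $\QMAo(2)$ amplification rather than by the coin-flip protocol above.
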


\subsection{Hamiltonian problems}

The $k$-local Hamiltonian problem defined below is considered the canonical $\QMA$-complete problem.

\begin{problem}[$k$-local Hamiltonian problem ($\kLH_\epsilon$)]\label{def:kLH}
  Given a classical description of a $k$-local Hamiltonian $H = \sum_{S\subseteq[n]}^m H_S\otimes I_{[n]\setminus S}$ on $n$ qubits with $\norm{H_S}\le1$, and $\alpha,\beta$ with $\beta-\alpha\ge \epsilon(n)$, decide:
  \begin{itemize}
    \item (YES) $\lmin(H) \le \alpha$.
    \item (NO) $\lmin(H) \ge \beta$.
  \end{itemize}
\end{problem}

\begin{definition}\label{def:HP}
  Let $\HP_\epsilon$ one of the Hamiltonian problems defined in this section.
  We omit $\epsilon$ if $\epsilon \in n^{-O(1)}$, i.e., $\HP \coloneq  \bigcup_{\epsilon \in n^{-O(1)}} \HP_{\epsilon}$.
\end{definition}

The quantum $k$-SAT problem is considered the canonical $\QMAo$-complete problem.
It differs from the local Hamiltonian problem in that the local terms are required to be projectors and in the YES-case they must have a common (zero energy) ground state (i.e. $H$ is frustration-free).
The $\QMAo$-completeness of $\kQSAT$ is quite subtle, however, due to the requirement of perfect completeness. 
Bravyi~\cite{Bra06} defined $\kQSAT$ in the same way as below, i.e., the Hamiltonian is restricted to a subfield $\FF\subseteq\CC$ with exact representation, and shows completeness for $\QMAo^{\calG}$, where $\calG$ contains all three-qubit gates with entries in $\FF$.
Gosset and Nagaj~\cite{GN13} instead consider the gateset $\{\Hg,\CXg,\Tg\}$, and show completeness for $\kQSAT$, where each local projector $\Pi$ is in $\ZZ[\frac1{\sqrt2},\iu]$, or there exists a unitary with only these elements, such that $U\Pi U^\dagger = \ketbraa{(\sqrt{1/3}\ket{000}-\sqrt{2/3}\ket{001})}$.

\begin{problem}[Quantum $k$-SAT ($\kQSAT^{\FF}_\epsilon$)]\label{def:kQSAT}
  Let $\FF\subseteq \CC$ be a field.
  Given a classical description of a $k$-local Hamiltonian $H = \sum_{S\subseteq[n]}^m H_S\otimes I_{[n]\setminus S}$ on $n$ qubits with  $H_S\succeq0$ and $\norm{H_S}\le 1$ \footnote{Commonly the $H_S$ are assumed to be projectors, but here we only require $H_S$ to be positive semidefinite so that, e.g., projectors onto $\ket{+}$ can be represented in $\QQ$. We will see in \cref{thm:4SAT}, that this does not add power for $k\ge4$.
  For the $k=3$ case (see \cref{thm:3SAT}), we leave this issue open for future work.}, decide:
  \begin{itemize}
    \item (YES) $\lmin(H) = 0$.
    \item (NO) $\lmin(H) \ge \epsilon(n)$.
  \end{itemize}
\end{problem}

In this work, we ``bridge the gap'' between the definitions of \cite{Bra06} and \cite{GN13} in that we show completeness for $\kQSAT$ in a \emph{field} for $\QMAo^\calG$ with a \emph{finite} gateset $\calG$.
Our definition of $\kQSAT$ is more general than that of \cite{GN13} and we have $3\hQSAT^{\text{\cite{GN13}}} \subseteq 3\hQSAT^{\QQ(\zeta_8)}$ (see \cref{thm:3SAT}).
Specifically, we give a gateset for every $2^k$-th cyclotomic field $\QQ(\zeta_{2^k})$, where $\zeta_{2^k} = e^{2\pi\iu /2^k}$, i.e., the field extension of $\QQ$ generated by a primitive $2^k$-th root of unity.
Note that the $\Tg$-gate is in $\QQ(\zeta_8)$.

We also define a new variant of the local Hamiltonian problem with a stronger promise in the YES-case, where we only have to distinguish between the cases $\sigma_1(H)=0$ and $\sigma_1(H)\ge 1/\poly(n)$, for $\sigma_1$ the smallest singular value.

\begin{problem}[$k$-local Exact Hamiltonian problem ($\kELH^{\FF}_\epsilon$)]\label{def:kELH}
  Given a classical description of a $k$-local Hamiltonian $H = \sum_{S\subseteq[n]}^m H_S\otimes I_{[n]\setminus S}$ on $n$ qubits such that all entries of the $H_S$ are in $\FF$, decide:
  \begin{itemize}
    \item (YES) $\sigma_1(H) = 0$.
    \item (NO) $\sigma_1(H) \ge \epsilon(n)$.
  \end{itemize}
\end{problem}

Note that $\kELH$ as defined above is equivalent to the problem of deciding whether $H$ has some given eigenvalue $\alpha\in\FF$, or if for all eigenvalues $\lambda$ of $H$, $\abs{\alpha-\lambda}\ge\epsilon(n)$.
The reduction is trivial by transforming $H$ into $H' = H - \alpha I$.
In that sense, we can view $\kELH$ as an ``exact'' variant of the local Hamiltonian problem.

Next, we also define sparse variants of the above problems.
We say an operator $A$ on $n$ qubits is \emph{sparse} if there exists a $\poly(n)$-size circuit that given a row/column of $A$, computes the indices and values of all non-zero entries of that row/column.

\begin{problem}[Sparse Hamiltonian problem ($\SH_\epsilon$)]\label{def:SH}
  Given a classical description of a sparse Hamiltonian $H$ on $n$ qubits and $\alpha,\beta$ with $\beta-\alpha\ge \epsilon(n)$, decide:
  \begin{itemize}
    \item (YES) $\lmin(H) \le \alpha$.
    \item (NO) $\lmin(H) \ge \beta$.
  \end{itemize}
\end{problem}

\begin{problem}[Exact Sparse Hamiltonian problem ($\ESH^{\FF}_\epsilon$)]\label{def:ESH}
  Given a classical description of a sparse Hamiltonian $H$ on $n$ qubits with all entries in $\FF$ and $\norm{H}\le 1$, decide:
  \begin{itemize}
    \item (YES) $\sigma_1(H) = 0$.
    \item (NO) $\sigma_1(H)\ge\epsilon(n)$.
  \end{itemize}
\end{problem}

\begin{problem}[Separable Sparse Hamiltonian problem ($\SSH_\epsilon$) \cite{CS12}]\label{def:SSH}
  Given a classical description of a sparse Hamiltonian $H$ on $2n$ qubits with $\norm{H}\le1$ and $\alpha,\beta$ with $\beta-\alpha\ge \epsilon(n)$, decide:
  \begin{itemize}
    \item (YES) There exists $\ket{\psi}=\ket{\psi_1}\otimes\ket{\psi_2}$ with $\ket{\psi_1},\ket{\psi_2}\in\CC^{2^n}$ such that $\braketb{\psi}{H} \le \alpha$.
    \item (NO) For all such $\ket{\psi}$, $\braketb{\psi}{H} \ge \beta$.
  \end{itemize}
\end{problem}

Note that $\SH\in\QMA$~\cite{CS12} (hardness is trivial) via phase estimation~\cite{NC10} and simulatability of sparse Hamiltonians~\cite{AT03}.
The following separable sparse Hamiltonian problem is the first non-trivial $\QMAo(2)$-complete problem.

\begin{problem}[Exact Separable Sparse Hamiltonian problem ($\ESSH^{\FF}_\epsilon$)]\label{def:ESSH}
  Given a classical description of a sparse Hamiltonian $H$ on $n$ qubits with all entries in $\FF$ and $\norm{H}\le 1$, decide:
  \begin{itemize}
    \item (YES) There exists $\ket{\psi}=\ket{\psi_1}\otimes\ket{\psi_2}$ with $\ket{\psi_1},\ket{\psi_2}\in\CC^{2^n}$ such that $H\ket{\psi}=0$.
    \item (NO) For all such $\ket{\psi}$, $\norm{H\ket{\psi}}\ge\epsilon(n)$.
  \end{itemize}
  We also define an ``almost exact'' version of $\ESSH$, denoted $\AESSH_\epsilon^{\FF}$, where the YES-case only requires $\norm{H\ket{\psi}}\le \epsilon^{\omega(1)}$.
\end{problem}

\begin{definition}[$\FF = \QQ(\zeta_{2^k})$]\label{def:F-zeta}
  For the above problems with $\FF = \QQ(\zeta_{2^k})$ with $k\in\NN$, we require the numerators and denominators of rational coefficients of the Hamiltonian entries to be bounded by $\poly(\epsilon^{-1},n)$ in absolute value.%
  \footnote{For $k\in\{1,2\}$ we still get containment if we allow $\poly(n)$ bit complexity of entries. But for $k\ge3$, we are limited by our integer state preparation routine in \cref{lem:integer-state-cyclotomic}. We allow entries to grow proportionally with $\epsilon^{-1}$, since for $\epsilon\in n^{-\omega(n)}$, state preparation is no longer the bottleneck.}
  For the sparse problems, we also require that the common denominator is bounded by $\poly(\epsilon^{-1},n)$, i.e., $H = H'/h$ with $H'\in\ZZ[\zeta_{2^k}]$ and $h\in\NN,h\le \poly(n)$.
\end{definition}

\subsection{Clique homology}

We give a brief summary of the clique homology problem to the extent that we require for this paper.
We refer the reader to \cite{KK24} for a more complete treatment.
\begin{definition}[Clique complex \cite{CK24,KK24}]\label{def:clique-complex}
  \begin{itemize}
    \item A \emph{simplicial complex} $\calK$ is a collection of subsets $\calK = \calK^0\cup\calK^1\cup\dotsm$, such that $\forall\sigma\in\calK^k\colon \abs{\sigma} = k+1$ and $\forall\sigma\in\calK\;\forall \tau\subset\sigma\colon\tau\in\calK$.
    \item Let $\CK$ be the complex vector space formally spanned by $\calK^k$, picking a conventional ordering for each simplex $\sigma = [v_0,\dots,v_k]$ and identifying $\ket{[\pi(v_0),\dots,\pi(v_k)]} = (-1)^{\sgn(\pi)}\ket{\sigma}$ for all permutations $\pi\in S_{k+1}$. $\sgn(\pi)$ denotes the sign of the permutation and is also known as the \emph{orientation} of the simplex.
    \item For a $k$-simplex $\sigma\in\calK^k$, let $\up(\sigma) \subseteq \calK^0$ be the set of vertices $v$, such that $\sigma\cup\{v\}\in\calK^{k+1}$.
    \item The \emph{coboundary map} is given by $d^k:\CKk{k}\to\CKk{k+1},\;d^k\ket{\sigma} \coloneq  \sum_{v\in\up(\sigma)}\ket{[v]+\sigma}$, where $[v]+\sigma\equiv [v,v_0,\dots,v_k]$ for $\sigma=[v_0,\dots,v_k]$.\footnote{Our notation in the definition of $d^k$ differs from \cite{KK24}, which uses ``set notation'' $\ket{\sigma\cup[v]}$. When the orientation of simplices is important, we use a ``list notation'', i.e., $[v]+\sigma$ to denote prepending $v$ to $\sigma$. This way $d^k\circ d^{k-1}=0$ is easy to see.}
    \item A \emph{chain complex} is a chain of complex vector spaces $\calC^k$ with linear maps $d^k:\calC^k\to\calC^{k+1}$ satisfying $d^k\circ d^{k-1} = 0$ for all $k$.
    \begin{equation*}
      \calC^{-1}\xrightarrow{\;d^{-1}\;}\calC^0\xrightarrow{\;d^{0}\;}\calC^1\xrightarrow{\;d^{1}\;}\calC^2\xrightarrow{\;d^{2}\;}\cdots
    \end{equation*}
    \item The \emph{cohomology groups} are defined as $H^k = {\displaystyle\frac{\Kernel d^k}{\Image d^{k-1}}}$, noting $\Image d^{k-1} \subseteq \Kernel d^k$. The $k$-th Betti number is $\beta_k = \dim H^k$.
    \item Defining an inner product on $\calC^k$ yields a Hilbert space, and we define the \emph{boundary map} as $\partial^k = (d^{k-1})^\dagger$ and the \emph{Laplacian} as $\Delta^k = d^{k-1}\partial^k + \partial^{k+1}d^k$.
    For now we just declare the simplices to be an orthonormal basis, so that $\braket{\sigma}{\tau} = \begin{cases}
      1, &\text{if }\sigma=\tau\\
      0 &\text{otherwise}
    \end{cases}$.
    \item The \emph{homology groups} can then be defined as $H^{k\ast} = {\displaystyle\frac{\Kernel{\partial^k}}{\Image\partial^{k+1}}}$. Here, the homology and cohomology groups are isomorphic (see \cite{CC24}).
    \item To derive a chain complex from a simplicial complex, set $\calC^{-1}(\calK) = \Span\{\ket\emptyset\}\cong \CC$ and define $d^{-1}\ket{\emptyset} = \sum_{v\in\calK^k}\ket{v}$.
    \item The \emph{clique complex} of a graph $G$, denoted $\Cl(G)$ is the simplicial complex consisting of the cliques of $G$.
  \end{itemize}
\end{definition}

\begin{proposition}[\cite{KK24}]
  It holds that $\braketb{\psi}{\Delta^k} = \norm{\partial^k\ket{\psi}}^2 + \norm{d^k\ket{\psi}}$.
  $\Kernel \Delta^k$ is canonically isomorphic to $H^k$, which means that every homology class has a unique \emph{harmonic} representative, i.e., in the kernel of the Laplacian.
\end{proposition}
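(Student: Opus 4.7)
The plan is to treat this as a classical finite-dimensional Hodge-theory computation, which splits cleanly into the energy identity, a characterization of $\Kernel\Delta^k$, and an orthogonal decomposition argument.

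First, I would prove the energy identity (interpreting the statement with the evident typo, so the second norm is also squared). By definition $\Delta^k = d^{k-1}\partial^k + \partial^{k+1}d^k$, and by the definition $\partial^j = (d^{j-1})^\dagger$ we have $d^{k-1} = (\partial^k)^\dagger$ and $\partial^{k+1} = (d^k)^\dagger$. Plugging in and using $\bra{\psi}A^\dagger A\ket{\psi} = \norm{A\ket\psi}^2$ yields
\begin{equation*}
\braketb{\psi}{\Delta^k} = \bra{\psi}(\partial^k)^\dagger \partial^k\ket{\psi} + \bra{\psi}(d^k)^\dagger d^k\ket{\psi} = \norm{\partial^k\ket{\psi}}^2 + \norm{d^k\ket{\psi}}^2.
\end{equation*}

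From this identity I would immediately deduce $\Kernel\Delta^k = \Kernel d^k \cap \Kernel\partial^k$. The ``$\supseteq$'' direction is by direct inspection of $\Delta^k$. For ``$\subseteq$'', if $\Delta^k\ket\psi = 0$ then $\Delta^k\succeq 0$ forces $\braketb{\psi}{\Delta^k} = 0$, so both nonnegative summands must vanish.

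Next I would set up the Hodge decomposition using only the finite-dimensional identity $(\Image A)^\perp = \Kernel A^\dagger$. Applied to $A = d^{k-1}$, this gives $(\Image d^{k-1})^\perp = \Kernel\partial^k$, and therefore the orthogonal decomposition
\begin{equation*}
  \CK = \Image d^{k-1} \oplus \Kernel \partial^k.
\end{equation*}
Intersecting with $\Kernel d^k$, and using that $\Image d^{k-1}\subseteq \Kernel d^k$ because $d^k\circ d^{k-1}=0$, yields
\begin{equation*}
  \Kernel d^k = \Image d^{k-1} \oplus \bigl(\Kernel d^k \cap \Kernel \partial^k\bigr) = \Image d^{k-1} \oplus \Kernel \Delta^k,
\end{equation*}
where the last equality uses the characterization of $\Kernel\Delta^k$ established above.

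Finally, quotienting by $\Image d^{k-1}$ gives the canonical isomorphism $H^k = \Kernel d^k / \Image d^{k-1} \cong \Kernel\Delta^k$. Canonicity comes from the fact that the isomorphism is implemented by orthogonal projection onto $\Kernel\Delta^k$, which depends only on the inner product on $\CK$ (fixed by the declaration that the simplices form an orthonormal basis), not on any further choices; uniqueness of the harmonic representative within a cohomology class follows because the decomposition is a direct sum. I do not expect any genuine obstacle here: everything is finite-dimensional Hilbert-space linear algebra, and the only care needed is to make sure the adjoint relations between $d^k$ and $\partial^{k+1}$ are invoked consistently throughout.
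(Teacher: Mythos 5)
Your proof is correct. Note that the paper does not actually prove this proposition: it is stated as a citation to \cite{KK24} with no argument given, so there is no in-paper proof to compare against. What you wrote is the standard finite-dimensional Hodge-theory argument — the energy identity from $\partial^k=(d^{k-1})^\dagger$, the consequent characterization $\Kernel\Delta^k=\Kernel d^k\cap\Kernel\partial^k$, the orthogonal splitting $\CK=\Image d^{k-1}\oplus\Kernel\partial^k$ via $(\Image A)^\perp=\Kernel A^\dagger$, and intersection with $\Kernel d^k$ (valid since $\Image d^{k-1}\subseteq\Kernel d^k$) — which is exactly the proof one would expect behind the cited fact, and it supplies it cleanly. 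You were also right to read the statement's $\norm{d^k\ket{\psi}}$ as a typo for $\norm{d^k\ket{\psi}}^2$; the identity is false as literally written, and your corrected version is the one the paper uses later (e.g.\ in the proof of \cref{lem:laplacian}).
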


\begin{problem}[Clique homology problem ($\CH$) \cite{KP02,CC24,CK24}]\label{def:CH}
  Given a graph $G$ and integer $k$, decide whether the $k$-th homology group of $\Cl(G)$ is non-empty:
  \begin{itemize}
    \item (YES) The $k$-th homology group of $\Cl(G)$ is non-trivial: $H^k(G) \ne 0$ (i.e. $\lmin(\Delta^k)=0$).
    \item (NO) The $k$-th homology group of $\Cl(G)$ is trivial: $H^k(G) = 0$ (i.e. $\lmin(\Delta^k)\ne0$).
  \end{itemize}
\end{problem}

\begin{definition}[Weighted chain complex \cite{KK24}]
  A \emph{weighted simplicial complex} is defined by a simplicial complex $\calK$ and a weighting function $w:\calK\to \RR_{\ge0}$, such that $w(\sigma) = \prod_{v\in\sigma}w(v)$ for all $\sigma\in\calK$.
  The inner product on the weighted complex is given by 
  \begin{equation}
  \braket{\sigma}{\tau} = \begin{cases}
      w(\sigma)^2, &\text{if }\sigma=\tau\\
      0 &\text{otherwise}
    \end{cases}.
  \end{equation}
  Then an orthonormal basis for $\CKk{k}$ is $\{\ket{\sigma'}\mid \sigma\in\calK^k\}$ with $\ket{\sigma'} = \frac1{w(\sigma)}\ket{\sigma}$ the unit vector of $\ket{\sigma}$.
\end{definition}

The coboundary and boundary operators act as follows on our new orthonormal basis:
\begin{lemma}[Proof in \cref{sec:proofs}]\label{lem:boundary}
  Let $\sigma=[v_0,\dots,v_k]\in\calK$ in a weighted simplicial complex. Then
  \begin{subequations}\label{eq:boundary sigma'}
  \begin{align}
    d^k\ket{\sigma'} &= \sum_{v\in\up(\sigma)} w(v)\ket{([v]+\sigma)'},\label{eq:boundary sigma':a}\\
    \partial^k\ket{\sigma'} &= \sum_{j=0}^k (-1)^j\cdot w(v_j)\ket{\sigma_{-j}'},\label{eq:boundary sigma':b}
  \end{align}
  \end{subequations}
  where $\sigma_{-j} = [v_0,\dots,v_{j-1},v_{j+1},\dots,v_k]$.
\end{lemma}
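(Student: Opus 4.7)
The plan is to expand both sides in the conventionally ordered basis $\{\ket{\sigma}\}$, using $\ket{\sigma'}=\ket{\sigma}/w(\sigma)$ and the multiplicativity $w([v]+\sigma)=w(v)\,w(\sigma)$ for $v\in\up(\sigma)$. Part \eqref{eq:boundary sigma':a} is then essentially immediate: since $d^k$ is linear,
\[
 d^k\ket{\sigma'} \;=\; \frac1{w(\sigma)}\,d^k\ket{\sigma} \;=\; \frac1{w(\sigma)}\sum_{v\in\up(\sigma)}\ket{[v]+\sigma} \;=\; \sum_{v\in\up(\sigma)} \frac{w([v]+\sigma)}{w(\sigma)}\,\ket{([v]+\sigma)'},
\]
and the prefactor collapses to $w(v)$ by the multiplicativity of $w$.

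For part \eqref{eq:boundary sigma':b}, the approach is to use the adjoint relation $\partial^k=(d^{k-1})^\dagger$ and expand in the orthonormal basis $\{\ket{\tau'}:\tau\in\calK^{k-1}\}$:
\[
 \partial^k\ket{\sigma'}\;=\;\sum_{\tau\in\calK^{k-1}}\overline{\braket{\sigma'}{d^{k-1}\tau'}}\,\ket{\tau'}.
\]
Applying part \eqref{eq:boundary sigma':a} to $\ket{\tau'}$, the inner product $\braket{\sigma'}{([v]+\tau)'}$ is zero unless $[v]+\tau$ equals $\sigma$ as an oriented simplex (up to sign). Since all overlaps are real, the conjugation is trivial, so the key task is to identify exactly which $\tau$'s and which $v\in\up(\tau)$ contribute, and with what sign.

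The main (and really only) technical step is the sign bookkeeping: for $\sigma=[v_0,\dots,v_k]$, the only $\tau$ yielding a nonzero overlap is $\tau=\sigma_{-j}$ for some $j\in\{0,\dots,k\}$, and the only contributing vertex is $v=v_j\in\up(\sigma_{-j})$. Moving $v_j$ from position $j$ back to position $0$ in $[v_j,v_0,\dots,v_{j-1},v_{j+1},\dots,v_k]$ is a cyclic permutation that decomposes into $j$ adjacent transpositions, so
\[
 \ket{[v_j]+\sigma_{-j}}\;=\;(-1)^j\,\ket{\sigma},
\]
and since $w([v_j]+\sigma_{-j})=w(\sigma)$, the same identity holds for the primed vectors. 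Combining with part \eqref{eq:boundary sigma':a} gives $\braket{\sigma'}{d^{k-1}\tau'}=(-1)^j w(v_j)$ when $\tau=\sigma_{-j}$ and $0$ otherwise, which yields \eqref{eq:boundary sigma':b}.

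The expected obstacle is purely notational, namely handling the permutation sign consistently with the paper's ``list'' convention $\ket{[\pi(v_0),\dots,\pi(v_k)]}=(-1)^{\sgn(\pi)}\ket{\sigma}$ from \cref{def:clique-complex}; once that sign is pinned down, both formulas follow from one expansion and one application of the adjoint.
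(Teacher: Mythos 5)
Your proposal is correct and takes essentially the same route as the paper's proof: part \eqref{eq:boundary sigma':a} by rescaling $\ket{\sigma'}=\ket{\sigma}/w(\sigma)$ and multiplicativity of $w$ (the paper simply cites this from King--Kohler), and part \eqref{eq:boundary sigma':b} via the adjoint relation $\partial^k=(d^{k-1})^\dagger$ expanded in the orthonormal basis $\{\ket{\tau'}\}$, with the only contributions coming from $\tau=\sigma_{-j}$, $v=v_j$, and the sign $(-1)^j$ coming from the $j$ adjacent transpositions needed to move $v_j$ into position. No gaps.
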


Next, we describe the action of the Laplacian $\Delta^k$ on the $\ket{\sigma'}$.
\begin{lemma}[{\cite[Fact 7.4]{KK24}, \cite[Theorem 3.3.4]{Gol02}; proof in \cref{sec:proofs}\protect\footnotemark}]\label{lem:laplacian}
  \footnotetext{We give a proof in the appendix since \cite[Fact 7.4]{KK24} is given without proof and has a very minor bug in that there is a superfluous ``$+1$'' in the first case.}
  Let $\sigma,\tau\in\calK^k$.
  We say $\sigma$ and $\tau$ have a \emph{similar/dissimilar common lower simplex} if there exist $v_\sigma\in\sigma,v_\tau\in\tau$, such that removing $v_\sigma$ from $\sigma$ and $v_\tau$ from $\tau$ gives the same $(k-1)$-simplex $\eta$, and $\ket{\eta}$ has the same/different sign in $\partial^k\ket{\sigma}$ and $\partial^k\ket{\tau}$.
  We say $\sigma$ and $\tau$ are \emph{upper adjacent} if their union forms a $(k+1)$-simplex, i.e., they are both faces of the same $(k+1)$-simplex.
  \begin{equation*}
    \bra{\sigma'}\Delta^k\ket{\tau'} = \begin{cases}
      \sum_{u\in \up(\sigma)} w(u)^2 + \sum_{v\in\sigma}w(v)^2,&\text{if }\sigma=\tau.\\[2mm]
      w(v_\sigma)w(v_\tau),&\parbox{8cm}{if $\sigma$ and $\tau$ have a similar common lower simplex\\ and are not upper adjacent}\\[4mm]
      -w(v_\sigma)w(v_\tau),&\parbox{8cm}{if $\sigma$ and $\tau$ have a dissimilar common lower simplex and are not upper adjacent}\\[2mm]
      0,&\text{otherwise}
    \end{cases}
  \end{equation*}
\end{lemma}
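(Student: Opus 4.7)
The plan is to expand $\Delta^k = d^{k-1}\partial^k + \partial^{k+1}d^k$ and compute $\bra{\sigma'}\Delta^k\ket{\tau'}$ by composing the formulas from \cref{lem:boundary}. This splits the matrix element into two paths: a ``down-up'' path $\sigma \leftarrow \eta \to \tau$ through a common $(k-1)$-face $\eta$, contributed by $d^{k-1}\partial^k$; and an ``up-down'' path $\sigma \leftarrow \mu \to \tau$ through a common $(k+1)$-coface $\mu$, contributed by $\partial^{k+1}d^k$. The down-up path is non-vanishing only when $\sigma$ and $\tau$ share a $(k-1)$-face, and the up-down path is non-vanishing only when $\sigma$ and $\tau$ are upper adjacent, so the relevant case analysis is dictated by these two conditions.

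For the diagonal entry $\sigma=\tau$, each term in the down-up sum corresponds to removing some $v_j\in\sigma$ (with sign $(-1)^j$ from $\partial^k$) and then reinserting it via $d^{k-1}$; reordering $[v_j]+\sigma_{-j}$ back to canonical contributes another $(-1)^j$, the two signs square to $+1$, and each term contributes $w(v_j)^2$, yielding $\sum_{v\in\sigma} w(v)^2$. In the up-down sum, prepending some $u\in\up(\sigma)$ via $d^k$ and then removing it again via $\partial^{k+1}$ is the identity on signs, giving $\sum_{u\in\up(\sigma)} w(u)^2$. Summing these reproduces the diagonal formula. For the off-diagonal case, write $\sigma = \eta\cup\{v_\sigma\}$, $\tau=\eta\cup\{v_\tau\}$, with $v_\sigma$ at canonical position $j_\sigma$ in $\sigma$ and $v_\tau$ at position $j_\tau$ in $\tau$. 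Direct computation of the down-up path gives $(-1)^{j_\sigma+j_\tau} w(v_\sigma)w(v_\tau)$, which is exactly $+w(v_\sigma)w(v_\tau)$ when $\ket{\eta}$ carries the same sign in $\partial^k\ket{\sigma}$ and $\partial^k\ket{\tau}$ (similar case) and $-w(v_\sigma)w(v_\tau)$ otherwise (dissimilar case). When $\sigma,\tau$ are \emph{not} upper adjacent, $[v_\sigma]+\tau\notin\calK^{k+1}$ so the up-down path vanishes, and we recover the stated formula.

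The remaining check, and the main place to be careful, is the cancellation when $\sigma$ and $\tau$ \emph{are} upper adjacent (in which case they automatically share a common $(k-1)$-face $\eta$). Here the up-down path passes through the unique $(k+1)$-simplex $\mu=\eta\cup\{v_\sigma,v_\tau\}$: expanding $d^k\ket{\tau'}$ produces a term $w(v_\sigma)\ket{([v_\sigma]+\tau)'}$ to which $\partial^{k+1}$ is then applied. The essential observation is that prepending $v_\sigma$ to $\tau$ shifts the position of $v_\tau$ by one, so the sign from $\partial^{k+1}$ is $(-1)^{j_\tau+1}$ rather than $(-1)^{j_\tau}$, while the reordering of $[v_\sigma]+\tau_{-j_\tau}$ back to $\sigma$ still contributes $(-1)^{j_\sigma}$. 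The net up-down contribution is therefore $-(-1)^{j_\sigma+j_\tau} w(v_\sigma)w(v_\tau)$, exactly cancelling the down-up contribution and giving $0$. Carefully tracking this single extra sign flip is the only delicate step; the rest of the proof is routine bookkeeping with \cref{lem:boundary} and case analysis on the relationship between $\sigma$ and $\tau$.
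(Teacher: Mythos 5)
Your proposal is correct and follows essentially the same route as the paper's proof: decompose $\Delta^k$ into the down--up and up--down parts, apply \cref{lem:boundary}, do the case analysis on shared lower faces versus upper adjacency, and observe the single-transposition sign flip that makes the two contributions cancel when $\sigma$ and $\tau$ are upper adjacent. The only cosmetic difference is that the paper writes $\Delta^k=(\partial^k)^\dagger\partial^k+(d^k)^\dagger d^k$ and evaluates inner products of boundaries and coboundaries, whereas you compose $d^{k-1}\partial^k$ and $\partial^{k+1}d^k$ directly, which is the same computation.
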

It is easy to see that the Laplacian is sparse and positive semidefinite.

\begin{problem}[Gapped clique homology ($\GCH_\epsilon$) \cite{CK24,KK24}]\label{def:GCH}
  Given a (vertex-weighted) graph $G$ on $n$ vertices and integer $k$, decide whether the $k$-th homology group of $\Cl(G)$ is non-empty with the additional promise that in the NO-case $\lmin(\Delta^k)\ge\epsilon(n)$ for $\Delta^k$ the Laplacian of $\Cl(G)$:
  \begin{itemize}
    \item (YES) The $k$-th homology group of $\Cl(G)$ is non-trivial: $H^k(G) \ne 0$.
    \item (NO) The $k$-th homology group of $\Cl(G)$ is trivial: $H^k(G) = 0$.
  \end{itemize}
  Define $\GCH$ as in \cref{def:HP}.
\end{problem}

\section{(Towards) universal gatesets}\label{sec:gateset}

\cite{AGKMMR24} prove that the gatesets $\calG_{2^k}$ can exactly synthesize all unitaries with entries in $\ZZ[1/2,\zeta_{2^k}]$.
\begin{align}
    \calG_{2} &= \{\Xg,\CXg,\CCXg,\Hg\otimes\Hg\},\\
    \calG_{4} &= \{\Xg,\CXg,\CCXg,\zeta_8\Hg\},\\
    \calG_{2^k} &= \{\Hg,\CXg,\Tg_{2^k}\},\quad \Tg_{2^k} = \begin{pmatrix}
      1&0\\0&\zeta_{2^k}
    \end{pmatrix},
\end{align} 
\begin{theorem}[\cite{AGKMMR24}]\label{thm:AGKMMR24}
  Let $k,m\in\NN$. A $2^m\times2^m$ unitary $U$ can be exactly represented by an $m$-qubit circuit over $\calG_{2^k}$ if and only if $U\in\unitary(2^m,\DD[\zeta_{2^k}])$, where $\DD[\zeta_{2^k}]\equiv \ZZ[1/2,\zeta_{2^k}]$ and $\unitary(N,R)$ denotes the set of unitary $N\times N$ matrices in $R$.
  For $k\le 2$ a single ancilla suffices and $k-2$ ancillas for $k>2$.
\end{theorem}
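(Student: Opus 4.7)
The plan is to prove the two implications separately, with the forward direction being essentially mechanical and the reverse direction adapting the exact-synthesis template pioneered by Giles--Selinger~\cite{GS13} for Clifford+$\Tg$ to the general cyclotomic setting.

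\textbf{Forward direction (circuits $\subseteq$ $\unitary(2^m,\DD[\zeta_{2^k}])$).} I would induct on circuit size. The base case requires verifying that each generator of $\calG_{2^k}$ has entries in $\DD[\zeta_{2^k}]$: this is immediate for $\Xg,\CXg,\CCXg,\Tg_{2^k}$, and for the Hadamard-like generator one uses that $1/\sqrt{2}=(\zeta_8+\zeta_8^{-1})/2\in\DD[\zeta_{2^k}]$ when $k\geq 3$, that $\zeta_8\Hg$ has entries $\pm(1+\iu)/2\in\DD[\zeta_4]$ when $k=2$, and that $\Hg\otimes\Hg$ has entries $\pm1/2\in\DD[\zeta_2]=\ZZ[1/2]$ when $k=1$. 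The inductive step uses closure of $\DD[\zeta_{2^k}]$ under addition and multiplication, which is preserved by matrix multiplication and tensoring with identity on ancillas.

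\textbf{Reverse direction.} I would use a double induction: the outer induction on $m$, and the inner induction on a smallest-denominator-exponent invariant $\operatorname{sde}(U)$, defined as the least $s\geq 0$ with $2^sU\in\ZZ[\zeta_{2^k}]^{2^m\times 2^m}$. The base case of the inner induction is $\operatorname{sde}(U)=0$, where $U\in\unitary(2^m,\ZZ[\zeta_{2^k}])$. Since the rows and columns of such a $U$ have squared norms summing to $1$ in $\ZZ$, each row and column contains exactly one nonzero entry, and that entry is a unit of $\ZZ[\zeta_{2^k}]$ of complex modulus $1$, hence a $2^k$-th root of unity. Thus $U$ is a generalized permutation with cyclotomic phases, synthesizable by $\{\Xg,\CXg,\CCXg\}$ for the permutation part and $\Tg_{2^k}$ (or, for $k\leq 2$, the surrogate diagonal phases available in $\calG_{2^k}$ together with an ancilla) for the diagonal phases. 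The outer induction then peels off one qubit by focusing on the first column.

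\textbf{Denominator reduction (the crux).} For $\operatorname{sde}(U)=s>0$, the goal is to produce a short circuit $C$ over $\calG_{2^k}$ with $\operatorname{sde}(CU)<s$. I would work on the first column, writing its entries as $a_i/2^s$ with $a_i\in\ZZ[\zeta_{2^k}]$ and invoking the unitarity identity $\sum_i|a_i|^2=2^{2s}$. The key algebraic input is the structure of the residue ring $\ZZ[\zeta_{2^k}]/\mathfrak{p}$, where $\mathfrak{p}=(1-\zeta_{2^k})$ is the unique prime above $2$: this identity forces a combinatorial structure on the residues of the $a_i$ that is compatible with a Hadamard-like rotation on two coordinates. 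For $k\geq 3$ the stand-alone $\Hg$ directly cancels a factor of $\sqrt{2}$ on a chosen pair; for $k=2$ the scaled Hadamard $\zeta_8\Hg$ cancels $2$ on a pair; for $k=1$ one applies $\Hg\otimes\Hg$ on a pair routed together with an ancilla. In each case, one uses $\CXg$/$\CCXg$ to route the two chosen entries into a Hadamard-addressable register, applies the Hadamard-type gate, and routes back.

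\textbf{Main obstacle.} The hard part is the residue case analysis in the denominator-reduction step: proving that the unitarity constraint $\sum_i|a_i|^2\equiv 0\pmod{2^{2s}}$ always forces a pair of entries in the same column that a Hadamard-like gate can combine to kill a factor of $2$, while simultaneously tracking how the phase gate $\Tg_{2^k}$ is used to align residue classes. Handling $k\in\{1,2\}$ uniformly with $k\geq 3$ requires the ancilla bookkeeping (one ancilla for $k\leq 2$, $k-2$ for $k>2$) to widen the Hadamard-addressable register, and the counts in the theorem come from exactly how much ``room'' is needed to cancel $\sqrt{2}$ versus $2$ per step. Everything else (forward direction, integer-unitary base case, peeling off one qubit) is mechanical once the reduction lemma is in hand.
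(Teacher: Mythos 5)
This statement is not proven in the paper at all: it is imported verbatim from \cite{AGKMMR24} (the paper only cites it and builds on it), so there is no in-paper proof to compare your argument against; the relevant comparison is with the cited work, which does not re-run a Giles--Selinger column reduction for every $k$ but instead leverages earlier exact-synthesis characterizations of related rings \cite{GS13,AGR20} and relates the gatesets across different $k$. Your forward direction is fine as stated (the generator entries do lie in $\DD[\zeta_{2^k}]$, using $1/\sqrt2=(\zeta_8+\zeta_8^{-1})/2$ for $k\ge3$).

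For the reverse direction, however, your sketch has two genuine gaps. First, the base case is not justified as written: for $U\in\unitary(2^m,\ZZ[\zeta_{2^k}])$ the quantities $\abs{a_i}^2$ are elements of the real subring $\ZZ[\zeta_{2^k}+\zeta_{2^k}^{-1}]$, not rational integers, and a nonzero cyclotomic integer can have $\abs{a}^2<1$ (e.g.\ $\abs{1-\zeta_8}^2=2-\sqrt2$), so ``squared norms summing to $1$ in $\ZZ$'' does not immediately force one nonzero entry per row. The conclusion is true, but it needs the standard Galois argument: summing $\sum_i\sigma(\abs{a_i}^2)=1$ over all embeddings of the real subfield and using $\mathrm{Tr}(\abs{a}^2)\ge d\,\abs{N(\abs{a}^2)}^{1/d}\ge d$ for nonzero $a$, with the equality case (Kronecker) giving that the surviving entry is a root of unity. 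Second, and more seriously, the denominator-reduction lemma that you yourself identify as the crux is exactly the mathematical content of the theorem, and it is asserted rather than proven; the claim that a Hadamard-type gate can always be applied to a \emph{pair} of entries is particularly doubtful for $k=1$, where $\Hg\otimes\Hg$ mixes four entries and the known reductions over $\ZZ[1/2]$ work on quadruples of rows with matching residues, not pairs routed through an ancilla. Relatedly, the ancilla counts in the statement (one ancilla for $k\le2$, $k-2$ for $k>2$) are not derived by your argument but simply attributed to ``room''; they come out of the specific constructions in \cite{AGKMMR24}. As it stands, the proposal is a plausible outline of the classical exact-synthesis template, but it does not establish the theorem.
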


Note that these gatesets are sufficiently powerful to at least capture $\QCMA$, i.e., it holds for all $k$ that $\QCMA\subseteq\QMAo^{\calG_{2^k}}$ \cite{JKNN12}.
We prove the following results in this section.

\begin{theorem}\label{thm:cyclotomic-gateset}
  For any finite gateset $\calG$ in $\QQ(\zeta_{2^k})$ with $k\ge 2$, it holds that $\QMAo^{\calG} \subseteq \QMAo^{\calG_{2^k}}$.
\end{theorem}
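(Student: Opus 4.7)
I would take a $\QMAo^\calG$ verifier $Q$ for $A$ and build a $\QMAo^{\calG_{2^k}}$ verifier $Q'$ for $A$ that simulates $Q$ gate by gate, replacing each non-native gate $U \in \calG$ with a gadget built from $\calG_{2^k}$ plus postselection. The rule is that $Q'$ accepts whenever any postselection step fails, so perfect completeness is inherited trivially from $Q$; the delicate part is keeping the postselection failure probability small enough to preserve the soundness gap.

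Fix a non-native gate $U$ acting on $m = O(1)$ qubits with entries in $\QQ(\zeta_{2^k})$. I would expand $U$ in the Pauli basis as $U = \sum_{x \in \{I,X,Y,Z\}^m} a_x P_x$, where $a_x = 2^{-m}\Tr(P_x^\dagger U) \in \QQ(\zeta_{2^k})$. The tensor Paulis $P_x$ are exactly synthesizable over $\calG_{2^k}$ since $\Xg \in \calG_{2^k}$, $\Zg = \Tg_{2^k}^{2^{k-1}}$, and $\Yg = \iu \Xg \Zg$ with $\iu = \zeta_{2^k}^{2^{k-2}} \in \ZZ[\zeta_{2^k}]$ for $k\ge 2$. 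I would then apply the standard LCU protocol \cite{CW12,BCCKS14}: prepare an ancilla in $\ket{U} = \sum_x a_x \ket{x}$ via a unitary $V$, apply the controlled Pauli $\sum_x \ketbrab{x} \otimes P_x$, then uncompute by $V^\dagger$ and postselect the ancilla on $\ket{0^{2m}}$; on success, the data register holds exactly $U\ket{\psi}$. Both $V$ and the controlled Pauli stay inside $\calG_{2^k}$, and the success probability $\|a\|_1^{-2}$ is $\Omega(1)$ since $\calG$ is finite and $m$ constant.

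The main obstacle is preparing $\ket{U}$ exactly over $\calG_{2^k}$: by \cref{thm:AGKMMR24} an exact $\calG_{2^k}$-circuit can only synthesize unitaries with entries in $\ZZ[1/2,\zeta_{2^k}]$, whereas the $a_x$ may have arbitrary odd rational denominators in $\QQ(\zeta_{2^k})$. Writing $a_x = b_x/c$ with $b_x \in \ZZ[\zeta_{2^k}]$ and common denominator $c \in \NN$ of constant size (which is possible because $\calG$ is a fixed finite set), I would invoke the integer-coefficient state preparation routine sketched in the introduction (\cref{lem:integer-state-cyclotomic}) to produce $\ket{U}$ with some constant success probability via postselection, then apply standard amplitude amplification to push the preparation success probability to $1 - 1/(T\cdot\poly(n))$, where $T = \poly(n)$ is the gate count of $Q$. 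In parallel, I would use oblivious amplitude amplification \cite{BCCKS14} on the LCU itself to push its success probability exponentially close to $1$; both amplification steps only invoke $V$, $V^\dagger$, and controlled Paulis, so stay inside $\calG_{2^k}$.

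Combining everything, each of the $T$ non-native gates is simulated by a gadget that succeeds with probability $\ge 1 - 1/(T\cdot\poly(n))$, so by a union bound the entire simulation succeeds with probability $\ge 1 - 1/\poly(n)$. In the YES case with the honest witness, either every gadget succeeds and $Q$ accepts with probability $1$, or some postselection fails and $Q'$ accepts by the fail-accept rule, so $\pacc(Q',\psi) = 1$. In the NO case, $\pacc(Q',\psi) \le 1 - (1 - 1/\poly(n))(1 - \pacc(Q,\psi)) \le 1 - 1/\poly(n)$ after first applying error reduction (available since $\calG_{2^k}\supseteq \calG_2$) to make $1 - \pacc(Q,\psi)$ sufficiently large. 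The essential leverage is that $\calG$ is finite and fixed independently of the input, so all common denominators $c$ and integer coefficients $b_x$ that arise have constant bit complexity, enabling constant-probability integer state preparation and its subsequent amplification.
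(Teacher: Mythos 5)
Your overall architecture matches the paper's: Pauli-basis LCU gadgets built from exact cyclotomic integer state preparation, an accept-on-failure rule to inherit perfect completeness, and amplification to protect the soundness gap. However, two concrete problems remain. First, the gadget as you describe it does not apply $U$: if you prepare $\ket{U}=\sum_x a_x\ket{x}$ with a unitary $V$, apply the select-Pauli, uncompute with $V^\dagger$ and postselect the ancilla on $\ket{0^{2m}}$, the surviving branch is $\sum_x \abs{a_x}^2 P_x\ket{\psi}$, which is not proportional to $U\ket{\psi}$ in general. The two consistent variants are (i) the paper's (\cref{lem:unitary}): keep the amplitudes $a_x$ but postselect the ancilla on the \emph{uniform} superposition (Hadamard-basis outcome $0^{2m}$), which yields exactly $U\ket\psi$ with success probability exactly $2^{-2m}$ independent of the input — an input-independence you implicitly need for soundness against adversarial proofs; or (ii) textbook LCU with amplitudes $\sqrt{\abs{a_x}}$ and the phases moved into the select operator, which is incompatible with exact preparation over $\QQ(\zeta_{2^k})$ since $\sqrt{\abs{a_x}}$ need not lie in the field. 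So your quoted success probability $\norm{a}_1^{-2}$ belongs to a variant you cannot realize exactly, and the variant you can realize has a different (but still constant) success probability.

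Second, and more importantly, the claim that oblivious amplitude amplification pushes the per-gate success ``exponentially close to $1$'' (or even to $1-1/(T\poly(n))$) is unsupported and false as stated. \cref{lem:oblivious-amplification} gives success $\sin^2\bigl((2\ell+1)\theta\bigr)$ with $\sin\theta=\sqrt{p}$; for a generic $p$ no choice of $\ell$ brings $(2\ell+1)\theta$ closer to $\pi/2$ than a constant fraction of $\theta$, leaving residual failure $\Omega(p)$ — a constant. Under the accept-on-failure rule, a constant per-gate failure drives the NO-case acceptance toward $1$ over $T$ gates, destroying soundness. The paper closes exactly this hole in \cref{lem:simulate-Q(i)} by tensoring the LCU unitary with an auxiliary circuit $U'$ that multiplies the good-branch amplitude by a tunable dyadic constant $c=c'/2^r$ chosen so that $c\sqrt p=\sin\bigl((\pi/2)/(2\ell+1)\bigr)$ up to $1/\exp(m)$; only after this angle tuning does amplification reach failure $1/\exp(m)$. (The alternative of retrying with a correction $UU_y^\dagger$ after a failed outcome $y$ works only for $k=2$; for $k\ge3$ the bit complexity of the corrections grows and the success guarantee of \cref{lem:unitaryk3} degrades with it, which is precisely why the paper switches to tuned oblivious amplification.) Your state-preparation boosting by repetition and your completeness/soundness bookkeeping are fine, but without an angle-tuning mechanism the central amplification step, and hence the theorem, is not established.
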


\begin{theorem}\label{thm:qma-g2}
  For any finite gateset $\calG$ in $\QQ(\iu)$, it holds that $\QMAo^{\calG} \subseteq \QMAo^{\calG_2}$.
\end{theorem}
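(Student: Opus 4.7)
The plan is to combine \cref{thm:cyclotomic-gateset} with McKague's real embedding trick \cite{McK10,McK13}. Since $\QQ(\iu) = \QQ(\zeta_4)$, \cref{thm:cyclotomic-gateset} with $k=2$ immediately yields $\QMAo^{\calG}\subseteq\QMAo^{\calG_4}$, so it suffices to prove $\QMAo^{\calG_4}\subseteq\QMAo^{\calG_2}$.

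For this I would use the real embedding $\Psi\colon \unitary(N,\QQ(\iu))\to\rmO(2N,\QQ)$ sending $U = A + \iu B$ (with real $A,B$) to $\widetilde{U} = \bigl(\begin{smallmatrix} A & -B \\ B & A \end{smallmatrix}\bigr)$, together with the associated state encoding $\ket\psi = \ket{\psi_R} + \iu\ket{\psi_I} \mapsto \widetilde{\ket\psi} = \ket{\psi_R}\ket{0}_f + \ket{\psi_I}\ket{1}_f$ on one additional ``flag'' qubit $f$. A direct check on the five generators of $\calG_4 = \{\Xg,\CXg,\CCXg,\Sg,\zeta_8\Hg\}$ shows that each of their real encodings has entries in $\{0,\pm 1,\pm 1/2\}\subseteq\ZZ[1/2]$; the only nontrivial case is $\widetilde{\zeta_8\Hg} = \tfrac12\bigl(\begin{smallmatrix} H' & -H' \\ H' & H' \end{smallmatrix}\bigr)$ with $H' = \bigl(\begin{smallmatrix} 1 & 1 \\ 1 & -1 \end{smallmatrix}\bigr)$, because $\zeta_8\Hg = \tfrac{1+\iu}{2}H'$. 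By \cref{thm:AGKMMR24}, each such rational orthogonal gate is then exactly synthesizable over $\calG_2$ with at most one ancilla.

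Given a $\QMAo^{\calG_4}$-verifier $V$ with $n_1$ ancillas and $n_2$ proof qubits, I would build a $\calG_2$-verifier with $n_1$ ancillas and $n_2+1$ proof qubits that runs $\widetilde{V}$, where the extra proof qubit plays the role of the flag. Completeness is immediate: if $\ket\psi$ makes $V$ accept with probability $1$, the honest prover sends $\widetilde{\ket\psi}$, and since the acceptance projector $\ketbrab{1}_1\otimes I$ is real and thus commutes with $\Psi$, the simulated verifier accepts with probability exactly $1$.

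Soundness is where the only real subtlety lies, since a cheating prover need not send a state in the image of $\Psi$ and in particular controls the flag qubit. However, any complex state $\ket\phi = \ket{\phi_R} + \iu\ket{\phi_I}$ on $n_2+1$ qubits can be split into its real and imaginary parts. Because $\widetilde{V}$ and the acceptance projector are both real, no cross terms between real and imaginary parts survive, and a short calculation gives
\[
  \pacc(\widetilde{V},\phi) \;=\; p_R\cdot \pacc(\widetilde{V},\tilde\phi_R) \;+\; p_I\cdot \pacc(\widetilde{V},\tilde\phi_I),
\]
where $p_{R/I} = \norm{\phi_{R/I}}^2$ and $\tilde\phi_{R/I} = \ket{\phi_{R/I}}/\sqrt{p_{R/I}}$ are real unit vectors. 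Each $\tilde\phi_{R/I}$ lies in the image of $\Psi$ and thus corresponds via $\Psi^{-1}$ to a legitimate complex proof for $V$, so in the NO-case both summands are at most $1 - 1/\poly$ by soundness of $V$, yielding the same bound on $\pacc(\widetilde{V},\phi)$. The main point to verify is exactly this last one---that giving the prover control of the flag qubit does not let her escape the real-valued image of $\Psi$---and it is resolved precisely by the averaging above.
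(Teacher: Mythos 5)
Your proof is correct, and it uses the same two ingredients as the paper (the LCU-based exact gate simulation and the McKague-style real encoding $\Psi$ plus exact synthesis via \cref{thm:AGKMMR24}), but it composes them in a different, arguably cleaner order. The paper's proof of \cref{thm:qma-g2} first applies \cref{lem:cyclotomic-simulation} to the original $\QQ(\iu)$-verifier to get a rational verifier that ``expects'' encoded proofs, then invokes standard error reduction and a lossy bound (soundness $4\epsilon$ against arbitrary complex proofs), and only then runs the gate-simulation pipeline of \cref{thm:bqp1} (LCU to $\calG_4$, a second real encoding into $\ZZ[1/2]$, exact synthesis over $\calG_2$). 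You instead factor the containment as $\QMAo^{\calG}\subseteq\QMAo^{\calG_4}\subseteq\QMAo^{\calG_2}$: the first inclusion is just \cref{thm:cyclotomic-gateset} with $k=2$ (so no encoding issues arise there, the proof is still a genuine complex proof), and the second uses a single application of the real encoding, where your convex-combination argument --- split an arbitrary prover state into real and imaginary parts, note that real cross terms vanish under a real circuit and real acceptance projector, and observe that every real unit vector on proof-plus-flag is itself a valid encoding of a unit-norm complex proof --- preserves completeness and soundness \emph{exactly}, with no error reduction and no constant-factor loss. This last point is exactly the $k=2$ phenomenon the paper attributes to McKague (and the reason the argument breaks for $k>2$, where real vectors can decode to effective zero-states); you identified and resolved the one genuine subtlety, namely that the cheating prover controls the flag qubit. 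The only things worth stating explicitly in a polished write-up are that each encoded gate acts on the original gate's qubits together with the single shared flag qubit (so it is a constant-size unitary over $\ZZ[1/2]$, synthesizable with one ancilla by \cref{thm:AGKMMR24}), and that the flag qubit must be a proof qubit rather than an ancilla, both of which you already have implicitly.
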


For $\BQPo$ we may almost claim universality since we can simulate all cyclotomic gatesets with $\calG_2$.

\begin{theorem}\label{thm:bqp1}
  For any finite gateset $\calG$ in $\QQ(\zeta_{2^k})$ with $k\in\NN$, it holds that $\BQPo^{\calG} \subseteq \BQPo^{\calG_{2}}$.
\end{theorem}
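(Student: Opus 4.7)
The plan is to combine the McKague-style cyclotomic-to-real encoding $\Psi\colon\unitary(N,\QQ(\zeta_{2^k}))\to\rmO(2^{k-1}N,\QQ)$ from the Techniques overview with a deterministic LCU construction that implements any fixed rational orthogonal gate using only $\calG_2$. Given a $\BQPo^{\calG}$ verifier $Q$, we first replace each $G\in\calG$ by $\Psi(G)\in\rmO(\cdot,\QQ)$, obtaining a real rational circuit $Q'$ on $n+(k-1)$ qubits; the extra ``digit'' register is initialised in the basis state representing $1\in\QQ(\zeta_{2^k})$ --- this is precisely the step a malicious prover could spoil in the $\QMAo$ setting (as flagged in the open questions), but for $\BQPo$ we simply prepare it ourselves. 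In the YES case all amplitude sits on accepting basis states, which the elementwise encoding preserves exactly, so $Q'$ still accepts with probability $1$; in the NO case, the two $\QQ$-norms $\sum_l \tilde a_l^2$ and $\bigl|\sum_l \tilde a_l\zeta_{2^k}^l\bigr|^2$ on the fixed $2^{k-1}$-dimensional $\QQ$-space are equivalent, so the rejection mass $\ge 1/\poly$ is preserved up to a multiplicative constant depending only on $k$.

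Next, each $\Psi(G)$ is a fixed rational orthogonal matrix on $O(k)$ qubits. Decompose $\Psi(G)=\sum_x a_x Q_x$ in the real-unitary basis $\{\Ig,\Xg,\iu\Yg,\Zg\}^{\otimes O(k)}$, whose tensor products have entries in $\{0,\pm1\}$ and are hence directly synthesisable by $\calG_2$ via \cref{thm:AGKMMR24}, with $a_x\in\QQ$. Inflate the $\ell^1$-norm $s=\sum_x|a_x|$ to exactly $2$ by appending canceling pairs $+\alpha Q-\alpha Q$; this leaves $\Psi(G)$ unchanged but forces the success amplitude of the standard Childs--Wiebe LCU gadget to $1/s=1/2$. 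A single round of oblivious amplitude amplification~\cite{BCCKS14} then rotates this to amplitude $\sin(\pi/2)=1$, giving a \emph{deterministic} implementation of $\Psi(G)$.

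The remaining issue is preparing the LCU state $\sum_x\sqrt{|a_x|/2}\,\ket x$ using only amplitudes in $\ZZ[1/2]$, since these square roots are in general algebraic. We use a unary-padding register: write $|a_x|/2=c_x/4^m$ for integers $c_x$ with $\sum_x c_x=4^m$ (padding with further canceling pairs inside the LCU sum if needed), prepare the uniform superposition over $4^m$ basis states using only Hadamards (available in $\calG_2$ as $\Hg\otimes\Hg$ with ancillas), and apply a $\calG_2$-computable permutation that groups $c_x$ of the pad-basis-states onto each $\ket x$. Signs of $a_x$ are carried by the SELECT step. Every intermediate amplitude is $2^{-m}\in\ZZ[1/2]$, so the entire PREP circuit is $\calG_2$-synthesisable by \cref{thm:AGKMMR24}.

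Combining the three ingredients replaces each $G\in\calG$ with an $O(1)$-size deterministic $\calG_2$-circuit on $O(k)$ ancillas, preserving perfect completeness exactly and inverse-polynomial soundness up to a $k$-dependent constant, which proves the theorem. The main technical obstacle is the coupling between the LCU and PREP stages: the cancellation-pair padding must be engineered simultaneously so that (i) the LCU $\ell^1$-norm lands exactly on the OAA sweet spot $s=2$ and (ii) the common denominator of every $|a_x|/2$ is a power of $4$, keeping every state-preparation amplitude inside $\ZZ[1/2]$. Once those two constraints are reconciled, the rest is routine bookkeeping.
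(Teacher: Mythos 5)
Your first stage (apply the elementwise encoding $\Psi$ of \cref{lem:cyclotomic-simulation}, argue completeness via exactness of the encoding and soundness via preservation of the rejection mass) is the same route the paper takes, and it is essentially fine, with one slip: the claimed \emph{two-sided} equivalence of $\sum_l\tilde a_l^2$ and $\bigl|\sum_l\tilde a_l\zeta_{2^k}^l\bigr|^2$ with a constant depending only on $k$ is false for $k\ge3$, since rational vectors can approximate real linear relations such as $\zeta_8+\zeta_8^3=\sqrt2\,\zeta_8^2$ arbitrarily well, so no lower bound $\bigl|\sum_l\tilde a_l\zeta^l\bigr|^2\ge c_k\sum_l\tilde a_l^2$ holds uniformly. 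Fortunately you only need the Cauchy--Schwarz direction $\bigl|\sum_l\tilde a_l\zeta^l\bigr|^2\le 2^{k-1}\sum_l\tilde a_l^2$ (equivalently $\norm{\Lambda}^2\le 2^{k-1}$ for the decoding map) for soundness, and for completeness only the $\QQ$-linear independence of $1,\zeta,\dots,\zeta^{2^{k-1}-1}$ (all amplitudes stay rational), so this part survives once restated correctly.

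The genuine gap is the second stage: a \emph{deterministic} $\calG_2$-implementation of an arbitrary fixed rational orthogonal gate $\Psi(G)$ cannot exist, so the two constraints you defer as "routine bookkeeping" are provably irreconcilable. Your requirements -- $\sum_x\abs{a_x}=2$ exactly and every $\abs{a_x}/2$ of the form $c_x/4^m$ -- force all $a_x\in\ZZ[1/2]$, hence $\Psi(G)=\sum_x a_xQ_x$ would have entries in $\ZZ[1/2]$; this already fails for the single-qubit rational (hence $\Psi$-invariant) Pythagorean gate $\Upyth=\frac15\begin{psmallmatrix}3&4\\-4&3\end{psmallmatrix}$, a perfectly legitimate finite gateset in $\QQ\subseteq\QQ(\zeta_{2^k})$. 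More structurally, your final circuit $S^\ell U$ is a $\calG_2$-circuit, hence lies in $\unitary(\cdot,\ZZ[1/2])$ by \cref{thm:AGKMMR24}; if it mapped $\ket{0^\mu}\ket{\psi}$ to $\ket{0^\mu}\Psi(G)\ket{\psi}$ exactly for all $\ket\psi$, the corresponding block would equal $\Psi(G)$, contradicting \cref{prop:exact-synthesis-impossible}. (A further problem: $s=\sum_x\abs{a_x}$ can exceed $2$ for the multi-qubit matrices $\Psi(G)$, and canceling pairs can only increase it, so the OAA "sweet spot" $s=2$ is not always reachable even ignoring dyadicity.) The paper avoids determinism altogether: it simulates each fixed gate probabilistically (integer state preparation plus LCU, routed through $\calG_4$ and a second application of $\Psi$), tunes an extra dyadic-amplitude ancilla so that $(2\ell+1)\theta$ is within $1/\exp(m)$ of $\pi/2$ (\cref{lem:simulate-Q(i)}), and preserves perfect completeness by having the circuit \emph{accept whenever a postselection or amplification step fails}, which costs only an exponentially small additive soundness loss. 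Your proposal has no accept-on-failure mechanism, so once the implementation is not exactly deterministic -- which it cannot be -- perfect completeness is lost and the argument does not go through as written.
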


\subsection{Integer state preparation}

\begin{proposition}[Proof in \cref{sec:proofs}]\label{prop:exact-synthesis-impossible}
  Let $\calG$ be a finite gateset in $\QQ(\zeta_n)$.
  It is impossible to synthesize all rational single-qubit unitaries with $\calG$, even with $\ket{0}$-initialized ancillas.
\end{proposition}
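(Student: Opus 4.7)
The plan is to use an algebraic closure argument: I will exhibit a finitely generated subring of $\QQ(\zeta_n)$ that contains every possible entry of a circuit over $\calG$, and then produce a rational single-qubit unitary with an entry lying outside this ring. Since $\calG$ is finite, only finitely many rational primes appear in the denominators of the gate entries; let $S$ be this finite set, and define $R = \ZZ[\zeta_n][\{1/q : q\in S\}]$. By construction every gate has all its entries in $R$, and $R$ is closed under addition, multiplication, and contains $0$ and $1$. Therefore the property of having all entries in $R$ is preserved by matrix multiplication (entries are sums of products), by tensor products (entries are products), and by tensoring with identity wires. Consequently every circuit $C$ built from $\calG$ is a unitary whose entries all lie in $R$. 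Extracting a single-qubit unitary $U$ from $C$ using $\ket{0}$-initialized ancillas simply selects a $2\times 2$ submatrix of $C$, so the entries of $U$ also lie in $R$.

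The next step is to identify $R\cap \QQ$. I claim $R\cap\QQ = \ZZ[\{1/q : q\in S\}]$. The inclusion $\supseteq$ is immediate. For $\subseteq$, any element of $R$ can be written as $\alpha/m$ with $\alpha\in\ZZ[\zeta_n]$ and $m$ a product of primes in $S$; if such an element is rational, then $\alpha = rm \in \QQ\cap \ZZ[\zeta_n] = \ZZ$, because $\ZZ[\zeta_n]$ consists of algebraic integers and any rational algebraic integer is a rational integer. Hence $\alpha/m \in \ZZ[\{1/q : q\in S\}]$.

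Finally I would exhibit a rational single-qubit unitary with a ``fresh'' denominator. By Dirichlet's theorem on primes in arithmetic progressions, pick a prime $p\equiv 1\pmod 4$ with $p\notin S$. By Fermat's two-squares theorem, $p$ is the hypotenuse of a primitive Pythagorean triple $(x,y,p)$ with $x^2+y^2=p^2$ and $\gcd(x,p)=\gcd(y,p)=1$, so
\[
U_p = \frac{1}{p}\begin{pmatrix} x & -y \\ y & x \end{pmatrix}
\]
is a rational single-qubit unitary. Its entries have $1/p$ in their denominators with $p\notin S$, so they are not in $R\cap\QQ = \ZZ[\{1/q : q\in S\}]$, and hence not in $R$. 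Therefore $U_p$ cannot be synthesized by any circuit over $\calG$, even with $\ket{0}$-initialized ancillas. The main technical subtlety is the identification $R\cap\QQ = \ZZ[\{1/q : q\in S\}]$, which relies on the standard number-theoretic fact that elements of $\ZZ[\zeta_n]$ are algebraic integers; everything else reduces to verifying that the ring $R$ is closed under the operations used to build circuits.
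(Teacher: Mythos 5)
Your proof is correct and follows essentially the same route as the paper's: a circuit over a finite gateset $\calG$ in $\QQ(\zeta_n)$ can only produce entries whose denominators involve a fixed finite set of primes, while rational (Pythagorean) unitaries with prime hypotenuse require infinitely many distinct prime denominators. Your write-up is in fact a bit more explicit than the paper's (the ring $R$, the identification $R\cap\QQ=\ZZ[\{1/q:q\in S\}]$ via the rational-algebraic-integer argument, and the choice of $p\equiv1\pmod 4$ outside $S$), but the underlying idea is identical.
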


However, we can use postselection to circumvent this issue.
The first technical tool we need is preparing integer states, i.e., states that are proportional to (unnormalized) states with only integer amplitudes.

\begin{lemma}\label{lem:integer-state}
  Let $\ket\psi\propto\sum_{i=0}^{d-1}a_i\ket{i}$ with $a_i\in\ZZ$.
  We can prepare $\ket\psi$ with probability $\ge 1/4d$ with gateset $\calG_2$ in time $\poly(d,\log A)$ and space $O(\log d+\log A)$, where $A \coloneq  \sum_i \abs{a_i}$.
\end{lemma}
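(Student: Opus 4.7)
The plan is a coherent rejection-sampling construction. Pick powers of two $A'=2^{\lceil\log_2 A\rceil}\le 2A$ and $M=2^{\lceil\log_2\max_i|a_i|\rceil}\le 2A$, and partition $\{0,\dots,A-1\}$ into contiguous blocks of sizes $|a_0|,\dots,|a_{d-1}|$. For $k<A$, write $(i(k),j(k))$ for the block index and within-block position, so $0\le j(k)\le |a_{i(k)}|-1$, and let $\mathrm{inv}(i,j)=\sum_{i'<i}|a_{i'}|+j$ be the classical inverse, computable in $\poly(d,\log A)$ time. Use three registers $\calK,\mathcal{I},\calJ$ of $\log A'$, $\log d$, and $\log M$ qubits respectively, plus a one-bit sign ancilla.

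First I would prepare $\frac 1{\sqrt{A'}}\sum_{k=0}^{A'-1}\ket k_\calK$ by $\log A'$ Hadamards (pairing with a fresh $\ket 0$ when the bit count is odd, since $\calG_2$ provides only $\Hg\otimes \Hg$). A $\poly(d,\log A)$-size reversible circuit over $\{\Xg,\CXg,\CCXg\}$ then writes $(i(k),j(k))$ into $(\mathcal{I},\calJ)$ for $k<A$ and leaves them $\ket 0$ otherwise. Next apply $V\colon\ket k_\calK\ket i_\mathcal{I}\ket j_\calJ\mapsto\ket{k\oplus\mathrm{inv}(i,j)}_\calK\ket i_\mathcal{I}\ket j_\calJ$, extending $\mathrm{inv}(0,0):=0$. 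On valid branches ($k<A$) this resets $\calK$ to $\ket 0$; on the junk branch $i=j=0$ the register $\calK$ retains $\ket k$ for $k\in[A,A'-1]$, so measuring and postselecting on $\calK=\ket 0$ succeeds with probability $A/A'\ge 1/2$ and yields
\[
  \frac 1{\sqrt A}\sum_i\sum_{j=0}^{|a_i|-1}\ket i_\mathcal{I}\ket j_\calJ.
\]

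To imprint the signs, compute $\mathrm{sgn}(a_i)\in\{0,1\}$ from $\mathcal{I}$ into the sign ancilla, apply $\Zg$ to that ancilla, and uncompute; over $\calG_2$, $\Zg$ is obtained as $(\Hg\otimes \Hg)(\Xg\otimes \Ig)(\Hg\otimes \Hg)$ applied to the sign qubit paired with a fresh $\ket 0$. The state becomes $\frac 1{\sqrt A}\sum_i\mathrm{sgn}(a_i)\ket i\otimes\sum_{j=0}^{|a_i|-1}\ket j$. Applying $\Hg^{\otimes\log M}$ to $\calJ$ makes the $\ket 0_\calJ$ amplitude on the $\ket i$ branch equal $|a_i|/\sqrt M$, so measuring $\calJ$ and postselecting on $0$ produces $\frac 1{\sqrt{AM}}\sum_i a_i\ket i_\mathcal{I}$ with probability $\sum_i a_i^2/(AM)\ge 1/(2d)$, using $\sum_i a_i^2\ge A^2/d$ (Cauchy--Schwarz) and $M\le 2A$. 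The overall success probability is at least $1/(4d)$, and the conditional state on $\mathcal{I}$ is exactly $\ket\psi$.

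All reversible subroutines compile to $\poly(d,\log A)$-size circuits over $\{\Xg,\CXg,\CCXg\}$, all Hadamards cost $O(\log A)$ gates, and every register has $O(\log d+\log A)$ qubits, matching the stated bounds. The main design point is the step $V$: by cleanly uncomputing $\calK$ on the valid branch it both decouples $\calK$ from the output and lets the first postselection pay only a constant factor; without such a reversal one would either carry a garbage register into the final state (breaking the equality to $\ket\psi$) or lose a polynomial factor in the success probability, falling short of the $1/(4d)$ bound.
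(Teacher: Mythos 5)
Your proposal is correct and follows essentially the same route as the paper's proof: restrict a power-of-two uniform superposition to $[0,A)$ by a postselected measurement (your XOR-uncompute of $\calK$ is just a tidier way of doing the paper's comparator measurement plus re-indexing $\ket{j}\mapsto\ket{i}\ket{j-k_i}$), then collapse each block to amplitude $\propto\abs{a_i}$ by postselecting the offset register onto the all-$\ket{+}$ state, fix signs with $\Zg$/$\CZg$, and pair Hadamards to respect $\Hg\otimes\Hg$. The only cosmetic difference is bounding the second step via Cauchy--Schwarz rather than the paper's max-coefficient argument; both give $\ge 1/(2d)$ and the same overall $1/(4d)$.
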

\begin{proof}
  For simplicity, it suffices to consider states with all $a_i>0$, as it is trivial to transform such a state into the target state via $\CZg$ gates and permutations.
  Let $e_i \coloneq  \lceil\log a_i\rceil$, and $n\coloneq \lceil\log A\rceil$.
  Prepare $\ket{\psi_0} = H^{\otimes n}\ket{0} \propto \sum_{j=0}^{2^n-1} \ket{j}_\calB$.
  Perform measurement $M = \sum_{j=0}^{A-1}\ketbrab{j}$, which accepts with probability $\ge1/2$.
  The post-measurement state is $\ket{\psi_1} \propto \sum_{j=0}^{A-1} \ket{j}_\calB$.
  Let $k_i = \sum_{j=0}^{i-1}a_j$.
  Using a simple classical circuit, we ``tag'' $\ket{j}$ with $i$ if $j\in[k_i, k_{i+1})$ (i.e. $\ket{0}\ket{j}\mapsto \ket{i}\ket{j}$), obtaining $\ket{\psi_2}\propto \sum_{i=0}^{d-1}\ket{i}_\calA\sum_{j=0}^{a_i-1}\ket{j+k_i}_\calB$, which is further transformed to $\ket{\psi_3}\propto \sum_{i=0}^{d-1}\ket{i}_\calA\sum_{j=0}^{a_i-1}\ket{j}_\calB$ using a classical circuit ($\ket{i}_\calA\ket{j}_\calB\mapsto\ket{i}_\calA\ket{j-k_i}_\calB$).
  Let $a^*\coloneq  \max_i a_i$ and $m = \lceil\log a^*\rceil$, and measure $\calB$ in the Hadamard basis, where $M=\ketbra++^{\otimes m}_\calB$ is the accepting projector.
  For $\ket{\phi} \coloneq  \frac1{\sqrt{a^*}}\sum_{i=0}^{a^*-1}\ket{i}$, we have 
  \begin{equation}
    \bra{+}^{\otimes m}\ket{\phi} = \frac1{\sqrt{2^m \cdot a^*}}\left(\sum_{i=0}^{2^m-1}\bra{i}\right)\left(\sum_{i=0}^{a^*-1}\ket{i}\right) = \frac{a^*}{\sqrt{2^m \cdot a^*}} = \sqrt{\frac{a^*}{2^m}}>\sqrt{\frac12}.
  \end{equation}
  Hence, the measurement accepts with probability at least $1/2d$.
  The post-measurement state is $\ket{\psi_4}\propto M_\calB\ket{\psi_3} \propto \sum_{i=0}^{d-1}a_i\ket{i}_\calA\ket{+}_\calB^{\otimes m}\propto \ket\psi$.

  Note, this algorithm only requires classical circuits on $O(n + \log d)$ qubits as well as (controlled) Hadamard gates, which can be implemented with Toffoli, Hadamard, CNOT, X gates using standard techniques \cite{NC10}.
  We also need to take care to always use two Hadamard gates at the same time, since $\calG_2$ only contains $\Hg\otimes\Hg$.
  So if $n$ is odd, we can apply an additional $\Hg$ to a later unused ancilla during the preparation of $\ket{\psi_0}$.
  If $m$ is odd, we can prepare $\ket{+}\ket{+}$ by applying $\Hg\otimes\Hg$ to two fresh ancillas before the measurement $M$, and then also include one of these ancillas in the measurement.
\end{proof}

\begin{remark}\label{rem:integer-state-boost}
  By repeating the algorithm described in \cref{lem:integer-state} $\poly(d)$ times, we can boost the success probability to $1-1/\exp(d)$.
\end{remark}

\begin{lemma}\label{lem:integer-state-complex}
  Let $\ket\psi\propto\sum_{i=0}^{d-1}(a_i+b_i\iu)\ket{i}$ with $a_i,b_i\in\ZZ$ such that $a_i+b_i\iu \ne 0$ for all $i$.
  We can prepare $\ket\psi$ with probability $\ge 1/16d$ using gateset $\calG_4$ in time $\poly(d,\log A)$ and space $O(\log d+\log A)$, where $A \coloneq  \sum_i \abs{a_i} + \abs{b_i}$.
\end{lemma}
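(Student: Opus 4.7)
The plan is to bootstrap \cref{lem:integer-state}: we introduce a single extra ancilla qubit $\calE$ that flags whether the amplitude currently being accumulated is the real part ($\ket{0}_\calE$) or the imaginary part ($\ket{1}_\calE$) of the target coefficient, and then coherently ``fold'' the two branches together. The factor of $\iu$ will be applied by a single $\Sg$ gate, which lies in $\unitary(2,\DD[\iu])$ and is therefore exactly synthesizable from $\calG_4$ by \cref{thm:AGKMMR24}.

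Concretely, I would first view the integer tuple $(a_0,b_0,a_1,b_1,\dots,a_{d-1},b_{d-1})$ as specifying an integer state on $2d$ basis elements labelled by pairs $(j,i)\in\{0,1\}\times[d]$, with total weight $A = \sum_i \abs{a_i}+\abs{b_i}$. Applying \cref{lem:integer-state} (which, since all its gates lie in $\calG_2\subseteq\calG_4$ up to irrelevant global phases such as the $\zeta_8$ in $\zeta_8\Hg$) prepares
\begin{equation*}
  \ket{\phi}\;\propto\;\sum_{i=0}^{d-1}\bigl(a_i\ket{0}_\calE + b_i\ket{1}_\calE\bigr)\otimes\ket{i}_\calB
\end{equation*}
with success probability at least $1/(4\cdot 2d) = 1/(8d)$, in space $O(\log d + \log A)$ and time $\poly(d,\log A)$. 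Next I would apply the $\Sg$ gate to $\calE$, yielding $\ket{\phi'}\propto\sum_i (a_i\ket{0}_\calE + \iu b_i\ket{1}_\calE)\otimes\ket{i}_\calB$, and finally perform a Hadamard on $\calE$ followed by postselection on outcome $\ket{0}_\calE$. The resulting $\calB$-register state is $\propto\sum_i(a_i+\iu b_i)\ket{i}$, which is $\ket\psi$.

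For the success probability of the final postselection, note that
\begin{equation*}
  \bigl\lVert\bra{0}_\calE\Hg_\calE \ket{\phi'}\bigr\rVert^2 \;=\; \tfrac{1}{2}\sum_i\abs{a_i+\iu b_i}^2 \;=\; \tfrac{1}{2}\sum_i(a_i^2+b_i^2) \;=\; \tfrac{1}{2}\bigl\lVert\ket{\phi'}\bigr\rVert^2,
\end{equation*}
so the postselection succeeds with probability exactly $1/2$, and the total success probability is at least $1/(8d)\cdot 1/2 = 1/(16d)$. The hypothesis $a_i+b_i\iu\ne 0$ for all $i$ is used only to ensure the target state is well-defined (it does not enter the probability calculation). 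The main subtlety is the gateset accounting: $\calG_4$ does not literally contain the $\Hg\otimes\Hg$ and $\Hg$ used inside \cref{lem:integer-state}, but $(\zeta_8\Hg)^{\otimes 2} = \iu(\Hg\otimes\Hg)$ and a single $\zeta_8\Hg$ differs from $\Hg$ only by a global phase, so every primitive invoked by the earlier lemma is available up to a global phase, which is harmless for both unitary evolution and the final measurement.
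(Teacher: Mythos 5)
Your proposal is correct and follows essentially the same route as the paper: prepare the real and imaginary integer coefficients as a single integer state on $2d$ basis elements via \cref{lem:integer-state} (success $\ge 1/8d$), inject the factor $\iu$ with an $\Sg$-type phase on the flag, and fold the two branches by a Hadamard-basis postselection on that flag qubit, which succeeds with probability exactly $1/2$, giving $\ge 1/16d$ overall. The only cosmetic difference is that you place the real/imaginary flag in a separate register from the start, whereas the paper prepares $\sum_i(a_i\ket{i}+b_i\iu\ket{i+d})$ and reshapes with a classical circuit; these are the same construction up to relabeling.
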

\begin{proof}
  First, prepare state $\ket{\psi_0} \propto \sum_{i=0}^{d-1}(a_i\ket{i} + b_i\iu\ket{i+d})$ using \cref{lem:integer-state}, which succeeds with probability $\ge1/8d$.
  Use a classical circuit to transform to $ \ket{\psi_1}\propto\sum_{i=0}^{d-1} (a_i\ket{i}_\calA\ket{0}_\calB + b_i\iu\ket{i}_\calA\ket{1}_\calB)$.
  Then, measure $\calB$ in Hadamard basis and accept on outcome $\ket{+}$.
  The post-measurement state is then given by $\ket{\psi_2} \propto \sum_{i=0}^{d-1} (a_i+b_i\iu)\ket{i}_\calA\otimes \ket{+}_\calB\propto \ket\psi$.
  The measurement succeeds with probability $1/2$ since on outcome $\ket{-}$, we just get the complex complement of $\ket{\psi}$.
\end{proof}

\cref{lem:integer-state-complex} also trivially extends to the other cyclotomic fields of degree $2$, which are $\QQ(\zeta_3)$ and $\QQ(\zeta_6)$ (see \cite{Rom06} for the necessary background on field extensions and cyclotomic fields).
However, cyclotomic fields of higher degree are more difficult since there it is possible to add integer multiples of powers $\zeta_n$ to get arbitrarily close to $0$.
We can still lower bound the success probability by a polynomial in the sum of absolute values of the integer coefficients:

\begin{lemma}[\cite{Mye86}]\label{lem:small-sum}
  Fix a constant $n\in\NN$. Let $a_0,\dots,a_{n-1}\in \ZZ$, such that there exists an $i$ with $a_i\ne0$.
  Then
  \begin{equation}
    \abs*{\sum_{i=0}^{n-1}a_i\zeta_n^i}\ge \left(\sum_{i=0}^{n-1}\abs{a_i}\right)^{-n}.
  \end{equation}
\end{lemma}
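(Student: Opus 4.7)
The plan is to exploit that $\alpha \coloneq \sum_{i=0}^{n-1} a_i \zeta_n^i$ is an algebraic integer in the cyclotomic field $\QQ(\zeta_n)$, and to lower bound $|\alpha|$ using the classical fact that the field norm of a nonzero algebraic integer is a nonzero rational integer, hence of absolute value at least $1$. All of the quantitative content of the bound will come from controlling the Galois conjugates of $\alpha$ by the triangle inequality.

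Concretely, I would use that $\mathrm{Gal}(\QQ(\zeta_n)/\QQ) \cong (\ZZ/n\ZZ)^\ast$ acts by $\sigma_k \colon \zeta_n \mapsto \zeta_n^k$ for $k$ coprime to $n$, so each Galois conjugate of $\alpha$ takes the form $\sigma_k(\alpha) = \sum_i a_i \zeta_n^{ki}$, and the triangle inequality together with $|\zeta_n^{ki}|=1$ immediately yields $|\sigma_k(\alpha)| \le A \coloneq \sum_i |a_i|$ for every such $k$. Assuming $\alpha \neq 0$, the absolute norm $\bigl|N_{\QQ(\zeta_n)/\QQ}(\alpha)\bigr| = \prod_{k \in (\ZZ/n\ZZ)^\ast} |\sigma_k(\alpha)|$ is a positive integer, so isolating the identity embedding $\sigma_1(\alpha)=\alpha$ gives
\[
    |\alpha| \;\ge\; \Bigl(\prod_{k \neq 1} |\sigma_k(\alpha)|\Bigr)^{-1} \;\ge\; A^{-(\varphi(n)-1)} \;\ge\; A^{-n},
\]
using $\varphi(n) \le n$ and $A \ge 1$ in the final inequality.

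The one point that needs care, and is the main obstacle, is justifying $\alpha \neq 0$: whenever $\varphi(n) < n$, the powers $1, \zeta_n, \ldots, \zeta_n^{n-1}$ are $\QQ$-linearly dependent (for example $1 + \zeta_{2^k}^{2^{k-1}} = 0$), so the hypothesis that some $a_i$ is nonzero does not immediately rule out $\alpha = 0$. I would handle this either by the surrounding context (the lemma is applied downstream to sums that are guaranteed to be nonzero), or by first reducing $\alpha$ modulo the cyclotomic polynomial $\Phi_n$ to express it in the canonical integral basis $\{\zeta_n^i\}_{0 \le i < \varphi(n)}$ with new integer coefficients of magnitude $O(A)$, to which the norm argument above then applies, yielding a bound of the same $A^{-n}$ shape up to adjusting the constant $n$. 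The only number-theoretic ingredients used are $|N(\alpha)| \ge 1$ for $\alpha \in \ZZ[\zeta_n] \setminus \{0\}$ and the trivial per-conjugate bound $|\sigma_k(\alpha)| \le A$.
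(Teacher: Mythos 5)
Your argument is correct and is the standard proof of the cited bound: the paper itself gives no proof of \cref{lem:small-sum} (it is quoted from \cite{Mye86}), and the norm argument you give --- $\alpha=\sum_i a_i\zeta_n^i$ is an algebraic integer, every Galois conjugate $\sigma_k(\alpha)=\sum_i a_i\zeta_n^{ki}$ has modulus at most $A=\sum_i\abs{a_i}$, and $1\le\abs{N_{\QQ(\zeta_n)/\QQ}(\alpha)}\le\abs{\alpha}\,A^{\varphi(n)-1}$, whence $\abs{\alpha}\ge A^{-(\varphi(n)-1)}\ge A^{-n}$ using $A\ge1$ --- is exactly how this result is obtained. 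You are also right to flag the hypothesis: as literally stated the lemma is false, since for $\varphi(n)<n$ a nonzero coefficient vector can still give $\alpha=0$ (e.g.\ $1+\zeta_4^2=0$, or $1+\zeta_2=0$), so the correct hypothesis is $\sum_i a_i\zeta_n^i\ne0$; this is what the norm argument needs and what actually holds where the lemma is used, since \cref{lem:integer-state-cyclotomic} only takes powers $\zeta_{2^k}^j$ with $j<\varphi(2^k)=2^{k-1}$, an integral basis, so nonvanishing of some coefficient does imply $\alpha\ne0$ there. One small caveat on your proposed fallback: reducing modulo $\Phi_n$ does not rescue the statement as written --- if $\alpha=0$ the reduced coefficients all vanish and no positive lower bound can hold --- it merely re-expresses the nonvanishing condition; and for general $n$ the reduction can inflate the coefficient sum by a constant factor, so the clean route is simply to prove the lemma under the hypothesis $\alpha\ne0$ (for $n=2^k$, the paper's case, the reduction is sign-flipping and harmless anyway). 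With that hypothesis repaired, your proof is complete and needs no further ingredients.
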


\begin{lemma}\label{lem:integer-state-cyclotomic}
  Let $n=2^k$ and $\ket\psi\propto\sum_{i=0}^{n/2-1}a_i\ket{i}$ with $a_i\in\ZZ[\zeta_n^k]$ with $a_i=\sum_{j=0}^{n/2-1}b_{ij}\zeta_n^j\ne0$ \footnote{It suffices to only use powers of $\zeta_n$ up to the degree of $\zeta_n$, which is $2^{k-1}$ in this case.} for all $i$.
  We can prepare $\ket\psi$ with probability $\ge 1/(d\cdot \poly(A))$ using gateset $\calG_{2^k}$ in time $\poly(d,\log A)$ and space $O(\log d+\log A)$, where $A \coloneq  \sum_{i=0}^{d-1}\sum_{j=0}^{n/2-1} \abs{b_{ij}}$.
\end{lemma}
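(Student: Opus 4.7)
The plan is to lift \cref{lem:integer-state-complex} from $\QQ(\iu)=\QQ(\zeta_4)$ to the degree-$n/2$ extension $\QQ(\zeta_n)$, using a $(k-1)$-qubit ancilla register $\calB$ to index the powers $\zeta_n^0,\dots,\zeta_n^{n/2-1}$: prepare an integer superposition encoding every $b_{ij}$, stamp the appropriate power onto each branch with a diagonal phase gate, then collapse $\calB$ back to $\ket{+}^{\otimes(k-1)}$ via a postselective Hadamard-basis measurement.

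Concretely, I would first invoke \cref{lem:integer-state} on the joint register $\calA\otimes\calB$ of dimension $d\cdot n/2$ (handling negative $b_{ij}$ with the $\CZg$-based sign-flipping trick, noting that $\CZg$ is always synthesizable from $\calG_{2^k}$) to obtain
\begin{equation*}
  \ket{\psi_0}\propto\sum_{i,j}b_{ij}\ket{i}_\calA\ket{j}_\calB,
\end{equation*}
succeeding with probability $\Omega(1/(dn))$ in time $\poly(d,\log A)$. Next, apply the diagonal unitary $D=\sum_j\zeta_n^j\ketbrab{j}_\calB$, which factorises via the binary expansion $j=\sum_l 2^l j_l$ into $\bigotimes_{l=0}^{k-2}\mathrm{diag}(1,\zeta_n^{2^l})$; each single-qubit factor is simply $\Tg_{2^k}^{2^l}$, so $D$ costs $O(2^k)=O(1)$ gates from $\calG_{2^k}$. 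The resulting normalised state is
\begin{equation*}
  \ket\phi=\frac{1}{\sqrt{B}}\sum_{i,j}b_{ij}\zeta_n^j\ket{i}_\calA\ket{j}_\calB,\qquad B=\sum_{i,j}b_{ij}^2\le A^2.
\end{equation*}
Finally, measure $\calB$ in the Hadamard basis and postselect on outcome $\ket{+}^{\otimes(k-1)}$; the residual state on $\calA$ is proportional to $\sum_i a_i\ket{i}\propto\ket\psi$, as desired.

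The crux is the acceptance probability of this last measurement, which evaluates to $\frac{2}{nB}\sum_i|a_i|^2$. Here I would invoke \cref{lem:small-sum}: since each $a_i\ne 0$, the lemma forces $|a_i|\ge\bigl(\sum_j|b_{ij}|\bigr)^{-n}\ge A^{-n}$, hence $\sum_i|a_i|^2\ge d\,A^{-2n}$, and the acceptance probability is at least $2d/(nA^{2n+2})$. Combining with the $\Omega(1/(dn))$ factor from the integer-state step, the total success probability is $\Omega\bigl(1/(n^2 A^{2n+2})\bigr)$, comfortably within the claimed $1/(d\cdot\poly(A))$ bound for the fixed $n=2^k$; the space and time bounds are inherited directly from \cref{lem:integer-state}.

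The main obstacle is precisely this probability bound: without \cref{lem:small-sum}, a malicious integer combination of the $\zeta_n^j$ could drive $|a_i|$ super-polynomially close to $0$ while $A$ stays polynomial, destroying the postselection. All remaining ingredients—sign handling via $\CZg$ gates, gate-set hygiene (pairing Hadamards when only $\Hg\otimes\Hg$ is available for small $k$), and the time-and-space bookkeeping—transfer unchanged from the proofs of \cref{lem:integer-state,lem:integer-state-complex}.
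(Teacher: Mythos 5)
Your proposal is correct and matches the paper's proof, which likewise prepares the integer superposition over a joint index/power register (as in \cref{lem:integer-state-complex}), stamps the powers of $\zeta_{2^k}$ with $\Tg_{2^k}$ phases, postselects the auxiliary register onto $\ket{+}^{\otimes(k-1)}$, and lower-bounds the acceptance probability by $1/\poly(A)$ via \cref{lem:small-sum}. Your write-up simply fills in the details that the paper leaves as ``analogous''.
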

\begin{proof}
  The procedure is analogous to \cref{lem:integer-state-complex}, but we can only lower bound the success probability of the final measurement by $1/\poly(A)$ via \cref{lem:small-sum}.
\end{proof}

\subsection{Implementing integer gates}

Childs and Wiebe \cite{CW12} show how to implement any non-unitary operation via decomposition into a linear combination of unitaries.
We say a unitary $U$ is integer if there exists an $s^2 \in \ZZ$ such that $sU \in \ZZ[\iu]^{2^n\times 2^n}$.
Here, we use a similar approach to implement any integer unitary with a finite gateset.
It is somewhat similar to the magic state construction by Bravyi and Kitaev \cite{BK05} to implement $Z$-rotations exactly.

\begin{lemma}\label{lem:unitary}
  Let $U$ be an $n$-qubit unitary with $sU\in\ZZ[\iu]$ for $s^2\in\NN$.
  $U$ can be implemented with success probability $2^{-2n} - 2^{-2^{\poly(n)}}$ in time and space $\poly(2^n, \log{s})$ using the gateset $\calG_4$.
\end{lemma}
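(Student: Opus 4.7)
The plan is to apply the linear-combination-of-unitaries (LCU) strategy of Childs and Wiebe~\cite{CW12} in the Pauli basis. Since $\{P_x\}_{x\in\{0,1,2,3\}^n}$ is an orthonormal basis for $\CC^{2^n\times 2^n}$ under the normalised Hilbert--Schmidt inner product, $U = \sum_x a_x P_x$ with $a_x = 2^{-n}\Tr(P_x^\dagger U)$ and $\sum_x\abs{a_x}^2 = 2^{-n}\fnorm{U}^2 = 1$. Because $sU$ has Gaussian integer entries and each Pauli has entries in $\{0,\pm1,\pm\iu\}$, the rescaled coefficients $c_x \coloneq 2^n s\, a_x$ lie in $\ZZ[\iu]$ with $\abs{c_x} \le 2^n s$, and the entire list $(c_x)_x$ can be computed classically in time $\poly(2^n,\log s)$.

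The circuit I propose has three parts. First, prepare the normalised state $\ket{U} = \sum_x a_x\ket{x}$ on a $2n$-qubit ancilla register $\calA$ using \cref{lem:integer-state-complex}. To satisfy that lemma's nonzero-coefficient precondition, restrict to the support $S = \{x : c_x \ne 0\}$ and use a classical circuit over $\calG_4$ to relabel $\ket{j}$ for $j\in\{0,\dots,\abs{S}-1\}$ to $\ket{x(j)}$. With $d\coloneq\abs{S}\le 4^n$ and $A = \sum_x(\abs{\real c_x} + \abs{\Img c_x}) = O(8^n s)$, \cref{lem:integer-state-complex} succeeds with probability $\ge 1/(16d)$; running $\poly(d)$ independent trials in the style of \cref{rem:integer-state-boost} boosts this to $1 - 2^{-2^{\poly(n)}}$ while keeping the total gate count $\poly(2^n,\log s)$. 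Second, apply the ``select'' unitary $W \coloneq \sum_x\ketbra{x}{x}_\calA\otimes(P_x)_\calB$, a classically controlled product of Paulis realisable with $\poly(2^n)$ gates from $\calG_4$. Third, apply $\Hg^{\otimes 2n}$ to $\calA$ and postselect on outcome $0^{2n}$.

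A routine calculation then gives the post-selected state on $\calB$ as $2^{-n}U\ket{\psi}$, so conditional on a successful preparation the acceptance probability equals exactly $2^{-2n}$; combining with the preparation bound yields the stated $2^{-2n} - 2^{-2^{\poly(n)}}$. The main obstacle is essentially bookkeeping: every subroutine (state preparation, the relabelling onto $S$, the controlled Paulis, and the final Hadamards) must be \emph{exactly} realisable over $\calG_4$, which follows from \cref{thm:AGKMMR24} applied to the relevant Clifford and classical permutation circuits. A subtler point is to check that \cref{rem:integer-state-boost}-style amplification still applies to the composite preparation of \cref{lem:integer-state-complex}: since the two postselections inside that lemma are independent Bernoulli trials, repeating the entire subroutine amplifies the aggregate success probability doubly exponentially in $d$, as required.
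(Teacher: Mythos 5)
Your proposal is correct and follows essentially the same route as the paper: Pauli-basis LCU with $\ket{U}$ prepared via the Gaussian-integer state-preparation lemma (boosted by repetition to doubly-exponentially small failure probability), a select-Pauli unitary, and a Hadamard-basis postselection on $0^{2n}$, giving conditional success probability exactly $2^{-2n}$. The only extra content in the paper's proof --- identifying the operator $P_{\tilde y}UP_{\tilde y}$ that gets applied on failure outcomes $y\ne 0^{2n}$ --- is not needed for the lemma as stated and is only exploited later for amplitude amplification.
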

\begin{proof}
\emph{Applying $U$ probabilistically.}
We begin by describing a procedure to apply $U$ with a success probability of $2^{-2n}$, provided that prior integer state preparation by \cref{lem:integer-state-complex} succeeds.
Decompose $U = \sum_{x\in\bin^{2n}} a_xP_x$ in the Pauli basis with $P_x = \bigotimes_{j=1}^n \sigma_{x_{2j-1}x_{2j}}$, where $\sigma_{00} = \Ig, \sigma_{01} = \Xg, \sigma_{10} = \Yg, \sigma_{11} = \Zg$.
We have $\sum_{x\in\bin^{2n}} \abs{a_x}^2 = 1$, as 
\begin{equation}
  2^n = \fnorm{U}^2 = \Tr UU^\dagger=\sum_{x,y}a_x a_y^\dagger \Tr P_x P_y = 2^n\sum_x \abs{a_x}^2.
\end{equation}
Since $sU$ has a unique representation in the Pauli basis as a vector in $\QQ(\iu)^{2^{2n}}$, we have for all $x$, $sa_x\in\QQ(\iu)$.

To implement $U$, first prepare $\ket{U}_\calA = \sum_{x\in\bin^{2n}} a_x\ket{x}$ using \cref{lem:integer-state-complex}.
Then, conditioned on $\ket{x}_\calA$, apply $(P_x)_\calB$, and measure $\calA$ in the Hadamard basis.
On input $\ket{\phi}_\calB$, we thus get
\begin{equation}
    \ket{U}_\calA\ket{\phi}_\calB = \sum_{x\in\bin^{2n}} a_x\ket{x}_\calA\ket{\phi}_\calB
    \mapsto\sum_{x\in\bin^{2n}} a_x\ket{x}_\calA \otimes P_x\ket{\phi}_\calB
    \eqcolon \ket{\psi'}.
\end{equation}
\newcommand{\yH}{y_H}
\newcommand{\ytilde}{\tilde{y}}
\newcommand{\tsigma}{\tilde{\sigma}}
If we get outcome $y\in\bin^{2n}$ for the Hadamard measurement, we project register $\calA$ onto
\begin{equation}
  \ket{\yH} \coloneq  H^{\otimes 2n}\ket{y} = 2^{-n} \sum_{x\in\bin^{2n}}(-1)^{x\cdot y}\ket{x},
\end{equation}
where $x\cdot y$ denotes the inner product between $x$ and $y$ as vectors.
Hence, the post-measurement state is given by
\begin{equation}
  (\ketbrab{\yH}_\calA\otimes I_\calB)\ket{\psi'}\propto \ket{\yH}_\calA\otimes \sum_{x\in\bin^{2n}} (-1)^{x\cdot y}a_x P_x\ket{\phi}_\calB.
\end{equation}
Note, if $y=0^{2n}$, then $(-1)^{x\cdot y}=1$ for all $x$, and thus we get $U\ket\phi_\calB$.
Otherwise, the following operator is applied: 
\begin{equation}
  U_y = \sum_{x\in\bin^{2n}}(-1)^{x\cdot y}a_xP_x = \sum_{x\in\bin^{2n}}a_x\bigotimes_{i=1}^{n}(-1)^{x_{2i-1}y_{2i-1}+x_{2i}y_{2i}}\sigma_{x_{2i-i}x_{2i}}
\end{equation}
We argue that it is of the form $P_z U P_z$ for some $z\in\bin^{2n}$.
For a fixed $y\in\bin^2$, the individual terms in the tensor product are of the form $\tsigma_x = (-1)^{x\cdot y} \sigma_x$ for all $x\in\bin^{2}$.
\begin{enumerate}[label=(\alph*)]
  \item If $y=00$, we have $\tsigma_x = \sigma_x$.
  \item If $y=01$, the phase $-1$ is injected for $x\in\{01,11\}$ (i.e. $\sigma_x\in\{\Xg,\Zg\}$) and so $\tsigma_x = \Yg\sigma_x\Yg$ as $\Yg \Ig\Yg =  \Ig, \Yg\Xg\Yg = -\Xg, \Yg\Yg\Yg =  \Yg, \Yg\Zg\Yg = -\Zg$.
  \item If $y=10$, the phase $-1$ is injected for $x\in\{10,11\}$ (i.e. $\sigma_x\in\{\Yg,\Zg\}$) and so $\tsigma_x = \Xg\sigma_x\Xg$ as $\Xg \Ig\Xg =  \Ig, \Xg\Xg\Xg =  \Xg, \Xg\Yg\Xg = -\Yg, \Xg\Zg\Xg = -\Zg$.
  \item If $y=11$, the phase $-1$ is injected for $x\in\{01,10\}$ (i.e. $\sigma_x\in\{\Xg,\Yg\}$) and so $\tsigma_x = \Zg\sigma_x\Zg$ as $\Zg\Ig\Zg =  \Ig, \Zg\Xg\Zg = -\Xg, \Zg\Yg\Zg = -\Yg, \Zg\Zg\Zg =  \Zg$.
\end{enumerate}
Hence, $\tsigma_x = \sigma_{\ytilde}\sigma_x\sigma_{\ytilde}$, where $\ytilde = y_2y_1$ (swap the bits of $y$).
Since the correction terms do not depend on $x$, they can be factored out and we have $U_y = P_{\ytilde} U P_{\ytilde}$, where now $\ytilde = y_2y_1y_4y_3\dotsm y_{2n}y_{2n-1}$.

The probability of measuring $y$ is given by 
\begin{equation}\label{eq:unitary-probability}
  \begin{aligned}
  \norm{(\ketbrab{\yH}_\calA\otimes I_\calB)\ket{\psi'}}^2 &=\norm*{\ket{\yH}_\calA\otimes2^{-n}\sum_{x\in\bin^{2n}} (-1)^{x\cdot y}a_x P_x\ket{\phi}_\calB}^2\\
  &= \norm*{2^{-n}U_y\ket{\phi}_\calB}^2  = 2^{-2n},
  \end{aligned}
\end{equation}
as $U_y$ is unitary if $U$ is unitary.

\emph{Complexity analysis.} 
We can compute the Pauli coefficients efficiently with the Hilbert-Schmidt product $a_x = 2^{-n}\Tr(P_x U)$.
By \cref{lem:integer-state-complex}, we can prepare $\ket{U}$ in time $\poly(2^n, \log{s})$ with success probability $p\ge2^{-2n-4}$.
If we run the state preparation $r/p$ times, then the probability that all fail is at most $(1-p)^{r/p}\le e^{-r}$.
\end{proof}

\cref{lem:unitary} also extends to cyclotomic integer unitaries, but with the caveat that the success probability now depends on the unitary as in \cref{lem:integer-state-cyclotomic}.

\begin{lemma}\label{lem:unitaryk3}
  Let $k>2$ and $U$ be an $n$-qubit unitary with $sU\in\ZZ[\zeta_{2^k}]$ for $s^2\in\NN$.
  $U$ can be implemented with success probability $1/\poly(2^{2n},s)$ in time and space $\poly(2^n, \log{s})$ using the gateset $\calG_{2^k}$.
\end{lemma}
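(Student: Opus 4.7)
The plan is to mimic the structure of Lemma \ref{lem:unitary}, substituting the cyclotomic state-preparation routine for its $\ZZ[\iu]$ counterpart. I will first decompose $U = \sum_{x\in\bin^{2n}} a_x P_x$ in the Pauli basis. Since $\iu = \zeta_{2^k}^{2^{k-2}}\in\ZZ[\zeta_{2^k}]$ for $k\geq 2$, every Pauli string $P_x$ has entries in $\ZZ[\zeta_{2^k}]$, so the Hilbert--Schmidt formula $a_x = 2^{-n}\Tr(P_x U^\dagger)$ yields $2^n s\,a_x \in \ZZ[\zeta_{2^k}]$. Writing $2^n s\,a_x = \sum_{j=0}^{2^{k-1}-1} b_{xj}\,\zeta_{2^k}^j$, I will bound the integers $b_{xj}$ by $\poly(2^n,s)$: entries of $sU$ have magnitude at most $s$, and for a fixed $k$ the change of basis between the $\{\zeta_{2^k}^j\}_j$ expansion and the real/imaginary coordinate representation of $\QQ(\zeta_{2^k})\subseteq\CC$ is a bounded linear map, so individual entries of $sU$ have $\zeta_{2^k}$-basis coefficients of magnitude $\poly(s)$; the $2^n$-term trace then inflates this by at most a factor of $2^n$. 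Summing over the $2^{2n}$ Pauli strings and the $2^{k-1}$ basis elements bounds the quantity $A$ in Lemma \ref{lem:integer-state-cyclotomic} by $\poly(2^n,s)$.

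Next I will invoke Lemma \ref{lem:integer-state-cyclotomic} over $\calG_{2^k}$ to prepare $\ket{U}_\calA \propto \sum_{x:\,a_x\neq 0} a_x \ket{x}$ (dropping any zero coefficients, which that lemma's hypothesis forbids) with success probability at least $1/(2^{2n}\cdot\poly(A)) = 1/\poly(2^{2n},s)$. The remainder of the LCU procedure then proceeds exactly as in Lemma \ref{lem:unitary}: apply the controlled Pauli operation $\sum_{x}\ketbra{x}{x}_\calA\otimes(P_x)_\calB$ and measure $\calA$ in the Hadamard basis. All controlled Paulis are exactly implementable in $\calG_{2^k}$, since $\Xg$ and $\Zg$ are generated by $\Hg$ and powers of $\Tg_{2^k}$, while the global phase $\iu$ appearing in $\Yg = \iu\Xg\Zg$ can be realised as a diagonal phase $\Tg_{2^k}^{2^{k-2}}$ on the control line. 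By the calculation in \cref{eq:unitary-probability}, outcome $0^{2n}$ applies $U$ to $\calB$ exactly with conditional probability $2^{-2n}$, giving total success $1/\poly(2^{2n},s)$; the running time and space are dominated by state preparation at $\poly(2^n,\log s)$.

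The only new technical step beyond Lemma \ref{lem:unitary} is the integer-coefficient bound yielding $A = \poly(2^n,s)$, and this is where I expect the main (mild) difficulty. It is also the reason $k$ is treated as a bounded parameter of the gateset: the degree of $\poly(A)$ in Lemma \ref{lem:integer-state-cyclotomic} is inherited from Lemma \ref{lem:small-sum} and grows with $k$, so for $k$ increasing too quickly with $n$ the success probability would no longer be polynomial in $2^{2n}$ and $s$. Once the coefficient bound is in place, the $\ZZ[\iu]$ argument of Lemma \ref{lem:unitary} ports over verbatim modulo the cosmetic substitutions just described.
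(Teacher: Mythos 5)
Your overall plan coincides with the paper's: the paper proves this lemma only implicitly, via the remark that \cref{lem:unitary} extends to cyclotomic integers once \cref{lem:integer-state-complex} is replaced by \cref{lem:integer-state-cyclotomic}, and the LCU part of your argument (dropping zero coefficients, controlled Paulis over $\calG_{2^k}$, the exact $2^{-2n}$ probability of the Hadamard-basis outcome, time dominated by state preparation) ports over correctly. The gap is in the one step you correctly identify as the new content: your bound on the integer coefficients $b_{xj}$ does not follow from the argument you give. The linear map sending a coefficient vector $(b_j)_j$ to $\sum_j b_j\zeta_{2^k}^j\in\CC$ is indeed bounded, but you need the reverse direction, and for $k>2$ that map from $\RR^{2^{k-1}}$ to $\CC$ is not injective (the paper itself emphasizes that the powers of $\zeta_{2^k}$ are not linearly independent in the complex plane); even restricted to $\QQ$-coefficients, where the representation is unique, the inverse is unbounded. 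Concretely, $(\zeta_8-\zeta_8^3-1)^m=(\sqrt2-1)^m\in\ZZ[\zeta_8]$ has complex magnitude tending to $0$ while its integer coefficients grow like $(\sqrt2+1)^m$. So ``entries of $sU$ have magnitude at most $s$'' does not by itself give coefficients of size $\poly(s)$, and with it your bound on $A$ — hence the claimed $1/\poly(2^{2n},s)$ success probability — is unsupported as written.

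The step is true and repairable, but the repair needs unitarity plus a Galois argument rather than a change-of-basis estimate. For $a=\sum_{j=0}^{2^{k-1}-1}b_j\zeta_{2^k}^j\in\ZZ[\zeta_{2^k}]$ one has $b_j=2^{-(k-1)}\Tr_{\QQ(\zeta_{2^k})/\QQ}\bigl(a\,\zeta_{2^k}^{-j}\bigr)$, hence $\abs{b_j}\le\max_\sigma\abs{\sigma(a)}$ over the Galois automorphisms $\sigma$. Because the Galois group is abelian, $\sigma$ commutes with complex conjugation, and $\sigma(s^2)=s^2$ since $s^2\in\NN$; applying $\sigma$ entrywise to $(sU)(sU)^\dagger=s^2I$ shows that $\sigma(sU)$ is again $s$ times a unitary, so every Galois conjugate of every entry of $sU$ has magnitude at most $s$, giving $\abs{b_j}\le s$ entrywise. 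Since entries of $P_x$ lie in $\{0,\pm1,\pm\iu\}$ and multiplication by $\iu=\zeta_{2^k}^{2^{k-2}}$ only signed-permutes coefficient vectors, the coefficients of $2^ns\,a_x=\Tr(P_x\,sU)$ (note $a_x=2^{-n}\Tr(P_xU)$, not $\Tr(P_xU^\dagger)$, though that slip only conjugates the coefficient) are bounded by $2^ns$, so $A\le 2^{k-1}\,2^{3n}s$. Feeding this into \cref{lem:integer-state-cyclotomic} (with $k$ fixed, as you note, so the exponent from \cref{lem:small-sum} is constant) then yields the stated $1/\poly(2^{2n},s)$ success probability and $\poly(2^n,\log s)$ time and space, and the rest of your argument goes through.
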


\subsection{Probability amplification}

Now if we want to use \cref{lem:unitary} to implement an entire circuit consisting of $m$ gates, then success probability becomes an issue, since the procedure would need to succeed $m$ times in a row, which has probability $2^{-\Omega(m)}$.
Fortunately, we can amplify the success probability of \cref{lem:unitary} exponentially close to $1$, which also brings the overall success probability exponentially close to $1$.

As noted in the proof, even if the final Hadamard measurement fails by giving $y\ne0^{2n}$, we still know that some unitary $U_y$ was applied.\footnote{If $U$ is Clifford, then a Pauli correction suffices and we can succeed with probability $1$ (ignoring state preparation).}
So we can repeat the procedure with $U' = UU_y^\dagger$, which again succeeds with probability $2^{-2n}$.
Unfortunately, the bit complexity of $U'$ grows exponentially in the worst case, and therefore we only have a logarithmic number of attempts before the correction terms grow too big.
Still, if we treat $n$ as a constant, then the success probability is also a constant and $O(\log(m))$ attempts suffice for a success probability of $1-1/\poly(m)$.
Otherwise, the runtime would be doubly exponential in $n$.

The increasing bit complexity prohibits this approach in the $k\ge3$ setting (see \cref{lem:unitaryk3}).
Thus, we instead use the \emph{oblivious amplitude amplification} technique of \cite{BCCKS14}, which we restate below for convenience.

\begin{lemma}[{Oblivious amplitude amplification \cite[Lemma 3.6]{BCCKS14}\protect\footnotemark}]\label{lem:oblivious-amplification}
  \footnotetext{We use the lemma as stated in the updated arXiv version of the paper.}
  Let $U$ and $V$ be unitary matrices on $\mu+n$ qubits and $n$ qubits, respectively, and let $\theta\in(0,\pi/2)$.
  Suppose that for any $n$-qubit state $\ket{\psi}$,
  \begin{equation}\label{eq:oblivious:U}
    U\ket{0^\mu}\ket{\psi} = \sin(\theta)\ket{0^\mu}V\ket{\psi} + \cos(\theta)\ket{\Phi^\perp},
  \end{equation}
  where $\ket{\Phi^\perp}$ is an $(\mu+n)$-qubit state that depends on $\ket{\psi}$ and satisfies $\Pi\ket{\Phi^\perp}=0$, where $\Pi \coloneq \ketbrab{0^\mu}\otimes I$.
  Let $R \coloneq 2\Pi - I$ and $S \coloneq -URU^\dagger R$. Then for any $\ell\in\ZZ$,
  \begin{equation}\label{eq:oblivious:S}
    S^\ell U\ket{0^\mu}\ket{\psi} = \sin\bigl((2\ell+1)\theta\bigr)\ket{0^\mu}V\ket{\psi} + \cos\bigl((2\ell+1)\theta\bigr)\ket{\Phi^\perp}.
  \end{equation}
\end{lemma}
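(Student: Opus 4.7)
The plan is to show that $S = -URU^\dagger R$ acts on the two-dimensional subspace $\mathcal{W} \coloneq \Span(\ket{\psi_g}, \ket{\psi_b})$, where $\ket{\psi_g} \coloneq \ket{0^\mu}V\ket{\psi}$ and $\ket{\psi_b}\coloneq\ket{\Phi^\perp}$, as the rotation by angle $2\theta$ arising as a composition of two reflections. By hypothesis, $\ket{\Phi} \coloneq U\ket{0^\mu}\ket{\psi} = \sin\theta\ket{\psi_g} + \cos\theta\ket{\psi_b}$ lies in $\mathcal{W}$, and I set $\ket{\Phi^\perp_{\mathrm{in}}} \coloneq \cos\theta\ket{\psi_g} - \sin\theta\ket{\psi_b}$ so that $\{\ket{\Phi},\ket{\Phi^\perp_{\mathrm{in}}}\}$ is another orthonormal basis of $\mathcal{W}$. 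The statement then follows from the standard Grover fact that the product of two reflections is a rotation by twice the angle between their axes, together with induction on $\ell$.

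First, I would verify that $R$ acts on $\mathcal{W}$ as the reflection about $\ket{\psi_g}$: since $\Pi\ket{\psi_g}=\ket{\psi_g}$ (the ancilla is $\ket{0^\mu}$) and $\Pi\ket{\psi_b}=0$ by the assumption $\Pi\ket{\Phi^\perp}=0$, we get $R\ket{\psi_g}=\ket{\psi_g}$ and $R\ket{\psi_b}=-\ket{\psi_b}$, so $\mathcal{W}$ is $R$-invariant. Also $\ket{\Phi}$ is in the range of $U\Pi U^\dagger$, so $URU^\dagger\ket{\Phi}=(2U\Pi U^\dagger-I)\ket{\Phi}=\ket{\Phi}$.

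The main obstacle is showing $URU^\dagger\ket{\Phi^\perp_{\mathrm{in}}}=-\ket{\Phi^\perp_{\mathrm{in}}}$, i.e., that $\ket{\Phi^\perp_{\mathrm{in}}}$ lies in the kernel of $U\Pi U^\dagger$. Here one must use the hypothesis for \emph{every} $n$-qubit state, which gives the operator identity $\Pi U\Pi = \sin\theta\,(\ketbrab{0^\mu}\otimes V)$ and its adjoint $\Pi U^\dagger\Pi = \sin\theta\,(\ketbrab{0^\mu}\otimes V^\dagger)$. Applying the latter to $\ket{\psi_g}=\ket{0^\mu}V\ket{\psi}$ gives $\Pi U^\dagger\ket{\psi_g}=\sin\theta\ket{0^\mu}\ket{\psi}$, and combining with $U^\dagger\ket{\Phi}=\ket{0^\mu}\ket{\psi}$ then yields $\Pi U^\dagger\ket{\psi_b}=\cos\theta\ket{0^\mu}\ket{\psi}$; the definition of $\ket{\Phi^\perp_{\mathrm{in}}}$ immediately forces $\Pi U^\dagger\ket{\Phi^\perp_{\mathrm{in}}}=0$. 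With both reflections pinned down on $\mathcal{W}$, a direct calculation (expanding $R\ket{\Phi} = -\cos(2\theta)\ket{\Phi}+\sin(2\theta)\ket{\Phi^\perp_{\mathrm{in}}}$ and applying $-URU^\dagger$, then using the angle-addition formulas) yields $S\ket{\Phi}=\sin(3\theta)\ket{\psi_g}+\cos(3\theta)\ket{\psi_b}$. Since $\mathcal{W}$ is $S$-invariant, a straightforward induction on $\ell$ concludes \eqref{eq:oblivious:S}.
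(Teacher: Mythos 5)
Your proof is correct, but note that the paper does not prove this statement at all: \cref{lem:oblivious-amplification} is imported verbatim from \cite[Lemma 3.6]{BCCKS14} and only restated for convenience, so the only meaningful comparison is with the original reference, whose argument yours essentially reproduces (the two-reflections-in-a-two-dimensional-subspace view of amplitude amplification). Your key step is the right one and is exactly where the ``for any $n$-qubit state'' hypothesis enters: the operator identity $\Pi U \Pi = \sin(\theta)\,(\ketbrab{0^\mu}\otimes V)$, its adjoint applied to $\ket{\psi_g}$, and $U^\dagger\ket{\Phi}=\ket{0^\mu}\ket{\psi}$ give $\Pi U^\dagger\ket{\Phi^\perp} = \cos(\theta)\ket{0^\mu}\ket{\psi}$ (using $\cos\theta\neq0$, guaranteed by $\theta\in(0,\pi/2)$), whence $U\Pi U^\dagger$ annihilates $\cos(\theta)\ket{\psi_g}-\sin(\theta)\ket{\Phi^\perp}$ and $URU^\dagger$ restricted to $\mathcal{W}$ is the reflection about $\ket{\Phi}$. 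Two small points to tighten: for the induction you need $S$'s action on a general vector $\sin(\alpha)\ket{\psi_g}+\cos(\alpha)\ket{\Phi^\perp}$, not just on $\ket{\Phi}$, so it is cleaner to record that in the orthonormal basis $\{\ket{\psi_g},\ket{\Phi^\perp}\}$ one has $S|_{\mathcal{W}}=\begin{psmallmatrix}\cos2\theta&\sin2\theta\\-\sin2\theta&\cos2\theta\end{psmallmatrix}$, from which $S^\ell$ is the rotation by $2\ell\theta$ and \eqref{eq:oblivious:S} follows in one line; and since the lemma is stated for all $\ell\in\ZZ$, observe that $S|_{\mathcal{W}}$ is invertible on $\mathcal{W}$ (so $S^{-1}$ acts there as the rotation by $-2\theta$), which covers negative $\ell$ as well.
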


\begin{lemma}\label{lem:simulate-Q(i)}
  Let $\calG$ be a fixed finite gateset in $\QQ(\zeta_{2^k})$ with $k\ge2$.
  We can simulate each gate of $\calG$ using $\calG_{2^k}$ with success probability $1-1/\exp(m)$ in time $\poly(m)$.
\end{lemma}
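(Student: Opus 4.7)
Fix a gate $G\in\calG$. Since $\calG$ is finite, $G$ acts on a constant number $n_G$ of qubits and satisfies $sG\in\ZZ[\zeta_{2^k}]$ for some constant $s$. My plan is to combine the LCU procedure of \cref{lem:unitary} (if $k=2$) or \cref{lem:unitaryk3} (if $k\ge3$) with the oblivious amplitude amplification of \cref{lem:oblivious-amplification}.

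A single LCU invocation has two parts: (i) an integer state-preparation step with success probability $\ge 1/\poly$, which I would boost to $1-1/\exp(m)$ by $\poly(m)$ independent repetitions in the spirit of \cref{rem:integer-state-boost} (a failed preparation only wastes ancillas and leaves the data register untouched); and (ii) a controlled-Pauli followed by a Hadamard measurement, which succeeds with amplitude $\sin\theta_G=2^{-n_G}$. I would then apply OAA to step (ii) with $\ell$ the nearest integer to $\pi/(4\theta_G)-\tfrac12$.

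In the single-qubit case $n_G=1$ this choice is exact: $\theta_G=\pi/6$ and $(2\ell+1)\theta_G=\pi/2$, so the amplified procedure succeeds with probability $1$, and combined with the boosted state preparation the per-gate simulation has success probability $1-1/\exp(m)$. When $n_G>1$, OAA only yields a constant success amplitude $\sin\theta^*<1$; I would then boost further by iteration, leveraging the classical failure information from the proof of \cref{lem:unitary}: a failed Hadamard outcome $y\ne0^{2n_G}$ is known to have applied $P_{\tilde y}GP_{\tilde y}$ to the data register, so after absorbing the free Pauli correction $P_{\tilde y}$ the task of recovering $\ket\psi$ reduces to applying $G^{-1}$, which admits the same LCU machinery at the same constant cost and success rate. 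Iterating for $O(m)$ rounds, the induced random walk on data-register states absorbs into the target $G\ket\psi$ with probability $1-(1-p_0)^{O(m)}=1-1/\exp(m)$.

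The hard part will be keeping the LCU targets' bit complexity bounded across iterations, since naively compounding the Pauli-conjugated failure corrections into a single effective target unitary causes its bit complexity (and hence the state-preparation cost of the next LCU) to blow up exponentially. I sidestep this by never compounding corrections into the target: each iteration runs LCU only for one of the two original, constant-complexity unitaries $G$ or $G^{-1}$, interleaved with free $\calG_{2^k}$-native Pauli corrections on the data register. The overall time is $\poly(m)$.
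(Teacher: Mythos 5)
There are two genuine gaps, and they sit exactly where your plan departs from the paper. First, decoupling state preparation from the amplification breaks the hypotheses of \cref{lem:oblivious-amplification}: that lemma needs a single unitary $U$ with $U\ket{0^\mu}\ket{\psi}=\sin(\theta)\ket{0^\mu}V\ket{\psi}+\cos(\theta)\ket{\Phi^\perp}$ and amplifies with the reflection $R=2\Pi-I$ about $\ket{0^\mu}$. In your scheme the input flag of ``step (ii)'' is the postselected state $\ket{G}=\sum_x a_x\ket{x}$, not $\ket{0^\mu}$, so to run OAA you would need either an exact unitary preparing $\ket{G}$ from $\ket{0^\mu}$ inside $U$, or an exact reflection $2\ketbrab{G}-I$; both are unitaries over $\QQ(\zeta_{2^k})$ that in general lie outside $\unitary(\cdot,\DD[\zeta_{2^k}])$ and hence are not exactly synthesizable by \cref{thm:AGKMMR24} --- this is the very obstruction (cf.\ \cref{prop:exact-synthesis-impossible}) that forces postselected preparation in the first place, so the step is circular (and an approximate reflection would forfeit the exactness of the applied gate). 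The paper avoids this by noting that in \cref{lem:integer-state,lem:integer-state-complex,lem:integer-state-cyclotomic,lem:unitary,lem:unitaryk3} every ancilla measurement has a fixed, known accepting outcome, so preparation \emph{and} gate application can be run coherently as one unitary with $\sin\theta=\sqrt{p}$, and all reflections are about $\ket{0^\mu}$ (implementable in $\calG_2$). It also resolves the angle problem that you only treat for $n_G=1$: choosing the nearest integer $\ell$ leaves $(2\ell+1)\theta$ off from $\pi/2$ by up to $\theta$, a constant, so the paper tensors on an attenuator $U'$ whose $\ket{0^{r+1}}\!\to\!\ket{0^{r+1}}$ amplitude is a dyadic rational $c$ chosen so that $(2\ell+1)\theta=\pi/2+O(1/\exp(m))$.

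Second, the fallback you propose for $n_G>1$ is quantitatively wrong. With per-attempt success probability $p_0=2^{-2n_G}\le 1/4$, your non-compounding ``undo with $G^{-1}$'' strategy is a birth--death chain biased \emph{away} from the target: a success only removes one layer of accumulated failure, while a failure (probability $1-p_0>1/2$) adds one, so by gambler's ruin the probability of ever completing the gate is $p_0/(1-p_0)$ (at most $1/3$ for a single-qubit gate, $1/15$ for a two-qubit gate), not $1-(1-p_0)^{O(m)}$; running $O(m)$ or even unboundedly many rounds does not help. The bound $1-(1-p_0)^{O(m)}$ would require each round to give a fresh chance of finishing from scratch, but after a failure the data register no longer holds $\ket{\psi}$, so you cannot simply retry $G$. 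This is essentially why the paper abandons the repeat-with-correction route altogether (even with compounded targets $UU_y^\dagger$ it reaches only $1-1/\poly(m)$ before the bit complexity explodes, and that is unavailable for $k\ge3$) and instead applies oblivious amplitude amplification to the full coherent procedure.
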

\begin{proof}
  The idea is to apply the $n$-qubit gate $V\in\calG$ using \cref{lem:unitary} or \cref{lem:unitaryk3}.
  Note that both in integer state preparation (\cref{lem:integer-state,lem:integer-state-complex,lem:integer-state-cyclotomic}), and integer gate application (\cref{lem:unitary,lem:unitaryk3}), all ancillas are measured in such a way that their state is fully determined in the case of success.
  Hence, we can construct a unitary $U$ that implements $U$ as in \cref{eq:oblivious:U} with $\sin(\theta)=\sqrt{p}$ for success probability $p$.
  In order to achieve exponentially small failure probability, we need to adjust the success probability so that $(2\ell+1)\theta\approx \pi/2$.

  Let $c = 2^{-r}c'$ with $c'\in [2^r]$ and even $r$ such that $c\sqrt{p} = \sin((\pi/2)/(2\ell+1)) + O(1/\exp(m))$ for some $\ell,r\in\poly(m)$.
  Note that we can efficiently compute $p$ and $c$ since we are dealing with fixed-size circuits.
  We can build a circuit $U'$ such that $U'\ket{0^{r+1}} = c\ket{0^{r+1}} + \sqrt{1-c^2}\ket{\phi^\perp}$ with $\braket{0^{r+1}}{\phi^\perp}=0$.
  $U'$ first applies $\Hg^{\otimes r}$ to the first $r$ qubits, then flips the last qubit conditioned on the first $r$-qubits representing an integer $\ge c'$, and then applies $\Hg^{\otimes r}$ again. Thus,
  \begin{equation}
    \bra{0^{r+1}} V'\ket{0^{r+1}} = \bra{+}^r \frac{1}{\sqrt{2^r}}\sum_{j=0}^{2^r-1}\ket{j}\braket{0}{j\ge c'} = \frac{1}{2^r} \sum_{i=0}^{2^r-1}\sum_{j=0}^{2^r-1}\braket{i}{j}\braket{0}{j\ge c'} = \frac{c'}{2^r}=c,
  \end{equation}
  where $\ket{j\ge c'}\equiv \ket{1}$ for $j\ge c'$ and $\ket0$ otherwise.
  Hence, we can use $\wtU \coloneq U'\otimes U$ to satisfy \cref{eq:oblivious:U} with $\sin((2\ell+1)\theta) \ge 1-1/\exp(m)$.
  We can implement the unitary $S = -\wtU R\wtU^\dagger R$ from \cref{lem:oblivious-amplification} using $\calG_{2^k}$ since $U$ can be implemented with $\calG_{2^k}$ by \cref{lem:unitary,lem:unitaryk3}, and $U',R$ with $\calG_2$.
  Thus, $S^\ell \wtU$ implements $V$ with success probability $1-1/\exp(m)$ in time $\poly(m)$.
\end{proof}

\begin{proof}[Proof of \cref{thm:cyclotomic-gateset}]
  Follows now directly from \cref{lem:simulate-Q(i)}.
\end{proof}

\begin{remark}\label{rem:nonunitary}
  \cref{lem:unitary} only requires $U$ to be unitary to have success probability $2^{-2n}$ for all outcomes $y$ in \cref{eq:unitary-probability} independent of the input state.
  If $U$ is not unitary, then the success probability on input is given by
  \begin{equation}\label{eq:nonunitary-probability}
    \frac{1}{\fnorm{U}^2}\cdot \norm*{\vphantom{\big|}U_y\ket{\phi}}^2,
  \end{equation}
  where $\ket{\phi}$ is the input.
  Notably, the success probability is $0$ if $U\ket{\phi}=0$.
\end{remark}

\subsection{Simulating cyclotomic gates with integer gates}

In this section, we show how to use above constructions to simulate gatesets with entries from a cyclotomic field $\KK=\QQ(\zeta_{n})$ with $n=2^k$.
The case $k=2$ was already proven by McKague~\cite[Section 2.4]{McK10}, and used prove that $\QMA(k),\QIP(k),\QMIP,\QSZK$ over $\R$ are equivalent to their original definitions over $\CC$ \cite{McK13}.

Let $d=2^{k-1}$ be the degree of $\zeta_n$.
We encode $a=\sum_{i=0}^{d-1}a_i\zeta_n^i\in \KK$ as $\ket{v(a)} = \sum_{i=0}^{d-1}a_i\ket{i}$ with all $a_i\in\QQ$, and a vector $\ket{\psi}\in \KK^{N}$ as 
\begin{equation}\label{eq:encoding}
  \ket{\psi} = \sum_{i=0}^{N-2} a_i\ket{i} \qquad\mapsto\qquad \ket{v(\psi)} \propto \sum_{i=0}^{d-1} \ket{i}_\alpha\ket{v(a_i)}.
\end{equation}
Note that this encoding is unique.

\begin{lemma}\label{lem:cyclotomic-simulation}
  There exists a group homomorphism $\Psi:\unitary(N,\KK)\to\rmO(dN,\QQ)$ such that $\Psi(U)$ implements $U$ inside the encoding, i.e., $\Psi(U)\ket{v(\psi)} = \ket{v(U\ket{\psi})}$ for all $\ket{\psi}\in\KK^N$.
\end{lemma}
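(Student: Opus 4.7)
The plan is to build $\Psi$ entrywise via the regular representation of $\KK$ as a $\QQ$-algebra. For $b\in\KK$, let $M_b\in\QQ^{d\times d}$ be the matrix of left-multiplication by $b$ on $\KK$ expressed in the $\QQ$-basis $\{1,\zeta_n,\dots,\zeta_n^{d-1}\}$, i.e., $M_b\ket{v(a)}=\ket{v(ba)}$ for every $a\in\KK$. Since $\zeta_n^d=-1$ (the minimal polynomial of $\zeta_n$ over $\QQ$ is $x^d+1$), multiplication by $\zeta_n$ acts as $\ket{i}\mapsto\ket{i+1}$ for $i<d-1$ and $\ket{d-1}\mapsto -\ket{0}$, so $M_{\zeta_n}$ is an integer matrix. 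By linearity and the fact that $b\mapsto M_b$ is a ring homomorphism $\KK\to\QQ^{d\times d}$, every $M_b$ has entries in $\QQ$.

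The key observation is that $M_b^T=M_{\overline{b}}$. For the generator, $M_{\zeta_n}^T$ sends $\ket{i}\mapsto\ket{i-1}$ for $i>0$ and $\ket{0}\mapsto -\ket{d-1}$; but $\overline{\zeta_n}=\zeta_n^{-1}$, and modulo $\zeta_n^d+1=0$ we have $\zeta_n^{-1}=-\zeta_n^{d-1}$, so $M_{\zeta_n}^T=M_{\zeta_n^{-1}}=M_{\overline{\zeta_n}}$. Since complex conjugation is the $\QQ$-algebra automorphism of $\KK$ with $\zeta_n\mapsto\zeta_n^{-1}$, and since transpose and conjugation are both $\QQ$-linear and multiplicative on the subring $\QQ[M_{\zeta_n}]$, the identity $M_b^T=M_{\overline{b}}$ extends from $\zeta_n$ to all $b\in\KK$. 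This is precisely the place where we use that $n$ is a power of two (the footnote's caveat).

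Now define $\Psi(U)$ for $U\in\unitary(N,\KK)$ to be the $dN\times dN$ block matrix whose $(i,j)$-block is $M_{U_{ij}}$; equivalently, apply the ring homomorphism $M$ entrywise and regard the result as an operator on $\CC^N\otimes\CC^d$ in the tensor order of \cref{eq:encoding}. A blockwise computation gives
\begin{equation*}
\Psi(U)\ket{v(\psi)} \;=\; \sum_{i,j}\ket{i}_\alpha\,M_{U_{ij}}\ket{v(\psi_j)} \;=\; \sum_i \ket{i}_\alpha\,\ket{v(\textstyle\sum_j U_{ij}\psi_j)} \;=\; \ket{v(U\ket{\psi})},
\end{equation*}
and likewise $\Psi(UV)_{ij}=M_{(UV)_{ij}}=\sum_k M_{U_{ik}}M_{V_{kj}}=(\Psi(U)\Psi(V))_{ij}$, so $\Psi$ is a group homomorphism with $\Psi(I_N)=I_{dN}$.

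It remains to verify orthogonality. Using $M_b^T=M_{\overline{b}}$ blockwise, the transpose $\Psi(U)^T$ has $(i,j)$-block $M_{U_{ji}}^T=M_{\overline{U_{ji}}}=M_{(U^\dagger)_{ij}}$, so $\Psi(U)^T=\Psi(U^\dagger)$. Combining with multiplicativity yields $\Psi(U)^T\Psi(U)=\Psi(U^\dagger U)=\Psi(I_N)=I_{dN}$, so $\Psi(U)\in\rmO(dN,\QQ)$. The only delicate point in the whole argument is the identity $M_b^T=M_{\overline{b}}$, which fails in general cyclotomic fields and is what confines this construction to $n=2^k$.
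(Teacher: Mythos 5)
Your proposal is correct and follows essentially the same route as the paper's proof: the regular representation $b\mapsto M_b$ applied entrywise, the identity $M_b^T=M_{\overline{b}}$ (valid precisely because $\zeta_n^d=-1$ for $n=2^k$) established on the generator and extended by linearity/multiplicativity, orthogonality via $\Psi(U)^T\Psi(U)=\Psi(U^\dagger U)=I$, and the blockwise verification of the action on the encoding. No substantive differences to report.
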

\begin{proof}
$\KK$ is isomorphic to $\QQ[x]/\Phi_n(x)$, where $\Phi_n(x)$ is the $n$-th cyclotomic polynomial, which has degree $d=\varphi(n)$, where $\varphi$ is Euler's totient function.
An integral basis of $\KK$ is given by $\{1,\zeta_n,\dots,\zeta_n^{d-1}\}$.
Thus, we can treat $\KK$ as a $d$-dimensional vector space over $\QQ$.
Any $a\in \KK$ has therefore a unique decomposition $a = \sum_{i=0}^{n-1}a_i \zeta_n^{i}$ with $a_i\in\QQ$.
Denote by $v_a$ the vector with coefficients $a_0,\dots,a_{d-1}$.
The product of $a,b\in \KK$ is then $ab = \sum_{i,j=0}^{d-1} a_ib_j\zeta_n^{i+j}$.
If we fix $a$, this expression is linear in $v_b$.

Thus, there exists a unique matrix $M_a$ such that $M_av_b = v_{ab}$ for all $a,b$.
In fact, the map $\Psi:a \mapsto M_a$ preserves multiplication and addition, and therefore $\Psi$ is a field isomorphism between $\KK$ and a subfield of $\QQ^{d\times d}$.\footnote{$M_a$ is also known as the regular representation (see e.g. \cite{Die09}). We took the idea of treating the regular representation as a field isomorphism from \cite{Vau15}.}
Note that $n=2^k$, $d=2^{k-1}$ and $\zeta_n^{d} = -1$.
Therefore, we get $M_a^T = M_{\overline{a}}$, because $\overline{\zeta_n} = \zeta_n^{-1} = \zeta_n^{n-1} = -\zeta_n^{d-1}$ with
\begin{equation}
  M_{\zeta_n} = \begin{pmatrix}
    0 &        &   & -1 \\
    1 & \ddots \\
      & \ddots & \ddots\\
      &        & 1 & 0\\
  \end{pmatrix},
  \qquad
  M_{-\zeta_{n}^{d-1}} = -M_{\zeta_n^{d-1}} = -M_{\zeta_{n}}^{d-1} = \begin{pmatrix}
    0 & 1     &   &  \\
      & \ddots & \ddots \\
      &        & \ddots & 1\\
    -1 &        & & 0\\
  \end{pmatrix}.
\end{equation}
Since $\Psi$ is an isomorphism, 
\begin{equation}
  (M_{\zeta_n^j})^T = (M_{\zeta_n}^j)^T = (M_{\zeta_n}^T)^j = M_{\overline{\zeta_n}}^j = M_{\overline{\zeta_n^j}}.
\end{equation}
By linearity, we have
\begin{equation}
  M_a^T = \sum_{i=0}^{d-1} a_iM_{\zeta_n^i}^T = \sum_{i=0}^{d-1} a_iM_{\overline{\zeta_n^i}} = M_{\sum_{i=0}^{d-1}a_i\overline{\zeta_n^i}} = M_{\overline{a}}.
\end{equation}

Thus, applying $\Psi$ to the entries of a unitary gives an orthogonal matrix, i.e., $\Psi\colon \rmU(N, \KK) \to \rmO(dN, \QQ)$ is a group homomorphism:
For $U\in\rmU(N, \KK)$, $\Psi(U)$ is a $N\times N$ block matrix with blocks $\Psi(u_{ij})$ of size $d\times d$.
Let $A = \Psi(U)^T\Psi(U)$ with blocks 
\begin{equation}
  A_{ij} = \sum_{k=1}^N \Psi(U)^T_{ik}\Psi(U)_{kj} = \sum_{k=1}^N \Psi(u_{ki})^T \Psi(u_{kj}) = \Psi\left(\sum_{k=1}^N \overline{u_{ki}}u_{kj}\right) = \Psi(\delta_{ij}) = \delta_{ij}I,
\end{equation} where $\delta_{ij}$ is the Kronecker delta.
Thus, $A = I$ and $\Psi(U)$ is orthogonal.

Writing $U = \sum_{i,j}u_{ij}\ketbra ij$, we get $\Psi(U) = \sum_{i,j} \ketbra ij_\alpha\otimes\Psi(u_{ij})_\zeta$, where the $\zeta$ register contains the corresponding powers of $\zeta_n$, and the $\alpha$ register the actual qubits.
Using the encoding of \cref{eq:encoding}, a quantum state $\ket\psi = \sum_{i=0}^{N-1} a_i\ket{i}$ with $a_i = \sum_{j=0}^{d-1}a_{ij}\zeta_n^j$ is written as \begin{equation}
  \ket{v(\psi)} = \sum_{i=0}^{N-1} \ket{i}_\alpha\ket{v({a_i})}_\zeta = \sum_{i=0}^{N-1}\sum_{j=0}^{d-1} a_{ij}\ket{i}_\alpha\ket{j}_\zeta.
\end{equation}
Thus, 
\begin{equation}
  \begin{aligned}
    \Psi(U)\ket{v(\psi)} &= \sum_{i=0}^{N-1}\sum_{j=0}^{N-1}\ket{i}_\alpha\otimes \Psi(u_{ij})\ket{v(a_j)}_\zeta = \sum_{i=0}^{N-1}\sum_{j=0}^{N-1}\ket{i}_\alpha\otimes \ket{v(u_{ij}a_j)}_\zeta \\
    &= \sum_{i=0}^{N-1}\ket{i}_\alpha\otimes\ket*{v\textstyle\left(\sum_{j=0}^{N-1} u_{ij}a_j\right)}_\zeta = \ket*{v\bigl(U\ket\psi\bigr)}.
  \end{aligned}
\end{equation}
\end{proof}

See \cref{sec:example-cyclotomic} for an example of applying $\Psi$ to $\sqrt{\Hg}$ and the $3$-qubit quantum Fourier transform $F_8$ in $\QQ(\zeta_8)$.

\begin{proof}[Proof of \cref{thm:bqp1}]
  First, we transform the gates of $\calG$ to $\calG'$ in $\QQ$ with \cref{lem:cyclotomic-simulation}.
  Then we simulate these with $\calG_4$ using \cref{lem:simulate-Q(i)}.
  Finally, we apply \cref{lem:cyclotomic-simulation} again to $\calG_4$ to obtain a gateset $\calG_4'$ in $\ZZ[1/2]$, which we can implement exactly with $\calG_2$ by \cref{thm:AGKMMR24}.

  It remains to argue that applying \cref{lem:cyclotomic-simulation} to a $\BQP_1$ verifier yields a valid $\BQP_1$ verifier.
  To avoid clutter, write $\ket{\whpsi} \coloneq  \ket{v(\psi)}, \whU \coloneq  \Psi(U)$.
  Let $V = U_T\dotsm U_1$ be a $\BQP_1$ verifier with soundness $\epsilon$.
  We will show that $\whV = \whU_T\dotsm\whU_1$ is also a valid  $\BQP_1$ verifier for the same problem.
  First, consider the initial state for $V$: $\ket{\psiinit} = \ket{0}_\alpha\ket{x}_\beta$.
  Since $\ket{\psiinit}$ is rational, we have $\ket{\whpsiinit} = \ket{\psiinit}_{\alpha\beta}\ket{0}_\zeta$.

  In the YES-case, $V\ket{\psiinit} = \ket{1}_{\alpha_1}\ket{\phi_1}_{\overline{\alpha_1}} = \ket\phi$.
  Therefore, $\whV\ket{\whpsiinit} = \ket{1}_{\alpha_1}\ket{\whphi_1}_{\overline{\alpha_1}} = \ket\whphi$.
  Thus, perfect completeness is preserved.

  In the NO-case, let $\ket\phi = V\ket{\psiinit} = \ket0\ket{\phi_0} + \ket1\ket{\phi_1}$, where $\ket{\phi_0},\ket{\phi_1}$ are not normalized and $\norm{\ket{\phi_0}}^2\ge 1-\epsilon$ is the rejectance probability of $V$.
  Now consider $\ket{\whphi} = \whV\ket{\whpsiinitp{i}} = \ket{0}\ket{\whphi_0} + \ket{1}\ket{\whphi_1}$, 
  where $p = \braketc{\whphi_0}$ is the rejectance probability of $\whV$.
  Let $\Lambda = \sum_{i=0}^{d-1} I\otimes \zeta_n^i \bra{i}_\zeta$ be the operator that maps $\ket{\whpsi}$ to $\ket{\psi}$.
  Then $\Lambda \ket{\whphi} = \ket{\phi}$ and thus $\Lambda \ket{\whphi_0} = \ket{\phi_0}$.
  Thus, $\norm{\Lambda\ket{\whphi_0}}^2=\norm{\ket{\phi_0}}^2 \ge 1-\epsilon$.
  Hence, $\norm{\ket{\whphi_0}}^2 \ge (1-\epsilon)\norm{\Lambda}^{-2}\ge 1/2n^2$ for small $\epsilon$.

  Thus, the soundness of the $\BQP_1$ verifier $\whV$ is inverse polynomial in $n$.
\end{proof}

For the special case $k=2$, the above idea also works for $\QMA_1$ as shown in \cite{McK10,McK13}, but for $k>2$ there does not seem be a way to prevent the prover from sending a state $\ket{\psi}$ with $\Lambda\ket{\psi}=0$.

\begin{proof}[Proof of \cref{thm:qma-g2}]
  Note that applying \cref{lem:cyclotomic-simulation} to a $\QMAo$-verifier $V$ creates a verifier $V'$ that ``expects'' a proof with real amplitudes.
  With standard error reduction techniques, we can transform $V'$ to $V''$ with soundness $\epsilon$ (on proofs with real amplitudes).
  Then $V''$ still has soundness $4\epsilon$ against proofs with complex amplitudes.
  Hence, we can simulate $\calG$ in the same way as in \cref{thm:bqp1}, but with additional error reduction steps.
\end{proof}

\section{Quantum SAT}\label{sec:qsat}

The first key idea is that we can verify the $k$-local Exact Hamiltonian problem in $\QMAo$ using the insight of \cref{rem:nonunitary}.

\newcommand{\pprep}{p_\mathrm{prep}}
\begin{lemma}\label{lem:kELH}
  $\lELH^{\QQ(\iu)} \in \QMA_{1}^{\calG_{2}}$ and $\lELH_\epsilon^{\QQ(\zeta_{2^k})} \in \QMA_{1,1-\epsilon'}^{\calG_{2^k}}$ with $l\in O(\log n)$,\footnote{For $l\in\omega(1)$, we assume that there are only $\poly(n)$ nonzero $H_S$ terms in $H=\sum_{S}H_S$.} $\epsilon\le n^{-O(1)}$ and $\epsilon'\in\epsilon^{O(1)}$, where $n$ is the number of qubits of the instance.
\end{lemma}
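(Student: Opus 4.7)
The idea is to exploit \cref{rem:nonunitary}: applying the LCU procedure to a non-unitary Hermitian operator $H$ on input $\ket{\phi}$ produces the ``success'' measurement outcome with probability proportional to $\norm{H\ket{\phi}}^2$, which vanishes exactly when $\ket{\phi}$ lies in the nullspace of $H$. The verifier will therefore flip the semantics and \emph{accept on failure}: it attempts to apply $H$ via LCU and rejects only if the LCU declares a successful application.

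Concretely, I would expand each local term in the Pauli basis to write $H = \sum_{j=1}^{M} a_j P_j$, with $a_j \in \QQ(\zeta_{2^k})$ (using $\iu \in \QQ(\zeta_{2^k})$ for $k\ge 2$) and $M \le m \cdot 4^l = \poly(n)$ by the $l$-locality and the hypothesis (from the footnote in the statement) that only $\poly(n)$ local terms are nonzero. The verifier then (i) receives the proof $\ket{\phi}$ on register $\calB$; (ii) on an ancilla register $\calA$ of $\lceil \log_2 M \rceil = O(\log n)$ qubits, prepares the LCU state $\ket{H}_\calA \propto \sum_j a_j \ket{j}$ using \cref{lem:integer-state-complex} (for $k=2$) or \cref{lem:integer-state-cyclotomic} (for $k \ge 3$), boosted by independent parallel repetition on fresh ancillas; (iii) applies the controlled-Pauli $\sum_j \ketbra{j}{j}_\calA \otimes (P_j)_\calB$; (iv) measures $\calA$ in the Hadamard basis; and accepts iff either some preparation attempt aborted or the measurement outcome is not the all-zero string.

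The soundness and completeness analysis is a direct variant of the calculation in \cref{eq:unitary-probability}: conditioned on successful preparation, the probability of the all-zero Hadamard outcome equals $\norm{H\ket{\phi}}^2/(c^2 \cdot 2^{|\calA|})$, where $c^2 = \sum_j \abs{a_j}^2$. Using a register of size $O(\log n)$ indexing only the nontrivial Paulis (rather than $2n$ qubits as in \cref{lem:unitary}) is crucial here, as it keeps $2^{|\calA|}$ polynomial. In the YES-case the prover sends a nullstate of $H$, making this probability exactly zero, so acceptance is certain. In the NO-case $\norm{H\ket{\phi}} \ge \epsilon$ for every $\ket{\phi}$, and the bit-complexity restriction of \cref{def:F-zeta} gives $c^2, 2^{|\calA|} \le \poly(n, \epsilon^{-1})$; combined with the boosted preparation success $\pprep \ge 1 - 1/\exp$, the rejection probability is at least $\epsilon^{O(1)}$. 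The $\QQ(\iu)$ case is cleaner: coefficient bit-complexity is independent of $\epsilon$, the preparation of \cref{lem:integer-state-complex} is amplifiable to $1-1/\exp(n)$ via \cref{rem:integer-state-boost}, and we obtain soundness $1-1/\poly(n)$ as required by \cref{def:QMA1}.

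For the gateset, the verifier uses only Paulis, Hadamards on $O(\log n)$ qubits, and the integer state preparation routines, all directly available in $\calG_{2^k}$ for $k \ge 3$. For the $\QQ(\iu)$ case targeting $\calG_2$, the Pauli $\Yg$ and the $\calG_4$-gates used inside \cref{lem:integer-state-complex} must be simulated via the cyclotomic embedding of \cref{lem:simulate-Q(i)}; its exponentially small per-gate simulation error is easily absorbed into the soundness gap. I expect the main obstacle to be not conceptual but a careful accounting of three distinct sources of loss---the preparation success probability, the LCU normalization factor $c^2 \cdot 2^{|\calA|}$, and (for the $\calG_2$ target) the gate-simulation error---so that the combined soundness meets the claimed $1 - \epsilon^{O(1)}$ bound.
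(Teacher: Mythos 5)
Your construction is essentially the paper's own proof: expand $H$ in the Pauli basis over a $\poly(n)$-sized index set held in an $O(\log n)$-qubit register, prepare the (subnormalized) coefficient state with \cref{lem:integer-state-complex} resp.\ \cref{lem:integer-state-cyclotomic}, apply the controlled Paulis to the proof, measure the index register in the Hadamard basis, and reject exactly on the all-zero outcome, so that (conditioned on successful preparation) the rejection probability is proportional to $\norm{H\ket{\psi}}^2$ up to a $1/\poly(n,\epsilon^{-1})$ factor controlled by \cref{def:F-zeta}. This gives perfect completeness on a nullstate and rejection probability $\epsilon^{O(1)}$ in the NO-case, exactly as in the paper.

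Two points in your write-up would need repair, though neither changes the substance. First, the rule ``accept iff some preparation attempt aborted'' is incompatible with parallel-repetition boosting: each attempt succeeds only with probability $1/\poly(n,\epsilon^{-1})$, so among polynomially many attempts at least one aborts with probability essentially $1$, and the verifier would then accept almost surely even in the NO-case. You must accept only when \emph{all} attempts fail (your stated bound $\pprep\ge 1-1/\exp$ shows this is what you intend), or simply skip boosting as the paper does and accept on the failure of a single attempt, since $\pprep\cdot\prej\ge\epsilon^{O(1)}$ already meets the claimed soundness $1-\epsilon'$. Second, for the $\QQ(\iu)$ case targeting $\calG_2$, \cref{lem:simulate-Q(i)} is not the right tool: for $k=2$ it only yields $\calG_4$ circuits, and the real obstacle is not a per-gate simulation error but that after the real embedding of \cref{lem:cyclotomic-simulation} a malicious prover may send a proof outside the encoding; preserving soundness against such proofs is precisely what \cref{thm:qma-g2} provides via prior error reduction (hence the paper's remark that this step needs a $1/\poly$ gap). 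Since your verifier is already a $\calG_4$ circuit, the clean fix is to invoke \cref{thm:qma-g2} as a black box, which is exactly the paper's route.
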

\begin{proof} 
  The first case $\lELH^{\QQ(\iu)} \in \QMAo^{\calG_{2}}$ follows from $\lELH^{\QQ(\iu)} \in \QMAo^{\calG_{4}}$ and \cref{thm:qma-g2}.
  Note that this requires a gap of at least $1/\poly(n)$, since \cref{thm:qma-g2} makes use of error reduction.

  We next describe the $\QMAo$-verifier.
  Let $H= \sum_{S\in\binom{[n]}{l}}H_S$ be the input Hamiltonian with all $\norm{H_S}\le1$.
  We need to distinguish between $\sigma_1(H) = 0$ and $\sigma_1(H)\ge \epsilon$.
  The $\QMA_1$-verifier classically computes the Pauli decomposition of the local $H_S$ terms $H_S = \sum_{x\in\bin^{2l}}a_{S,x}P_x$, where $a_{S,x}\in\QQ(\zeta_{2^k})$.
  Let $2^s \le n^{O(1)}$ be an upper bound on the number of nonzero $A_S$ and identify each such set $S = S_z$ for $z\in\bin^s$.
  Then it prepares the state 
  \begin{equation}
    \ket{H}_\A = \sum_{z\in\bin^{s}}\sum_{x\in\bin^{2l}}a_{zx} \ket{z}_{\A_1}\ket{x}_{\A_2},
  \end{equation}
  using \cref{lem:integer-state-cyclotomic}, which succeeds with probability $\pprep \ge \epsilon^{O(1)}$ since entries are polynomially bounded (see \cref{def:F-zeta}), and accepts if the preparation fails.
  As $\ket{H}$ is not necessarily normalized, we in fact prepare $\eta\ket{H}$, for some $\eta \ge (\sum_z \fnorm{H_{S_z}})^{-2}\ge n^{-O(1)}$.

  Let $U$ be the unitary that applies $(P_x)_{S_z}$ (on register $\B$) conditioned on $\ket{z}_{\A_1}\ket{x}_{\A_2}$.
  $U$ can be efficiently implemented with a polynomial number of $\calG_4$ gates.
  Apply $U$ to $\ket{H}$ and the proof $\ket{\psi}$:
  \begin{equation}
    \ket\phi \coloneq  U\eta\ket{H}_\A\ket{\psi}_\B \coloneq  \sum_{z\in\bin^s}\sum_{x\in\bin^{2l}} \eta a_{zx}\ket{z}_{\A_1}\ket{x}_{\A_2}\otimes ((P_x)_{S_z}\otimes I_{[n]\setminus S_z})\ket{\psi}_\B
  \end{equation}
  Then measure the $\A$ register in the Hadamard basis and reject on outcome $\ket{0_H} \coloneq  (H\ket{0})^{\otimes(s+2l)} = 2^{-s/2-l}\sum_{z\in\bin^s}\sum_{z\in\bin^{2l}}\ket{z}_{\A_1}\ket{x}_{\A_2}$.
  That means, we \emph{reject if we succeed} to apply the non-unitary operator $H$ to the proof, which can only happen if the proof is not in the nullspace of $H$.

  The rejecting projector is then $\Pirej = \ketbrab{0_H}_\A\otimes I_\B$ and the rejectance probability is $\norm{\Pirej\ket{\phi}}^2$ with
  \begin{subequations}
    \begin{align}
      \Pirej\ket{\phi} &= \ket{0_H}_\A \otimes 2^{-s/2-l}\sum_{z\in\bin^s}\sum_{x\in\bin^{2l}}\eta a_{zx}((P_x)_{S_z}\otimes I_{[n]\setminus S_z})\ket{\psi}_\B\\
      &= \ket{0_H}_\A \otimes 2^{-s/2-l}\sum_{z\in\bin^s}\eta(H_{S_z}\otimes I_{[n]\setminus S_z})\ket{\psi}_\B\\
      &= \ket{0_H}_\A \otimes 2^{-s/2-l}\eta H\ket{\psi}_\B.
    \end{align}  
  \end{subequations}
  Thus, the rejectance probability is given by $\prej = 2^{-s-2l}\eta \norm{H\ket\psi}^2$.
  Hence, we get perfect completeness in the YES-case as an honest prover can send a state in the kernel of $H$.
  In the NO-case, we have $\norm{H\ket\psi}\ge \sigma_1(H)\ge\epsilon$, and thus $\prej\ge 2^{-s-2l}\eta\epsilon^2\ge \epsilon^{O(1)}$.
  With state preparation, the verifier rejects with probability $\pprep\cdot \prej \ge \epsilon^{O(1)}$.
\end{proof}

\begin{theorem}\label{thm:4SAT}
  $4\hQSAT^{\QQ(\zeta_{2^k})}$ is complete for $\QMAo^{\calG_{2^k}}$ for all $k\in\NN$.
\end{theorem}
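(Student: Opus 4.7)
The plan is to prove both inclusions. \textbf{Containment} is immediate from \cref{lem:kELH}: every instance of $4\hQSAT^{\QQ(\zeta_{2^k})}$ is a fortiori an instance of $4\hELH^{\QQ(\zeta_{2^k})}$, because the local terms are PSD, so $\lmin(H)=0$ iff $\sigma_1(H)=0$, and $\lmin(H)\ge\epsilon$ together with $H\succeq 0$ forces $\sigma_1(H)\ge\epsilon$.

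For \textbf{hardness}, I would apply the Kitaev--Bravyi circuit-to-Hamiltonian reduction to a $\QMAo^{\calG_{2^k}}$-verifier $V=U_T\cdots U_1$ on $n_a$ ancilla plus $n_p$ proof qubits. Using a domain-wall (or unary) clock, build
\[
H \;=\; H_{\mathrm{in}} + H_{\mathrm{clock}} + H_{\mathrm{prop}} + H_{\mathrm{out}},
\]
where $H_{\mathrm{in}},H_{\mathrm{clock}},H_{\mathrm{out}}$ are the usual $\{0,1\}$-valued $2$-local penalty projectors (zero-initializing ancilla at $t=0$, forbidding invalid clock patterns, penalizing a $\ket 0$ output qubit at $t=T$), and each propagation term is a rank-one projector of the form
\[
H_{\mathrm{prop},t} \;=\; \tfrac12\bigl(\ketbrab{10}_{c_{t-1}c_t}+\ketbrab{11}_{c_{t-1}c_t}\bigr)\otimes I_\calB \;-\; \tfrac12\bigl(\ket{11}\!\bra{10}_{c_{t-1}c_t}\otimes U_t + \mathrm{h.c.}\bigr),
\]
whose only non-trivial entries are those of $U_t,U_t^\dagger \in \ZZ[\tfrac12,\zeta_{2^k}]^{2^m\times 2^m}\subseteq \QQ(\zeta_{2^k})^{2^m\times 2^m}$ by \cref{thm:AGKMMR24}. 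Completeness is witnessed by the history state
\[
\ket{\psi_\mathrm{hist}} \;=\; \tfrac{1}{\sqrt{T+1}}\sum_{t=0}^T \ket{1^t 0^{T-t}}_\mathrm{clock}\otimes U_t\cdots U_1\bigl(\ket{0^{n_a}}_\calA\otimes\ket\psi_\calB\bigr),
\]
which, when $V$ accepts some $\ket\psi$ with probability exactly $1$, lies in the kernel of every summand of $H$ (the out-term relies crucially on perfect completeness of $V$). Soundness follows from the standard Kitaev geometric/projection-lemma argument, giving $\lmin(H) \ge \Omega(T^{-3})$ in the NO case and hence an inverse-polynomial promise gap of the form required by \cref{def:kQSAT}.

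The \textbf{main obstacle} is enforcing $4$-locality. For $k\ge 3$ the gateset $\calG_{2^k}=\{\Hg,\CXg,\Tg_{2^k}\}$ consists only of $1$- and $2$-qubit gates, so each $H_{\mathrm{prop},t}$ acts on at most $2$ clock plus $2$ data qubits and $4$-locality is automatic. For $k\in\{1,2\}$ the gateset contains the $3$-qubit Toffoli, and a naive propagation term would be $5$-local. I would handle this by preprocessing $V$ to eliminate $\CCXg$: for $k=2$ one can exactly decompose Toffoli into $O(1)$ gates on $\le 2$ qubits with entries in $\QQ(\iu)$ using $\sqrt{\Xg}=\tfrac12\bigl((1+\iu)\Ig+(1-\iu)\Xg\bigr)$ together with $\CXg$ (the standard Barenco-style decomposition), which keeps every propagation term $4$-local while staying in $\QQ(\zeta_4)$. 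The $k=1$ case is subtler, since no rational $2$-qubit decomposition of $\CCXg$ exists over $\QQ$; the cleanest route I see is to introduce one auxiliary clock qubit per Toffoli, replacing the single $5$-local transition by two $4$-local propagation terms that share a fresh intermediate clock state and jointly implement $\CCXg$, and then re-derive perfect completeness and polynomial soundness through the Nullspace Connection Lemma of \cite{RGN24} rather than a direct Kitaev analysis. Verifying that this auxiliary-clock modification preserves the history-state kernel in the YES case and the inverse-polynomial gap in the NO case is the step I expect to require the most care.
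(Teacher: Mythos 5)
Your containment argument coincides with the paper's: a $4\hQSAT^{\QQ(\zeta_{2^k})}$ instance is in particular a $4\hELH^{\QQ(\zeta_{2^k})}$ instance (for PSD $H$ the smallest singular value equals $\lmin(H)$), so \cref{lem:kELH} applies. The hardness direction, however, has a genuine gap, and it sits exactly where $4$-local quantum SAT is harder than $5$-local: your propagation term acts on only the two clock qubits $c_{t-1},c_t$ of a unary clock, and such a term does not annihilate the history state. In the legal clock space the pattern $10$ on $(c_{t-1},c_t)$ identifies time $t-1$ exactly, but $11$ identifies \emph{every} time $s\ge t$. Applying your $H_{\mathrm{prop},t}$ to the history state, the $s=t-1$ and $s=t$ contributions cancel as intended, but for each $s\ge t+1$ the diagonal piece $\tfrac12\ketbrab{11}_{c_{t-1}c_t}$ contributes $\tfrac12\ket{1^s0^{T-s}}\otimes\ket{\psi_s}$, while the off-diagonal piece maps that clock string to an illegal (non-unary) one; nothing cancels, the history state acquires strictly positive energy, and perfect completeness fails for every $t<T$. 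This is precisely why Kitaev's frustration-free propagation term needs three clock qubits (hence is $5$-local even for $2$-qubit gates), and why Bravyi had to design a different clock encoding to get $4\hQSAT$-hardness. So ``$4$-locality is automatic for $k\ge3$'' is not correct as argued; the paper does not rederive this but simply invokes Bravyi's $4\hQSAT$ construction \cite{Bra06} for $k\ge2$, after decomposing the Toffoli of $\calG_4$ into $2$-qubit gates over $\QQ(\iu)$ (controlled-$\Sg$; your controlled-$\sqrt{\Xg}$ Barenco decomposition would serve the same purpose), and obtains containment from \cref{lem:kELH}.

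For $k=1$ your repair also does not work as described: two $4$-local propagation terms passing through one fresh intermediate clock state can each only apply a $2$-qubit unitary to the data, so ``jointly implementing'' $\CCXg$ would express the Toffoli as a product of two $2$-qubit unitaries, which is impossible (at least five are needed), and over $\QQ$ there is no escape via irrational single-qubit corrections either. The mechanism that does work --- and is what the paper's \cref{sec:4SAT-G2} implements --- is \emph{conditional path splitting}: the clock is built from logical qu-$5$-its (an unborn state, three alive states, a dead state) encoded in four physical qubits, and the triangle gadget of \cite{ER08} splits the computation path on the control qubit, applying $\CXg_{\calA_2\calA_3}$ only on the $\ket{1}$ branch and the identity on the $\ket{0}$ branch; similarly $\Hg\otimes\Hg$ is realized by two consecutive transitions weighted $\sqrt2$ and $1/\sqrt2$ so that each local term stays rational. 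Correctness is then assembled with the Nullspace Connection Lemma of \cite{RGN24}, which you do invoke, but the clock encoding and the conditional-transition gadgets are the substantive content your sketch is missing.
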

\begin{proof}
  Hardness for $k\ge2$ follows directly by applying Bravyi's $4\hQSAT$ construction~\cite{Bra06}.
  For $\calG_4$, we need to replace the Toffoli gate with a controlled-$\Sg$ gate so that all gates are $2$-local (see \cref{sec:toffoli-from-cs}).
  For $k=1$, we need to slightly alter Bravyi's construction to implement the Toffoli gate $4$-locally.
  The basic idea is to implement logical qu-$5$-its on $4$ physical qubits, and then use the ``triangle Hamiltonian'' gadget of \cite{ER08} to conditionally split the computation paths.
  See \cref{sec:4SAT-G2} for more details.
  Containment follows from \cref{lem:kELH}.
\end{proof}

\begin{theorem}\label{thm:3SAT}
  $3\hQSAT^{\QQ(\zeta_{2^k})}$ is complete for $\QMAo^{\calG_{2^k}}$ for all $k\ge 3$.
\end{theorem}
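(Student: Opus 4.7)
I split the proof into containment and hardness.

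For containment, I would observe that $3\hQSAT^{\QQ(\zeta_{2^k})} \subseteq 3\hELH^{\QQ(\zeta_{2^k})}$: in any YES-instance of $3\hQSAT$ the frustration-free common ground state of the $H_S$ lies in the kernel of $H = \sum_S H_S$, so $\sigma_1(H) = 0$; in any NO-instance, $H \succeq 0$ gives $\sigma_1(H) = \lmin(H) \ge 1/\poly(n)$. Applying \cref{lem:kELH} with $l = 3$ then yields $3\hQSAT^{\QQ(\zeta_{2^k})} \in \QMAo^{\calG_{2^k}}$ with an inverse-polynomial soundness gap, which can be amplified to exponentially small error using the gates of $\calG_{2^k} \supseteq \calG_2$.

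For hardness, I would adapt the Gosset-Nagaj \cite{GN13} construction. The key observation is that $\calG_{2^k} = \{\Hg, \CXg, \Tg_{2^k}\}$ for $k \ge 3$ differs from $\calG_8 = \{\Hg,\CXg,\Tg\}$ only by replacing $\Tg$ with the structurally identical diagonal gate $\Tg_{2^k}$. For $k = 3$, \cite{GN13} already gives $\QMAo^{\calG_8}$-hardness of $3\hQSAT$ with projectors whose entries lie in $\ZZ[1/\sqrt 2, \iu] \subseteq \QQ(\zeta_8)$, which matches our more permissive definition of $3\hQSAT^{\QQ(\zeta_8)}$ allowing arbitrary local positive semidefinite terms with entries in $\QQ(\zeta_8)$. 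For $k > 3$, I would substitute $\Tg_{2^k}$ for $\Tg$ throughout their construction. The Feynman-Kitaev propagation projector associated with a $\Tg_{2^k}$-transition at step $t$ then has entries in $\ZZ[1/2, \zeta_{2^k}] \subseteq \QQ(\zeta_{2^k})$; propagation terms for $\Hg$ and $\CXg$ already sit in $\ZZ[1/\sqrt 2] \subseteq \QQ(\zeta_{2^k})$ for $k \ge 3$; input and output penalty terms are standard-basis projectors with rational entries. Perfect completeness is inherited from the verifier because the honest history state annihilates every local term, and soundness transfers because the Kitaev-style spectral gap analysis is insensitive to the precise value of the phase appearing in a diagonal gate.

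The main obstacle will be carefully checking that every gadget in \cite{GN13} -- in particular their pulse-clock encoding used to achieve $3$-locality, and the similarity transformations used to bring the $\Tg$-transition term into their canonical rank-one projector form -- remains intact when $\Tg$ is replaced by $\Tg_{2^k}$. Since both gates are diagonal with $\Tg_{2^k}\ket{0} = \ket{0}$ and differ only in the eigenvalue of $\ket{1}$, the locality structure of each gadget is preserved and the similarity transformation continues to produce a rank-one projector in $\QQ(\zeta_{2^k})$. I expect the perturbative lower bound $\lmin(H) \ge 1/\poly(n)$ in the NO-case to carry over essentially verbatim, so the bulk of the work is bookkeeping rather than new ideas. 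One minor caveat is that the NO-case soundness gap of the Hamiltonian must be compatible with the inverse-polynomial promise gap inherited from the verifier, which is fine since \cref{lem:simulate-Q(i)} combined with standard amplification allows us to start from a verifier with exponentially small soundness.
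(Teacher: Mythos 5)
Your plan is essentially the paper's proof: containment via \cref{lem:kELH} (with $3\hQSAT\subseteq 3\hELH$), and hardness by taking the Gosset--Nagaj $3\hQSAT$ construction, which lives in $\QQ(\zeta_8)$ and uses $\calG_8$, and replacing $\Tg\equiv\Tg_8$ by $\Tg_{2^k}$ for $k>3$.

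The one point you gloss over, and the only place where your write-up is not quite accurate, is the claim that all of the \cite{GN13} projectors have entries in $\ZZ[1/\sqrt2,\iu]$: in fact one of their terms is only of the form $\Pi=\tfrac{1}{\sqrt3}H$ with $H$ in $\QQ(\zeta_8)$, so the projector itself has entries involving $\sqrt3\notin\QQ(\zeta_{2^k})$. The paper handles this by rescaling that term to a positive semidefinite non-projector with entries in the field, which is precisely why \cref{def:kQSAT} only requires PSD terms (and why the footnote there flags the $k=3$ projector-vs-PSD question as open). Since you already invoke the permissive PSD definition, this is a one-line fix rather than a flaw in the approach, but as stated your hardness step would not literally produce a $\QQ(\zeta_{2^k})$ instance without it. (Also, $\calG_2\not\subseteq\calG_{2^k}$ as sets; you only need that $\calG_{2^k}$ implements the $\calG_2$ gates exactly via \cref{thm:AGKMMR24}, and the amplification step is optional given the $1-1/\poly$ soundness in \cref{def:QMA1}.)
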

\begin{proof}
  The $3\hQSAT$ construction of \cite{GN13} is in $\QQ(\zeta_8)$ and uses the gateset $\calG_8$.
  Note that one of their projectors is of the form $\Pi=\frac1{\sqrt3}H$ with $H$ in $\QQ(\zeta_8)$, so we take $H=\Pi/\sqrt{3}$, which is allowed in our definition of $\QSAT$ (see \cref{def:kQSAT}).
  Their construction easily generalizes when replacing $\Tg\equiv\Tg_8$ with $\Tg_{2^k}$.
\end{proof}

\section{A 2-local \texorpdfstring{QMA\textsubscript{1}}{QMA\_1}-complete Hamiltonian problem}\label{sec:2local}

Here we prove the first $2$-local $\QMAo$-complete Hamiltonian problem.

\begin{theorem}\label{thm:2LH-complete}
  $2\hELH^{\QQ(\zeta_{2^k})}$ is $\QMAo^{\calG_{2^k}}$-complete for all $k\ge3$.
\end{theorem}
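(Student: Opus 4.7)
The plan is to handle containment and hardness separately. Containment $2\hELH^{\QQ(\zeta_{2^k})} \in \QMAo^{\calG_{2^k}}$ is immediate from \cref{lem:kELH} with $l=2$, so the work lies in the hardness direction, where I will give a $2$-local circuit-to-Hamiltonian reduction that preserves perfect completeness. The obstacle mentioned explicitly in the techniques section is that the standard $2$-local $\QMA$-complete construction of \cite{KKR05} relies on perturbative gadgets for which the history state is not an exact eigenstate, so it cannot be used as a black box. Instead, I will build an exact Feynman--Kitaev-style Hamiltonian $H = \Hin + \Hclock + \Hprop + \Hout$ on a computation register, a clock register in \emph{one-hot} encoding, and several auxiliary ``branch'' registers introduced for each gate of the verifier circuit.

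The central design constraint is that in one-hot encoding, a $2$-local clock--clock term spanning positions $t$ and $t+1$ can carry an arbitrary phase on the transition $\ket{t}\leftrightarrow\ket{t+1}$ but leaves no locality budget for a gate qubit. Every gate of $\calG_{2^k}$ must therefore be implemented as a ``zero-qubit'' global phase on a clock transition. Following the path-splitting idea of \cite{ER08,GN13,RGN24}, each gate $U_t$ in the verifier circuit $V=U_T\cdots U_1$ will be realized by a constant-size gadget that (i) \emph{splits} the state of the qubit(s) touched by $U_t$ across separate clock ``tracks'' indexed by eigenstates of $U_t$, using $2$-local clock--qubit transitions that run between auxiliary clock positions; (ii) advances the clock by a $2$-local clock--clock transition whose phase is the corresponding eigenvalue of $U_t$; (iii) \emph{merges} the tracks back before the next gate. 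For $\Tg_{2^k}$ this is immediate since it is already diagonal, and for $\CXg$ one can split on its eigenbasis $\{\ket{00},\ket{01},\ket{1{+}},\ket{1{-}}\}$, whose splitting/merging map is implementable over $\QQ(\zeta_8)$. The delicate case is $\Hg$: its true eigenvectors live in $\QQ(\zeta_{16})$ rather than $\QQ(\zeta_{2^k})$ for $k=3$, so I do \emph{not} use the literal eigenbasis. Instead, I will decompose the Hadamard transition into a short sequence of $2$-local splittings that together realize $\Hg$ using only coefficients in $\QQ(\zeta_8)\subseteq\QQ(\zeta_{2^k})$; the requirement $k\ge 3$ enters here because the splitting gadget needs $\sqrt{2}=\zeta_8+\zeta_8^{-1}$ to normalize the $\Hg$ amplitudes while keeping all entries in the field. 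The input, output, and clock penalty terms are standard $2$-local Hamiltonians enforcing ancilla initialization, rejection on the output wire, and validity of the one-hot clock (including the auxiliary branch positions).

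To analyze $H$ I will invoke the Nullspace Connection Lemma of \cite{RGN24}, which is tailored to exactly this setup: it reduces the global spectral analysis to local checks on each gadget, namely that the gadget nullspace matches the expected ``post-splitting/pre-merging'' subspace and that the gadget contributes an inverse-polynomial spectral gap on its complement. These properties I intend to verify computationally on the constant-size gadget Hamiltonians, as anticipated in the techniques section. Assembling the gadgets then produces a history state $\ket{\psihist}$ that is an exact nullstate of $H$ in the \YES-case, giving $\sigma_1(H)=0$; and by the NSCL combined with a Kitaev-style geometric lemma, in the \NO-case every state has $\norm{H\ket{\phi}}\ge 1/\poly$, giving the desired $\sigma_1(H)\ge 1/\poly(n)$ and hence $2\hELH^{\QQ(\zeta_{2^k})}$-hardness for $\QMAo^{\calG_{2^k}}$.

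The main obstacle, and the reason the argument does not extend below $k=3$, is engineering the $\Hg$ gadget with $2$-local terms, coefficients in $\QQ(\zeta_{2^k})$, and a clean nullspace compatible with the NSCL; once this specific gadget is designed and its nullspace/gap properties are verified, the rest of the construction combines standard ingredients with the path-splitting framework of \cite{RGN24}.
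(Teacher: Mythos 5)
Your overall route is the same as the paper's: containment from \cref{lem:kELH} with $l=2$, and hardness via a one-hot clock, path-splitting on eigenstates of each single-qubit gate with the eigenvalue applied as a scalar phase on a $2$-local clock--clock transition, assembled with the Nullspace Connection Lemma (\cref{lem:connect}) and constant-size gadgets checked computationally. The genuine gap is your Hadamard gadget, which is exactly the step the theorem (and the condition $k\ge3$) hinges on. You rule out splitting on the literal eigenbasis of $\Hg$ because its eigenvectors lie in $\QQ(\zeta_{16})$, but that reasoning is off: the gadget Hamiltonian never contains the eigenvector amplitudes, only the eigenprojectors $\ketbrab{\Hg_+},\ketbrab{\Hg_-}$, whose entries are $(2\pm\sqrt2)/4$ and $\sqrt2/4$, all in $\QQ(\zeta_8)$ since $\sqrt2=\zeta_8-\zeta_8^3$. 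This is precisely how the paper implements $\Hg$ with the gadget of \cref{eq:HU} (split on $\ket{\Hg_\pm}$, apply the phases $\pm1$), and it is where $k\ge3$ enters. Your substitute --- ``a short sequence of $2$-local splittings that together realize $\Hg$ over $\QQ(\zeta_8)$'' --- is left entirely unspecified; since a split-plus-scalar-phase gadget realizes only unitaries diagonal in the chosen splitting basis, you would have to exhibit a concrete factorization of $\Hg$ into such unitaries with projectors and phases in $\QQ(\zeta_8)$ (this is possible, e.g.\ writing $\Hg$ as $\Zg$ times a real rotation whose eigenprojectors are in $\QQ(\iu)$ and whose eigenvalues are $\zeta_8^{\pm1}$), and then verify the NSCL conditions for the resulting gadget. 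As written, the central new construction is a promissory note.

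A secondary omission concerns the NO case. You appeal to the NSCL plus ``a Kitaev-style geometric lemma'', but with a one-hot clock the split gadgets are positive semidefinite only when restricted to the legal clock subspace (globally they are merely Hermitian), and $2\hELH$ asks about the smallest singular value of a non-PSD Hamiltonian, so eigenvalues near zero of either sign must be excluded. The paper handles this by weighting the clock term with a polynomially large $\Jclock$ and invoking the Projection Lemma (\cref{lem:proj}, \cref{claim:projection-clock}) before applying the NSCL inside $\Sclock$; without some such step your claimed bound $\sigma_1(H)\ge1/\poly(n)$ is not justified on states with support outside the clock space. Finally, your description of the $\CXg$ gadget as splitting on the eigenbasis $\{\ket{00},\ket{01},\ket{1+},\ket{1-}\}$ is fine only if read as the paper's nested split (first the control in the computational basis, then the target in $\ket{\pm}$): a direct projector onto a two-qubit eigenstate conditioned on a clock qubit would be $3$-local.
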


Bravyi proved that quantum $2$-SAT is in $\p$.
In fact, a frustration-free Hamiltonian always has a ground state that is the tensor product of one- and two qubit states \cite{CCDJZ11,JWZ11}.
Therefore, we will need to use a frustrated Hamiltonian.
Unfortunately, we cannot just use the $2$-local Hamiltonian of \cite{KKR05} since the accepting history state is not an eigenvector.
We need a new $2$-local Hamiltonian construction with the added feature that the history state is also an eigenstate.

Let $V=U_T\dotsm U_1$ be the $\QMAo$-verifier circuit to embed.
Suppose $V$ acts on a register $\calA$ of $n_1$ qubits for the ancillas, and $n_2$ qubits for the proof ($n\coloneq n_1+n_2\le T$).
We embed $V$ into a Hamiltonian $H$ on computational register $\calA$ and clock register $\calC$ of $T+1$ qubits.
$H$ has a similar structure to Kitaev's circuit-to-Hamiltonian construction \cite{KSV02}: $\Hclock$ enforces a logical clock state, $\Hprop$ enforces application of the gates between clock states, $\Hin$ enforces correct initialization of the ancillas, and $\Hout$ enforces acceptance.
\begin{equation}
  H = \Hprop + \Hin + \Hout + \Jclock\Hclock
\end{equation}
The factor $\Jclock$ will be chosen sufficiently large to apply the Projection Lemma~\cite{KKR05} (see \cref{lem:proj}).
$\Hprop$, $\Hin$, $\Hout$ will be defined later.

\subsection{Clock Hamiltonian}
We begin by defining $\Hclock$:
\begin{equation}
 \Hclock = 4T\sum_{1\le i<j\le T+1} \ketbrab{11}_{\calC_i,\calC_{j}} + \sum_{t=1}^{T+1} \Bigl(\ketbrab{0} - T\ketbrab{1}\Bigr)_{\calC_t}
\end{equation}

\begin{claim}\label{claim:Hclock}
  It holds that $\Hclock\succeq0$,
\begin{equation}
  \Null(\Hclock) = \Sclock \coloneq  \Span\left\{\ket{\wht}\mid t \in\{0,\dots,T\}\right\}, \qquad \ket\wht \coloneq  \ket*{0^t\,1\,0^{T-t}},
\end{equation}
and $\gamma(\Hclock)\ge T$, where $\gamma(\cdot)$ denotes the smallest non-zero eigenvalue.
\end{claim}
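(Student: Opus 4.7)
The plan is to exploit the fact that $\Hclock$ is diagonal in the computational basis of the clock register $\calC$, which immediately reduces the claim to a one-variable calculation. Writing $c \in \{0,1\}^{T+1}$ for a basis state and $w = |c|$ for its Hamming weight, the first sum contributes $4T \binom{w}{2}$ (one for each pair of set bits), while the second sum contributes $(T+1-w) - Tw$. Hence the eigenvalue on $\ket{c}$ depends only on $w$ and equals
\begin{equation*}
  \lambda(w) \;=\; 2Tw(w-1) + (T+1) - w(T+1).
\end{equation*}

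The nullspace characterization then becomes the claim that $\lambda(w) = 0$ iff $w = 1$. I would check this by direct substitution: $\lambda(0) = T+1$, $\lambda(1) = 0$, and then use the discrete derivative
\begin{equation*}
  \lambda(w) - \lambda(w-1) \;=\; 4T(w-1) - (T+1),
\end{equation*}
which is strictly positive for $w \ge 2$ (assuming $T \ge 1$). Thus $\lambda$ vanishes exactly on weight-one strings, so $\Null(\Hclock)$ is spanned precisely by $\{\ket{\wht}: 0 \le t \le T\} = \Sclock$, and positivity $\Hclock \succeq 0$ follows at the same time.

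For the spectral gap, I would observe that the smallest nonzero value of $\lambda(w)$ is attained either at $w=0$ (giving $T+1$) or at $w=2$ (giving $3T-1$), since the monotonicity computation above shows $\lambda$ is strictly increasing for $w \ge 2$. Both bounds are $\ge T$ for $T \ge 1$, so $\gamma(\Hclock) \ge \min(T+1,\, 3T-1) \ge T$, completing the claim.

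No step is a real obstacle here — the only thing to be careful about is the bookkeeping for the $w=0$ case, which gives a valid but small positive contribution and needs to be included when taking the minimum. The large coefficient $4T$ in front of the pairwise penalty was precisely chosen so that the $w=2$ eigenvalue beats $T$; if one wanted a cleaner gap one could enlarge this coefficient further, but $4T$ is already enough.
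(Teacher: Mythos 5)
Your proposal is correct and follows essentially the same route as the paper: since $\Hclock$ is diagonal, one reads off the eigenvalue of each computational basis state as a function of its Hamming weight $w$, checks that it vanishes exactly at $w=1$, and bounds the remaining cases ($w=0$ gives $T+1$, $w\ge2$ gives at least $3T-1\ge T$). Your exact formula $\lambda(w)=2Tw(w-1)+(T+1)(1-w)$ and the monotonicity argument are if anything slightly more careful than the paper's stated bound, but the underlying argument is identical.
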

\begin{proof}
  Since $\Hclock$ is diagonal, its eigenbasis is the computational basis.
  Clearly, $\Hclock\ket{\wht}=0$ for all $t\in\{0,\dots,T+1\}$.
  Further, $\Hclock\ket{0^{T+1}} = (T+1)\ket{0^{T+1}}$, and for $x\in\{0,1\}^{T+1}$ with Hamming weight $h\ge2$, we have $\bra{x}\Hclock\ket{x} = (2(h-1)h-h)T\ge hT$.
\end{proof}

We use a ``one-hot encoding'' for the clock, whereas more commonly a unary clock is used (e.g., \cite{KSV02, KKR05}; see also \cite{CLN18} for an overview of different clock constructions).
Note that a one-hot clock is not possible with quantum SAT, as, if $\Sclock$ is in the kernel of a $k$-local QSAT instance with $k\le T$, then the all-zero state $\ket{0^{T+1}}$ will also be in the kernel.
Next, we leverage the Projection Lemma, reproduced below for convenience, to argue that it suffices to consider $H$ inside the clock space $\Sclock$.

\begin{lemma}[Projection Lemma \cite{KKR05}]\label{lem:proj}
  Let $H=H_1+H_2$ be the sum of two Hamiltonians on Hilbert space $\calH = \calS+\calS^\perp$, such that $\calS$ is a zero eigenspace of $H_2$ and the eigenvectors in $\calS^\perp$ have eigenvalue $J>2\norm{H_2}$. Then
  \begin{equation}
    \lmin(H_1|_\calS) - \frac{\norm{H_1}^2}{J-2\norm{H_1}} \le \lmin(H) \le \lmin(H_1|_\calS),
  \end{equation}
  where $H|_\calS = \Pi_\calS H\Pi_\calS$ and $\Pi_{\calS}$ denotes the projector onto $\calS$.
\end{lemma}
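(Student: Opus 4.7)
The plan is to prove the two inequalities separately, with the upper bound following from a short variational argument and the lower bound requiring a decomposition of the ground state of $H$ into its $\calS$ and $\calS^\perp$ components followed by a careful minimization (where I expect there is a typo in the hypothesis and it should read $J > 2\norm{H_1}$, since otherwise the claimed denominator $J - 2\norm{H_1}$ need not even be positive).

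For the upper bound $\lmin(H) \le \lmin(H_1|_\calS)$, I would let $\ket{\phi}\in\calS$ be a unit eigenvector of $H_1|_\calS$ achieving the minimum eigenvalue. Since $\calS$ is in the kernel of $H_2$, we have $H_2\ket{\phi}=0$, and because $\Pi_\calS\ket{\phi}=\ket{\phi}$,
\begin{equation*}
  \braketb{\phi}{H} = \braketb{\phi}{H_1} = \bra{\phi}\Pi_\calS H_1\Pi_\calS\ket{\phi} = \lmin(H_1|_\calS).
\end{equation*}
The variational principle then yields $\lmin(H)\le\lmin(H_1|_\calS)$.

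For the lower bound, let $\ket{\psi}$ be a unit ground state of $H$ and write $\ket{\psi}=\alpha\ket{\psi_1}+\beta\ket{\psi_2}$ with $\ket{\psi_1}\in\calS$, $\ket{\psi_2}\in\calS^\perp$ unit vectors and $\abs{\alpha}^2+\abs{\beta}^2=1$. Using that $H_2$ annihilates $\calS$ and acts as at least $J\cdot I$ on $\calS^\perp$, the $H_2$ contribution is bounded below by $J\abs{\beta}^2$. For the $H_1$ contribution, I would lower bound the ``diagonal'' block $\abs{\alpha}^2\braketb{\psi_1}{H_1}$ by $\abs{\alpha}^2\lmin(H_1|_\calS)$, the other diagonal block $\abs{\beta}^2\braketb{\psi_2}{H_1}$ by $-\abs{\beta}^2\norm{H_1}$, and the cross terms by $-2\abs{\alpha}\abs{\beta}\norm{H_1}$ via Cauchy--Schwarz. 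Writing $K=\lmin(H_1|_\calS)$ and $\epsilon=\abs{\beta}^2$, this gives
\begin{equation*}
  \lmin(H)\;\ge\;(1-\epsilon)K \;+\; (J-\norm{H_1})\epsilon \;-\; 2\sqrt{\epsilon(1-\epsilon)}\norm{H_1}.
\end{equation*}

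The last step is the minimization over $\epsilon\in[0,1]$. Here the key observation is that $K\ge-\norm{H_1}$, so $-\epsilon K\ge-\epsilon\norm{H_1}$, which lets me rearrange the bound as
\begin{equation*}
  \lmin(H)-K\;\ge\;(J-2\norm{H_1})\epsilon \;-\; 2\sqrt{\epsilon}\,\norm{H_1}.
\end{equation*}
Setting $a\coloneq J-2\norm{H_1}>0$, $b\coloneq\norm{H_1}$, and $u\coloneq\sqrt{\epsilon}$, the right-hand side is $au^2-2bu=\bigl(\sqrt{a}\,u-b/\sqrt{a}\bigr)^2-b^2/a\ge -b^2/a$, which completes the square and yields exactly $-\norm{H_1}^2/(J-2\norm{H_1})$. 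Combining finishes the lower bound. I expect the main obstacle to be simply bookkeeping: carefully justifying the cross-term bound, remembering to use $K\ge-\norm{H_1}$ to discard the awkward $-\epsilon K$ term, and recognizing the resulting expression as a perfect square so that the minimum takes the clean closed form stated in the lemma.
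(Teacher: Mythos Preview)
The paper does not supply its own proof of this lemma; it is quoted verbatim from \cite{KKR05} (``reproduced below for convenience'') and used as a black box. Your argument is essentially the standard proof from the original reference and is correct in outline, so there is nothing substantive to compare.

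Two minor remarks. First, you are right that the hypothesis contains a typo and should read $J>2\norm{H_1}$; otherwise the denominator in the bound need not be positive. Second, in your final minimization you write ``$K\ge-\norm{H_1}$, so $-\epsilon K\ge-\epsilon\norm{H_1}$'', but multiplying an inequality by the nonpositive number $-\epsilon$ reverses its direction, so this implication is backwards. What you actually need is the other half of $\lvert K\rvert\le\norm{H_1}$, namely $K\le\norm{H_1}$: multiplying that by $-\epsilon\le 0$ gives $-\epsilon K\ge-\epsilon\norm{H_1}$ as required. With this correction the rest of your completing-the-square computation goes through exactly as you wrote it.
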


\begin{claim}\label{claim:projection-clock}
  $\lmin(H) \ge \lmin(H\vert_\Sclock) - \epsilon$ for sufficiently large $\Jclock\in (n\epsilon)^{-O(1)}$.
\end{claim}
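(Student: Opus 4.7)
The plan is a direct application of the Projection Lemma (\cref{lem:proj}). I would split $H = H_1 + H_2$ with $H_1 \coloneq \Hprop + \Hin + \Hout$ and $H_2 \coloneq \Jclock \Hclock$. By \cref{claim:Hclock}, the zero eigenspace of $\Hclock$ is exactly $\Sclock$ (so $H_2$ vanishes on $\Sclock$), while the minimum eigenvalue of $H_2$ on $\Sclock^\perp$ is at least $\Jclock T$, since $\Hclock$ has spectral gap at least $T$ above its kernel.

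To invoke \cref{lem:proj} I also need a bound on $\norm{H_1}$. Although $\Hprop$, $\Hin$, $\Hout$ have not yet been defined in the excerpt, the standard circuit-to-Hamiltonian template writes each as a sum of $\poly(T, n)$ local terms of bounded operator norm, so $\norm{H_1} \le \poly(n)$. The Projection Lemma then yields
\begin{equation*}
  \lmin(H_1|_\Sclock) - \frac{\norm{H_1}^2}{\Jclock T - 2\norm{H_1}} \le \lmin(H) \le \lmin(H_1|_\Sclock).
\end{equation*}
Since $H_2|_\Sclock = 0$ we have $H|_\Sclock = H_1|_\Sclock$, so it suffices to pick $\Jclock$ large enough that the additive error term is below $\epsilon$. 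Solving, any $\Jclock \ge (\norm{H_1}^2/\epsilon + 2\norm{H_1})/T$ works; this threshold is polynomial in $n$ and $1/\epsilon$, matching the claimed $\Jclock \in (n\epsilon)^{-O(1)}$.

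The proof has no substantive obstacle beyond this; everything reduces to plugging the spectral gap from \cref{claim:Hclock} and a trivial polynomial norm bound on $H_1$ into \cref{lem:proj}. The one point to double-check when $\Hprop$, $\Hin$, $\Hout$ are written out explicitly later in the section is that each really is a sum of $\poly(n)$ bounded-norm local terms, which is standard for Kitaev-style constructions but should be verified against the actual definitions to pin down the constants hidden in $(n\epsilon)^{-O(1)}$.
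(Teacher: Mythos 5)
Your proposal is correct and follows essentially the same route as the paper: the paper's proof is exactly the decomposition $H_1 = \Hprop + \Hin + \Hout$, $H_2 = \Jclock\Hclock$, invoking \cref{lem:proj} together with \cref{claim:Hclock} and the (later-verified) bound $\norm{\Hprop+\Hin+\Hout}\in n^{O(1)}$. Your additional observations — that $H|_\Sclock = H_1|_\Sclock$ because $H_2$ vanishes on $\Sclock$, and the explicit threshold for $\Jclock$ — are just the details the paper leaves implicit.
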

\begin{proof}
  Let $H = H_1+H_2$ with $H_1 = \Hprop + \Hin + \Hout$ and $H_2 = \Jclock\Hclock$.
  The statement then follows from \cref{lem:proj}, \cref{claim:Hclock}, and $\norm{\Hprop+\Hin+\Hout}\in n^{O(1)}$, which we will see later.
\end{proof}

\subsection{Gate gadgets}

Next, we define $\Hprop$, i.e., the terms of the Hamiltonian that enforce application of the gates between timesteps, so that 
\begin{align}
  \Null(\Hprop + \Hclock) &= \Span\bigl\{\ket{\psihist(x)}\mid x\in\{0,1\}^n\bigr\}\label{eq:NullHclockHprop}\\
  \ket{\psihist(x)} &\coloneq  \frac{1}{\sqrt{T+1}}\sum_{t=0}^T U_{t}\dotsm U_1\ket{x}_{\calA\calB}\ket{\wht}_\calC
\end{align}
To prove \cref{eq:NullHclockHprop}, we will use the ``Nullspace Connection Lemma'' of \cite{RGN24}, which allows us to analyze the ``gate gadgets'' individually, which we restate below for convenience.

\begin{lemma}[Nullspace Connection Lemma \cite{RGN24}]\label{lem:connect}
  Let
  \begin{enumerate}[label=(\arabic*)]
    \item $K_1,\dots,K_m$ be a disjoint partition of the clock states with $u_i,v_i\in K_i,u_i\ne v_i$ for all $i\in[m]$.
    \item $H_{1} = \sum_{i=1}^m H_{1,i}$ be a Hamiltonian such that for all $i\in[m]$:
    \begin{enumerate}
      \item $\Null(H_{1,i}|_{\mathcal{K}_i}) = \Span\{\ket{\psi_i(\alpha_j)}\mid j\in[d]\}$, where $\mathcal{K}_i=\CC^{d}_\calA\otimes\Span\{\ket{v}_\calC\mid v\in K_i\}$, and $\ket{\alpha_1},\dots,\ket{\alpha_d}$ is an orthonormal basis of the ancilla space,
      \item $\exists$ linear map $L_i$ with $L_i\ket{\alpha} = \ket{\psi_i(\alpha)}$ and $L_i^\dagger L_i = \lambda_i I$ for some constant $\lambda_i$,
      \item $H_i$ has support only on clock states $K_i$,
      \item $\norm{\ket{\psi_i(\alpha)}}^2 \eqcolon  \delta_i \in [1, \Delta]$,
      \item $(I_\calA\otimes\bra{u_i}_\calC)\ket{\psi_i(\alpha)} = \ket{\alpha}_\calA$,
      \item $(I_\calA\otimes\bra{v_i}_\calC)\ket{\psi_i(\alpha)} = U_i\ket{\alpha}_\calA$ for some unitary $U_i$.
    \end{enumerate}
    \item $H_2 = \sum_{i=1}^{m-1} h_{v_i,u_{i+1}}(V_i)$ with $h_{v_i,u_{i+1}}(V_i) = I\otimes\ketbrab{v_i} + I\otimes\ketbrab{u_{i+1}} - V_i^\dagger \otimes \ketbra{v_i}{u_{i+1}} - V_i \otimes \ketbra{u_{i+1}}{v_i}$ for unitaries $V_i$.
    \item $\ket{\alpha_{ij}} = V_{i-1}U_{i-1}\dotsm V_1U_1\ket{\alpha_j}$.
  \end{enumerate}
  Then for $H=H_1+H_2$, $\Null(H) = \Span\{\sum_{i=1}^m \ket{\psi_i(\alpha_{ij})}\mid j\in[d]\}$
  and $\gamma(H) = \Omega(\gamma(H_1)/(m^2\Delta))$.
\end{lemma}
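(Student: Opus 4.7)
The proof naturally breaks into three parts: identifying the claimed states $\ket{\Psi_j} := \sum_{i=1}^m\ket{\psi_i(\alpha_{ij})}$ as lying in $\Null(H)$, showing that no other states are there, and lower-bounding the spectral gap.

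\textbf{Part 1: each $\ket{\Psi_j}$ lies in $\Null(H)$.} I would verify this term-by-term. For $H_1 = \sum_i H_{1,i}$, condition~(c) says $H_{1,i}$ acts non-trivially only on clock states in $K_i$, so applied to $\ket{\Psi_j}$ only the summand $\ket{\psi_i(\alpha_{ij})}$ contributes, and that summand is killed by~(a). For $H_2 = \sum_i h_{v_i,u_{i+1}}(V_i)$, a direct computation shows $h_{v_i,u_{i+1}}(V_i)\bigl(\ket{\eta}\ket{v_i} + V_i\ket{\eta}\ket{u_{i+1}}\bigr) = 0$ for every $\ket{\eta}$; using~(e), (f), and the definition $\ket{\alpha_{i+1,j}} = V_iU_i\ket{\alpha_{ij}}$, the $v_i/u_{i+1}$-projection of $\ket{\Psi_j}$ is exactly of this form with $\ket{\eta} = U_i\ket{\alpha_{ij}}$.

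\textbf{Part 2: the nullspace contains no other states.} Since $H_1, H_2\succeq 0$, any $\ket{\psi}\in\Null(H)$ satisfies $H_1\ket{\psi}=H_2\ket{\psi}=0$. Because the $K_i$ partition the clock states and each $H_{1,i}$ is supported only on $K_i$, the $\calK_i$-restriction of $\ket\psi$ must lie in $\Null(H_{1,i}|_{\calK_i}) = \Image L_i$ by~(a)--(b); write this restriction as $L_i\ket{c_i}$ for some $\ket{c_i}\in\CC^d$. Annihilation by each $h_{v_i,u_{i+1}}(V_i)$, combined with~(e) and~(f) used to read off boundary values at $v_i$ and $u_{i+1}$, forces the recursion $\ket{c_{i+1}} = V_iU_i\ket{c_i}$, i.e., $\ket{c_i} = V_{i-1}U_{i-1}\dotsm V_1U_1\ket{c_1}$; expanding $\ket{c_1}$ in the basis $\{\ket{\alpha_j}\}$ then writes $\ket\psi$ as a linear combination of the $\ket{\Psi_j}$.

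\textbf{Part 3: spectral gap $\gamma(H) = \Omega(\gamma(H_1)/(m^2\Delta))$.} I would invoke Kitaev's geometric lemma applied to the splitting $H = H_1 + H_2$. On $\Null(H_1)^\perp$ one has $H \succeq \gamma(H_1)\,I$. On $\Null(H_1)$, every state is of the form $\sum_i L_i\ket{c_i}$, and restricted there $H_2$ acts---up to the conformal scalings $\lambda_i$ from~(b) and the per-branch norms $\delta_i\in[1,\Delta]$ from~(d)---like the Laplacian of a path graph on $m$ vertices whose edges couple consecutive $\ket{c_i}$'s through $V_iU_i$. This Laplacian has smallest nonzero eigenvalue $\Omega(1/m^2)$; the normalization mismatch between the $L_i\ket{c_i}$ and unit vectors contributes the extra $1/\Delta$, and combining the two regimes via the geometric lemma yields the stated bound.

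\textbf{Main obstacle.} The delicate step is Part~3: the $L_i$ are only conformal ($L_i^\dagger L_i = \lambda_iI$) rather than isometries into a common ambient space, so the effective Hamiltonian on $\Null(H_1)$ is really a \emph{weighted} path-graph Laplacian, and one must uniformize the scalings $\lambda_i, \delta_i$ across the $m$ branches and bound the overlaps $\braket{\psi_i(\alpha)}{\psi_i(\alpha')}$ carefully so that only a factor $\Delta$ (and not more) is lost relative to the plain path-Laplacian gap. Conditions~(b) and~(d) are what make this accounting work uniformly in $i$.
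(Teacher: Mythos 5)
This lemma is not proved in the paper at all: it is quoted verbatim from \cite{RGN24} (``which we restate below for convenience''), so there is no in-paper proof to compare against. Judged on its own merits, your outline follows the standard circuit-to-Hamiltonian nullspace analysis and Parts 1--2 are essentially complete: positivity of the $h_{v_i,u_{i+1}}(V_i)$ blocks and of the $H_{1,i}$ (which is implicitly assumed; you should state it, since the lemma as written never says $H_{1,i}\succeq0$, and $\Null(H)=\Null(H_1)\cap\Null(H_2)$ is false without it) lets you split the null condition termwise, condition (c) plus disjointness of the $K_i$ gives $\Pi_{\mathcal K_i}\ket\psi\in\Image L_i$, and the boundary conditions (e)--(f) together with $\Kernel\begin{psmallmatrix}I&-V_i^\dagger\\-V_i&I\end{psmallmatrix}=\{(\eta,V_i\eta)\}$ force the recursion $\ket{c_{i+1}}=V_iU_i\ket{c_i}$, which is exactly the claimed span. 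Part 3 is the right plan but remains a plan: saying ``$H\succeq\gamma(H_1)I$ on $\Null(H_1)^\perp$ and a weighted path Laplacian on $\Null(H_1)$'' does not by itself control the cross terms of $H_2$ between the two subspaces; the clean way to finish is Kitaev's geometric lemma with $A=H_1$ (gap $\gamma(H_1)$), $B=H_2$ (gap $2$, since the $h$ terms live on disjoint clock pairs), reducing everything to a bound $\sin^2(\theta/2)=\Omega(1/(m^2\Delta))$ on the angle between $\Null(H_1)$ and $\Null(H_2)$, which is where the conformality $L_i^\dagger L_i=\lambda_iI$ and the norm bounds $\delta_i\in[1,\Delta]$ enter to turn the overlap computation into the gap of a weighted path graph on $m$ vertices. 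You correctly identify this as the delicate step, but as written it is a sketch that would need that angle/overlap computation carried out to certify the stated $m^2$ and $\Delta$ dependence.
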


The idea is to apply \cref{lem:connect} to $\Hprop|_\Sclock$ and split $\Hprop|_\Sclock = H_1+H_2$ with $H_1=\sum_{t=1}^TH_{1,t}$ implementing the individual gates, and $H_2$ implementing identity transitions between these gadgets.
$H_2$ is straightforward to implement $2$-locally with $V_i=\Ig$ as 
\begin{equation}\label{eq:hij}
  h_{i,j} = \ketbraa{(\ket{10}-\ket{01})}_{\calC_i\calC_j} = \ketbrab{10} + \ketbrab{01} - \ketbra{10}{01}-\ketbra{01}{10}
\end{equation}
with $h_{i,j}|_\Sclock = \ketbraa{(\ket{\wh{i}}-\ket{\wh{j}})}$.

\subsection{Split gadget}

At the heart of the gate gadgets is the ``split gadget''.
It effectively splits the computation path to implement controlled unitaries.
This idea first appeared in the $\QMAo$-completeness proof of $(3,5)\hQSAT$~\cite{ER08}, dubbed the ``triangle Hamiltonian construction'', and similar ideas were also used in \cite{GN13,RGN24}.
Let $\ket{\psi_0},\ket{\psi_1}\in\CC^2$ be orthonormal.
The split gadget $\Hsplit$ acts on a computational register $\calA$ of one qubit and a clock register $\calC$ of $3$ qubits ($\calC_0,\calC_1,\calC_2$).
\begin{equation}
  \begin{aligned}
  \Hsplit =\; &\ketbrab{\psi_1}_{\calA}\otimes\ketbrab{1}_{\calC_{1}} + \ketbrab{\psi_0}_{\calA}\otimes\ketbrab{1}_{\calC_{2}}\,+ \\
  &\ketbra11_{\calC_0} - \ketbra{10}{01}_{\calC_0\calC_1} - \ketbra{10}{01}_{\calC_0\calC_2}\,+ \\
  &\ketbra11_{\calC_1} - \ketbra{10}{01}_{\calC_1\calC_0} + \ketbra{10}{01}_{\calC_1\calC_2}\,+ \\
  &\ketbra11_{\calC_2} - \ketbra{10}{01}_{\calC_2\calC_0} + \ketbra{10}{01}_{\calC_2\calC_1} \\
  \end{aligned}
\end{equation}
\begin{claim}\label{claim:Hsplit}
  $\Hsplit$ is Hermitian with $\Hsplit|_\Sclock\succeq0$, $\Null(\Hsplit|_\Sclock) = \{ \ket{\phi_0},\ket{\phi_1} \}$, where
  \begin{equation}
    \begin{aligned}
      \ket{\phi_0}&=\frac{1}{\sqrt2}\ket{\psi_0}(\ket{100} + \ket{010}),\\
      \ket{\phi_1}&=\frac{1}{\sqrt2}\ket{\psi_1}(\ket{100}+\ket{001}),
    \end{aligned}
  \end{equation} 
  and $\Hsplit\ket{\phi_0} = \Hsplit\ket{\phi_1} = 0$.
\end{claim}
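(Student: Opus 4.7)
The plan is to decompose $\Hsplit$ into three groups of terms: the two penalty projectors on line 1, the three diagonal clock terms $\ketbra{1}{1}_{\calC_i}$, and the six off-diagonal clock transitions. Hermiticity will follow by noting that the six off-diagonals pair into three Hermitian-conjugate pairs, namely $(-\ketbra{10}{01}_{\calC_0\calC_1}, -\ketbra{10}{01}_{\calC_1\calC_0})$, $(-\ketbra{10}{01}_{\calC_0\calC_2}, -\ketbra{10}{01}_{\calC_2\calC_0})$, and $(+\ketbra{10}{01}_{\calC_1\calC_2}, +\ketbra{10}{01}_{\calC_2\calC_1})$, each using the identity $\ketbra{10}{01}_{\calC_i\calC_j} = (\ketbra{10}{01}_{\calC_j\calC_i})^\dagger$.

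To analyse $\Hsplit|_\Sclock$ I would restrict to the Hamming-weight-$1$ sector of $\calC_0\calC_1\calC_2$, spanned by $\{\ket{100},\ket{010},\ket{001}\}$. The clock-only part of $\Hsplit$ on this basis is represented by the symmetric matrix
\[
H_\calC \;=\; \begin{pmatrix} 1 & -1 & -1 \\ -1 & 1 & 1 \\ -1 & 1 & 1 \end{pmatrix}.
\]
The key (and somewhat non-obvious) feature is that the $\calC_1\calC_2$ off-diagonal carries the sign $+1$ rather than $-1$; this is exactly what enlarges the kernel from one to two dimensions. A short characteristic-polynomial computation will give eigenvalues $\{0,0,3\}$, so $H_\calC\succeq 0$, and the kernel is $\Span\{\ket{100}+\ket{010},\,\ket{100}+\ket{001}\}$, with the remaining eigenvector proportional to $\ket{100}-\ket{010}-\ket{001}$.

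Consequently $\Hsplit|_\Sclock = I_\calA\otimes H_\calC + \ketbrab{\psi_1}_\calA\otimes\ketbrab{010} + \ketbrab{\psi_0}_\calA\otimes\ketbrab{001}$ is PSD as a sum of PSD operators, and a vector lies in its kernel iff it is annihilated by each summand. Any vector killed by $I_\calA\otimes H_\calC$ takes the form $\ket{v_0}(\ket{100}+\ket{010}) + \ket{v_1}(\ket{100}+\ket{001})$; applying the first penalty yields $\braket{\psi_1}{v_0}\ket{\psi_1}\ket{010}$ and therefore forces $\ket{v_0}\propto\ket{\psi_0}$, and the second penalty analogously forces $\ket{v_1}\propto\ket{\psi_1}$. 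This leaves exactly $\Span\{\ket{\phi_0},\ket{\phi_1}\}$, establishing the null space claim.

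For the final assertion $\Hsplit\ket{\phi_i}=0$ on the full Hilbert space (not merely on $\Sclock$), I would observe that each term of $\Hsplit$ preserves the Hamming weight of $\calC_0\calC_1\calC_2$: the diagonal and penalty terms act as multiplication operators, and each off-diagonal $\ketbra{10}{01}_{\calC_i\calC_j}$ simply moves a single $1$ between two positions. Since $\ket{\phi_0}$ and $\ket{\phi_1}$ live entirely in the Hamming-weight-$1$ sector, the action of $\Hsplit$ cannot escape this sector, and the sector-restricted computation above already gives the result. The only mildly technical step in this plan is the spectral analysis of $H_\calC$; the rest is careful bookkeeping of the sign convention, which is precisely what makes this gadget ``split'' the computation path along the $\{\ket{\psi_0},\ket{\psi_1}\}$ basis.
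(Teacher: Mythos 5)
Your argument is correct, and its opening move coincides with the paper's: every term of $\Hsplit$ preserves the Hamming weight of $\calC_0\calC_1\calC_2$, so $\Pi_{\Sclock}\Hsplit\Pi_{\Sclock}=\Hsplit\Pi_{\Sclock}$, positivity and the nullspace can be read off from the restriction to the one-hot sector, and $\Hsplit\ket{\phi_0}=\Hsplit\ket{\phi_1}=0$ on the full space comes for free. After that you part ways: the paper changes basis so that $\ket{\psi_0}=\ket{0}$, $\ket{\psi_1}=\ket{1}$ and then simply verifies $\corank(\Hsplit|_\Sclock)=2$ by a SageMath computation in the supplementary material, whereas you finish by hand, writing $\Hsplit|_\Sclock=I_\calA\otimes H_\calC+\ketbrab{\psi_1}_\calA\otimes\ketbrab{\wh1}+\ketbrab{\psi_0}_\calA\otimes\ketbrab{\wh2}$ and observing that $H_\calC=\ketbra{v}{v}$ with $\ket{v}=\ket{\wh0}-\ket{\wh1}-\ket{\wh2}$, hence rank one with eigenvalues $\{0,0,3\}$ and kernel $\Span\{\ket{\wh0}+\ket{\wh1},\ket{\wh0}+\ket{\wh2}\}$ (this is exactly where the $+$ sign on the $\calC_1\calC_2$ transition does its work); the ``sum of PSD terms vanishes iff each summand does'' argument then pins $\ket{v_0}\propto\ket{\psi_0}$ and $\ket{v_1}\propto\ket{\psi_1}$ and yields precisely $\Span\{\ket{\phi_0},\ket{\phi_1}\}$. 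Your route buys a self-contained, human-checkable proof that also makes the spectral structure of the clock part explicit; the paper's route buys uniformity, since the same computer-algebra verification is reused unchanged for the larger gadgets $H_U$ and $H_\CXg$, where a hand computation along your lines would be substantially more tedious.
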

\begin{proof}
  Observe that $\Pi_{\Sclock}\Hsplit\Pi_{\Sclock}=\Hsplit\Pi_{\Sclock}$ since $\Hsplit$ maps logical clock states to logical states, where here $\Pi_{\Sclock}$ acts as identity on $\calA$ and projects $\calC$ onto the clock space $\Sclock$.
  Therefore, it suffices to compute the nullspace of $\Hsplit|_\Sclock$.
  After a change of basis, we have $\ket{\psi_0}=\ket{0},\ket{\psi_1}=\ket{1}$ and
  \begin{equation}
  \begin{aligned}
  \Hsplit|_\Sclock =\; &\ketbrab{1}_{\calA}\otimes\ketbrab{\wh1}_{\calC} + \ketbrab{0}_{\calA}\otimes\ketbrab{\wh2}_{\calC}\,+ \\
  &\ketbrab{\wh0}_{\calC} - \ketbra{\wh0}{\wh1}_{\calC} - \ketbra{\wh0}{\wh2}_{\calC}\,+ \\
  &\ketbrab{\wh1}_{\calC} - \ketbra{\wh1}{\wh0}_{\calC} + \ketbra{\wh1}{\wh2}_{\calC}\,+ \\
  &\ketbrab{\wh2}_{\calC} - \ketbra{\wh2}{\wh0}_{\calC} + \ketbra{\wh2}{\wh1}_{\calC}. \\
  \end{aligned}
  \end{equation}
  We compute $\corank(\Hsplit|_\Sclock)=2$ using SageMath \cite{sagemath} in the supplementary material \cite{sup}.
\end{proof}

\subsection{Single-qubit gate gadget}

Combining two split gadgets, we can implement a single-qubit unitary $U = \lambda_0\ketbrab{\psi_0}+\lambda_1\ketbrab{\psi_1}$ (see also \cref{fig:HU}):
\begin{equation}\label{eq:HU}
  \begin{aligned}
  H_U &= (\Hsplit)_{\calC_0\calC_1\calC_2} + (\Hsplit)_{\calC_5\calC_3\calC_4} + h_{1,3}(\lambda_0) + h_{2,4}(\lambda_1),  \\
  h_{i,j}(\lambda) &= (\ketbrab{10} + \ketbrab{01} - \lambda^*\ketbra{10}{01}-\lambda\ketbra{01}{10})_{\calC_i\calC_j}
  \end{aligned}
\end{equation}
where $(\Hsplit)_{\calC_5\calC_3\calC_4}$ means that we replace the register $\calC_0,\calC_1,\calC_2$ in $\Hsplit$ with $\calC_5,\calC_3,\calC_4$.

\tikzset{time/.style={circle,draw,inner sep=0mm,minimum size=6mm},every edge quotes/.style={sloped},cond/.style={thick,dashed},unitary/.style={thick,-Latex}}
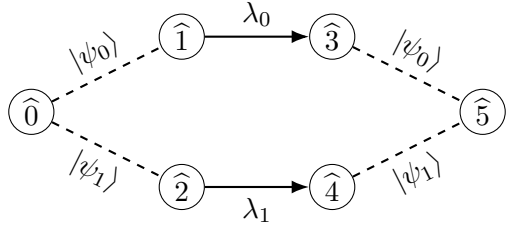
\begin{figure}[t]
  \centering
  \begin{tikzpicture}[]
    \node[time] (t0) at (0,0) {$\wh0$};
    \node[time] (t1) at (2,1) {$\wh1$};
    \node[time] (t2) at (2,-1) {$\wh2$};
    \node[time] (t3) at (4,1) {$\wh3$};
    \node[time] (t4) at (4,-1) {$\wh4$};
    \node[time] (t5) at (6,0) {$\wh5$};
    \draw[cond] (t0) edge["$\ket{\psi_0}$",above] (t1);
    \draw[cond] (t0) edge["$\ket{\psi_1}$",below] (t2);
    \draw[unitary] (t1) edge["$\lambda_0$",above] (t3);
    \draw[unitary] (t2) edge["$\lambda_1$",below] (t4);
    \draw[cond] (t3) edge["$\ket{\psi_0}$",above] (t5);
    \draw[cond] (t4) edge["$\ket{\psi_1}$",below] (t5);
  \end{tikzpicture}
  \caption{Graphical representation of the gadget $H_U$ defined in \cref{eq:HU}. Dashed edges indicate ``conditional transitions'', and the arrows indicate ``unitary transitions'' (which are of the form $\lambda I$ here), following the conventions of \cite{RGN24}.}\label{fig:HU}
\end{figure}

\begin{claim}\label{claim:HU}
  $H_U$ is Hermitian with $H_U|_\Sclock\succeq0$, $\Null(H_U|_\Sclock) = \{ \ket{\phi_0},\ket{\phi_1} \}$, where
  \begin{equation}
    \begin{aligned}
      \ket{\phi_0}&=\frac{1}{2}\ket{\psi_0}_\calA(\ket{\wh0} + \ket{\wh1} +\lambda_0\ket{\wh3} + \lambda_0\ket{\wh5})_\calC,\\
      \ket{\phi_1}&=\frac{1}{2}\ket{\psi_1}_\calA(\ket{\wh0} + \ket{\wh2} +\lambda_1\ket{\wh4} + \lambda_1\ket{\wh5})_\calC,
    \end{aligned}
  \end{equation} 
  and $H_U\ket{\phi_0} = H_U\ket{\phi_1} = 0$.
  $H_U$ satisfies the conditions (2) of \cref{lem:connect}.
\end{claim}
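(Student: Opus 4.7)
The plan is to establish \cref{claim:HU} in four parts: Hermiticity, positive semidefiniteness on $\Sclock$, the nullspace characterization, and the hypotheses of \cref{lem:connect}. Hermiticity is immediate: both copies of $\Hsplit$ are Hermitian by \cref{claim:Hsplit}, and each $h_{i,j}(\lambda)$ is Hermitian because $-\lambda^*\ketbra{10}{01}$ and $-\lambda\ketbra{01}{10}$ are mutually adjoint. For positive semidefiniteness on $\Sclock$, each summand is itself PSD on $\Sclock$: the split gadgets by \cref{claim:Hsplit}, and $h_{i,j}(\lambda)|_\Sclock$ equals $(\ket{\wh i}-\lambda\ket{\wh j})(\bra{\wh i}-\lambda^*\bra{\wh j})$ on $\Span\{\ket{\wh i},\ket{\wh j}\}$ (using $|\lambda_0|=|\lambda_1|=1$ as eigenvalues of the unitary $U$), which is a rank-one positive operator.

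To show $\ket{\phi_0},\ket{\phi_1}\in\Null(H_U)$, I verify each of the four summands annihilates $\ket{\phi_0}$ (the case $\ket{\phi_1}$ is symmetric). Decompose $\ket{\phi_0} = \frac{1}{\sqrt2}\bigl[\frac{1}{\sqrt2}\ket{\psi_0}(\ket{\wh0}+\ket{\wh1})\bigr] + \frac{\lambda_0}{2}\ket{\psi_0}(\ket{\wh3}+\ket{\wh5})$. The bracketed state is the null state of $\Hsplit$ from \cref{claim:Hsplit} for the first copy, and the trailing state has $\calC_0\calC_1\calC_2=000$, so every term of the first split gadget annihilates it term-by-term. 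A symmetric decomposition handles the second split gadget. For the connector $h_{1,3}(\lambda_0)|_\Sclock\ket{\phi_0}$, one has $(\bra{\wh1}-\lambda_0^*\bra{\wh3})\ket{\phi_0}=\frac12\ket{\psi_0}(1-\lambda_0^*\lambda_0)=0$, and $h_{2,4}(\lambda_1)\ket{\phi_0}=0$ because $\ket{\phi_0}$ has no support on $\wh2,\wh4$.

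The main obstacle is bounding $\dim\Null(H_U|_\Sclock)\le 2$, because the diamond structure of \cref{fig:HU} (two parallel branches rejoining at $\wh5$) does not fit the linear-chain hypothesis of \cref{lem:connect} directly. The key observation is that every summand of $H_U$ commutes with $\Pi_j = \ketbrab{\psi_j}_\calA\otimes I_\calC$ for $j\in\{0,1\}$: $\Hsplit$ is block diagonal in the $\{\ket{\psi_0},\ket{\psi_1}\}$ basis of $\calA$ by inspection, and the $h_{i,j}(\lambda)$ terms act trivially on $\calA$. Hence $H_U|_\Sclock$ splits into two $6\times 6$ blocks on $\Span\{\ket{\wh0},\ldots,\ket{\wh5}\}$, each of which I would verify has corank exactly $1$ by a short row reduction (or via SageMath as in the proof of \cref{claim:Hsplit}); combined with the null vectors already exhibited, this forces equality. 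Finally, conditions (2) of \cref{lem:connect} are witnessed by $u_U=\wh0$, $v_U=\wh5$, $U_U=U$, and $\ket{\psi_U(\psi_j)}=2\ket{\phi_j}$: (a) is the nullspace just computed, (b) gives $L_U^\dagger L_U=4I$ by orthogonality of $\ket{\phi_0},\ket{\phi_1}$, (d) gives $\delta_U=4$, and (c), (e), (f) follow immediately from the definition of $\ket{\phi_j}$.
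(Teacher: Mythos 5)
Your proposal is correct and follows essentially the same route as the paper: both hinge on the observation that $H_U$ commutes with $\Pi_j=\ketbrab{\psi_j}_\calA\otimes I_\calC$, so that $H_U|_\Sclock$ splits into two single-branch blocks with one-dimensional kernels spanned by $\ket{\phi_0},\ket{\phi_1}$, after which the conditions of \cref{lem:connect} are checked with the same map $L$ (the paper takes $L=2\ketbra{\phi_0}{\psi_0}+2\ketbra{\phi_1}{\psi_1}$, matching your $\ket{\psi_U(\psi_j)}=2\ket{\phi_j}$ and $L^\dagger L=4I$). The only difference is the final dimension count: where you defer to a row reduction/SageMath check of each $6\times 6$ block, the paper identifies each block (after the penalty terms on $\ket{\wh2},\ket{\wh4}$) with Kitaev's propagation Hamiltonian for the three ``gates'' $\Ig,\lambda_0\Ig,\Ig$, whose kernel is the span of the corresponding history state --- and the paper also verifies positivity and nullspaces of the gadgets with SageMath in the supplementary material, so your deferral matches its level of rigor.
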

\begin{proof}
  As in \cref{claim:Hsplit}, observe that $\Pi_{\Sclock}H_U\Pi_{\Sclock}=H_U\Pi_{\Sclock}$.
  The analysis of the nullspace is similar to \cite[Lemma 3.2]{RGN24}.
  Let $\Pi_i = \ketbrab{\psi_i}_{\calA}\otimes I_{\calC}$ for $i\in\{0,1\}$.
  Then we have $H_U = \Pi_0H_U\Pi_0 + \Pi_1H_U\Pi_1$ with
  \begin{equation}\label{eq:PiHU}
    \begin{aligned}
      \Pi_0H_U\Pi_0|_\Sclock = \ketbrab{\psi_0}_{\calA}\otimes \bigl(&\ketbrab{\wh2} + \ketbrab{\wh4}\,+ \\
      &\ketbrab{\wh0} + \ketbrab{\wh1} - \ketbra{\wh0}{\wh1} -\ketbra{\wh1}{\wh0}\,+\\
      &\ketbrab{\wh1} + \ketbrab{\wh3} - \lambda_0^*\ketbra{\wh1}{\wh3} -\lambda_0\ketbra{\wh3}{\wh1}\,+\\
      &\ketbrab{\wh3} + \ketbrab{\wh5} - \ketbra{\wh3}{\wh5} -\ketbra{\wh5}{\wh3}\bigr)_{\calC}.
    \end{aligned}
  \end{equation}
  \Cref{eq:PiHU} now resembles the Kitaev's circuit Hamiltonian \cite{KSV02} with four timesteps and three gates $I,\lambda_0I, I$.
  Therefore $\Null(\Pi_iH_U\Pi_i|_\calC) = \Span\{\ket{\phi_i}\}$ for $i\in\{0,1\}$ ($i=1$ is analogous).

  It remains to verify the conditions of \cref{lem:connect}.
  (a) follows from the nullspace.
  For (b), define $L = 2\ketbra{\phi_0}{\psi_0} + 2\ketbra{\phi_1}{\psi_1}$.
  (c) and (d) are obvious.
  For (e) and (f), let $\ket{\alpha} = \alpha_0\ket{\psi_0} + \alpha_1\ket{\psi_1}$.
  We get $\bra{\wh0}_{\calC} L\ket{\alpha} = \ket{\alpha}$ and $\bra{\wh5}_{\calC} L\ket{\alpha} = \alpha_0\lambda_0\ket{\psi_0}+\alpha_1\lambda_1\ket{\psi_1} = U\ket{\alpha}$.
\end{proof}

For $U=\Tg_{2^k}$, the gadget $H_U$ is clearly in $\QQ(\zeta_{2^k})$.
For $U =\Hg$, this is less obvious.
$\Hg$ has eigenvalues $\pm1$ with eigenvectors
\begin{equation}
  \ket{\Hg_+} = \begin{pmatrix}
    a\\b
  \end{pmatrix},\quad\ket{\Hg_{-}} = \begin{pmatrix}
    -b\\
    a
  \end{pmatrix},\qquad
  a = \frac{\sqrt{2+\sqrt{2}}}{2}=\cos\frac\pi8,\quad b = \frac{\sqrt{2-\sqrt{2}}}{2}=\sin\frac\pi8,
\end{equation}
with $\ketbrab{\Hg_+},\ketbrab{\Hg_-}$ in $\QQ(\zeta_8)$, as 
\begin{equation}
    a^2 = \frac{2+\sqrt{2}}{4},\quad b^2 = \frac{2-\sqrt{2}}4,\quad ab = \frac{2}4.
\end{equation}
Hence, $H_\Hg$ in $\QQ(\zeta_8)$.

\subsection{Two-qubit gate gadget}

The last gadget we need is the CNOT gadget $H_\CXg$ (depicted in \cref{fig:HCX}), which acts on a logical register $\calA$ of two qubits $\calA_{0},\calA_1$ and a clock register $\calC$ of $12$ qubits $\calC_0,\dots,\calC_{11}$.
$H_{\CXg}$ is constructed by effectively nesting two instances of the single-qubit gadget.
Let $\Hsplit(\ket{\eta_0},\ket{\eta_1})_{\A_i\calC_u\calC_v\calC_t}$ denote an instance of $\Hsplit$ obtained by substituting $\ket{\eta_0},\ket{\eta_1}$ for $\ket{\psi_0},\ket{\psi_1}$, $\calA_i$ for $\calA$, and $\calC_{u},\calC_{v},\calC_t$ for $\calC_0,\calC_1,\calC_2$ in \cref{eq:HU}.
\begin{equation}\label{eq:HCX}
  \begin{aligned}
  H_{\CXg} =\; &\Hsplit(\ket0,\ket1)_{\calA_0\calC_0\calC_1\calC_2} + \Hsplit(\ket{+},\ket{-})_{\calA_1\calC_2\calC_4\calC_5}\,+\\
  &\Hsplit(\ket{+},\ket{-})_{\calA_1\calC_{10}\calC_7\calC_8} + \Hsplit(\ket0,\ket1)_{\calA_0\calC_{11}\calC_9\calC_{10}}\,+\\
  &h_{1,3}(1) + h_{3,6}(1) + h_{6,9}(1) + h_{4,7}(1) + h_{5,8}(-1)\,+\\
  &\ketbrab{11}_{\calA_0\calC_{3}} + \ketbrab{11}_{\calA_0\calC_{6}}
  \end{aligned}
\end{equation}

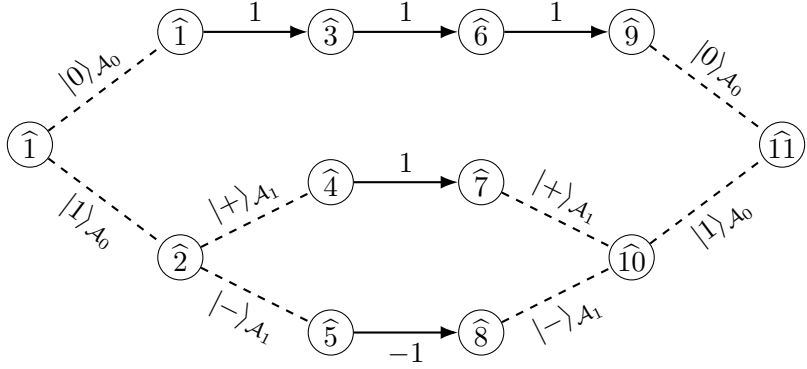
\begin{figure}[t]
  \centering
  \begin{tikzpicture}
    \node[time] (t0) at (-2,1.5) {$\wh1$};
    \node[time] (t1) at (0,3) {$\wh1$};
    \node[time] (t3) at (2,3) {$\wh3$};
    \node[time] (t6) at (4,3) {$\wh6$};
    \node[time] (t9) at (6,3) {$\wh9$};
    \node[time] (t11) at (8,1.5) {$\wh{11}$};

    \node[time] (t2) at (0,0) {$\wh2$};
    \node[time] (t4) at (2,1) {$\wh4$};
    \node[time] (t5) at (2,-1) {$\wh5$};
    \node[time] (t7) at (4,1) {$\wh7$};
    \node[time] (t8) at (4,-1) {$\wh8$};
    \node[time] (t10) at (6,0) {$\wh{10}$};

    \draw[cond] (t0) edge["$\ket{0}_{\calA_0}$",above] (t1);
    \draw[cond] (t0) edge["$\ket{1}_{\calA_0}$",below] (t2);
    \draw[unitary] (t1) edge["$1$",above] (t3);
    \draw[unitary] (t3) edge["$1$",above] (t6);
    \draw[unitary] (t6) edge["$1$",above] (t9);
    \draw[cond] (t9) edge["$\ket{0}_{\calA_0}$",above] (t11);
    \draw[cond] (t10) edge["$\ket{1}_{\calA_0}$",below] (t11);

    \draw[cond] (t2) edge["$\ket{+}_{\calA_1}$",above] (t4);
    \draw[cond] (t2) edge["$\ket{-}_{\calA_1}$",below] (t5);
    \draw[unitary] (t4) edge["$1$",above] (t7);
    \draw[unitary] (t5) edge["$-1$",below] (t8);
    \draw[cond] (t7) edge["$\ket{+}_{\calA_1}$",above] (t10);
    \draw[cond] (t8) edge["$\ket{-}_{\calA_1}$",below] (t10);
  \end{tikzpicture}
  \caption{Graphical representation of the gadget $H_\CXg$ defined in \cref{eq:HCX}.}\label{fig:HCX}
\end{figure}

\begin{claim}\label{claim:HCX}
  $H_\CXg$ is Hermitian with $\H_{\CXg}|_\Sclock\succeq0$, $\Null(H_\CXg|_\Sclock) = \{ \ket{\phi_{00}},\ket{\phi_{01}},\ket{\phi_{10}},\ket{\phi_{11}} \}$, where
  \begin{equation}
    \begin{aligned}
      \ket{\phi_{00}}&=\frac{1}{\sqrt6}\ket{0}_{\calA_0}\ket{+}_{\calA_1}(\ket{\wh0} + \ket{\wh1} + \ket{\wh3} + \ket{\wh6} + \ket{\wh9} + \ket{\wh{11}})_\calC,\\
      \ket{\phi_{01}}&=\frac{1}{\sqrt6}\ket{0}_{\calA_0}\ket{-}_{\calA_1}(\ket{\wh0} + \ket{\wh1} + \ket{\wh3} + \ket{\wh6} + \ket{\wh9} + \ket{\wh{11}})_\calC,\\
      \ket{\phi_{10}}&=\frac{1}{\sqrt6}\ket{1}_{\calA_0}\ket{+}_{\calA_1}(\ket{\wh0} + \ket{\wh2} + \ket{\wh4} + \ket{\wh7} + \ket{\wh{10}} + \ket{\wh{11}})_\calC,\\
      \ket{\phi_{11}}&=\frac{1}{\sqrt6}\ket{1}_{\calA_0}\ket{-}_{\calA_1}(\ket{\wh0} + \ket{\wh2} + \ket{\wh5} - \ket{\wh8} - \ket{\wh{10}} - \ket{\wh{11}})_\calC,\\
    \end{aligned}
  \end{equation} 
  and $H_\CXg\ket{\phi_{00}} = H_\CXg\ket{\phi_{01}} = H_\CXg\ket{\phi_{10}} = H_\CXg\ket{\phi_{11}} = 0$.
  $H_\CXg$ satisfies the conditions (2) of \cref{lem:connect}.
\end{claim}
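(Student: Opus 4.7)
The plan is to follow the same template as the analysis of $H_U$ in Claim~\ref{claim:HU}, now with two levels of nesting. First, I would verify that every summand of $H_\CXg$ either projects onto a one-hot clock pattern or hops a single~$1$ between adjacent clock qubits, so that $\Pi_\Sclock H_\CXg \Pi_\Sclock = H_\CXg \Pi_\Sclock$ and it suffices to work inside $\calA \otimes \Sclock$. Second, I observe that every summand is diagonal in the $\calA_0$ computational basis: the two outer $\Hsplit(\ket{0},\ket{1})$ gadgets and the two penalties $\ketbrab{11}_{\calA_0\calC_3},\ketbrab{11}_{\calA_0\calC_6}$ by inspection, and the two inner $\Hsplit(\ket{+},\ket{-})$ gadgets together with the clock-only $h_{i,j}(\cdot)$ terms because they do not touch $\calA_0$. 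This block-decomposes $H_\CXg$ into two independent sectors indexed by $\calA_0 \in \{\ket{0},\ket{1}\}$.

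In the $\ket{0}_{\calA_0}$ sector the penalties vanish, and a direct computation analogous to the $3 \times 3$ matrix in the proof of Claim~\ref{claim:Hsplit} shows that the first and fourth outer splits (restricted to the one-hot subspaces on $\calC_0\calC_1\calC_2$ and $\calC_9\calC_{10}\calC_{11}$ respectively) force zero amplitude on $\ket{\wh2}$ and $\ket{\wh{10}}$ while acting as zero on $\ket{\wht}$ for $t \notin \{0,1,2,9,10,11\}$; the two inner split gadgets act as zero on any one-hot state not supported on $\{\wh2,\wh4,\wh5\} \cup \{\wh7,\wh8,\wh{10}\}$. What remains, together with $h_{1,3}(1), h_{3,6}(1), h_{6,9}(1)$, is effectively a Kitaev circuit Hamiltonian on the chain $\wh0\to\wh1\to\wh3\to\wh6\to\wh9\to\wh{11}$ with identity transitions (tensored with arbitrary $\calA_1$), yielding the two-dimensional nullspace $\Span\{\ket{\phi_{00}},\ket{\phi_{01}}\}$. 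In the $\ket{1}_{\calA_0}$ sector the penalties kill $\ket{\wh3}$ and $\ket{\wh6}$, the first and fourth outer splits kill $\ket{\wh1}$ and $\ket{\wh9}$, and via $h_{1,3}, h_{3,6}, h_{6,9}$ the entire upper path is eliminated. Now only the two inner splits mediate non-trivial $\calA_1$ action, so I further decompose by the $\calA_1$ Hadamard basis: the $\ket{+}_{\calA_1}$ subsector reduces to a chain $\wh0\to\wh2\to\wh4\to\wh7\to\wh{10}\to\wh{11}$ with identity transitions whose unique null vector (up to scale) is $\ket{\phi_{10}}$, while the $\ket{-}_{\calA_1}$ subsector reduces to a chain $\wh0\to\wh2\to\wh5\to\wh8\to\wh{10}\to\wh{11}$ carrying a single $-1$ phase from $h_{5,8}(-1)$, whose unique null vector is $\ket{\phi_{11}}$ with the alternating sign pattern.

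For condition~(2) of the Nullspace Connection Lemma, I would take the orthonormal $\calA$-basis $\ket{\alpha_{xy}} \coloneq \ket{x}_{\calA_0}\otimes \Hg\ket{y}_{\calA_1}$ and set $L \coloneq \sqrt{6}\sum_{x,y\in\bin}\ketbra{\phi_{xy}}{\alpha_{xy}}$. Mutual orthogonality and unit norm of the $\ket{\phi_{xy}}$ give $L^\dagger L = 6\,I$, proving~(b) with $\lambda = 6$; (c) and~(d) are immediate (with $\delta = 1$); $(I_\calA \otimes \bra{\wh0}_\calC)L\ket{\alpha_{xy}} = \ket{\alpha_{xy}}$ gives~(e); and $(I_\calA \otimes \bra{\wh{11}}_\calC)L\ket{\alpha_{xy}} = (-1)^{xy}\ket{\alpha_{xy}} = \CXg\ket{\alpha_{xy}}$ gives~(f) with $U_i = \CXg$, using the identity $\CXg(\ket{x}\otimes \Hg\ket{y}) = (-1)^{xy}\ket{x}\otimes \Hg\ket{y}$.

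The main obstacle I expect is rigorously chasing all the equality and vanishing constraints in the $\ket{1}_{\calA_0}$ sector so as to rule out any ``hidden'' null vector, and to confirm that the two inner splits really do decompose cleanly under the $\calA_1$ Hadamard basis without leakage between the $\ket{+}_{\calA_1}$ and $\ket{-}_{\calA_1}$ subsectors. As in Claim~\ref{claim:Hsplit}, this can be verified by direct diagonalization using a computer algebra system~\cite{sup}.
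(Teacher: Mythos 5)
Your proposal is correct and takes essentially the same route as the paper, whose proof simply declares the claim ``analogous to'' the single-qubit gadget analysis (restriction to $\Sclock$, sector decomposition into Kitaev-style chains, explicit $L$ for the Connection Lemma conditions) and backs the positivity/nullspace computation with SageMath --- the same CAS fallback you invoke. One nitpick: with your $L$ one gets $\delta=\lambda=6$ rather than $\delta=1$, which is immaterial since any constant in $[1,\Delta]$ suffices.
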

\begin{proof}
  Analogous to \cref{claim:HU}.
\end{proof}

We verify the positivity and nullspaces of $H_\Hg,H_{\Tg},H_\CXg$ with SageMath in \cite{sup}.

\subsection{Assembling the Hamiltonian}

\begin{proof}[Proof of \cref{thm:2LH-complete}]
  Define $\Hprop$ as in the Nullspace Connection Lemma, with $H_1$ made up of the gate gadgets defined in \cref{eq:HU,eq:HCX}, and connected with $H_2$ defined in \cref{eq:hij}.
  By \cref{claim:projection-clock}, we can just apply \cref{lem:connect} to $\Hprop|_\Sclock$, since the lemma's requirements have been proven in \cref{claim:HU,claim:HCX}.
  Thus, we know that $\Hprop|_\Sclock\succeq0$, $\gamma(\Hprop|_\Sclock)\ge 1/\poly(n)$, and its nullspace is spanned by history states.
  Finally, we enforce correct ancilla initialization and an accepting computation with the projectors
  \begin{align}
    \Hin &= \sum_{i=1}^{n_1}\ketbrab{11}_{\calA_i\calC_0},\\
    \Hout &= \ketbrab{01}_{\calA_1\calC_{T}}.
  \end{align}
  We can complete the proof analogously to \cite[Theorem 3.1]{RGN24}.
\end{proof}

\section{Sparse Hamiltonian problems}\label{sec:sparse}

In this section, we show completeness results for sparse Hamiltonian problems.
We begin by extending \cref{lem:kELH} to sparse Hamiltonians.

\begin{lemma}\label{lem:ESH}
  $\ESH^{\QQ(\iu)} \in \QMA_{1}^{\calG_{2}}$ and $\ESH_\epsilon^{\QQ(\zeta_{2^k})} \in \QMA_{1,1-\epsilon'}^{\calG_{2^k}}$ with $\epsilon\le n^{-O(1)}$ and $\epsilon' \in\epsilon^{O(1)}$ for all $k\in\NN$, where $n$ is the number of qubits of the instance.
\end{lemma}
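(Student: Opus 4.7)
The plan is to mimic the proof of \cref{lem:kELH}, replacing the term-by-term Pauli expansion of a local Hamiltonian with a linear-combination-of-unitaries (LCU) decomposition of the sparse Hamiltonian $H$ itself. As in \cref{lem:kELH}, the first containment $\ESH^{\QQ(\iu)} \in \QMA_1^{\calG_2}$ follows from the $\calG_4$ case via \cref{thm:qma-g2}, so the main work is to design a $\QMAo^{\calG_{2^k}}$-verifier for $\ESH_\epsilon^{\QQ(\zeta_{2^k})}$ with rejection probability proportional to $\norm{H\ket{\psi}}^2$, so that $H\ket{\psi}=0$ yields perfect completeness and $\norm{H\ket{\psi}}\ge\epsilon$ yields inverse-polynomial soundness.

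The key step is to write $H = \sum_j a_j U_j$ with $a_j\in\QQ(\zeta_{2^k})$ having polynomially bounded numerators and a polynomial common denominator, and with each $U_j$ a $1$-sparse unitary that the verifier can implement over $\calG_{2^k}$. Following the sparse-Hamiltonian simulation recipe of \cite{BCCKS14,KL21}, I first color the sparsity graph of $H$ with $O(d^2)$ colors to obtain $H = \sum_c H_c$ where each $H_c$ is $1$-sparse Hermitian. Each $H_c$ decomposes on a $2\times2$ block $\begin{pmatrix}H_c(x,x)&H_c(x,y)\\\overline{H_c(x,y)}&H_c(y,y)\end{pmatrix}$ per matched pair $(x,y)$ (plus diagonal singletons); since these blocks already have entries in $\QQ(\zeta_{2^k})$ and norm at most $1$ (by \cref{def:F-zeta}), they can be written exactly as an LCU of two $2\times 2$ unitaries whose entries remain in $\QQ(\zeta_{2^k})$---this is a ``cyclotomic version'' of the standard embedding where one avoids the transcendental phases $e^{\iu\arcsin(\cdot)}$ used by \cite{BCCKS14}. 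The resulting $1$-sparse unitaries $U_j$ are implemented in $\poly(n)$ time by combining the sparse-access oracle (a classical circuit) with \cref{lem:unitary,lem:unitaryk3} applied to the $2\times 2$ blocks, followed by \cref{lem:oblivious-amplification} to drive the per-gate failure probability to $1/\exp(n)$.

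The verifier then proceeds as in \cref{lem:kELH}: it prepares $\ket{H}_\A \propto \sum_j a_j\ket{j}$ using \cref{lem:integer-state-cyclotomic} (which succeeds with probability $\pprep\ge\epsilon^{O(1)}$ thanks to the bit-complexity bounds of \cref{def:F-zeta}, and the verifier accepts if preparation fails), applies the controlled unitary $\sum_j \ketbrab{j}_\A\otimes (U_j)_\B$ to the proof, measures $\A$ in the Hadamard basis, and \emph{rejects} on the all-$\ket{+}$ outcome. Exactly as in the local case, the amplitude on that outcome is a scalar multiple of $H\ket{\psi}$, so $\prej = \eta^2 \cdot 2^{-s}\cdot\norm{H\ket{\psi}}^2$ for $\eta,2^{-s}\in n^{-O(1)}$; combined with $\pprep$ this gives overall rejectance probability $0$ in the YES-case and at least $\epsilon^{O(1)}$ in the NO-case. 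The main obstacle is the cyclotomic LCU decomposition of each $1$-sparse block: the standard Hamiltonian-simulation constructions introduce non-cyclotomic phases and hence would break perfect completeness after the ``rational approximation'' step, so I have to verify that the direct $2\times 2$ block decomposition I use produces coefficients in $\QQ(\zeta_{2^k})$ and preserves the polynomial common-denominator bound required by \cref{lem:integer-state-cyclotomic}.
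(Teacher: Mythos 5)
Your verifier has the same overall architecture as the paper's (reduce to the $\calG_{2^k}$ case via \cref{thm:qma-g2}, color the sparsity graph as in \cite{BCCKS14} to get $1$-sparse Hermitian pieces, prepare a coefficient state, apply controlled $1$-sparse unitaries to the proof, and reject only on the all-$\ket{+}$ outcome so that $H\ket\psi=0$ gives perfect completeness). However, the step you yourself flag as ``the main obstacle'' is exactly the missing content, and your proposed substitute does not work as stated. You need $H=\sum_j a_jU_j$ with \emph{globally fixed}, polynomially many coefficients $a_j\in\QQ(\zeta_{2^k})$ (to load into $\ket{H}_\calA$ via \cref{lem:integer-state-cyclotomic}) and $1$-sparse unitaries $U_j$ that the verifier can apply \emph{exactly} over $\calG_{2^k}$ given only oracle access to $H$. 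Your per-pair $2\times2$ decomposition is in tension with all three requirements at once: with a fixed coefficient (say $\tfrac12$ per block) the two-unitary completion is $V=B+\iu\sqrt{I-B^2}$, whose entries leave $\QQ(\zeta_{2^k})$ -- this is precisely the non-cyclotomic-phase problem you mention, and no alternative construction is given; staying inside the field generically forces block-dependent coefficients and more terms, which destroys the global LCU form $\sum_j a_jU_j$ needed for the coefficient-state/postselection argument. Moreover, even granting per-block field-valued decompositions, the resulting $2\times2$ unitaries vary with the matched pair $(x,y)$ and are specified only through oracle queries, so \cref{lem:unitary,lem:unitaryk3} cannot be invoked: those lemmas implement a single classically known unitary by preparing its Pauli-coefficient state, not a data-dependent family of blocks; and a general unitary with entries merely in $\QQ(\zeta_{2^k})$ is not exactly synthesizable over $\calG_{2^k}$ anyway (\cref{thm:AGKMMR24} requires $\ZZ[1/2,\zeta_{2^k}]$).

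The paper resolves exactly this tension with \cref{lem:sparse-sum-of-unitaries}: after extracting the polynomially bounded common denominator from \cref{def:F-zeta} (so $H=H'/h$ with $H'\in\ZZ[\zeta_{2^k}]$ and $L=O(\log n)$-bit entries), each $1$-sparse piece is expanded digit-by-digit in the integer coefficients and by powers of $\zeta_{2^k}$, and the zero entries are filled with $\pm1$, yielding $O(2^kL)$ global $1$-sparse unitaries whose entries are signs times powers of $\zeta_{2^k}$, with fixed dyadic coefficients $2^{l-1}$. These unitaries are exactly implementable over $\calG_{2^k}$ in the style of \cite{KL21} (the data dependence enters only through classically computable indices and a phase $\Tg_{2^k}^m$), no oblivious amplitude amplification is needed inside the verifier, and the $\pm1$ filling costs only a $\poly(n)$ factor in the soundness bound thanks to sparsity. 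To make your proof go through you would essentially have to reprove this decomposition lemma; the two-unitary-per-block shortcut does not substitute for it.
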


\subsection{Sparse Hamiltonian as linear combination of unitaries}

First, we need a technical lemma that extends \cite[Lemma 1]{KL21} to cyclotomic fields.

\begin{lemma}\label{lem:sparse-sum-of-unitaries}
  Let $H\in\ZZ[\zeta_{2^k}]^{2^n\times2^n}$ be an $n$-qubit $1$-sparse Hamiltonian with coefficients of at most $L$ bits.
  We can write $H=\sum_{l=0}^{L-1} \sum_{i=1}^{2^{k+1}}2^{l-1} U^{l,i}$ with $1$-sparse unitaries $U_{l,i}$, which can be efficiently implemented with the gateset $\calG_{2^k}$ for all $k\in\NN$.
\end{lemma}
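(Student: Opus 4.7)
The plan is to use two nested decompositions: a cyclotomic-binary expansion of the entries of $H$, followed by a per-orbit extension of partial matchings to signed permutation matrices. I will start from the observation that, because $H$ is $1$-sparse and Hermitian, its sparsity pattern decomposes into disjoint $2$-orbits $\{i,j\}$ (with $H_{ij},H_{ji}=\overline{H_{ij}}\ne 0$) and diagonal fixed points $\{i\}$, and the ``row-to-column'' involution $f:[2^n]\to[2^n]$ encoding this pattern is accessible in $\poly(n)$ time from the sparse oracle of $H$.

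First, I will expand each entry. Writing $H_{ij}\in\ZZ[\zeta_{2^k}]$ in the integral basis $\{1,\zeta,\ldots,\zeta^{2^{k-1}-1}\}$ yields integer coefficients of at most $L$ bits; using $\zeta_{2^k}^{2^{k-1}}=-1$ to absorb signs into an enlarged index set $m\in\{0,\ldots,2^k-1\}$ and then binary-expanding the non-negative coefficients gives
\[
  H_{ij}=\sum_{l=0}^{L-1}\sum_{m=0}^{2^k-1} b^{ij}_{l,m}\cdot 2^l\cdot \zeta_{2^k}^m,\qquad b^{ij}_{l,m}\in\{0,1\}.
\]
Defining $N^{l,m}$ by $(N^{l,m})_{ij}=b^{ij}_{l,m}$ then yields a $1$-sparse $\{0,1\}$-matrix with support contained in the sparsity pattern of $H$, and $H=\sum_{l,m}2^l\zeta_{2^k}^m N^{l,m}$.

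Second, for each $(l,m)$ I will split $N^{l,m}=(P^{l,m}+Q^{l,m})/2$ into two $1$-sparse unitaries. I define a permutation $\sigma^{l,m}$ by the rule: on each $2$-orbit $\{i,j\}$ of $H$'s pattern, swap $i$ and $j$ if at least one of $b^{ij}_{l,m},b^{ji}_{l,m}$ equals $1$, else fix both, and act as identity on every fixed point. Then $P^{l,m}$ is the $0/1$-permutation matrix of $\sigma^{l,m}$, and $Q^{l,m}$ is the signed permutation matrix that agrees with $P^{l,m}$ on the support of $N^{l,m}$ and carries $-1$ on the extension positions. A short case check of the four sign patterns per $2$-orbit (and the two per fixed point) will show $(P^{l,m}+Q^{l,m})/2=N^{l,m}$, and substituting yields
\[
  H=\sum_{l=0}^{L-1}\sum_{m=0}^{2^k-1} 2^{l-1}\,\bigl(\zeta_{2^k}^m P^{l,m}+\zeta_{2^k}^m Q^{l,m}\bigr),
\]
which after reindexing the inner sum over $s\in\{1,\dots,2^{k+1}\}$ is the claimed decomposition.

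Finally, I have to implement each $\zeta_{2^k}^m P^{l,m}$ (and likewise for $Q$) with $\calG_{2^k}$. Since the branch conditions defining $\sigma^{l,m}$ and the signs of $Q^{l,m}$ depend only on $i$, on $f(i)$, and on the bits extracted from $H_{i,f(i)}$ and $H_{f(i),i}$, each such matrix is a $1$-sparse signed permutation acting as $|i\rangle\mapsto \epsilon^{l,m}_i\,\zeta_{2^k}^m\,|\sigma^{l,m}(i)\rangle$, and I can realize it by a standard sparse-oracle routine: compute $\sigma^{l,m}(i)$ and the sign $\epsilon^{l,m}_i\in\{\pm 1\}$ into ancillas via Toffoli-based reversible arithmetic, swap registers to replace $|i\rangle$ by $|\sigma^{l,m}(i)\rangle$, and apply the phases via $m$ applications of $\Tg_{2^k}$ (for $\zeta_{2^k}^m$) and one $\Zg$ (for the $-1$), all of which belong to (or are synthesizable from) $\calG_{2^k}$ by \cref{thm:AGKMMR24}. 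The principal obstacle I anticipate is the extension-to-permutation step: because the bits $b^{ij}_{l,m}$ and $b^{ji}_{l,m}$ are independent, the support of $N^{l,m}$ is generally not symmetric, so a naive row-by-row extension would be non-local and incompatible with oracle access. Working at the level of $H$'s (symmetric) $2$-orbits, which are exposed by $f$ from a single oracle query, is what sidesteps this obstacle, since each orbit becomes an independent two-element problem decided by at most two bit evaluations.
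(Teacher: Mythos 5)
Your proposal is correct and takes essentially the same route as the paper's proof, which adapts \cite{KL21} and \cite{BCCKS14}: expand each entry in powers of $\zeta_{2^k}$ and binary digits, then complete each resulting $1$-sparse partial pattern to an average of two $1$-sparse unitaries by filling the empty positions with $\pm1$, giving $2^{k+1}$ unitaries per bit position with weight $2^{l-1}$. The only differences are bookkeeping: the paper keeps the phases $\pm\zeta_{2^k}^{\pm m}$ inside Hermitian off-diagonal matrices $C^{m,l}$ plus diagonal matrices $D^{m,l}$ with $m<2^{k-1}$, whereas you absorb signs by doubling the range of $m$ and complete orbit-wise into a permutation plus a signed permutation (your case check and the orbit-level consistency observation are exactly the symmetric completion the paper gets from Hermiticity), with the same standard oracle-based implementation.
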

\begin{proof}
  This proof is based on the proof of \cite[Lemma 1]{KL21} and \cite[Lemma 4.3]{BCCKS14}.
  Let $d=2^{k-1}$. $H$ has a unique decomposition
  \begin{equation}
    H = \sum_{l=0}^{L-1} 2^l\sum_{m=0}^{d-1} (C^{m,l} + D^{m,l}),
  \end{equation}
  such that each $C^{m,l}$ is $1$-sparse, Hermitian, only has entries $\pm\zeta_{2^k}^{m}$ above the diagonal, and $\pm\zeta_{2^k}^{-m}$ below the diagonal, and only zeros on the diagonal.
  Each $D^{m,l}$ is diagonal with entries $\pm\zeta_{2^k}^m$.
  Example in $\ZZ[\zeta_8]$:
  \begin{align}
    &\begin{pmatrix}
      1&\cdot&\cdot&\cdot\\
      \cdot&\sqrt{2}&\cdot&\cdot\\
      \cdot&\cdot&\cdot&1+\iu\\
      \cdot&\cdot&1-\iu&\cdot
    \end{pmatrix}
    = 
    \begin{pmatrix}
      1&\cdot&\cdot&\cdot\\
      \cdot&\zeta_8-\zeta_8^3&\cdot&\cdot\\
      \cdot&\cdot&\cdot&1+\zeta_8^2\\
      \cdot&\cdot&1-\zeta_8^2&\cdot
    \end{pmatrix}\nonumber\\
    &\qquad= 
    \begin{pmatrix}
      0&\cdot&\cdot&\cdot\\
      \cdot&0&\cdot&\cdot\\
      \cdot&\cdot&\cdot&1\\
      \cdot&\cdot&1&\cdot
    \end{pmatrix}
    +
    \begin{pmatrix}
      0&\cdot&\cdot&\cdot\\
      \cdot&0&\cdot&\cdot\\
      \cdot&\cdot&\cdot&\zeta_8^2\\
      \cdot&\cdot&\zeta_8^{-2}&\cdot
    \end{pmatrix}\\
    &\qquad\quad+\begin{pmatrix}
      1&\cdot&\cdot&\cdot\\
      \cdot&0&\cdot&\cdot\\
      \cdot&\cdot&0&\cdot\\
      \cdot&\cdot&\cdot&0
    \end{pmatrix}
    +
    \begin{pmatrix}
      0&\cdot&\cdot&\cdot\\
      \cdot&\zeta_8&\cdot&\cdot\\
      \cdot&\cdot&0&\cdot\\
      \cdot&\cdot&\cdot&0
    \end{pmatrix}
    +
    \begin{pmatrix}
      0&\cdot&\cdot&\cdot\\
      \cdot&-\zeta_8^3&\cdot&\cdot\\
      \cdot&\cdot&0&\cdot\\
      \cdot&\cdot&\cdot&0
    \end{pmatrix}
    \nonumber
  \end{align}
  Note that $C^{m,l}$ and $D^{m,l}$ are not yet unitary.
  Therefore, we decompose the $C^{m,l}=\frac12(C^{m,l,+}+C^{m,l,-})$ and $D^{m,l}=\frac12(D^{m,l,+}+D^{m,l,-})$ into unitaries by replacing the $0$'s in the example above with $+1$ and $-1$.
  Since $H$ is $1$-sparse, we can define for each row $i$, the index $j_i$ for the non-zero entry in row $i$.
  If row $i$ has no non-zero entry, we set $j_i=i$.
  The unitaries are defined as follows:
  \begin{equation}
    C_{ij_i}^{m,l,\pm} = \begin{cases}
      C^{m,l}_{ij_i}&\text{if }C^{m,l}_{ij_i}\ne 0\\
      \pm1&\text{if }C^{m,l}_{ij_i} =0\\
    \end{cases},
    \qquad
    D_{ii}^{m,l,\pm} = \begin{cases}
      D^{m,l}_{ii}&\text{if }D^{m,l}_{ii}\ne 0\\
      \pm1&\text{if } D^{m,l}_{ii}=0\\
    \end{cases}
  \end{equation}
  The $D^{m,l,\pm}$ are unitary since they are diagonal with only powers of $\zeta_{2^k}$ on the diagonal.
  We can write 
  \begin{equation}
    C^{m,l,\pm}=\sum_{i:i=j_i} C_{ii}^{m,l,\pm}\ketbrab{i} + \sum_{i:i\ne j_i}\left(C_{ij_i}^{m,l,\pm}\ketbra{i}{j_i} + (C_{ij_i}^{m,l,\pm})^*\ketbra{j_i}{i}\right),
  \end{equation}
  where $C_{ii}^{m,l,\pm}\in\{1,-1\}$ for $i=j_i$ (which implies $H_{ij_i}=C_{ij_i}^{m,l}=0$) and $\abs*{C_{ij_i}^{m,l,\pm}}=1$ for all $i$.
  Thus, the $C^{m,l,\pm}$ are also unitary.
  The efficient implementation of these unitaries is analogous to \cite{KL21}.
\end{proof}

\begin{proof}[Proof of \cref{lem:ESH}]
  Let $H$ be a $d$-sparse Hamiltonian on $n$ qubits ($d=\poly(n)$) in $\ZZ[\zeta_{2^k}]$ with coefficients of at most $L$ bits, i.e., $H$ has at most $d$ non-zero entries in each row and column.
  By \cite[Lemma 4.4]{BCCKS14}, we can write $H=\sum_{j=1}^{d^2}H_j$, where each $H_j$ is $1$-sparse and a query to any $H_j$ can be simulated with $O(1)$ queries to $H$.
  This decomposition assumes that the graph of $H$ is bipartite, which holds without loss of generality because $\Xg\otimes H$ has the same singular values as $H$.
  The idea is to construct a $d^2$-coloring of the edges of the graph of $G$, and letting each $H_j$ consist of the edges of color $j$.
  Hence, each $H_j$ is also in $\ZZ[\zeta_{2^k}]$ with coefficients of at most $L$ bits.
 
  Next, we prepare the $\QMAo$-verifier.
  By \cref{lem:sparse-sum-of-unitaries}, we can write 
  \begin{equation}\label{eq:H-sum-of-unitaries}
    H=\sum_{j=1}^{d^2}\sum_{l=0}^{L-1} \sum_{i=1}^{2^{k+1}}2^{l-1} U^{l,i}_j
  \end{equation}
  As in \cref{lem:kELH}, we prepare 
  \begin{equation}
    \ket{H}_{\calA} \propto \sum_{l=0}^{L-1}\sum_{j=1}^{d^2}\sum_{i=1}^{2^{k+1}}2^{l-1}\ket{i,j,l},
  \end{equation}
  and then apply the unitary $U_{\calA\calB} = \sum_{ijl} \ketbrab{i,j,l}_{\calA}\otimes (U^{l,i}_j)_{\calB}$ with the proof in register $\calB$.
  Soundness follows analogously to \cref{lem:kELH}, noting $L = O(\log(n))$ (see \cref{def:F-zeta}) and that flipping signs increases the norm by at most $\poly(n)$ due to sparsity.
\end{proof}

\subsection{QMA(2)}

It is straightforward to generalize \cref{thm:qma-g2,thm:cyclotomic-gateset} to $\QMA(2)$.
Simulating complex gates with real gates in the $\QMA(2)$ setting requires special consideration, for which we refer to \cite{McK13}.

\begin{theorem}\label{thm:qma(2)-g2}
  For any finite gateset $\calG$ in $\QQ(\iu)$, it holds that $\QMAo^{\calG}(2) \subseteq \QMAo^{\calG_2}(2)$.
\end{theorem}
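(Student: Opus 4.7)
The plan is to lift the argument of \cref{thm:qma-g2} to the $\QMAo(2)$ setting, invoking McKague's reduction of complex $\QMA(k)$ to real $\QMA(k)$ \cite{McK13} to manage the bipartite proof structure. Starting from a $\QMAo^{\calG}(2)$-verifier $V$ over a finite $\QQ(\iu)$-gateset $\calG$, I apply \cref{lem:cyclotomic-simulation} with $k=2$ to produce a rational verifier $\widetilde V = \Psi(V)$ that expects a real-encoded proof, and then realize each rational gate of $\widetilde V$ in $\calG_2$ via \cref{thm:AGKMMR24} combined with \cref{lem:simulate-Q(i)}.

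The wrinkle, compared with the single-prover case, is that the real encoding $\ket{v(\psi_1\otimes\psi_2)}$ of a complex product state is \emph{not} a product state across the bipartition in the real Hilbert space, so the provers cannot simply send the joint encoding directly. Following McKague, each prover $i$ is instead asked to send the real-encoded state of their own piece, $\ket{v(\psi_i)}\in\calB_i\otimes\calZ_i$, which \emph{is} a product across $(\calB_1\calZ_1):(\calB_2\calZ_2)$; the verifier then implements $\widetilde V$ by a rational circuit that manipulates the two separate $\iu$-qubits according to the multiplication rule $\iu\cdot\iu=-1$, mirroring McKague's reduction. The auxiliary bookkeeping circuit is built from CNOT, Pauli, and Toffoli gates (all in $\calG_2$) together with the gates of $\widetilde V$, so the overall simulation stays in $\calG_2$, and \cref{lem:cyclotomic-simulation} guarantees that the acceptance probability on honest inputs matches that of $V$, giving perfect completeness.

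For soundness, a cheating pair sends an arbitrary complex product state $\ket{\phi_1}\otimes\ket{\phi_2}$ in the enlarged registers. Decomposing each $\ket{\phi_i}=\ket{\phi_i^R}+\iu\ket{\phi_i^I}$ into real and imaginary parts yields
\begin{equation*}
  \ket{\phi_1}\otimes\ket{\phi_2}=\sum_{a,b\in\{R,I\}}\iu^{[a=I]+[b=I]}\ket{\phi_1^a}\otimes\ket{\phi_2^b},
\end{equation*}
a sum of four (unnormalized) real product states; since the simulating verifier is real-orthogonal, its acceptance projector $M$ is real, whence $\norm*{M(\ket{\phi_1}\otimes\ket{\phi_2})}^2\le 16\max_{a,b}\norm*{M(\ket{\phi_1^a}\otimes\ket{\phi_2^b})}^2$ by the triangle inequality and Cauchy--Schwarz. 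Pre-amplifying the soundness against \emph{real} product proofs to $1/(16\poly(n))$ via the Harrow--Montanaro product test \cite{HM13} (which preserves perfect completeness) therefore yields inverse-polynomial soundness against arbitrary complex product proofs.

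The main obstacle is realizing the ``imaginary-register bookkeeping'' circuit in $\calG_2$ while keeping honest inputs on a product state across the bipartition and correctly implementing $\widetilde V$'s complex-valued action. This is exactly McKague's construction at $k=2$; the only new ingredient is the exact $\calG_2$-synthesis of the rational gates of $\widetilde V$, for which \cref{thm:AGKMMR24} (when coefficients lie in $\DD$) and \cref{lem:simulate-Q(i)} (otherwise, via postselected LCU that preserves perfect completeness by accepting on preparation failure) apply.
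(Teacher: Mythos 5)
Your overall route is the one the paper intends: the paper's own treatment of this theorem is a short remark that reduces it to the single-prover argument of \cref{thm:qma-g2} plus McKague's handling of the two-prover real encoding \cite{McK13}, and your proposal fills this in the same way (each prover sends its own real encoding, the verifier does the $\iu$-bookkeeping \`a la McKague, rational gates are compiled into $\calG_2$ via the paper's synthesis lemmas). The completeness side and the real/imaginary decomposition of a cheating complex product proof into four real product terms (your factor $16$; in fact $2$ suffices since the acceptance operator of a real verifier is real symmetric) are fine.

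There is, however, a genuine gap in the soundness bootstrapping as you ordered it. You first realify the verifier and then ``pre-amplify the soundness against real product proofs via the product test''. The Harrow--Montanaro amplification \cite{HM13} bounds the amplified verifier's acceptance in terms of the \emph{base} verifier's soundness against arbitrary \emph{complex} product states: a many-copy state passing the product test is only guaranteed to be close to some complex product state, so the analysis feeds complex product proofs back into the base verifier. For your realified base verifier the only bound available at that point is soundness $\le 1-1/\poly(n)$ against \emph{real} product proofs, and your factor-$16$ step converts real-soundness into complex-soundness only once the former is already below $1/16$ --- so the argument is circular as stated. The repair is to amplify \emph{before} realifying: adjoin $\calG_2$ to $\calG$ (still a finite gateset in $\QQ(\iu)$, and now the product test runs with perfect completeness), amplify the original complex protocol to soundness $1/(16\poly(n))$ against complex product proofs, then apply McKague's transform (which does not push the optimum over real product proofs above the complex optimum), then your factor-$16$ step, and only then the gate synthesis. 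On that last point, note also that \cref{lem:simulate-Q(i)} lands rational gates in $\calG_4$, not $\calG_2$; as in the proofs of \cref{thm:bqp1,thm:qma-g2} one needs a second application of \cref{lem:cyclotomic-simulation} to map $\calG_4$ into $\ZZ[1/2]$ before \cref{thm:AGKMMR24} gives exact $\calG_2$ circuits.
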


\begin{theorem}\label{thm:qma(2)-cyclotomic-gateset}
  For any finite gateset $\calG$ in $\QQ(\zeta_{2^k})$ with $k\in\NN$, it holds that $\QMAo^{\calG}(2) \subseteq \QMAo^{\calG_{2^k}}(2)$.
\end{theorem}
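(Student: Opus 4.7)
The plan is to mirror the proof of \cref{thm:cyclotomic-gateset}, leveraging the fact that \cref{lem:simulate-Q(i)} simulates each gate in $\QQ(\zeta_{2^k})$ by a $\calG_{2^k}$-circuit that is oblivious to the input state: it acts on the original register plus a fresh block of simulation ancillas, succeeds with probability $1-1/\exp(m)$, and accepts on postselection failure. Given a $\QMAo^\calG(2)$-verifier $V=U_T\cdots U_1$ with proof register $\calB=\calB_1\otimes\calB_2$, I will first amplify soundness to $1/\exp(n)$ (available since $\calG_{2^k}\supseteq\calG_2$ and the product test has perfect completeness, as noted after \cref{eq:G0}), then replace each $U_t$ by its $\calG_{2^k}$-simulation from \cref{lem:simulate-Q(i)}, using a fresh simulation ancilla register $\calA_t$. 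The resulting verifier $V'$ uses only $\calG_{2^k}$-gates and accepts on any postselection failure.

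Completeness follows because on an honest product state $\ket{\psi_1}\otimes\ket{\psi_2}$, each gadget either reproduces $U_t$ exactly on success or accepts immediately on failure, so perfect completeness propagates end-to-end. For soundness, the overall simulation succeeds with probability at least $1-T\cdot 2^{-\Omega(m)}$; conditioned on success, $V'$ acts identically to $V$ on any input. Since the simulation ancillas $\calA_t$ are disjoint from $\calB_1,\calB_2$ and the LCU Pauli-corrections span $\calB_1\otimes\calB_2$ only where $U_t$ itself does, no new cross-register entanglement is introduced: the product-state promise on the proof is exactly the one $V$ was designed for. Hence $V'$ rejects every product-state proof with probability at least the soundness of $V$ minus $1/\exp(n)$, which remains $1/\poly(n)$.

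The main — and essentially only — issue requiring care is checking that the product structure of the proof is preserved through the simulation. Since each gadget acts as a unitary on $(\calA\otimes\calB)\otimes\calA_t$ that reduces to $U_t$ on the accepted branch, and the ancillas $\calA_t$ are freshly allocated rather than reused, this check is direct, and the overall argument of \cref{thm:cyclotomic-gateset} carries over essentially verbatim to the two-prover setting. I do not anticipate any substantive new obstacle beyond this routine bookkeeping.
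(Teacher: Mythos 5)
Your proposal is correct and follows essentially the same route as the paper: the paper proves this theorem by observing that the simulation of \cref{thm:cyclotomic-gateset} via \cref{lem:simulate-Q(i)} acts only on the verifier's ancillas (accepting on postselection failure) and never re-encodes the proof, so perfect completeness and the product-state promise are preserved, with the exponentially small failure probability absorbed into the inverse-polynomial soundness gap. The extra ``special consideration'' the paper flags (McKague-style real simulation) pertains only to the $\calG_2$ statement in \cref{thm:qma(2)-g2}, not to this theorem, so your bookkeeping argument is exactly what is needed.
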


Analogously to \cref{lem:ESH}, we get for $\QMAo(2)$:
\begin{lemma}\label{lem:ESSH}
  $\ESSH^{\QQ(\iu)} \in \QMA_{1}^{\calG_{2}}(2)$ and $\ESSH_\epsilon^{\QQ(\zeta_{2^k})} \in \QMA_{1,1-\epsilon'}^{\calG_{2^k}}(2)$ with $\epsilon\le n^{-O(1)}$ and $\epsilon' \in\epsilon^{O(1)}$ for all $k\in\NN$, where $n$ is the number of qubits of the instance.
\end{lemma}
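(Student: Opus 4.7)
The plan is to adapt the proof of \cref{lem:ESH} essentially verbatim to the $\QMA(2)$ setting. The crucial observation is that the $\QMA_1$-verifier designed there never uses any structural property of the proof register beyond its dimension: it simply applies the non-unitary operator $H$ to the proof via the linear-combination-of-unitaries (LCU) technique and rejects upon successful application. So the identical verifier works equally well when the proof is promised to be a product state, and completeness/soundness transfer immediately by reading off the $\QMA(2)$ promises of $\ESSH$ (see \cref{def:ESSH}).

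Concretely, I would first decompose the input sparse Hamiltonian $H$ exactly as before: write $H=\sum_{j=1}^{d^2}H_j$ using the bipartite-graph edge-coloring of \cite{BCCKS14} to obtain $1$-sparse pieces with entries still in $\ZZ[\zeta_{2^k}]$ (after the harmless reduction $H\mapsto \Xg\otimes H$), and then invoke \cref{lem:sparse-sum-of-unitaries} to write
\begin{equation}
  H = \sum_{j=1}^{d^2}\sum_{l=0}^{L-1}\sum_{i=1}^{2^{k+1}} 2^{l-1}\,U^{l,i}_j,
\end{equation}
where the $U^{l,i}_j$ are $1$-sparse unitaries efficiently implementable over $\calG_{2^k}$ and $L=O(\log n)$ (by \cref{def:F-zeta}). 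The verifier then prepares the coefficient state $\ket{H}_\calA\propto\sum_{i,j,l}2^{l-1}\ket{i,j,l}$ by \cref{lem:integer-state-cyclotomic}, accepting if the preparation fails so as to preserve perfect completeness. It applies the controlled unitary $U_{\calA\calB}=\sum_{i,j,l}\ketbrab{i,j,l}_\calA\otimes(U^{l,i}_j)_\calB$ to the combined proof register $\calB=\calB_1\calB_2$ holding $\ket{\psi_1}\otimes\ket{\psi_2}$, and measures $\calA$ in the Hadamard basis, rejecting on outcome $\ket{0_H}$ (i.e.\ rejecting iff $H$ was successfully applied, as in \cref{rem:nonunitary}).

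Completeness and soundness then follow directly from the $\ESSH$ promises: in the YES-case, the two Merlins send the tensor factors of a product state $\ket{\psi}=\ket{\psi_1}\otimes\ket{\psi_2}\in\Null(H)$, so $H\ket{\psi}=0$ and the verifier rejects with probability exactly $0$; in the NO-case, for every product proof we have $\norm{H(\ket{\psi_1}\otimes\ket{\psi_2})}\ge\epsilon$, and the same rejection-probability computation as in \cref{lem:kELH,lem:ESH} (with a $n^{-O(1)}$ loss from state preparation, sparsity, and the normalization of $\ket{H}$) gives rejection probability $\epsilon^{O(1)}$. For $\FF=\QQ(\iu)$, the $\calG_4$ circuit is simulated with $\calG_2$ via \cref{thm:qma(2)-g2}, which preserves perfect completeness because the product test of \cite{HM13} used in standard $\QMA(2)$ error reduction is itself one-sided; for $\FF=\QQ(\zeta_{2^k})$, the gateset $\calG_{2^k}$ suffices directly.

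I do not foresee a substantive obstacle: the only subtle point is checking that the gateset-simulation step of \cref{thm:qma(2)-g2} (which internally relies on error reduction) genuinely preserves the \emph{exactness} of the YES-case acceptance probability for $\QMA(2)$, but this is precisely what the perfect-completeness property of the product test buys us, and the analogous reasoning in \cref{lem:ESH} for single-prover $\QMA_1$ carries over unchanged.
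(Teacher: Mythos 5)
Your proposal is correct and matches the paper's approach: the paper proves this lemma simply by running the verifier of \cref{lem:ESH} on the product proof and noting the completeness/soundness conditions follow from the $\ESSH$ promises, with the $\QQ(\iu)$ case handled via \cref{thm:qma(2)-g2} exactly as you describe. Your elaboration of the LCU decomposition and the gateset-simulation subtlety is consistent with what the paper leaves implicit under ``analogously to \cref{lem:ESH}''.
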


\begin{theorem}\label{thm:QMA(2)}
  $\ESSH^{\QQ(\zeta_{2^k})}$ is $\QMAo^{\calG_{2^k}}(2)$-complete for all $k\in\NN$.
\end{theorem}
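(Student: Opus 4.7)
The plan is to prove hardness; containment in $\QMAo^{\calG_{2^k}}(2)$ follows directly from \cref{lem:ESSH}. Reducing from an arbitrary $L\in\QMAo^{\calG_{2^k}}(2)$ with verifier $V=U_T\cdots U_1$ on ancillas and two unentangled proof registers $R_1,R_2$, the goal is to build a sparse Hamiltonian $H$ in $\QQ(\zeta_{2^k})$ on a bipartition $(A,B)$ such that $H$ admits a product nullstate across $(A,B)$ iff $V$ accepts some $\ket{\psi_1}\ket{\psi_2}$ with probability exactly $1$, and otherwise every product state satisfies $\norm{H\ket{\psi_1}\ket{\psi_2}}\ge 1/\poly$.

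The natural starting point is the $2\hELH$ circuit-to-Hamiltonian construction from \cref{sec:2local} applied to $V$, which yields a $2$-local (hence sparse) Hamiltonian $H_V$ in $\QQ(\zeta_{2^k})$ whose nullspace is spanned by accepting history states $\ket{\psihist(\psi_1,\psi_2)}$. The natural $\ESSH$ bipartition is $A=R_1$, $B=R_2\cup R_\textup{anc}\cup R_\textup{clock}$. The core difficulty is that $\ket{\psihist(\psi_1,\psi_2)}$ is typically entangled across $(A,B)$, since $V$ mixes $R_1$ with the other registers during the computation; so $H_V$ alone may not admit any product nullstate even in the YES case.

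Following Chailloux and Sattath \cite{CS12}, the plan is to augment the construction with a duplicate register $R_1'$ placed on the $B$ side, run the $2\hELH$ construction using $R_1'$ in place of $R_1$ so that all computation lives entirely in $B$, and couple $R_1$ to $R_1'$ via sparse consistency terms that vanish precisely on product states in which $R_1'$ carries the same pure state as $R_1$. These consistency terms can be built from a phase-kickback variant of the Harrow-Montanaro product test \cite{HM13}, which has perfect completeness on matched pure states, encoded as a Hamiltonian through the split gadget of \cref{sec:2local} with entries in $\ZZ[\zeta_{2^k}]$. In the YES case, an honest prover 1 then sends $\ket{\psi_1}_A$ and prover 2 sends $\ket{\psi_1}_{R_1'}\otimes\ket{\psi_2}_{R_2}\otimes\ket{0^{n_1}}_{R_\textup{anc}}\otimes\ket{\wh0}_{R_\textup{clock}}$, producing a state that is product across $(A,B)$ and, after propagation entirely inside $B$, is annihilated exactly by the combined $H$.

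The main obstacle is establishing NO-case soundness $\sigma_1(H)\ge 1/\poly$ over product states. For any product $\ket{\phi_A}\otimes\ket{\phi_B}$ with $H\ket{\phi_A}\ket{\phi_B}=0$, the consistency gadget combined with the Nullspace Connection Lemma (\cref{lem:connect}) should force $R_1'$ at time $0$ to carry the same pure state as $\ket{\phi_A}$, reducing the analysis to the $2\hELH$ soundness of \cref{thm:2LH-complete} applied to the honest product proof $\ket{\phi_A}\otimes\ket{\psi_2}$, which inherits $1/\poly$ soundness from the $\QMA_1(2)$-verifier. The delicate technical point is to arrange the consistency gadget so that the composed $1/\poly$ gap survives the Projection Lemma (\cref{lem:proj}), while simultaneously keeping all matrix entries in $\ZZ[\zeta_{2^k}]$ with $\poly$-bounded common denominator as required by \cref{def:F-zeta}.
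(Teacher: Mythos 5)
Your containment argument (via \cref{lem:ESSH}) matches the paper. For hardness, however, the paper does not build a new reduction at all: it invokes the Chailloux--Sattath construction \cite{CS12} as a black box. That construction is engineered precisely so that the history state of a $\QMA(2)$ verifier is \emph{separable across the prescribed cut}, and the resulting sparse Hamiltonian is positive semidefinite with entries in $\QQ(\zeta_{2^k})$ when the verifier uses $\calG_{2^k}$ gates; with perfect completeness the accepting history state has energy exactly $0$, hence lies in the nullspace (this is the observation the paper spells out in the proof of \cref{thm:AESSH}), and the CS12 soundness analysis already gives the $1/\poly$ NO-case bound over product states. No duplicate register, no consistency gadget, and none of the $2$-local machinery of \cref{sec:2local} is needed (sparsity suffices, so the $2\hELH$ construction with its one-hot clock and Projection Lemma bookkeeping is pure overhead here). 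Attributing the ``duplicate register plus consistency terms'' idea to \cite{CS12} is also inaccurate; that is not their mechanism.

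Beyond the detour, your sketch has genuine gaps. (i) Completeness as written fails: the state you have prover~2 send, $\ket{\psi_1}_{R_1'}\otimes\ket{\psi_2}\otimes\ket{0^{n_1}}\otimes\ket{\wh0}$, is \emph{not} annihilated by $H$ --- the propagation terms give the bare time-$0$ component strictly positive energy, and there is no dynamical ``after propagation'': nullspace membership is a static condition on the submitted state. The honest $B$-side witness must be the full history state (which is a single pure state on $B$ and product with $R_1$), and the consistency term must then be attached only to the $\ket{\wh0}$ clock component. (ii) The consistency gadget itself is misidentified: what is needed is a term (e.g.\ an antisymmetrization/swap-type term between $R_1$ and $R_1'$) whose kernel, on states product across $(A,B)$ but arbitrarily entangled inside $B$, forces the time-$0$ state of $R_1'$ to equal $\ket{\phi_A}$ and to be unentangled from $R_2$; the Harrow--Montanaro product test \cite{HM13} tests productness given two copies and is not this object, ``phase kickback'' plays no role, and the split gadget of \cref{sec:2local} is unrelated. (iii) Most importantly, the NO-case bound cannot be ``inherited from \cref{thm:2LH-complete}'': in the NO case the $B$-side circuit Hamiltonian generically still has \emph{exact} nullstates arising from entangled proofs on $(R_1',R_2)$, since $\QMAo(2)$ soundness only rules out product proofs. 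The entire $1/\poly$ lower bound therefore has to come from a robust, separability-aware argument showing that low energy of the consistency term plus the Kitaev terms forces the effective proof close to a product state --- exactly the step you label ``delicate'' and leave unproven. As it stands, the hardness direction is not established.
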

\begin{proof}
  Containment follows from \cref{lem:ESSH}, and hardness from \cite{CS12}.
\end{proof}

We note that the key difference to the containment proof in \cite{CS12}, is that we directly (try to) apply the sparse Hamiltonian directly to the proof, whereas \cite{CS12} uses Hamiltonian simulation (see \cite{LC17} for an asymptotically optimal algorithm), combined with phase estimation (see \cite{NC10}), which are both approximate techniques.
Furthermore, to achieve $L$ bits of precision using phase estimation, one needs to raise the unitary to the $2^L$-th power, or, in this case, simulate the Hamiltonian for time $2^L$, which generally takes exponential time \cite{LC17}.

Therefore, there was, to the best of our knowledge, no prior complete Hamiltonian problem for $\QMA(2)$ with a sub-polynomial promise gap (i.e. $n^{-\omega(1)}$).
We can now show a complete problem for every ``precision level'' of $\QMA(2)$, with the caveat that the promise gap is always close to $1$.
The general case is left open for future work.

\begin{theorem}\label{thm:AESSH}
  $\AESSH_\epsilon$ is complete for $\epsilon\mhyphen\QMA(2)\coloneq\bigcup_{c\in 1-\epsilon^{\omega(1)},s\in1-\epsilon^{O(1)}}\QMA_{c,s}(2)$ with $\epsilon \in n^{-\Omega(1)}$.
\end{theorem}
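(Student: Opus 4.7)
First, for containment, I would show $\AESSH_\epsilon \in \epsilon\mhyphen\QMA(2)$ by generalising the verifier of \cref{lem:ESSH} to the approximate setting. Given the product proof $\ket{\psi_1}\otimes\ket{\psi_2}$, the verifier uses the sparse LCU decomposition of \cref{lem:sparse-sum-of-unitaries} to attempt to apply $H$ to the proof register, rejecting exactly when the LCU application succeeds (the mechanism of \cref{rem:nonunitary}). The rejection probability then equals $\norm{H\ket\psi}^2/\poly(n)$ up to polynomial factors inherited from integer state preparation. In the YES-case, $\norm{H\ket\psi}\le\epsilon^{\omega(1)}$ makes the rejection probability at most $\epsilon^{\omega(1)}$, so completeness $c \ge 1-\epsilon^{\omega(1)}$; in the NO-case, $\norm{H\ket\psi}\ge\epsilon$ for every product state makes the rejection probability at least $\epsilon^2/\poly(n)\in\epsilon^{O(1)}$, so soundness $s \le 1-\epsilon^{O(1)}$, placing the verifier inside $\epsilon\mhyphen\QMA(2)$.

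For hardness, let $L\in\QMA_{c,s}(2)$ with $1-c\in\epsilon^{\omega(1)}$ and $1-s = \epsilon^a$ for some constant $a$. Since the promise gap $c-s\ge \epsilon^a/2$ is inverse polynomial in $n$, Harrow-Montanaro amplification via the product test applies: running $k=\poly(n)$ copies yields a new verifier with $1-s'\ge\epsilon\cdot n^p$ for any desired constant $p$, while $1-c'$ grows by at most a factor of $k$ and hence stays in $\epsilon^{\omega(1)}$. I then apply the Chailloux-Sattath \cite{CS12} circuit-to-Hamiltonian reduction to the amplified verifier, obtaining a sparse positive-semidefinite Hamiltonian $H$ on $n'=\poly(n)$ qubits with $\norm{H}\le n'^q$ such that (YES) the product witness coming from an accepting proof has energy at most $(1-c')/\poly(n')\in\epsilon^{\omega(1)}$, and (NO) every product state has energy at least $(1-s')/\poly(n')\ge \epsilon\cdot n^{p-r}$, where $q$ and $r$ are construction-dependent constants.

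Picking $p\ge q+r$ in the amplification step, the rescaled Hamiltonian $\tilde H = H/\norm H$ satisfies $\norm{\tilde H}\le 1$, and combining the two standard bounds for positive-semidefinite operators of norm at most one, namely $\norm{\tilde H\ket\psi}^2 \le \bra\psi\tilde H\ket\psi$ and $\norm{\tilde H\ket\psi}\ge\bra\psi\tilde H\ket\psi$, gives $\norm{\tilde H\ket\eta}\le\epsilon^{\omega(1)}$ in the YES-case and $\norm{\tilde H\ket\psi}\ge \epsilon$ for all product $\ket\psi$ in the NO-case, exhibiting a valid $\AESSH_\epsilon$-instance. The main obstacle I expect is keeping the polynomial bookkeeping consistent: the Chailloux-Sattath constants $q,r$, the amplification parameter $p$, and the input-size scaling $n\mapsto n'$ must all be chosen together so that both the super-polynomial YES slack and the polynomial NO gap survive each of the amplification, Hamiltonian construction, and rescaling steps. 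A subsidiary subtlety is that the amplified soundness analysis must go through for \emph{product} rather than arbitrary states, which is precisely what the product-state analysis of \cite{CS12} composed with the Harrow-Montanaro product test provides.
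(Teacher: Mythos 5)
Your containment argument is the paper's own: reuse the LCU-based verifier of \cref{lem:ESH}/\cref{lem:ESSH}, reject exactly when the non-unitary application of $H$ succeeds, and observe that the rejection probability is sandwiched between $\norm{H\ket\psi}$-dependent quantities up to $\poly(n)\le\epsilon^{-O(1)}$ factors, giving acceptance $\ge 1-\epsilon^{\omega(1)}$ in the YES-case and $\le 1-\epsilon^{O(1)}$ in the NO-case. Your hardness argument also rests on the same ingredient as the paper, namely the Chailloux--Sattath circuit-to-Hamiltonian reduction applied to a $\QMA_{c,s}(2)$ verifier, with the history state giving energy $(1-c)/\poly(N)\le\epsilon^{\omega(1)}$ and every product state giving energy $(1-s)^{O(1)}/\poly(N)\ge\epsilon^{O(1)}$ in the NO-case.

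The one place you diverge is the Harrow--Montanaro amplification step, which the paper does not perform: it simply accepts a NO-case bound of $\epsilon^{O(1)}$ rather than $\epsilon$, which is harmless because $\epsilon\mhyphen\QMA(2)$ and the family $\AESSH_\epsilon$ are invariant under replacing $\epsilon$ by $\epsilon^{O(1)}$ (both $\epsilon^{\omega(1)}$ and $\epsilon^{O(1)}$ are unchanged). Your attempt to force the NO-bound to be literally $\epsilon$ does not quite work in all regimes: you ask for amplified soundness $1-s'\ge\epsilon\cdot n^{p}$ with $p\ge q+r$, but $1-s'$ is a probability bounded by $1$, so this is only possible when $\epsilon\le n^{-(q+r)}$; the hypothesis $\epsilon\in n^{-\Omega(1)}$ does not guarantee that $\epsilon$ is small compared to the specific polynomial overheads of the Chailloux--Sattath construction (e.g.\ $\epsilon=n^{-1/2}$ with $q+r>1/2$). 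Two further small points: the product-state soundness of the CS12 Hamiltonian scales as $(1-s)^{O(1)}/\poly(N)$, not linearly in $(1-s)$ as you wrote (immaterial for the paper's bookkeeping, but it would force you to increase $p$ further); and the near-perfect-completeness preservation of the product-test amplification, while plausible since the product test has perfect completeness, is an extra claim you would have to justify. The cleaner route is the paper's: skip amplification and absorb the polynomial loss into the (rescaling-invariant) definition of the class.
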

\begin{proof}
  For containment, observe that applying $H$ to the proof $\ket{\psi}$ in \cref{lem:ESH} succeeds with a probability of at least $\norm{H\ket{\psi}}/\poly(n)$ and at most $\norm{H\ket{\psi}}\cdot \poly(n)$.
  In the YES-case, the verifier therefore accepts with probability at least $1-\epsilon^{\omega(1)}$, and in the NO-case, it accepts with probability at most $1-\epsilon^{O(1)}$.

  Hardness follows again from \cite{CS12}, noting that the Hamiltonian $H$ constructed for an $N$-gate $\QMA_{c,s}(2)$-verifier is positive semidefinite, and a history state for a proof accepted with probability $c$, has energy $(1-c)/\poly(N)\le \epsilon^{\omega(1)}$.
  In the NO-case, we have $\braketb{\psi}{H}\ge (1-s)^{O(1)}/\poly(N)\ge\epsilon^{O(1)}$ for all $\ket{\psi}=\ket{\psi_1}\otimes\ket{\psi_2}$.
\end{proof}

\subsection{Clique homology}

\begin{theorem}\label{thm:GCH}
  $\GCH$ is $\QMAo^{\calG_2}$-complete.
\end{theorem}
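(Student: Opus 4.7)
The plan is to prove containment and hardness separately, leveraging existing machinery on both sides.

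\textbf{Containment ($\GCH\in\QMAo^{\calG_2}$).} The combinatorial Laplacian $\Delta^k$ of a vertex-weighted clique complex on $n$ vertices is, by \cref{lem:laplacian}, row-sparse with at most $\poly(n)$ nonzero entries per row, each of the form $\pm w(u)w(v)$ or $\sum_u w(u)^2$. Since vertex weights lie in $\QQ$ with $\poly(n)$-bounded numerators and denominators, $\Delta^k$ is a sparse positive semidefinite Hamiltonian whose entries fit the bit-complexity requirements of \cref{def:F-zeta} for $\FF=\QQ\subseteq\QQ(\iu)$. The promise in \cref{def:GCH} is exactly the $\ESH^{\QQ}$ promise: $\sigma_1(\Delta^k)=0$ iff $H^k\ne 0$, and otherwise $\sigma_1(\Delta^k)\ge\epsilon(n)$. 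Therefore \cref{lem:ESH} immediately places $\GCH$ in $\QMAo^{\calG_2}$. The only small point to check is that the row/column oracle required for sparsity is efficient, but this follows because adjacency and weights of the graph are given explicitly and the Laplacian formula of \cref{lem:laplacian} is locally computable from the graph description.

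\textbf{Hardness ($\QMAo^{\calG_2}$-hard).} I will re-run the King--Kohler reduction \cite{KK24} but instantiate every circuit-to-Hamiltonian gadget using only gates from $\calG_2=\{\Xg,\CXg,\CCXg,\Hg\otimes\Hg\}$. Starting from an arbitrary language $L\in\QMAo^{\calG_2}$ with verifier $V_x$ over $\calG_2$, the KK24 framework constructs a weighted graph $G_x$ whose $k$-th Laplacian is block-equivalent to a circuit-to-Hamiltonian Hamiltonian $H_{V_x}$ with $\lmin(H_{V_x})=0$ iff $x\in\Ayes$ and $\lmin(H_{V_x})\ge 1/\poly$ otherwise. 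The critical observation is that the KK24 gadgets express each gate's propagation term through products of vertex weights, so in order to embed a gate $U$ one needs $U$'s entries to be realisable as ratios of rational products of weights. Since every $U\in\calG_2$ has entries in $\ZZ[1/2]\subseteq\QQ$, the gadget weights can be chosen rational with $\poly(n)$-bounded bit complexity, which is precisely what $\GCH$ allows. I expect this to be essentially a repackaging of the gadget library of \cite{KK24} after verifying $\calG_2$-compatibility gate-by-gate.

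\textbf{Main obstacle.} The only nontrivial issue is the Hadamard, which in $\calG_2$ only appears as the pair $\Hg\otimes\Hg$. Individually, a $\Hg$-gate has entries $\pm1/\sqrt{2}$, which are irrational and cannot be embedded by rational weights. However $\Hg\otimes\Hg$ has entries $\pm1/2\in\QQ$, so treating two Hadamards as a single two-qubit ``block'' gate in the KK24 propagation gadget circumvents irrationality. Care is required to keep the two Hadamards together in the circuit decomposition; this can be arranged because $\calG_2$-circuits always apply Hadamards in pairs to begin with, so a linear-overhead rewriting of the verifier preserves this invariant. After this adaptation, the resulting graph has polynomially bounded rational weights, and the promise gap of $V_x$ translates to an inverse polynomial spectral gap of $\Delta^k$ by the KK24 analysis, giving $\QMAo^{\calG_2}$-hardness of $\GCH$.
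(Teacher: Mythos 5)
Your containment argument is essentially the paper's: the Laplacian is sparse by \cref{lem:laplacian}, so the LCU-based sparse-Hamiltonian verifier (\cref{lem:sparse-sum-of-unitaries}, \cref{lem:ESH}) places $\GCH$ in $\QMAo^{\calG_2}$. That half is fine.

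The hardness half has a genuine flaw. Your ``critical observation'' --- that the King--Kohler gadgets encode a gate's entries through products of vertex weights, so that any gate with entries in $\ZZ[1/2]$ can be realised by choosing rational weights --- misreads the role of the weights. In the weighted clique-complex construction the weights do \emph{not} affect the homology at all; they are only used to lower-bound the spectral gap of the Laplacian. Whether the kernel is nonempty is a purely combinatorial/homological statement, so the gate data must be encoded in the cycles themselves, and the KK24 machinery can only embed local terms that are projectors onto states proportional to \emph{integer} superpositions of computational basis states. Consequently ``rational weights with poly bit complexity'' buys you nothing for embedding $\pm 1/2$ entries, and your plan of feeding the $\calG_2$ verifier directly into the KK24 propagation gadgets does not go through as stated. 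Treating $\Hg\otimes\Hg$ as a single two-qubit block gate also does not obviously fit the available gadget library (projectors onto small integer superpositions on few qubits), and you never address the Toffoli, whose naive propagation term acts on too many qubits for those gadgets.

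The paper instead routes the hardness through its own frustration-free $4\hQSAT^{\QQ}$ construction for $\calG_2$ (\cref{thm:4SAT}, \cref{sec:4SAT-G2}), engineered precisely so that \emph{every} local term is a rank-one projector onto an integer superposition of at most a few basis states: $\Hg\otimes\Hg$ is split across two transitions as $\sqrt2\,\Hg$ followed by $\tfrac1{\sqrt2}\Hg$ (each giving integer null vectors), and $\CCXg$ is implemented by a conditional split of the computation path; a one-hot/qu-$5$-it clock keeps the terms $4$-local. Only then are these projectors embedded into clique homology via (an extension of) the KK24 state gadgets, with a join construction and explicit verification (\cref{sec:4SAT-G2:CH}). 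Some version of this intermediate step --- producing a QSAT instance whose terms are integer-superposition projectors --- is what your proposal is missing; without it the reduction to $\GCH$ does not follow from the $\calG_2$ verifier alone.
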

\begin{proof}
  Containment follows from \cref{lem:sparse-sum-of-unitaries} since the Laplacian is sparse by \cref{lem:laplacian}.
  Hardness follows from a slight modification of the hardness proof of King and Kohler~\cite{KK24}.

\newcommand\pyth{\mathrm{Pyth}}
They show $\GCH\in\QMA$ and $\GCH$ is $\QMAo^{\calG_\pyth}$-hard, where $\calG_{\mathrm{Pyth}} = \{\CXg,U_\pyth\}$ with the Pythagorean gate $U_\pyth = \frac15\begin{psmallmatrix}
 3&4\\-4&3 
\end{psmallmatrix}$.
By \cref{thm:qma-g2}, $\QMAo^{\calG_\pyth} \subseteq \QMAo^{\calG_2}$, but we are not aware of nontrivial lower bounds for $\QMAo^{\calG_\pyth}$, since it is unclear to us how to perform classical computation with perfect completeness using $\calG_\pyth$, although the gateset is certainly universal.
It is straightforward to modify their construction to work with the gateset $\calG_2$.
We show in \cref{sec:4SAT-G2:CH} that our $4\hQSAT$ construction in \cref{thm:4SAT} for $\calG_2$ can be embedded into clique homology.
\end{proof}

\begin{theorem}\label{thm:CH}
  $\CH$ is $\PSPACE$-complete.
\end{theorem}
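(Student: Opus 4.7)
The strategy splits into containment and hardness. For containment in $\PSPACE$, I would reduce the problem to deciding whether the combinatorial Laplacian $\Delta^k$ of the clique complex $\Cl(G)$ has a nontrivial kernel, \ie\ whether $\Delta^k$ is singular. By \cref{lem:laplacian} this Laplacian is an integer matrix with entries of absolute value at most $O(n)$ and implicit dimension $N\le\binom{n}{k+1}$, whose individual entries are computable in polynomial time from $G$ and $k$. Testing singularity of such a succinctly described matrix lies in $\PSPACE$ by standard linear algebra: one can execute Gaussian elimination, or compute the characteristic polynomial via the Berkowitz algorithm, in space polynomial in $\log N$ and the bit-complexity of the entries, which is polynomial in the input size.

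For hardness, the plan is to combine the $\GCH$-hardness construction of \cref{thm:GCH} with the characterization $\PSPACE = \QMAo^{\calG_2}$ with inverse-exponentially small promise gap \cite{Li22}. Starting from an arbitrary $\PSPACE$ language, view it as an inverse-exponentially sound $\QMAo^{\calG_2}$ verifier, feed this verifier into the reduction established for \cref{thm:GCH}, and obtain a (weighted) graph $G$ whose clique complex has nontrivial $k$-th homology exactly in the YES-case. The crucial observation is that the existence of a hole --- whether $H^k=\Kernel d^k/\Image d^{k-1}$ is trivial --- depends purely on the combinatorial clique structure of $G$ and is completely independent of the vertex or edge weights; the weights in \cref{thm:GCH} are used \emph{only} to certify the inverse polynomial spectral gap of $\Delta^k$ demanded by the definition of $\GCH$. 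Since $\CH$ carries no gap promise, the same graph $G$, stripped of its weights, is a valid $\CH$ instance with the same YES/NO answer.

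The main obstacle is the containment direction in the regime where the spectral gap of $\Delta^k$ may be doubly exponentially small, so one cannot simply plug $\CH$ into a quantum proof system with inverse-exponential gap and quote \cite{Li22}. The correct route is the purely classical one via singularity testing of succinctly described integer matrices in $\PSPACE$. On the hardness side, the subtlety is checking that the King--Kohler-style reduction underlying \cref{thm:GCH} really has this ``weights decouple from homology'' property, but this is immediate from the combinatorial definition of the coboundary map $d^k$ and the fact that weights enter the construction only through the inner product, \ie\ only through the Laplacian's spectrum and not through the chain complex itself.
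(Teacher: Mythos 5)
Your proposal is correct and follows essentially the same route as the paper: containment by deciding singularity of the succinctly described, exponentially large Laplacian via scaled-up parallel linear algebra (the paper cites Csanky's $\NC(\poly)=\PSPACE$ determinant computation; your Berkowitz route is the same idea, though note that plain Gaussian elimination would not run in polynomial space on an implicitly given exponential-size matrix), and hardness by combining Li's $\PSPACE=\QMA_{1,1-1/\exp(n)}$ characterization with the \cref{thm:GCH} embedding, observing that the weights only affect the spectral gap and not whether the homology is trivial. The one detail the paper adds, which you assert implicitly, is that Li's verifier can be implemented over $\calG_2$ with perfect completeness (it essentially just checks a history state of a classical computation), which is what licenses writing $\PSPACE$ as $\QMAo^{\calG_2}$ with exponentially small gap before invoking the $4\hQSAT$-to-clique-homology embedding and \cite[Theorem 10.1]{KK24}.
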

\begin{proof}
  The containment $\CH\in\PSPACE$ holds since we can decide whether the determinant of the Laplacian is zero in $\NC(\poly)=\PSPACE$ \cite{Csa75}.\footnote{We can even compute the rank of the Laplacian directly in $\NC(\poly)$ \cite{Chi85,Mul87}.}
  Hardness follows from the fact that $\QMA_1$ with exponentially small promise gap, denoted $\QMA_{1,1-1/\exp(n)}$, is equal to $\PSPACE$ \cite{Li22}.
  The $\QMA_{1,1-1/\exp(n)}$ verifier for $\PSPACE$ in \cite[Algorithm 8]{Li22} effectively just verifies the history state for a classical computation, and can therefore be implemented with $\calG_2$.
  Then we can just embed the corresponding $4\hQSAT$ instance into the clique homology problem as in \cref{thm:GCH}.
  By \cite[Theorem 10.1]{KK24}, we have for the Laplacian $\Delta^{2n-1}$, $\lmin(\Delta^{2n-1})=0$ iff $\lmin(H)=0$, where $n$ is the number of qubits of the Hamiltonian $H$.
\end{proof}

\section*{Acknowledgements}
I thank Tamara Kohler and Marcos Crichigno for helpful discussions, and especially Tamara Kohler for helping me understand Reference \cite{KK24}.
I also thank my supervisor Sevag Gharibian for his support and many discussions.
DR was supported by the DFG under grant number 432788384.

\appendix

\section{Omitted proofs}\label{sec:proofs}

\begin{proof}[Proof of \cref{lem:boundary}]
  See \cite[Eq. 6]{KK24} for \cref{eq:boundary sigma':a}.
  Note that $\partial^k\ket{\sigma'}$ only has support on those $\ket{\tau}$ with $\sigma\supset\tau\in\calK^{k-1}$.
  Then 
  \begin{equation}
    \partial^k\ket{\sigma'} = \sum_{j=0}^k \ketbrab{\sigma_{-j}'}(d^{k-1})^\dagger\ket{\sigma'} = \sum_{j=0}^k (-1)^{j}\cdot w(v_j)\ket{\sigma_{-j}'},
  \end{equation}
  as $\bra{\sigma_{-j}'}(d^{k-1})^\dagger\ket{\sigma'} = w(v_j)\braket*{\sigma_{-j}'}{[v_j] + \sigma_{-j}}=(-1)^{j}\cdot w(v_j)$ because moving $v_j$ in $[v_j] + \sigma_{-j}$ to the $j$-th position requires $j$ swaps with $0$ being the first position.
\end{proof}

\begin{proof}[Proof of \cref{lem:laplacian}]
  Recall $\Delta^k = d^{k-1}\partial^{k} + \partial^{k+1}d^k = (\partial^k)^\dagger\partial^k+ (d^k)^\dagger d^k$.
  If $\sigma=\tau$, then we have by \cref{eq:boundary sigma'}
  \begin{equation}
    \braketb{\sigma'}{\Delta^k} = \norm{\partial^k\ket{\sigma'}}^2 + \norm{d^k\ket{\sigma'}}^2 =  \sum_{v\in\sigma}w(v)^2 + \sum_{u\in \up(\sigma)} w(u)^2.
  \end{equation}
  If $\sigma$ and $\tau$ are not upper adjacent, then $d^k\ket{\sigma'}$ and $d^k\ket{\tau'}$ are orthogonal.
  Thus for the second and third case, we have
  \begin{equation}
    \bra{\sigma'}\Delta^k\ket{\tau'} = \left(\bra{\sigma'}(\partial^k)^\dagger\right) \ketbrab{\eta'} \left(\partial^k\ket{\tau'}\right) = \pm w(v_\sigma)w(v_\tau),
  \end{equation}
  since the common lower simplex $\eta$ is always unique.

  For the fourth case, consider the subcase that $\sigma$ and $\tau$ are upper adjacent.
  Since then $\abs{\sigma\cup\tau} = k+2$, we have $\abs{\sigma\cap\tau}=k$.
  Thus we can write (up to permutation) $\sigma=[v_\sigma,v_0,\dots,v_{k-1}]$ and $\tau=[v_\tau,v_0,\dots,v_{k-1}]$ (so that they have a similar lower complex).
  Thus,
  \begin{equation}
    \begin{aligned}
      \bra{\sigma'}\Delta^k\ket{\tau'} &= \bra{\sigma'}(\partial^k)^\dagger\partial^k\ket{\tau'} + \bra{\sigma'}(d^k)^\dagger d^k\ket{\tau'} \\
      &= w(v_\sigma)w(v_\tau) + w(v_\sigma)w(v_\tau)\braket*{[v_\tau,v_\sigma,v_0,\dots,v_{k-1}]'\vphantom{\big(}}{[v_\sigma,v_\tau,v_0,\dots,v_{k-1}]'} = 0.
    \end{aligned}
  \end{equation}
  Finally, if $\sigma$ and $\tau$ do not have a common lower complex and are not upper adjacent, then $d^k\ket{\sigma},d^k\ket{\tau}$ are orthogonal as well as $\partial^k\ket{\sigma},\partial^k\ket{\tau}$.
\end{proof}

\begin{proof}[Proof of \cref{prop:exact-synthesis-impossible}]
  Let $\calG=\{U_1,\dots,U_k\}$ with $U_i\in\QQ(\zeta_n)^{d\times d}$ for all $i=1,\dots,k$.
  We can write each $U_i$ as 
  \begin{equation}
    U_i = \sum_{j=0}^{n-1} \zeta_n^j \frac{A_{ij}}{a_{ij}}
  \end{equation}
  with $A_{ij}\in\ZZ^{d\times d}$ and $a_{ij}\in\NN$.
  Hence, we can also write any product of the $U_i$ (possibly acting on different qubits) as $\frac1b\sum_{j=0}^{n-1} \zeta_n^j B_j$
  with $B_{j}\in\ZZ^{d\times d}$ and $b$ a product of the $a_{ij}$.
  
  Therefore, it is clearly not possible to implement exactly all Pythagorean unitaries (this term was introduced in \cite{CK24}) of the form
  \begin{equation}
    \frac1c
    \begin{pmatrix}
      a&b\\-b&a
    \end{pmatrix}
  \end{equation}
  with $a^2+b^2=c^2$ and prime $c$, as there exist an infinite number of Pythagorean primes $c$.
\end{proof}

\section{Examples for simulation of cyclotomic gates}\label{sec:example-cyclotomic}

Below we show how $\sqrt{H}$ is simulated in \cref{lem:cyclotomic-simulation}, writing all matrix elements in the integral basis $\{1,\zeta_8,\zeta_8^2,\zeta_8^3\}$:

\begin{equation*}\def\arraystretch{1.5}
\left(\begin{array}{rr}
-\frac{1}{2} \zeta_{8}^{3} + \frac{1}{2} \zeta_{8}^{2} + \frac{1}{2} & -\frac{1}{2} \zeta_{8}^{3} \\
-\frac{1}{2} \zeta_{8}^{3} & \frac{1}{2} \zeta_{8}^{3} + \frac{1}{2} \zeta_{8}^{2} + \frac{1}{2}
\end{array}\right)\quad\xmapsto{\;\Psi\;}\quad
\left(\begin{array}{rrrr|rrrr}
\frac{1}{2} & \frac{1}{2} & -\frac{1}{2} & 0 & 0 & \frac{1}{2} & 0 & 0 \\
0 & \frac{1}{2} & \frac{1}{2} & -\frac{1}{2} & 0 & 0 & \frac{1}{2} & 0 \\
\frac{1}{2} & 0 & \frac{1}{2} & \frac{1}{2} & 0 & 0 & 0 & \frac{1}{2} \\
-\frac{1}{2} & \frac{1}{2} & 0 & \frac{1}{2} & -\frac{1}{2} & 0 & 0 & 0 \\
\hline
 0 & \frac{1}{2} & 0 & 0 & \frac{1}{2} & -\frac{1}{2} & -\frac{1}{2} & 0 \\
0 & 0 & \frac{1}{2} & 0 & 0 & \frac{1}{2} & -\frac{1}{2} & -\frac{1}{2} \\
0 & 0 & 0 & \frac{1}{2} & \frac{1}{2} & 0 & \frac{1}{2} & -\frac{1}{2} \\
-\frac{1}{2} & 0 & 0 & 0 & \frac{1}{2} & \frac{1}{2} & 0 & \frac{1}{2}
\end{array}\right)
\end{equation*}

\newpage
\noindent
Also for the QFT $F_8$ on $3$ qubits:

{
\scriptsize
\begin{gather*}\def\arraystretch{1.5}
  \left(\begin{array}{rrrrrrrr}
    -\frac{1}{4} \zeta_{8}^{3} + \frac{1}{4} \zeta_{8} & -\frac{1}{4} \zeta_{8}^{3} + \frac{1}{4} \zeta_{8} & -\frac{1}{4} \zeta_{8}^{3} + \frac{1}{4} \zeta_{8} & -\frac{1}{4} \zeta_{8}^{3} + \frac{1}{4} \zeta_{8} & -\frac{1}{4} \zeta_{8}^{3} + \frac{1}{4} \zeta_{8} & -\frac{1}{4} \zeta_{8}^{3} + \frac{1}{4} \zeta_{8} & -\frac{1}{4} \zeta_{8}^{3} + \frac{1}{4} \zeta_{8} & -\frac{1}{4} \zeta_{8}^{3} + \frac{1}{4} \zeta_{8} \\
    -\frac{1}{4} \zeta_{8}^{3} + \frac{1}{4} \zeta_{8} & \frac{1}{4} \zeta_{8}^{2} + \frac{1}{4} & \frac{1}{4} \zeta_{8}^{3} + \frac{1}{4} \zeta_{8} & \frac{1}{4} \zeta_{8}^{2} - \frac{1}{4} & \frac{1}{4} \zeta_{8}^{3} - \frac{1}{4} \zeta_{8} & -\frac{1}{4} \zeta_{8}^{2} - \frac{1}{4} & -\frac{1}{4} \zeta_{8}^{3} - \frac{1}{4} \zeta_{8} & -\frac{1}{4} \zeta_{8}^{2} + \frac{1}{4} \\
    -\frac{1}{4} \zeta_{8}^{3} + \frac{1}{4} \zeta_{8} & \frac{1}{4} \zeta_{8}^{3} + \frac{1}{4} \zeta_{8} & \frac{1}{4} \zeta_{8}^{3} - \frac{1}{4} \zeta_{8} & -\frac{1}{4} \zeta_{8}^{3} - \frac{1}{4} \zeta_{8} & -\frac{1}{4} \zeta_{8}^{3} + \frac{1}{4} \zeta_{8} & \frac{1}{4} \zeta_{8}^{3} + \frac{1}{4} \zeta_{8} & \frac{1}{4} \zeta_{8}^{3} - \frac{1}{4} \zeta_{8} & -\frac{1}{4} \zeta_{8}^{3} - \frac{1}{4} \zeta_{8} \\
    -\frac{1}{4} \zeta_{8}^{3} + \frac{1}{4} \zeta_{8} & \frac{1}{4} \zeta_{8}^{2} - \frac{1}{4} & -\frac{1}{4} \zeta_{8}^{3} - \frac{1}{4} \zeta_{8} & \frac{1}{4} \zeta_{8}^{2} + \frac{1}{4} & \frac{1}{4} \zeta_{8}^{3} - \frac{1}{4} \zeta_{8} & -\frac{1}{4} \zeta_{8}^{2} + \frac{1}{4} & \frac{1}{4} \zeta_{8}^{3} + \frac{1}{4} \zeta_{8} & -\frac{1}{4} \zeta_{8}^{2} - \frac{1}{4} \\
    -\frac{1}{4} \zeta_{8}^{3} + \frac{1}{4} \zeta_{8} & \frac{1}{4} \zeta_{8}^{3} - \frac{1}{4} \zeta_{8} & -\frac{1}{4} \zeta_{8}^{3} + \frac{1}{4} \zeta_{8} & \frac{1}{4} \zeta_{8}^{3} - \frac{1}{4} \zeta_{8} & -\frac{1}{4} \zeta_{8}^{3} + \frac{1}{4} \zeta_{8} & \frac{1}{4} \zeta_{8}^{3} - \frac{1}{4} \zeta_{8} & -\frac{1}{4} \zeta_{8}^{3} + \frac{1}{4} \zeta_{8} & \frac{1}{4} \zeta_{8}^{3} - \frac{1}{4} \zeta_{8} \\
    -\frac{1}{4} \zeta_{8}^{3} + \frac{1}{4} \zeta_{8} & -\frac{1}{4} \zeta_{8}^{2} - \frac{1}{4} & \frac{1}{4} \zeta_{8}^{3} + \frac{1}{4} \zeta_{8} & -\frac{1}{4} \zeta_{8}^{2} + \frac{1}{4} & \frac{1}{4} \zeta_{8}^{3} - \frac{1}{4} \zeta_{8} & \frac{1}{4} \zeta_{8}^{2} + \frac{1}{4} & -\frac{1}{4} \zeta_{8}^{3} - \frac{1}{4} \zeta_{8} & \frac{1}{4} \zeta_{8}^{2} - \frac{1}{4} \\
    -\frac{1}{4} \zeta_{8}^{3} + \frac{1}{4} \zeta_{8} & -\frac{1}{4} \zeta_{8}^{3} - \frac{1}{4} \zeta_{8} & \frac{1}{4} \zeta_{8}^{3} - \frac{1}{4} \zeta_{8} & \frac{1}{4} \zeta_{8}^{3} + \frac{1}{4} \zeta_{8} & -\frac{1}{4} \zeta_{8}^{3} + \frac{1}{4} \zeta_{8} & -\frac{1}{4} \zeta_{8}^{3} - \frac{1}{4} \zeta_{8} & \frac{1}{4} \zeta_{8}^{3} - \frac{1}{4} \zeta_{8} & \frac{1}{4} \zeta_{8}^{3} + \frac{1}{4} \zeta_{8} \\
    -\frac{1}{4} \zeta_{8}^{3} + \frac{1}{4} \zeta_{8} & -\frac{1}{4} \zeta_{8}^{2} + \frac{1}{4} & -\frac{1}{4} \zeta_{8}^{3} - \frac{1}{4} \zeta_{8} & -\frac{1}{4} \zeta_{8}^{2} - \frac{1}{4} & \frac{1}{4} \zeta_{8}^{3} - \frac{1}{4} \zeta_{8} & \frac{1}{4} \zeta_{8}^{2} - \frac{1}{4} & \frac{1}{4} \zeta_{8}^{3} + \frac{1}{4} \zeta_{8} & \frac{1}{4} \zeta_{8}^{2} + \frac{1}{4}
    \end{array}\right)\\[1mm]
  \text{\large$\xmapsto{\;\Psi\;}$}\\[1mm]
\def\arraystretch{1.5}\setlength\arraycolsep{1pt}
  \left(\begin{array}{rrrr|rrrr|rrrr|rrrr|rrrr|rrrr|rrrr|rrrr}
    0 & \frac{1}{4} & 0 & -\frac{1}{4} & 0 & \frac{1}{4} & 0 & -\frac{1}{4} & 0 & \frac{1}{4} & 0 & -\frac{1}{4} & 0 & \frac{1}{4} & 0 & -\frac{1}{4} & 0 & \frac{1}{4} & 0 & -\frac{1}{4} & 0 & \frac{1}{4} & 0 & -\frac{1}{4} & 0 & \frac{1}{4} & 0 & -\frac{1}{4} & 0 & \frac{1}{4} & 0 & -\frac{1}{4} \\
    \frac{1}{4} & 0 & \frac{1}{4} & 0 & \frac{1}{4} & 0 & \frac{1}{4} & 0 & \frac{1}{4} & 0 & \frac{1}{4} & 0 & \frac{1}{4} & 0 & \frac{1}{4} & 0 & \frac{1}{4} & 0 & \frac{1}{4} & 0 & \frac{1}{4} & 0 & \frac{1}{4} & 0 & \frac{1}{4} & 0 & \frac{1}{4} & 0 & \frac{1}{4} & 0 & \frac{1}{4} & 0 \\
    0 & \frac{1}{4} & 0 & \frac{1}{4} & 0 & \frac{1}{4} & 0 & \frac{1}{4} & 0 & \frac{1}{4} & 0 & \frac{1}{4} & 0 & \frac{1}{4} & 0 & \frac{1}{4} & 0 & \frac{1}{4} & 0 & \frac{1}{4} & 0 & \frac{1}{4} & 0 & \frac{1}{4} & 0 & \frac{1}{4} & 0 & \frac{1}{4} & 0 & \frac{1}{4} & 0 & \frac{1}{4} \\
    -\frac{1}{4} & 0 & \frac{1}{4} & 0 & -\frac{1}{4} & 0 & \frac{1}{4} & 0 & -\frac{1}{4} & 0 & \frac{1}{4} & 0 & -\frac{1}{4} & 0 & \frac{1}{4} & 0 & -\frac{1}{4} & 0 & \frac{1}{4} & 0 & -\frac{1}{4} & 0 & \frac{1}{4} & 0 & -\frac{1}{4} & 0 & \frac{1}{4} & 0 & -\frac{1}{4} & 0 & \frac{1}{4} & 0 \\
    \hline
     0 & \frac{1}{4} & 0 & -\frac{1}{4} & \frac{1}{4} & 0 & -\frac{1}{4} & 0 & 0 & -\frac{1}{4} & 0 & -\frac{1}{4} & -\frac{1}{4} & 0 & -\frac{1}{4} & 0 & 0 & -\frac{1}{4} & 0 & \frac{1}{4} & -\frac{1}{4} & 0 & \frac{1}{4} & 0 & 0 & \frac{1}{4} & 0 & \frac{1}{4} & \frac{1}{4} & 0 & \frac{1}{4} & 0 \\
    \frac{1}{4} & 0 & \frac{1}{4} & 0 & 0 & \frac{1}{4} & 0 & -\frac{1}{4} & \frac{1}{4} & 0 & -\frac{1}{4} & 0 & 0 & -\frac{1}{4} & 0 & -\frac{1}{4} & -\frac{1}{4} & 0 & -\frac{1}{4} & 0 & 0 & -\frac{1}{4} & 0 & \frac{1}{4} & -\frac{1}{4} & 0 & \frac{1}{4} & 0 & 0 & \frac{1}{4} & 0 & \frac{1}{4} \\
    0 & \frac{1}{4} & 0 & \frac{1}{4} & \frac{1}{4} & 0 & \frac{1}{4} & 0 & 0 & \frac{1}{4} & 0 & -\frac{1}{4} & \frac{1}{4} & 0 & -\frac{1}{4} & 0 & 0 & -\frac{1}{4} & 0 & -\frac{1}{4} & -\frac{1}{4} & 0 & -\frac{1}{4} & 0 & 0 & -\frac{1}{4} & 0 & \frac{1}{4} & -\frac{1}{4} & 0 & \frac{1}{4} & 0 \\
    -\frac{1}{4} & 0 & \frac{1}{4} & 0 & 0 & \frac{1}{4} & 0 & \frac{1}{4} & \frac{1}{4} & 0 & \frac{1}{4} & 0 & 0 & \frac{1}{4} & 0 & -\frac{1}{4} & \frac{1}{4} & 0 & -\frac{1}{4} & 0 & 0 & -\frac{1}{4} & 0 & -\frac{1}{4} & -\frac{1}{4} & 0 & -\frac{1}{4} & 0 & 0 & -\frac{1}{4} & 0 & \frac{1}{4} \\
    \hline
     0 & \frac{1}{4} & 0 & -\frac{1}{4} & 0 & -\frac{1}{4} & 0 & -\frac{1}{4} & 0 & -\frac{1}{4} & 0 & \frac{1}{4} & 0 & \frac{1}{4} & 0 & \frac{1}{4} & 0 & \frac{1}{4} & 0 & -\frac{1}{4} & 0 & -\frac{1}{4} & 0 & -\frac{1}{4} & 0 & -\frac{1}{4} & 0 & \frac{1}{4} & 0 & \frac{1}{4} & 0 & \frac{1}{4} \\
    \frac{1}{4} & 0 & \frac{1}{4} & 0 & \frac{1}{4} & 0 & -\frac{1}{4} & 0 & -\frac{1}{4} & 0 & -\frac{1}{4} & 0 & -\frac{1}{4} & 0 & \frac{1}{4} & 0 & \frac{1}{4} & 0 & \frac{1}{4} & 0 & \frac{1}{4} & 0 & -\frac{1}{4} & 0 & -\frac{1}{4} & 0 & -\frac{1}{4} & 0 & -\frac{1}{4} & 0 & \frac{1}{4} & 0 \\
    0 & \frac{1}{4} & 0 & \frac{1}{4} & 0 & \frac{1}{4} & 0 & -\frac{1}{4} & 0 & -\frac{1}{4} & 0 & -\frac{1}{4} & 0 & -\frac{1}{4} & 0 & \frac{1}{4} & 0 & \frac{1}{4} & 0 & \frac{1}{4} & 0 & \frac{1}{4} & 0 & -\frac{1}{4} & 0 & -\frac{1}{4} & 0 & -\frac{1}{4} & 0 & -\frac{1}{4} & 0 & \frac{1}{4} \\
    -\frac{1}{4} & 0 & \frac{1}{4} & 0 & \frac{1}{4} & 0 & \frac{1}{4} & 0 & \frac{1}{4} & 0 & -\frac{1}{4} & 0 & -\frac{1}{4} & 0 & -\frac{1}{4} & 0 & -\frac{1}{4} & 0 & \frac{1}{4} & 0 & \frac{1}{4} & 0 & \frac{1}{4} & 0 & \frac{1}{4} & 0 & -\frac{1}{4} & 0 & -\frac{1}{4} & 0 & -\frac{1}{4} & 0 \\
    \hline
     0 & \frac{1}{4} & 0 & -\frac{1}{4} & -\frac{1}{4} & 0 & -\frac{1}{4} & 0 & 0 & \frac{1}{4} & 0 & \frac{1}{4} & \frac{1}{4} & 0 & -\frac{1}{4} & 0 & 0 & -\frac{1}{4} & 0 & \frac{1}{4} & \frac{1}{4} & 0 & \frac{1}{4} & 0 & 0 & -\frac{1}{4} & 0 & -\frac{1}{4} & -\frac{1}{4} & 0 & \frac{1}{4} & 0 \\
    \frac{1}{4} & 0 & \frac{1}{4} & 0 & 0 & -\frac{1}{4} & 0 & -\frac{1}{4} & -\frac{1}{4} & 0 & \frac{1}{4} & 0 & 0 & \frac{1}{4} & 0 & -\frac{1}{4} & -\frac{1}{4} & 0 & -\frac{1}{4} & 0 & 0 & \frac{1}{4} & 0 & \frac{1}{4} & \frac{1}{4} & 0 & -\frac{1}{4} & 0 & 0 & -\frac{1}{4} & 0 & \frac{1}{4} \\
    0 & \frac{1}{4} & 0 & \frac{1}{4} & \frac{1}{4} & 0 & -\frac{1}{4} & 0 & 0 & -\frac{1}{4} & 0 & \frac{1}{4} & \frac{1}{4} & 0 & \frac{1}{4} & 0 & 0 & -\frac{1}{4} & 0 & -\frac{1}{4} & -\frac{1}{4} & 0 & \frac{1}{4} & 0 & 0 & \frac{1}{4} & 0 & -\frac{1}{4} & -\frac{1}{4} & 0 & -\frac{1}{4} & 0 \\
    -\frac{1}{4} & 0 & \frac{1}{4} & 0 & 0 & \frac{1}{4} & 0 & -\frac{1}{4} & -\frac{1}{4} & 0 & -\frac{1}{4} & 0 & 0 & \frac{1}{4} & 0 & \frac{1}{4} & \frac{1}{4} & 0 & -\frac{1}{4} & 0 & 0 & -\frac{1}{4} & 0 & \frac{1}{4} & \frac{1}{4} & 0 & \frac{1}{4} & 0 & 0 & -\frac{1}{4} & 0 & -\frac{1}{4} \\
    \hline
     0 & \frac{1}{4} & 0 & -\frac{1}{4} & 0 & -\frac{1}{4} & 0 & \frac{1}{4} & 0 & \frac{1}{4} & 0 & -\frac{1}{4} & 0 & -\frac{1}{4} & 0 & \frac{1}{4} & 0 & \frac{1}{4} & 0 & -\frac{1}{4} & 0 & -\frac{1}{4} & 0 & \frac{1}{4} & 0 & \frac{1}{4} & 0 & -\frac{1}{4} & 0 & -\frac{1}{4} & 0 & \frac{1}{4} \\
    \frac{1}{4} & 0 & \frac{1}{4} & 0 & -\frac{1}{4} & 0 & -\frac{1}{4} & 0 & \frac{1}{4} & 0 & \frac{1}{4} & 0 & -\frac{1}{4} & 0 & -\frac{1}{4} & 0 & \frac{1}{4} & 0 & \frac{1}{4} & 0 & -\frac{1}{4} & 0 & -\frac{1}{4} & 0 & \frac{1}{4} & 0 & \frac{1}{4} & 0 & -\frac{1}{4} & 0 & -\frac{1}{4} & 0 \\
    0 & \frac{1}{4} & 0 & \frac{1}{4} & 0 & -\frac{1}{4} & 0 & -\frac{1}{4} & 0 & \frac{1}{4} & 0 & \frac{1}{4} & 0 & -\frac{1}{4} & 0 & -\frac{1}{4} & 0 & \frac{1}{4} & 0 & \frac{1}{4} & 0 & -\frac{1}{4} & 0 & -\frac{1}{4} & 0 & \frac{1}{4} & 0 & \frac{1}{4} & 0 & -\frac{1}{4} & 0 & -\frac{1}{4} \\
    -\frac{1}{4} & 0 & \frac{1}{4} & 0 & \frac{1}{4} & 0 & -\frac{1}{4} & 0 & -\frac{1}{4} & 0 & \frac{1}{4} & 0 & \frac{1}{4} & 0 & -\frac{1}{4} & 0 & -\frac{1}{4} & 0 & \frac{1}{4} & 0 & \frac{1}{4} & 0 & -\frac{1}{4} & 0 & -\frac{1}{4} & 0 & \frac{1}{4} & 0 & \frac{1}{4} & 0 & -\frac{1}{4} & 0 \\
    \hline
     0 & \frac{1}{4} & 0 & -\frac{1}{4} & -\frac{1}{4} & 0 & \frac{1}{4} & 0 & 0 & -\frac{1}{4} & 0 & -\frac{1}{4} & \frac{1}{4} & 0 & \frac{1}{4} & 0 & 0 & -\frac{1}{4} & 0 & \frac{1}{4} & \frac{1}{4} & 0 & -\frac{1}{4} & 0 & 0 & \frac{1}{4} & 0 & \frac{1}{4} & -\frac{1}{4} & 0 & -\frac{1}{4} & 0 \\
    \frac{1}{4} & 0 & \frac{1}{4} & 0 & 0 & -\frac{1}{4} & 0 & \frac{1}{4} & \frac{1}{4} & 0 & -\frac{1}{4} & 0 & 0 & \frac{1}{4} & 0 & \frac{1}{4} & -\frac{1}{4} & 0 & -\frac{1}{4} & 0 & 0 & \frac{1}{4} & 0 & -\frac{1}{4} & -\frac{1}{4} & 0 & \frac{1}{4} & 0 & 0 & -\frac{1}{4} & 0 & -\frac{1}{4} \\
    0 & \frac{1}{4} & 0 & \frac{1}{4} & -\frac{1}{4} & 0 & -\frac{1}{4} & 0 & 0 & \frac{1}{4} & 0 & -\frac{1}{4} & -\frac{1}{4} & 0 & \frac{1}{4} & 0 & 0 & -\frac{1}{4} & 0 & -\frac{1}{4} & \frac{1}{4} & 0 & \frac{1}{4} & 0 & 0 & -\frac{1}{4} & 0 & \frac{1}{4} & \frac{1}{4} & 0 & -\frac{1}{4} & 0 \\
    -\frac{1}{4} & 0 & \frac{1}{4} & 0 & 0 & -\frac{1}{4} & 0 & -\frac{1}{4} & \frac{1}{4} & 0 & \frac{1}{4} & 0 & 0 & -\frac{1}{4} & 0 & \frac{1}{4} & \frac{1}{4} & 0 & -\frac{1}{4} & 0 & 0 & \frac{1}{4} & 0 & \frac{1}{4} & -\frac{1}{4} & 0 & -\frac{1}{4} & 0 & 0 & \frac{1}{4} & 0 & -\frac{1}{4} \\
    \hline
     0 & \frac{1}{4} & 0 & -\frac{1}{4} & 0 & \frac{1}{4} & 0 & \frac{1}{4} & 0 & -\frac{1}{4} & 0 & \frac{1}{4} & 0 & -\frac{1}{4} & 0 & -\frac{1}{4} & 0 & \frac{1}{4} & 0 & -\frac{1}{4} & 0 & \frac{1}{4} & 0 & \frac{1}{4} & 0 & -\frac{1}{4} & 0 & \frac{1}{4} & 0 & -\frac{1}{4} & 0 & -\frac{1}{4} \\
    \frac{1}{4} & 0 & \frac{1}{4} & 0 & -\frac{1}{4} & 0 & \frac{1}{4} & 0 & -\frac{1}{4} & 0 & -\frac{1}{4} & 0 & \frac{1}{4} & 0 & -\frac{1}{4} & 0 & \frac{1}{4} & 0 & \frac{1}{4} & 0 & -\frac{1}{4} & 0 & \frac{1}{4} & 0 & -\frac{1}{4} & 0 & -\frac{1}{4} & 0 & \frac{1}{4} & 0 & -\frac{1}{4} & 0 \\
    0 & \frac{1}{4} & 0 & \frac{1}{4} & 0 & -\frac{1}{4} & 0 & \frac{1}{4} & 0 & -\frac{1}{4} & 0 & -\frac{1}{4} & 0 & \frac{1}{4} & 0 & -\frac{1}{4} & 0 & \frac{1}{4} & 0 & \frac{1}{4} & 0 & -\frac{1}{4} & 0 & \frac{1}{4} & 0 & -\frac{1}{4} & 0 & -\frac{1}{4} & 0 & \frac{1}{4} & 0 & -\frac{1}{4} \\
    -\frac{1}{4} & 0 & \frac{1}{4} & 0 & -\frac{1}{4} & 0 & -\frac{1}{4} & 0 & \frac{1}{4} & 0 & -\frac{1}{4} & 0 & \frac{1}{4} & 0 & \frac{1}{4} & 0 & -\frac{1}{4} & 0 & \frac{1}{4} & 0 & -\frac{1}{4} & 0 & -\frac{1}{4} & 0 & \frac{1}{4} & 0 & -\frac{1}{4} & 0 & \frac{1}{4} & 0 & \frac{1}{4} & 0 \\
    \hline
     0 & \frac{1}{4} & 0 & -\frac{1}{4} & \frac{1}{4} & 0 & \frac{1}{4} & 0 & 0 & \frac{1}{4} & 0 & \frac{1}{4} & -\frac{1}{4} & 0 & \frac{1}{4} & 0 & 0 & -\frac{1}{4} & 0 & \frac{1}{4} & -\frac{1}{4} & 0 & -\frac{1}{4} & 0 & 0 & -\frac{1}{4} & 0 & -\frac{1}{4} & \frac{1}{4} & 0 & -\frac{1}{4} & 0 \\
    \frac{1}{4} & 0 & \frac{1}{4} & 0 & 0 & \frac{1}{4} & 0 & \frac{1}{4} & -\frac{1}{4} & 0 & \frac{1}{4} & 0 & 0 & -\frac{1}{4} & 0 & \frac{1}{4} & -\frac{1}{4} & 0 & -\frac{1}{4} & 0 & 0 & -\frac{1}{4} & 0 & -\frac{1}{4} & \frac{1}{4} & 0 & -\frac{1}{4} & 0 & 0 & \frac{1}{4} & 0 & -\frac{1}{4} \\
    0 & \frac{1}{4} & 0 & \frac{1}{4} & -\frac{1}{4} & 0 & \frac{1}{4} & 0 & 0 & -\frac{1}{4} & 0 & \frac{1}{4} & -\frac{1}{4} & 0 & -\frac{1}{4} & 0 & 0 & -\frac{1}{4} & 0 & -\frac{1}{4} & \frac{1}{4} & 0 & -\frac{1}{4} & 0 & 0 & \frac{1}{4} & 0 & -\frac{1}{4} & \frac{1}{4} & 0 & \frac{1}{4} & 0 \\
    -\frac{1}{4} & 0 & \frac{1}{4} & 0 & 0 & -\frac{1}{4} & 0 & \frac{1}{4} & -\frac{1}{4} & 0 & -\frac{1}{4} & 0 & 0 & -\frac{1}{4} & 0 & -\frac{1}{4} & \frac{1}{4} & 0 & -\frac{1}{4} & 0 & 0 & \frac{1}{4} & 0 & -\frac{1}{4} & \frac{1}{4} & 0 & \frac{1}{4} & 0 & 0 & \frac{1}{4} & 0 & \frac{1}{4}
    \end{array}\right)    
\end{gather*}
}
\newpage

\section{Toffoli from CS gate}\label{sec:toffoli-from-cs}

Construction of Toffoli gate from $\{\omega\Hg,\Sg,\CSg\}$ analogous to \cite[Figure 4.8]{NC10} using $\CSg^2 = \CZg, \CSg^3 = \CSg^\dagger, \CXg = (I\otimes\Hg)\CZg(I\otimes \Hg), \CCXg = (I\otimes I\otimes\Hg)\CCZg(I\otimes I\otimes\Hg)$:

\begin{center}
  \includegraphics{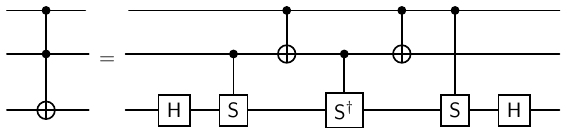}
\end{center}

\section{Quantum 4-SAT for \texorpdfstring{$\calG_2$}{G\_2}}\label{sec:4SAT-G2}

\newcommand{\Lu}{\textbf{u}}
\newcommand{\La}{\textbf{a}}
\newcommand{\Lai}{\textbf{a}_{\textbf{1}}}
\newcommand{\Laii}{\textbf{a}_{\textbf{2}}}
\newcommand{\Laiii}{\textbf{a}_{\textbf{3}}}
\newcommand{\Ld}{\textbf{d}}
\newcommand{\Ls}{\phantom{{}_{\textbf{3}}}}

The correctness of this construction follows from the Nullspace Connection Lemma (see \cref{lem:connect}).
We label the logical qu-$5$-its as $\Lu,\Lai,\Laii,\Laiii,\Ld$ (unborn, alive 1--3, dead) and implement them using $4$ physical qubits:

\begin{equation}\label{eq:logical}
  \begin{aligned}
    \ket{\Lu} &= \ket{1000}\\
    \ket{\Lai} &= \ket{0010}\\
    \ket{\Laii} &= \ket{0011}\\
    \ket{\Laiii} &= \ket{0001}\\
    \ket{\Ld} &= \ket{0100}
  \end{aligned}
\end{equation}

The remaining $4$-qubit states are penalized with $4$-local projectors. 
We can also easily enforce a clock space $\ket{C_1},\dots,\ket{C_T}$ of the form:
\newcommand\ketarray[1]{\ket{\;\;\begin{NiceArray}{w{l}{4mm}w{l}{4mm}w{l}{4mm}w{l}{4mm}w{c}{5mm}}#1\end{NiceArray}}}
\begin{align*}
  \ket{C_1} &= \ketarray{\Lu & \Lu & \Lu & \dotsm & \Lu}\\
  \ket{C_2} &= \ketarray{\Lai & \Lu & \Lu & \dotsm & \Lu}\\
  \ket{C_3} &= \ketarray{\Laii & \Lu & \Lu & \dotsm & \Lu}\\
  \ket{C_4} &= \ketarray{\Laiii & \Lu & \Lu & \dotsm & \Lu}\\
  \ket{C_5} &= \ketarray{\Ld & \Lu & \Lu & \dotsm & \Lu}\\
  \ket{C_6} &= \ketarray{\Ld & \Lai & \Lu & \dotsm & \Lu}\\
  \ket{C_7} &= \ketarray{\Ld & \Laii & \Lu & \dotsm & \Lu}\\
  \ket{C_8} &= \ketarray{\Ld & \Laiii & \Lu & \dotsm & \Lu}\\
  \ket{C_9} &= \ketarray{\Ld & \Ld & \Lu & \dotsm & \Lu}\\
  \ket{C_{10}} &= \ketarray{\Ld & \Ld & \Lai & \dotsm & \Lu}\\
  &\;\;\vdots\\
  \ket{C_{T}} &= \ketarray{\Ld & \Ld & \Ld & \dotsm & \Ld}\\
\end{align*}
Now we have $2$-local transitions between, e.g., $\ket{C_2},\ket{C_3},\ket{C_4}$, and $4$-local transitions between, e.g., $\ket{C_4}$ and $\ket{C_5}$.
Indexing the physical qubits from $1$ to $8$, we can implement the transitions
\begin{align}
  \ket{\Laiii\Lu}\leftrightarrow\ket{\Ld\Lu}\quad&\text{as}\quad \ketbraa{(\ket{0011}-\ket{1001})}_{2,3,4,5},\label{eq:a3u-du}\\
  \ket{\Ld\Lu}\leftrightarrow\ket{\Ld\Lai}\quad&\text{as}\quad \ketbraa{(\ket{1100}-\ket{1010})}_{2,5,7,8}\label{eq:du-da1}.
\end{align}
Thanks to \cref{lem:connect}, it now suffices to construct the gate gadgets on three timesteps $\ket{\Lai},\ket{\Laii},\ket{\Laiii}$.
See \cref{sec:2local} for a more detailed description of how to apply the lemma.
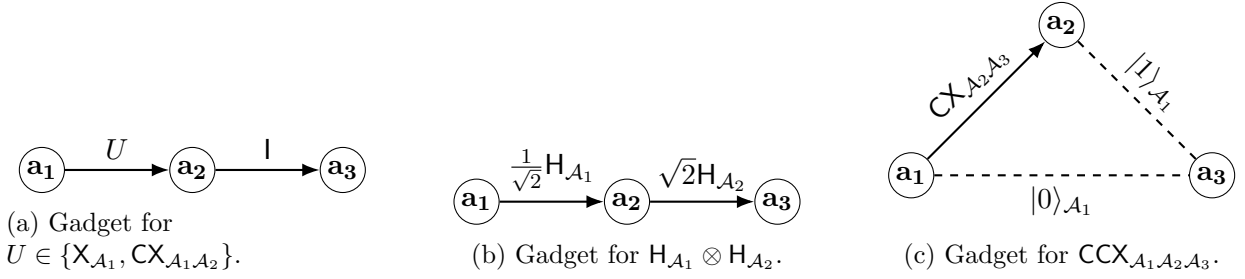
\begin{figure}[t]
  \centering
  \begin{subfigure}[b]{0.3\textwidth}
    \centering
    \begin{tikzpicture}[]
      \node[time] (a1) at (0,0) {$\Lai$};
      \node[time] (a2) at (2,0) {$\Laii$};
      \node[time] (a3) at (4,0) {$\Laiii$};
      \draw[unitary] (a1) edge["$U$",above] (a2);
      \draw[unitary] (a2) edge["$\Ig$",above] (a3);
    \end{tikzpicture}
    \caption{Gadget for\\$U\in \{\Xg_{\calA_1},\CXg_{\calA_1\calA_2}\}$.}
    \label{fig:4sat:cx}
  \end{subfigure}
  \hfill
  \begin{subfigure}[b]{0.3\textwidth}
    \begin{tikzpicture}[]
      \node[time] (a1) at (0,0) {$\Lai$};
      \node[time] (a2) at (2,0) {$\Laii$};
      \node[time] (a3) at (4,0) {$\Laiii$};
      \draw[unitary] (a1) edge["$\frac{1}{\sqrt{2}}\Hg_{\calA_1}$",above] (a2);
      \draw[unitary] (a2) edge["$\sqrt{2}\Hg_{\calA_2}$",above] (a3);
    \end{tikzpicture}
    \centering
    \caption{Gadget for $\Hg_{\calA_1}\otimes\Hg_{\calA_2}$.}
    \label{fig:4sat:hh}
  \end{subfigure}
  \hfill
  \begin{subfigure}[b]{0.3\textwidth}
    \begin{tikzpicture}[]
      \node[time] (a1) at (0,0) {$\Lai$};
      \node[time] (a2) at (2,2) {$\Laii$};
      \node[time] (a3) at (4,0) {$\Laiii$};
      \draw[unitary] (a1) edge["$\CXg_{\calA_2\calA_3}$",above] (a2);
      \draw[cond] (a2) edge["$\ket{1}_{\calA_1}$",above] (a3);
      \draw[cond] (a1) edge["$\ket{0}_{\calA_{1}}$",below] (a3);
    \end{tikzpicture}
    \centering
    \caption{Gadget for $\CCXg_{\calA_1\calA_2\calA_3}$.}
    \label{fig:4sat:ccx}
  \end{subfigure}
  \caption{Gadgets to implement $\calG_2$ with $4\hQSAT$. Arrows indicate unitary transitions, and dashed edges conditional transitions. The qubits of the computational register are denoted $\calA_1,\calA_2,\calA_3$.}
  \label{fig:4sat}
\end{figure}
The gadgets are graphically depicted in \cref{fig:4sat}, and formally defined below.
\begin{align}
  H_{\Xg} &= h_{12}(\Xg_{\calA_1}) + h_{23}(\Ig)\\
  H_{\CXg} &= h_{12}(\CXg_{\calA_1\calA_2}) + h_{23}(\Ig)\\
  H_{\Hg\otimes\Hg} &= h_{12}\left(\sqrt{2}\Hg_{\calA_1}\right) + h_{23}\left(\frac1{\sqrt2}\Hg_{\calA_2}\right)\\
  H_{\CCXg} &= h_{12}(\CXg_{\calA_2\calA_3}) + \ketbrab{0}_{\calA_1}\otimes h_{13}(\Ig) + \ketbrab{1}_{\calA_1}\otimes h_{23}(\Ig)\\
  h_{ij}(U) &= (I\otimes\ketbrab{\La_{\mathbf{i}}}_\calC + I\otimes\ketbrab{\La_{\mathbf{j}}} - U^\dagger \otimes \ketbra{\La_{\mathbf{i}}}{\La_{\mathbf{j}}} - U \otimes \ketbra{\La_{\mathbf{j}}}{\La_{\mathbf{i}}})_{\calA,\calC}\label{eq:4sat:hij}
\end{align}
The nullspaces of the gadgets are computed in the supplementary material \cite{sup}.

\subsection{Clique Homology}\label{sec:4SAT-G2:CH}

To embed the Hamiltonian into clique homology for \cref{thm:GCH}, we need to write all above gadgets $H_U= \sum_{l}\ketbrab{\phi_l}$, such that each $\ket{\phi_l}$ is proportional to an integer superposition of standard basis states.
Reference \cite{KK24} provides gadgets for $3$-qubit states and superpositions of at most two $4$-qubit states.\footnote{In fact, \cite{KK24} describes more general gadgets, but without correctness proof.}
Our gadgets are of the required form:
\begin{itemize}
  \item Projector onto the logical space (\cref{eq:logical}): $\ket{0000}$, $\ket{0101}$, $\ket{0110}$, $\ket{0111}$, $\ket{1001}$, $\ket{1010}$, $\ket{1011}$, $\ket{1100}$, $\ket{1101}$, $\ket{1110}$, $\ket{1111}$ on register $\calC=\calC_1\calC_2\calC_{3}\calC_{4}$
  \item \cref{eq:a3u-du}: $\ket{0011} - \ket{1001}$ on $\calC_{2}\calC_3\calC_4\calC_5$
  \item \cref{eq:du-da1}: $\ket{1100} - \ket{1010}$ on $\calC_{2}\calC_5\calC_7\calC_8$
  \item $h_{12}(\Xg_{\calA_1})$: $\ket{100} - \ket{111}, \ket{101} - \ket{110}$ on registers $\calC_{3}\calC_{4}\calA_1$
  \item $h_{23}(\Ig)$: $\ket{11} - \ket{01}$ on $\calC_{3}\calC_4$
  \item $h_{13}(\Ig)$: $\ket{10} - \ket{01}$ on $\calC_{3}\calC_4$
  \item $h_{12}(\CXg_{\calA_1\calA_2})$: $\ket{1000} - \ket{1100},\ket{1001} - \ket{1101},\ket{1010} - \ket{1111},\ket{1011} - \ket{1110}$ on $\calC_{3}\calC_4\calA_1\calA_2$
  \item $h_{12}(\sqrt{2}\Hg_{\calA_1})$: $\ket{100} - \ket{110} - \ket{111}, \ket{101} - \ket{100} + \ket{111}$ on $\calC_{3}\calC_4\calA_1$
  \item $h_{23}(\sqrt{1/2}\Hg_{\calA_1})$: $\ket{110} + \ket{111} - \ket{010}, \ket{110} - \ket{111} - \ket{011}$ on $\calC_{3}\calC_4\calA_1$
\end{itemize}
Note that each of the above states can be written as $\ket{x}\otimes\ket{\phi}$, where $\ket x$ is a standard basis state and $\ket{\phi}$ is a superposition of at most three standard basis states on two qubits.
We describe an alternative simplified construction for the gadgets of these states, following the procedure described in \cite[Section 8.3]{KK24}.
Let $\calK_1,\calK_2$ be simplicial complexes corresponding to gadgets for states $\ket{\phi_i}\in\CC^{2^{m_1}},\ket{\phi_2}\in\CC^{2^{m_1}}$, as defined in \cite[Section 8.2]{KK24}.
Then $\calK_i$ triangulates a $(2m_i-1)$-sphere and $f_i(\calK_i) = \calJ_i$, where $f_i$ is defined as in \cite[Section 8.2]{KK24} and $\calJ_i$ is the cycle corresponding to the state $\ket{\phi_i}$, for $i\in\{1,2\}$.
We observe that the join of the simplicial complexes $\calK_1$ and $\calK_2$, denoted $\calK = \calK_1*\calK_2$, triangulates the $(2(m_1+m_2)-1)$-sphere (see e.g. \cite{EP99}), and $f_1(f_2(\calK)) = \calJ_1*\calJ_2$, which is the cycle corresponding to $\ket{\phi_1}\otimes\ket{\phi_2}$ (see \cite[Section 8.1]{KK24}).
Hence, $\calK$ is a gadget for $\ket{\phi_1}\otimes\ket{\phi_2}$.

We prove the correctness of the concrete gadgets in the same way as \cite{KK24}, by computing the Euler characteristic and verifying that $f(\calK)=\calJ$.
Additionally, we compute the homology for the $2$-local gadgets algebraically using SageMath \cite{sagemath}.
For the gadgets implementing the projector onto a $2$-qubit state $\ket{\phi}$, we verify $\corank(\Delta^3)=3$, $\ket{\phi} \in \Image(\partial^4)$, and $\Kernel(\partial^3)/\Image(\partial^4)$ is spanned by the orthogonal complement of $\ket{\phi}$, where $\ket{\phi}$ both refers to the state as well as its representative in the clique complex.
We do the same verification numerically for the full $4$-qubit gadgets in C++, using \emph{Eigen} \cite{eigenweb} to represent vectors and matrices, \emph{Spectra} \cite{spectra} to compute eigenvalues of the Laplacian, and \emph{SuiteSparse} \cite{SPQR,AMD,COLAMD,CHOLMOD} to compute ranks of the boundary operators and verify $\ket{\phi}\in\Image(\partial^k)$.
We also compute the Euler characteristic in C++, using \emph{Cliquer} \cite{NO03} to compute the cliques of the gadget graphs.
All of our code is available as supplementary material \cite{sup}.

\printbibliography 

\end{document}